\renewcommand{\ln}{\log}
\newtheorem{theorem}{Theorem}
\newtheorem{definition}[theorem]{Definition}
\newtheorem{lemma}[theorem]{Lemma}
\newtheorem{proposition}[theorem]{Proposition}
\newtheorem{remark}[theorem]{Remark}
\newtheorem{example}[theorem]{Example}
\newtheorem{invariant}[theorem]{Invariant}
\newcommand{\interv}[2]{\{ #1,\dots,#2 \}}
\newcommand{\emp}{\mathtt{emp}}
\newcommand{\ar}[1]{\#(#1)}
\newcommand{\fv}[1]{\mathit{fv}\left(#1\right)}
\newcommand{\dom}[1]{\mathit{dom}(#1)}
\newcommand{\img}[1]{\mathit{img}(#1)}
\newcommand{\dunion}{\uplus}
\newcommand{\vars}{{\cal V}}
\newcommand{\size}[1]{\mathit{size}(#1)}
\newcommand{\card}[1]{\mathit{card}(#1)}
\newcommand{\len}[1]{\|#1\|}
\newcommand{\bigO}{\mathcal{O}} 
\newcommand{\id}{{\mathit id}}
\renewcommand{\vec}[1]{\mathbf{#1}}
\newcommand{\exptime}{$\mathsf{EXPTIME}$\xspace}
\newcommand{\repl}[3]{#1\{ #2 \leftarrow #3 \}}
\newcommand{\replall}[4]{#1\{ #2 \leftarrow #3 \mid #4 \}}
\newcommand{\pcSID}{pc-SID\xspace}
\newcommand{\pcSIDs}{pc-SIDs\xspace}
\newcommand{\iseq}{\approx}
\newcommand{\Loc}{{\cal L}}
\newcommand{\astore}{\mathfrak{s}}
\newcommand{\aheap}{\mathfrak{h}}
\newcommand{\swand}{
\mathrel{\mbox{$\hspace*{-0.03em}\mathord{-}\hspace*{-0.4em}
 \mathord{-}\hspace*{-0.36em}\scalebox{0.9}{$\mathord{\bullet}$}$\hspace*{-0.005em}}}
}
\newcommand{\wand}{
\mathrel{\mbox{$\hspace*{-0.03em}\mathord{-}\hspace*{-0.4em}
 \mathord{-}\hspace*{-0.36em}\scalebox{0.9}{$\mathord{\ast}$}$\hspace*{-0.005em}}}
}
\newcommand{\unfoldto}[1]{\Leftarrow_{#1}}
\newcommand{\preds}{{\cal P}_S}
\newcommand{\asid}{{\cal R}}
\newcommand{\rank}{\kappa}
\newcommand{\alloccompatible}{$\allocf$-compatible\xspace}
\newcommand{\alloccompatibility}{$\allocf$-compatibility\xspace}
\newcommand{\locs}[1]{\mathit{loc}(#1)}
\newcommand{\roots}[1]{\mathit{roots}(#1)}
\newcommand{\rootsl}[1]{\mathit{roots}_l(#1)}
\newcommand{\rootsr}[1]{\mathit{roots}_r(#1)}
\newcommand{\bigAnd}{\mathop{\scaleobj{1.5}{*}}}
\newcommand{\modelst}{\models_{\theory}}
\newcommand{\modelsi}{\models_{\theory}^i}
\newcommand{\modelsr}{\models_{\asid}}
\newcommand{\equivr}{\equiv_{\asid}}
\newcommand{\modelssid}[1]{\models_{#1}}
\newcommand{\allocf}{\mathit{alloc}}
\newcommand{\alloc}[1]{\allocf(#1)}
\newcommand{\vdashr}{\vdash_{\asid}}
\newcommand{\vdashsid}[1]{\vdash_{#1}}
\newcommand{\MWf}[2]{#1[#2]}
\newcommand{\MWa}[3]{\Phi_{#1 \swand #2}^{#3}}
\newcommand{\MWl}[4]{\MWf{\MWa{#1}{#2}{#3}}{#4}}
\newcommand{\MWs}[4]{\MWl{#1}{#2}{#3}{#3#4}}
\newcommand{\MW}[2]{#1 \swand #2}
\newcommand{\theory}{{\cal T}}
\newcommand{\tformula}{$\theory$-formula\xspace}
\newcommand{\tatom}{$\theory$-atom\xspace}
\newcommand{\slformula}{$SL$-formula\xspace}
\newcommand{\tswand}{\textsc{pu}}
\newcommand{\Watom}{$\tswand$-atom\xspace}
\newcommand{\tpreds}{{\cal P}_{\cal T}}
\newcommand{\qprenex}{quasi-prenex\xspace}
\newcommand{\widt}[1]{\mathit{width}(#1)}
\newcommand{\mnset}[1]{\{#1\}}
\newcommand{\aform}{\phi}
\newcommand{\aformB}{\psi}
\newcommand{\aformC}{\gamma}
\newcommand{\atform}{\chi}
\newcommand{\atformB}{\xi}
\newcommand{\anatom}{\alpha}
\newcommand{\atail}{\beta}
\newcommand{\aseq}{\Gamma}
\newcommand{\aseqB}{\Delta}
\newcommand{\asetloc}{L}
\newcommand{\InlineRuleCond}[1]{\hfill{\text{#1}}}
\newcommand{\RuleCond}[1]{\begin{center}
		\begin{minipage}[c]{.85\textwidth}
			#1
		\end{minipage}
	\end{center}}
\newcommand{\lvdash}{\; \vdashr \;}
\newcommand{\lse}{\mathtt{mls}}
\newcommand{\ls}{\mathtt{ls}}
\newcommand{\nil}{\mathrm{nil}}
\newcommand{\myfalse}{\mathtt{false}}
\newcommand{\TabRule}[3]{
#1: \begin{tabular}{c}
$#2$ \\
\hline
$#3$
\end{tabular}
}
\newcommand{\BigCondInfRule}[4]{
\TabRule{#1}{#2}{#3}
\RuleCond{#4}
\vspace{4mm}
}
\newcommand{\CondInfRule}[4]{
\TabRule{#1}{#2}{#3}
\InlineRuleCond{#4}
\vspace{4mm}}
\begin{document}

\title{A Proof Procedure For Separation Logic With Inductive Definitions and Theory Reasoning}

\author{Mnacho Echenim and Nicolas Peltier}

\newcommand{\myabstract}
{
A proof procedure, in the spirit of the sequent calculus, 
is proposed to check the validity of entailments between
Separation Logic formulas combining 
inductively defined predicates denoting structures of bounded tree width
and 
theory reasoning. 
The calculus is sound and complete, in the sense that a sequent is valid iff it admits a (possibly infinite) proof tree.
We also show that the procedure terminates in the two following cases:
(i) When the inductive rules that define the predicates occurring on the left-hand side of the entailment terminate, in which case the 
proof tree is always finite.
(ii) When the theory is empty, in which case every valid sequent admits a rational  
proof tree, where the total number of pairwise distinct sequents occurring in the proof tree is doubly exponential w.r.t.\ the size of the end-sequent.
}
\maketitle
\abstract{\myabstract}

\section{Introduction}

\newcommand{\lst}{\mathtt{ls}}
\newcommand{\slst}{\mathtt{ils}}
\newcommand{\alst}{\mathtt{als}}

\newcommand{\twoexptime}{$2$-$\mathsf{EXPTIME}$}

Separation Logic (SL) \cite{IshtiaqOHearn01,Reynolds02},  is a well-established framework for 
reasoning on programs manipulating pointer-based data structures. 
It forms the basis of several industrial-scale static program
analyzers
\cite{DBLP:conf/nfm/CalcagnoDDGHLOP15,DBLP:conf/cav/BerdineCI11,DBLP:conf/cav/DudkaPV11}.
The logic uses specific connectives to assert that formulas are satisfied on disjoint parts of the memory,
which allows for more concise and more natural specifications.
Recursive data structures are specified using \emph{inductively defined predicates}, which provide a specification
mechanism similar to the definition of a recursive data type in an
imperative programming language.
Many verification tasks boil down to 
 checking entailments between formulas built on such atoms.
More precisely, the logic may be used to express pre- or post-conditions describing 
 the shape of the data structures (linked lists, trees, doubly linked lists, etc.) manipulated by the program, 
 and to express structural integrity properties, such as acyclicity of linked lists, absence of dangling pointers, etc.
Investigating the entailment problem for SL formulas is thus of theoretical and practical interest.
In practice, it is essential to offer as much flexibility as possible, and to
handle a wide class of user-defined data structures (e.g., doubly linked lists, trees with child-parent links, trees with chained leaves etc.), possibly involving external theories, such as arithmetic.
In general, the entailment problem is undecidable 
for formulas
containing inductively defined predicates
\cite{DBLP:conf/atva/IosifRV14}, 
and a lot of effort has been devoted to identifying decidable fragments and devising proof procedures, see e.g., 
\cite{berdine-calcagno-ohearn04,CalcagnoYangOHearn01,cook-haase-ouaknine-parkinson-worell11,spen,DBLP:journals/fmsd/EneaLSV17,DemriGalmicheWendlingMery14}.
In particular, a  general class of  decidable  entailment problems is described in
\cite{IosifRogalewiczSimacek13}. It is based on the decidability of the satisfiability
problem for monadic second order logic over graphs of a bounded treewidth, for formulas involving no theory other than equality.
This class is defined by restricting the form of the inductive rules, which must fulfill $3$ conditions, formally defined below: the {\em progress} condition (every rule allocates a single memory location), 
the {\em connectivity} condition (the set of allocated locations has a tree-shaped structure)
and the {\em establishment} condition (every existentially quantified variable is eventually allocated). 
More recently, a
\twoexptime\ algorithm was proposed for such entailments
\cite{PMZ20}. 
In \cite{DBLP:conf/lpar/EchenimIP20}
 we showed that this bound is 
tight and in \cite{EIP21a} we devised a new algorithm, handling more general  classes of inductive definitions.
The algorithms in \cite{PMZ20,DBLP:conf/lpar/EchenimIP20} work by computing some abstract representation of the set of models of SL formulas.
The
abstraction is precise enough to allow checking that all the models of the left-hand side are also models of the right-hand side, 
also general enough to ensure termination of the entailment checking
algorithm. 
Other approaches have been proposed to check entailments in various fragments, see e.g., \cite{cook-haase-ouaknine-parkinson-worell11,DBLP:journals/fmsd/EneaLSV17,DBLP:conf/atva/IosifRV14}.
 In particular, a sound and complete proof procedure is given in \cite{DBLP:conf/aplas/TatsutaNK19} for inductive rules satisfying conditions that are strictly more restrictive than those in \cite{IosifRogalewiczSimacek13}.
In \cite{DBLP:journals/logcom/GalmicheM21} a labeled proof systems is presented for separation logic formulas
handling arbitrary inductive definitions and all connectives (including negation and separated implication).
Due to the expressive power of the considered language, this proof system is of course not terminating or complete in general.

In the present paper,
we extend these results by defining a proof procedure in the style of sequent calculi to 
check the validity of entailments, using top-down decomposition rules.
Induction is performed by considering infinite (rational) proof trees modeling proofs by infinite descent, as in \cite{BrotherstonSimpson11}.
We also tackle the combination of SL reasoning with theory reasoning, relying on external decision procedures
 for checking the validity of formulas in the considered theory. This issue is of uttermost importance 
for applications, as reasoning on data structures without taking into account the properties of the data stored in these structures has a  limited scope, 
and the combination of SL with data constraints has been considered by several authors (see, e.g., \cite{DBLP:conf/cav/PiskacWZ13,DBLP:conf/pldi/Qiu0SM13,DBLP:conf/aplas/PerezR13,DBLP:conf/cade/XuCW17,DBLP:conf/vmcai/Le21}). 
Beside the fact that it is capable of handling theory reasoning, our procedure has several advantages over the model-based bottom-up
algorithms that were previously devised \cite{PMZ20,DBLP:conf/lpar/EchenimIP20}.
It is  goal-oriented: the rules apply backward and reduce the considered entailments to simpler ones, until 
 axioms are reached.
The advantage is that the proof procedure is driven by the form of the current goal.
The procedure is also better-suited for interactive theorem proving (e.g., the user can guide the application of the inference rules, while the procedures in \cite{PMZ20,DBLP:conf/lpar/EchenimIP20} work as ``black boxes'').
If the entailment is valid, then the produced proof tree serves as a certification of the result, which can be checked if needed by the user or another system, 
while the previous procedures \cite{PMZ20,DBLP:conf/lpar/EchenimIP20} produce no certification.
Finally, the correctness proof of the algorithm is also simpler and more modular.
More specifically, we establish several new results in this paper.
\begin{compactenum}
\item{First, we show that the proof procedure is sound (Theorem \ref{theo:sound}) and complete (Theorem \ref{theo:comp}), in the sense that an entailment is valid 
iff it admits a  proof tree. The proof tree may be infinite, hence the result does not entail 
that checking entailments is semi-decidable. However, this shows that the procedure can be used
as a semi-decision procedure for checking non-validity (i.e., an entailment is not valid iff the procedure is stuck eventually in some branch), provided the base theory is decidable.}
\item{If the theory only contains the equality predicate, then we show that 
the entailments can be reduced to entailments in the empty theory (Theorem \ref{theo:elimeq}). The intuition is that  all the equality and disequality constraints
can be encoded in the formulas describing the shape of the data structures.
 This result is also described in a paper \cite{EP22b} accepted for presentation at ASL 2022 (workshop with no formal proceedings).}
\item{By focusing on the  case where the theory is empty, we show (Theorem \ref{theo:comp_term}) that every valid 
entailment admits a  proof tree that is rational (i.e., has a finite number of pairwise distinct subtrees, up to a renaming of variables). Furthermore, the number of sequents occurring in the tree is at most $\bigO(2^{2^n})$, where $n$ is the size of the initial sequent. In combination with the previous result, this theorem allows us to reestablish the \twoexptime\ membership of the entailment problem for inductive systems satisfying the conditions above in the theory of equality \cite{PZ20}.}
\item{We also show that the proof tree is finite if the inductive rules that define the predicates
occurring on the left-hand side of the entailment terminate (Corollary \ref{cor:finite}).}

\end{compactenum}

\section{Preliminaries}

\newcommand{\tmodel}{\modelsr_{\cal T}}
\newcommand{\hpredicate}{spatial predicate\xspace}
\newcommand{\tpredicate}{$\theory$-predicate\xspace}


In this section, we define the syntax and semantics of the fragment of separation logic that is considered in the paper. Our definitions are mostly standard, see, e.g., \cite{DBLP:conf/csl/OHearnRY01,Reynolds02,IosifRogalewiczSimacek13} for more details and explanations on  separation logic as well as on the conditions on the inductively defined predicates that ensure decidability of the entailment problem.

\subsection{Syntax}

Let $\vars$ be a countably infinite set of {\em variables}.
Let $\tpreds$ be a 
set of {\em {\tpredicate}s} (or {\em theory predicates}, denoting relations in an underlying theory of locations) and 
let $\preds$ be a set of {\em {\hpredicate}s}, disjoint from $\tpreds$. Each symbol $p\in \tpreds \cup \preds$ is  associated with a unique arity $\ar{p}$.
We assume that $\tpreds$ contains in particular two binary symbols $\iseq$ and $\not \iseq$ and a nullary symbol $\myfalse$.

\begin{definition}
Let $\rank$ be some fixed\footnote{Note that $\rank$  is not considered as constant for 
the complexity analysis in Section \ref{sect:comp}: it is part of the input.}
natural number.
The set of {\em {\slformula}s} (or simply formulas) $\aform$ is inductively defined as follows:
\[\aform := \emp \; \| \; x \mapsto (y_1,\dots,y_\rank) \; \| \; \aform_1 \vee \aform_2 \; \| \;  \aform_1 * \aform_2 \| \; 
p(x_1,\dots,x_{\ar{p}}) \; \| \; \exists x. ~ \aform_1   \]
where  $\aform_1,\aform_2$ are {\slformula}s, $p\in \tpreds \cup \preds$ and $x,x_1,\dots,x_{\ar{p}}, y_1,\dots,y_{\rank}$ are variables.
\end{definition}

\newcommand{\mapformula}{$\mapsto$-formula\xspace}

Note that the considered fragment (as in many works in separation logic) does not include standard conjunction, negation or universal quantifications.
Indeed, the addition of these constructions, without any further restriction,
makes entailment checking undecidable (see for instance \cite{PMZ20}). The separating implication $\wand$  is  not supported either, although a similar but more restricted connective $\swand$
will be introduced below.
Formulas are taken modulo  associativity and commutativity of $\vee$ and $*$, modulo  commutativity of existential quantifications  and modulo the neutrality of $\emp$ 
for $*$.  A {\em spatial atom} is  a formula that is either of the form $x \mapsto (y_1,\dots,y_\rank)$ (called a {\em points-to atom}) or 
$p(x_1,\dots,x_{\ar{p}})$ with $p\in \preds$  (called a {\em predicate atom}). 
A {\em \tatom} is a formula of the form $p(x_1,\dots,x_{\ar{p}})$ with $p\in \tpreds$.
An {\em atom} is either a spatial atom or a \tatom.
A {\em \tformula} is either $\emp$ or a separating conjunction of {\tatom}s.
A formula of the form $\exists x_1.\dots.\exists x_n.~\aform$ (with $n \geq 0$) is denoted by 
$\exists \vec{x}.~\aform$ with $\vec{x} = (x_1,\dots,x_n)$.
A formula is {\em predicate-free} (resp.\ {\em disjunction-free}, resp. {\em quantifier-free}) if it contains no predicate symbol in $\preds$ (resp.\ no occurrence of $\vee$, resp.\ of $\exists$).
It is in {\em prenex form} if it is of the form $\exists \vec{x}. \aform$, where $\aform$ is quantifier-free and $\vec{x}$ is a possibly empty vector of variables.
A {\em symbolic heap} is a prenex disjunction-free formula, i.e., a formula of the form $\exists \vec{x}. \aform$, where $\aform$ is  a separating conjunction of atoms.

\begin{definition}
A {\em \mapformula} is a formula of the form $\exists \vec{x}. (u \mapsto \vec{v} * \atform)$, where $\atform$ is a \tformula. \label{def:mapform}
\end{definition}

\begin{example}
\label{ex:syntax} 
Let $\slst$ and $\alst$ be two spatial predicates denoting
increasing and acyclic nonempty list segments, respectively.
The symbolic heap: $x_1 \mapsto (x_2) * x_1 \geq 0   * \slst(x_2,x_3) * \slst(x_3,x_4)$ denotes an increasing list of positive numbers
composed by a first element $x_1$, linked to a list composed by the concatenation of two list segments, from $x_2$ to $x_3$ and from $x_3$ to $x_4$, respectively.
The atom $x_1 \mapsto (x_2)$ is a points-to atom, $\slst(x_2,x_3)$ and $\slst(x_3,x_4)$
are predicate atoms and $x_1 \geq 0$ is a \tformula (constructed using a monadic predicate stating that $x_1$ is positive).
The symbolic heap $\exists x_1,x_2. (\alst(x_1,x_2) * x_1 \geq 0 * x_2 \geq 0)$ denotes an acyclic list segment between two positive locations.
\end{example}

We denote by $\fv{\aform}$ the set of variables freely occurring in $\aform$ (i.e., occurring in $\aform$ but not within the scope of any existential quantifier).  
 A {\em substitution} $\sigma$ is a function mapping variables to variables. The {\em domain} $\dom{\sigma}$ of a substitution $\sigma$ is the set of variables $x$ 
such that $\sigma(x) \not = x$, 
and we let $\img{\sigma} = \sigma(\dom{\sigma})$. 
For all substitutions $\sigma$, we assume that $\dom{\sigma}$ is finite and that $\sigma$ is idempotent. For any expression (variable, tuple of variables or formula) $e$, we denote by $e\sigma$ the expression obtained from $e$ by replacing 
every free occurrence of a variable $x$ by $\sigma(x)$ and by $\replall{}{x_i}{y_i}{1 \leq i \leq n}$ (where the $x_1,\dots,x_n$ are pairwise distinct) the substitution 
such that $\sigma(x_i) = y_i$ and $\dom{\sigma} \subseteq \{ x_1,\dots,x_n \}$.
 For all sets $E$, 
 $\card{E}$ 
 is the cardinality of $E$. 
 For all sequences or words $w$, 
 $\len{w}$ denotes the length of $w$.
 We sometimes identify vectors with sets, if the order is unimportant, e.g., we write $\vec{x} \setminus \vec{y}$ to denote the vector formed by the components of $\vec{x}$ that do not occur in $\vec{y}$.
 
\subsection{Size and Width}
 
  We assume that the symbols in $\preds \cup \tpreds \cup \vars$ are words\footnote{Because we will consider transformations introducing an unbounded number of new predicate symbols, we cannot assume that the predicate atoms have a  constant size.} over a finite alphabet of a constant size, strictly greater than $1$. For any expression $e$, we denote by $\size{e}$ 
  the size of $e$, i.e., the number of occurrences of 
symbols\footnote{Each symbol $s$ in $\preds \cup \tpreds \cup \vars$ is counted with a weight equal to its length $\len{s}$, and all the logical symbols have weight $1$.} in $e$.
 We define the {\em width} of a formula as follows: 
{\small
\[
\begin{tabular}{llllll}
$\widt{\aform_1 \vee \aform_2}$ & $ = $ & $\max(\widt{\aform_1},\widt{\aform_2})$ & \qquad \\
   $\widt{\exists x. \aform}$ & $=$ & $\widt{\aform} + \size{\exists x}$ \\
$\widt{\aform_1 * \aform_2}$ & $ = $ & $\widt{\aform_1} + \widt{\aform_2} + 1$ \\
$\widt{\aform}$ & $ = $ & $\size{\aform}$ \quad if $\aform$ is an atom \\
\end{tabular}
\]
}
Note that $\widt{\aform}$ coincides with $\size{\aform}$ if $\aform$ is disjunction-free.

\subsection{Inductive Rules}

\newcommand{\dependson}[1]{\geq_{#1}}

The semantics of the predicates in $\preds$ is provided 
by user-defined inductive rules satisfying some conditions (as defined in \cite{IosifRogalewiczSimacek13}):
\begin{definition}
\label{def:sid}
A (progressing and connected) set of inductive rules (\pcSID) $\asid$
is a finite set of rules of the form
\[ p(x_1,\dots,x_n) \Leftarrow \exists \vec{u}. ~ x_1 \mapsto (y_1,\dots,y_\rank) 
* \aform,\]
	where $\fv{x_1 \mapsto (y_1,\dots,y_\rank) 
		* \aform} \subseteq \{x_1, \ldots, x_n\} \cup \vec{u}$,
		$\aform$ is a possibly empty separating conjunction of predicate atoms and {\tformula}s, and for every predicate atom $q(z_1,\dots,z_{\ar{q}})$ 
		occurring in $\aform$, we have $z_1 \in \{ y_1,\dots,y_\rank\}$.
We let $\size{p(\vec{x}) \Leftarrow \aform} = \size{p(\vec{x})} + \size{\aform}$, 
$\size{\asid} = \Sigma_{\rho\in \asid} \size{\rho}$
and $\widt{\asid} = \max_{\rho \in \asid} \size{\rho}$.
\end{definition}

In the following, $\asid$ always denotes a \pcSID. 
We emphasize that the right-hand side of  every inductive rule contains exactly one 
points-to atom, the left-hand side of which is the first argument $x_1$ of the predicate symbol (this condition is referred to as the {\em progress} condition), and
that this points-to atom contains the first argument of every predicate atom on the right-hand side of the rule (the {\em connectivity} condition).


\begin{example}
\label{ex:rules}
The predicates $\slst$ and $\alst$ of Example \ref{ex:syntax} are defined as follows:
\[
\begin{tabular}{lcr}
$\slst(x,y)$ & $\Leftarrow$ & $x \mapsto (y) * x \leq y$ \\
$\slst(x,y)$  & $\Leftarrow$ & $\exists x'. ~ x \mapsto (x') * \slst(x',y) * x \leq x'$ \\
$\alst(x,y)$ & $\Leftarrow$ & $x \mapsto (y) * x \not\iseq y$ \\
$\alst(x,y)$  & $\Leftarrow$ & $\exists x'. ~ x \mapsto (x') * \alst(x',y) * x \not \iseq y$
\end{tabular}
\]
This set  is progressing and connected.
In contrast, the rule
$\slst(x,y) \Leftarrow x \iseq y$ 
is not progressing, because it contains no points-to atom. 
 A possibly empty list must thus be denoted by a disjunction in our framework:
 $\slst(x,y) \vee (x \iseq y)$.
\end{example}


\begin{definition}
\label{def:unfold}
We write $p(x_1,\dots,x_{\ar{p}}) \unfoldto{\asid} \aform$
if $\asid$ contains a rule (up to $\alpha$-renaming) 
$p(y_1,\dots,y_{\ar{p}}) \Leftarrow \aformB$, where $x_1,\dots,x_{\ar{p}}$ are not bound in $\aformB$, and
$\aform = \replall{\aformB}{y_i}{x_i}{i \in \interv{1}{{\ar{p}}}}$.

The relation $\unfoldto{\asid}$ is extended to all formulas as follows:
$\aform \unfoldto{\asid} \aform'$ if one of the following conditions holds: 
\begin{enumerate}[(i)]
	\item{$\aform = \aform_1 \bullet \aform_2$ (modulo AC, with $\bullet \in \{ *, \vee \}$), $\aform_1 \unfoldto{\asid} \aform_1'$, no free or existential variable  in $\aform_2$ is bound in $\aform_1'$
	and 
	$\aform' = \aform_1' \bullet \aform_2$;}
	\item{$\aform = \exists x.~ \aformB$, $\aformB \unfoldto{\asid} \aformB'$, $x$ is not bound in $\aformB'$
	and 
	$\aform' = \exists x.~ \aformB'$.}
\end{enumerate}
We denote by $\unfoldto{\asid}^+$ the transitive closure of $\unfoldto{\asid}$, and by $\unfoldto{\asid}^*$ its reflexive and transitive closure.
A formula $\aformB$ such that $\aform \unfoldto{\asid}^* \aformB$ is called an {\em $\asid$-unfolding} of 
$\aform$.
We denote by $\dependson{\asid}$ the least transitive and reflexive binary relation on $\preds$
such that $p \dependson{\asid} q$ holds if 
$\asid$ contains a rule of the form $p(y_1,\dots,y_{\ar{p}}) \Leftarrow \aformB$, where $q$ occurs in $\aformB$. If $\aform$ is a formula, we write $\aform \dependson{\asid} q$ if $p \dependson{\asid} q$ for some $p \in \preds$ occurring in $\aform$.
\end{definition}



\begin{example}
	With the rules of Example \ref{ex:rules}, we have:
	\[\begin{array}{rcl}
		\alst(x_1,x_2) * \alst(x_2,x_3) &\unfoldto{\asid}&
		x_1 \mapsto (x_2) * x_1 \not \iseq x_2 * \alst(x_2,x_3) \\
		& \unfoldto{\asid} &\exists x'. ~(
		x_1 \mapsto (x_2) * x_1 \not \iseq x_2 *  x_2 \mapsto (x') *\\
		& & \quad  \alst(x',x_3) * x_2 \not \iseq x_3)\\
		& \unfoldto{\asid} & \exists x', x''. ~(
		x_1 \mapsto (x_2) * x_1 \not \iseq x_2 *  x_2 \mapsto (x') * \\
		& & \quad x' \mapsto (x'') * \alst(x'',x_3) * x' \not \iseq x_3 * x_2 \not \iseq x_3)\\
		& \unfoldto{\asid}& \exists x', x''. ~(
		x_1 \mapsto (x_2) * x_1 \not \iseq x_2 *  x_2 \mapsto (x') *\\
		& & \quad  x' \mapsto (x'') * x'' \mapsto (x_3) * x'' \not \iseq x_3 *\\
		& & \quad  x' \not \iseq x_3 * x_2 \not \iseq x_3)
	\end{array}\]
\end{example}

Note that the number of formulas $\aform'$ such that $\aform \unfoldto{\asid} \aform'$ is finite, up to $\alpha$-renaming. 
Also, if $\aform \unfoldto{\asid}^* \aform'$ then $\fv{\aform'} \subseteq \fv{\aform}$. 

\subsection{Semantics}
\newcommand{\Nh}[1]{N_h(#1)}

\begin{definition}
Let $\Loc$ be a countably infinite set of so-called 
{\em locations}.
An {\em SL-structure} is a pair $(\astore,\aheap)$ where 
$\astore$ is a {\em store}, i.e.\ a total function from $\vars$ to $\Loc$, and 
$\aheap$ is a {\em heap}, i.e.\ a partial finite function from $\Loc$ to $\Loc^\rank$ which is written as a relation: $\aheap(\ell) = (\ell_1,\dots,\ell_\rank)$ iff $(\ell,\ell_1,\dots,\ell_\rank) \in \aheap$.
The {\em size} of a structure $(\astore,\aheap)$ is the cardinality  of $\dom{\aheap}$.
\end{definition}

\begin{definition}	
For every heap $\aheap$, let $\locs{\aheap} = \{ \ell_i \mid (\ell_0,\dots,\ell_\rank) \in \aheap, i = 0,\dots,\rank \}$.
A location $\ell$ (resp.\ a variable $x$) is {\em allocated} in a heap $\aheap$ (resp.\ in a structure ($\astore,\aheap$)) if $\ell \in \dom{\aheap}$ (resp.\ $\astore(x)\in \dom{\aheap}$).
Two heaps $\aheap_1,\aheap_2$ are {\em disjoint} if 
$\dom{\aheap_1} \cap \dom{\aheap_2} = \emptyset$,
in this case $\aheap_1 \dunion \aheap_2$ denotes the 
union of $\aheap_1$ and $\aheap_2$.
\end{definition}


 Let $\modelst$ be a satisfiability relation between stores and {\tformula}s, satisfying the following properties:
  $\astore \modelst x \iseq y$ (resp.\ $\astore \modelst x \not \iseq y$) iff $\astore(x) = \astore(y)$ (resp.\ $\astore(x) \not = \astore(y)$), $\astore \not \modelst \myfalse$ and  
 $\astore \modelst \atform *\atformB$ iff $\astore\modelst \atform$ and $\astore \modelst \atformB$. 
  For all {\tformula}s $\atform, \atformB$, we write 
$\atform \modelst \atformB$ if, for every store $\astore$ such that  
$\astore \modelst \atform$, we have $\astore \modelst \atformB$. We write
$\atform \modelsi \atformB$ if the implication holds for every injective store $\astore$,
 i.e., if $\atform *  \atform_{inj} \modelst \aformB$, 
where $\atform_{inj}$ denotes the separating conjunction 
of all disequations $x \not \iseq x'$, with $x,x' \in \fv{\atform} \cup \fv{\atformB}$, and $x \not = x'$.
We say that $\theory$ is {\em closed under negation} if for every \tformula $\atform$, one can compute 
 a \tformula $\atform'$ (also written $\neg \atform$) such that for every store $\astore$,  $\astore \modelst \atform'$ iff $\astore \not \modelst \atform$.
To make the rules in Section \ref{sect:rules} applicable in practice it is 
necessary to have a procedure to check whether $\atform \modelst \atformB$.
In all examples, $\Loc$ is the set of integers, and the {\tformula}s are arithmetic formulas, interpreted as usual (with predicates $\leq$ and $\iseq$).
  \begin{definition}
 \label{def:semantics}
Given formula $\aform$, a \pcSID $\asid$ and a structure $(\astore,\aheap)$,
we write $(\astore,\aheap) \modelsr \aform$ and say that $(\astore,\aheap)$ is an {\em $\asid$-model} (or simply a model if $\asid$ is clear from the context) of $\aform$ if one of the following conditions holds.
\begin{compactitem}
\item{$\aform = x \mapsto (y_1,\dots,y_\rank)$ and
$\aheap = \{ (\astore(x),\astore(y_1),\dots,\astore(y_\rank)) \}$.}
\item{$\aform$ is a \tformula, $\aheap = \emptyset$ and $\astore \modelst \aform$.}
\item{$\aform = \aform_1 \vee \aform_2$ and 
$(\astore,\aheap) \modelsr \aform_i$, for some $i  = 1,2$.}
\item{$\aform = \aform_1 * \aform_2$ and there exist disjoint heaps
$\aheap_1,\aheap_2$ such that 
$\aheap = \aheap_1 \dunion \aheap_2$ and 
$(\astore,\aheap_i) \modelsr \aform_i$, for all $i  = 1,2$.}

\item{$\aform = \exists x. ~ \aform$ and 
$(\astore',\aheap) \modelsr \aform$, for some store $\astore'$ coinciding with 
$\astore$ on all variables distinct from $x$.}
\item{
$\aform = p(x_1,\dots,x_{\ar{p}})$, $p \in \preds$ and  $(\astore,\aheap) \modelsr \aformB$ for some $\aformB$ such that
$\aform \unfoldto{\asid} \aformB$.}
\end{compactitem}
If $\aseq$ is a sequence of formulas, then we write $(\astore,\aheap) \modelsr \aseq$ if $(\astore,\aheap)$ satisfies at least one formula in $\aseq$.
\end{definition}
We emphasize that a \tformula is satisfied 
only 
in structures with empty heaps. This convention is used to simplify notations, because it avoids 
having to consider both standard and separating conjunctions.
Note that Definition \ref{def:semantics} is well-founded because of the progress condition: the size of $\aheap$ decreases at each recursive call of a predicate atom.
We write $\aform \modelsr \aformB$ if every $\asid$-model of $\aform$ is an $\asid$-model of $\aformB$
and  $\aform \equivr \aformB$ if $\aform \modelsr \aformB$ and $\aformB \modelsr \aform$.
Every formula can be transformed into prenex form using the well-known equivalences: $(\exists x. \aform) \bullet \aformB \equiv \exists x. (\aform \bullet \aformB)$, for all $\bullet \in \{ \vee, * \}$, where $x\not \in\fv{\aformB}$.


\begin{example}
Let $\asid$ be the set of rules in Example \ref{ex:rules}.
Assume that $\Loc$ is the set of integers ${\Bbb Z}$.
The formula $\aform = \slst(x_1,x_2) * \slst(x_2,x_3)$
admits the following $\asid$-model
$(\astore,\aheap)$, where $\astore(x_1) = 1$, 
$\astore(x_2) = 2$,
$\astore(x_3) = 4$,
and $\aheap = \{ (1,2), (2,3), (3,4) \}$.
Indeed, we have: 
{\small
\[
\begin{tabular}{lll}
$\aform$ &  $\unfoldto{\asid}$ &
$x_1 \mapsto (x_2) * x_1 \leq x_2 * \slst(x_2,x_3)$ \\ 
& $\unfoldto{\asid}$ & 
$\exists x'.~ (x_1 \mapsto (x_2) * x_1 \leq x_2 * x_2 \mapsto (x') * x_2 \leq x' * \slst(x',x_3))$ \\ 
& $\unfoldto{\asid}$ & 
$\exists x'.~ (x_1 \mapsto (x_2) * x_1 \leq x_2 * x_2 \mapsto (x') * x_2 \leq x' * x' \mapsto x_3 * x' \leq x_3))$
\end{tabular}
\]}\noindent
and 
$(\astore',\aheap) \models (x_1 \mapsto (x_2) * x_1 \leq x_2 * x_2 \mapsto (x') * x_2 \leq x' * x' \mapsto x_3 * x' \leq x_3))$, with $\astore'(x') = 3$ and
$\astore'(x) = \astore(x)$ if $x\not = x'$.
The formula $\slst(x_1,x_2) * \slst(x_1,x_3)$ admits no 
$\asid$-model. Indeed, it is clear that all the structures that satisfy $\slst(x_1,x_2)$ or $\slst(x_1,x_3)$
must allocate $x_1$, and the same location cannot be allocated in disjoint parts of the heap.
\end{example}

  Note that the progress condition entails the following property:
\begin{proposition}\label{prop:prog-dom}
	If $(\astore,\aheap) \modelsr p(x_1, \ldots,  x_{\ar{p}})$, then $\astore(x_1) \in \dom{\aheap}$.
\end{proposition}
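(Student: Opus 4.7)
The plan is to proceed by a direct structural unfolding using Definition \ref{def:semantics} together with the progress condition from Definition \ref{def:sid}. By the clause of $\modelsr$ governing predicate atoms, $(\astore,\aheap) \modelsr p(x_1,\ldots,x_{\ar{p}})$ entails that there exists a formula $\aformB$ with $p(x_1,\ldots,x_{\ar{p}}) \unfoldto{\asid} \aformB$ and $(\astore,\aheap) \modelsr \aformB$. The progress condition ensures that any rule $p(y_1,\ldots,y_{\ar{p}}) \Leftarrow \exists \vec{u}.~ y_1 \mapsto (z_1,\ldots,z_\rank) * \aform$ of $\asid$ allocates its first parameter, and Definition \ref{def:unfold} allows us to $\alpha$-rename so that the bound variables in $\vec{u}$ are disjoint from $\{x_1,\ldots,x_{\ar{p}}\}$. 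Applying the substitution $\{y_i \leftarrow x_i\}_{1 \leq i \leq \ar{p}}$ thus yields $\aformB$ of the form $\exists \vec{u}.~ x_1 \mapsto (z_1',\ldots,z_\rank') * \aform'$.

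Next I would peel off the existential quantifiers and the separating conjunction using the corresponding semantic clauses: there exists an extension $\astore'$ of $\astore$ (coinciding with $\astore$ on every variable outside $\vec{u}$, and thus on $x_1$) and disjoint heaps $\aheap_1, \aheap_2$ with $\aheap = \aheap_1 \dunion \aheap_2$, such that $(\astore', \aheap_1) \modelsr x_1 \mapsto (z_1',\ldots,z_\rank')$. By the points-to clause of Definition \ref{def:semantics}, this forces $\aheap_1 = \{(\astore'(x_1), \astore'(z_1'), \ldots, \astore'(z_\rank'))\}$, and since $x_1 \notin \vec{u}$ we have $\astore'(x_1) = \astore(x_1)$. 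Hence $\astore(x_1) \in \dom{\aheap_1} \subseteq \dom{\aheap}$.

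There is no real obstacle here; the statement is essentially a semantic restatement of the progress condition. The only point requiring mild care is the handling of the $\alpha$-renaming and the extension of the store, to ensure that the interpretation of the free variable $x_1$ is unchanged when descending through $\exists \vec{u}$, so that the location allocated by the points-to atom produced by the progressing rule is exactly $\astore(x_1)$.
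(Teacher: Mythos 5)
Your proof is correct and follows the same route as the paper, which simply notes that by the progress condition any one-step unfolding of $p(x_1,\dots,x_{\ar{p}})$ contains a points-to atom with left-hand side $x_1$; you have merely spelled out the semantic bookkeeping (store extension over $\exists\vec{u}$, heap decomposition, the points-to clause) that the paper leaves implicit.
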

\begin{proof}
By Definition \ref{def:sid}, if $p(x_1,\dots,x_{\ar{p}}) \unfoldto{\asid} \aform$, then $\aform$ contains a points-to atom with left-hand side $x_1$. 
\end{proof}

	\begin{proposition}
		\label{prop:hform}
		Let $\aform$ be a disjunction-free formula containing at least one spatial atom. 
		If $(\astore,\aheap) \modelsr \aform$ then $\aheap$ is nonempty.
	\end{proposition}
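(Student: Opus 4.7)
The plan is to proceed by structural induction on the disjunction-free formula $\aform$. Since $\aform$ contains no disjunction, the cases to consider are $\emp$, a \tatom, a points-to atom, a predicate atom $p(x_1,\dots,x_{\ar{p}})$ with $p \in \preds$, a separating conjunction $\aform_1 * \aform_2$, and an existential $\exists x.~ \aformB$. The $\emp$ and \tatom cases are ruled out by the hypothesis that $\aform$ contains at least one spatial atom.

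For the base cases, if $\aform$ is a points-to atom $x \mapsto (y_1,\dots,y_\rank)$, then by Definition \ref{def:semantics} we have $\aheap = \{(\astore(x),\astore(y_1),\dots,\astore(y_\rank))\}$, which is clearly nonempty. If $\aform$ is a predicate atom $p(x_1,\dots,x_{\ar{p}})$, then the nonemptiness of $\aheap$ follows directly from Proposition \ref{prop:prog-dom}, which is the key place where the progress condition is used.

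For the inductive step on $\aform_1 * \aform_2$, note that since $\aform$ contains a spatial atom, at least one of $\aform_1$, $\aform_2$ contains a spatial atom; by the semantics of $*$ we have $\aheap = \aheap_1 \uplus \aheap_2$ with $(\astore,\aheap_i) \modelsr \aform_i$, and the induction hypothesis applied to the sub-formula containing the spatial atom yields a nonempty sub-heap, hence $\aheap$ is nonempty. For $\aform = \exists x.~\aformB$, the formula $\aformB$ inherits the spatial atom, and we obtain a store $\astore'$ with $(\astore',\aheap) \modelsr \aformB$; applying the induction hypothesis to $\aformB$ gives $\aheap \neq \emptyset$.

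There is no real obstacle here: the essential content has been isolated in Proposition \ref{prop:prog-dom}, and the induction just propagates this through the disjunction-free connectives. The only subtlety is making sure to observe that in the case of $*$ the spatial atom must lie in one of the two sub-formulas (which holds because separating conjunction is a syntactic operation, not something that can merge atoms), so that the induction hypothesis is applicable to at least one side.
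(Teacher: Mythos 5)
Your proof is correct and follows essentially the same route as the paper: a structural induction on the disjunction-free formula, with the progress condition doing the work in the predicate-atom case and the $*$ and $\exists$ cases handled by propagating the induction hypothesis. The only cosmetic difference is that you delegate the predicate-atom case to Proposition~\ref{prop:prog-dom}, whereas the paper re-runs the one-step unfolding argument inline; both rest on the same use of the progress condition.
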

	\begin{proof}
		The proof is by induction on the set of formulas.
		\begin{compactitem}
			\item{If $\aform$ is a points-to atom then it is clear that  $\card{\dom{\aheap}} = 1$.}
			\item{If $\aform$ is a predicate atom, then there exists 
				$\aformB$ such that $\aform \unfoldto{\asid} \aformB$ and $(\astore,\aheap) \modelsr \aformB$. 
				By the progress condition, $\aformB$ is of the form $\exists \vec{w}. ~(u \mapsto (v_1,\dots,v_\rank) * \aformB')$, hence there exists a subheap $\aheap'$ of $\aheap$ and a store $\astore'$ such that $(\astore',\aheap') \modelsr u \mapsto (v_1,\dots,v_\rank)$. 
				This entails that $\card{\dom{\aheap'}} = 1$ thus $\card{\dom{\aheap}} \geq 1$.}
			\item{If $\aform = \aform_1 * \aform_2$, then necessarily there exists $i = 1,2$ such that  $\aform_i$ contains at least one spatial atom. Furthermore, there exist disjoint heaps $\aheap_1$, $\aheap_2$ such that
				$(\astore,\aheap_i) \modelsr \aform_i$ and $\aheap = \aheap_1 \dunion \aheap_2$.
				By the induction hypothesis, $\aheap_i$ is non empty, hence $\aheap$ is also non empty.}
			
			\item{If $\aform = \exists x. ~\aformB$ then $\aformB$ contains at least one spatial atom, and  there exists a store $\astore'$ such that $(\astore',\aheap) \modelsr \aformB$. By the induction hypothesis, we deduce that $\aheap$ is non empty.}

		\end{compactitem}
	\end{proof}

\subsection{Establishment}

The notion of establishment \cite{IosifRogalewiczSimacek13} is defined as follows:
\begin{definition}
A \pcSID is {\em established}
if
for every atom $\anatom$,  every predicate-free  formula 
$\exists \vec{x}. \aform$ such that
$\anatom \unfoldto{\asid}^* \exists \vec{x}. \aform$, and  every $x\in \vec{x}$, $\aform$ is of the form $x' \mapsto (y_1,\dots,y_\rank) * \atform * \aformB$, where $\atform$ is a separating conjunction of equations (possibly $\emp$) such that
 $\atform \modelst x \iseq x'$. 
\end{definition}
All the rules in Example \ref{ex:rules} are trivially established. For instance, 
$\alst(x,y) \Leftarrow \exists x'. ~ x \mapsto (x') * \alst(x',y) * x \not \iseq y$
fulfills the condition, since every predicate-free unfolding of  $\alst(x',y)$ contains a formula of the form $x' \mapsto \dots$ and $\emp \modelst x' \iseq x'$.

The following lemma states a key property of an established \pcSID: 
every 
location referred to in the heap is allocated, except possibly those associated with a free variable.

\begin{lemma}
\label{lem:fr}
Let $\asid$ be an established \pcSID and
let $\aform$ be a quantifier-free symbolic heap.
If $(\astore,\aheap) \modelsr \aform$ then
$\locs{\aheap} \setminus \dom{\aheap} \subseteq \astore(\fv{\aform})$.
\end{lemma}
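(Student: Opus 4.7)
The plan is to proceed by structural induction on $\aform$, which, as a quantifier-free symbolic heap, is a separating conjunction of atoms. The easy base cases are direct from the semantics: if $\aform$ is $\emp$ or a \tformula, then $\aheap = \emptyset$ so the inclusion is vacuous; if $\aform = x \mapsto (y_1,\dots,y_\rank)$, then $\dom{\aheap} = \{\astore(x)\}$ and $\locs{\aheap} \setminus \dom{\aheap} \subseteq \{\astore(y_1),\dots,\astore(y_\rank)\} \subseteq \astore(\fv{\aform})$. The inductive case $\aform = \aform_1 * \aform_2$ is handled by splitting $\aheap = \aheap_1 \dunion \aheap_2$ with $(\astore,\aheap_i) \modelsr \aform_i$, applying the IH to each $\aform_i$ to obtain $\locs{\aheap_i} \setminus \dom{\aheap_i} \subseteq \astore(\fv{\aform_i})$, and observing that any $\ell \in \locs{\aheap} \setminus \dom{\aheap}$ lies in some $\locs{\aheap_i}$ but outside $\dom{\aheap_i}$.

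The real content is the case where $\aform$ is a predicate atom $p(x_1,\dots,x_{\ar{p}})$. I would first argue that, since $\aheap$ is finite and the progress condition forces every $\unfoldto{\asid}$ step on a predicate atom to consume exactly one heap cell, one can iterate the unfolding to obtain a predicate-free formula $\exists \vec{z}.~ \aformC$ with $p(x_1,\dots,x_{\ar{p}}) \unfoldto{\asid}^* \exists \vec{z}.~\aformC$ and $(\astore',\aheap) \modelsr \aformC$ for some store $\astore'$ coinciding with $\astore$ outside $\vec{z}$. Formally, this requires a side induction on $\card{\dom{\aheap}}$, using the well-foundedness of $\modelsr$ on predicate atoms ensured by Definition \ref{def:semantics} together with the progress condition of Definition \ref{def:sid}. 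Since $\aformC$ is a separating conjunction of points-to atoms and {\tformula}s, $\aheap$ is partitioned into the single cells corresponding to the points-to atoms of $\aformC$, so every location of $\locs{\aheap}$ is $\astore'(v)$ for some variable $v$ occurring in $\aformC$.

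The key step is then the invocation of establishment. Applied to $p(x_1,\dots,x_{\ar{p}})$ and to its predicate-free unfolding $\exists \vec{z}.~\aformC$, the hypothesis guarantees that every $z \in \vec{z}$ is equated in $\aformC$ (via some {\tformula} conjunct $\atform$ with $\atform \modelst z \iseq z'$) to a variable $z'$ that is the left-hand side of a points-to atom of $\aformC$. Since $\astore' \modelst \atform$, we obtain $\astore'(z) = \astore'(z') \in \dom{\aheap}$. Now take $\ell \in \locs{\aheap} \setminus \dom{\aheap}$ and write $\ell = \astore'(v)$ for some variable $v$ of $\aformC$; then $v \in \fv{\aform} \cup \vec{z}$, and $v \in \vec{z}$ would give $\ell \in \dom{\aheap}$, so $v \in \fv{\aform}$ and $\ell = \astore(v) \in \astore(\fv{\aform})$, as required.

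The main obstacle is making the full-unfolding step rigorous: Definition \ref{def:semantics} only provides a single-step unfolding at each level, and one must show that iterating yields a predicate-free formula with the same model. This is precisely where the progress condition is needed, to bound the depth of the recursion by $\card{\dom{\aheap}}$, and it is the reason why the structural induction on $\aform$ must be supplemented with an inner induction on the size of $\aheap$ in the predicate-atom case. Once this is established, the establishment hypothesis supplies the decisive fact that no ``dangling'' location in $\aheap$ can be the image of an existentially quantified variable.
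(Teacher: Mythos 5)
Your proof is correct and follows essentially the same route as the paper's: unfold to a predicate-free formula satisfied by an extended store, observe that every dangling location is the image of some variable in a points-to atom, and use establishment to rule out that variable being existential. The paper applies this argument directly to the whole symbolic heap in one step rather than wrapping it in a structural induction, but your extra scaffolding (and your explicit justification of the full-unfolding step via the progress condition) only makes the same argument more detailed, not different.
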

\begin{proof}
By definition, $\aform \unfoldto{\asid}^* \exists \vec{y}. ~ \aformB$, where  
$\aformB$ is a quantifier-free and predicate-free formula, and
$(\astore',\aheap) \modelsr \aformB$ for some store $\astore'$ coinciding with $\astore$
on all the variables not occurring in $\vec{y}$.
Let $\ell \in \locs{\aheap} \setminus \dom{\aheap}$.
Since $(\astore',\aheap) \modelsr \aformB$, necessarily $\aformB$ contains an atom of the form 
$y_0 \mapsto (y_1,\dots,y_\rank)$ with $\astore'(y_i) = \ell$, for some $i = 1,\dots,n$.
If $y_i \in \fv{\aformB} \subseteq \fv{\aform}$, then $\astore(y_i) = \astore'(y_i)$ and the proof is completed.
Otherwise, $y_i \in \vec{y}$, and by the establishment condition 
$\aformB$ contains an atom of the form $y_0' \mapsto (y_1',\dots,y_\rank')$ 
and a \tformula $\atform$ with $\atform \modelst y_i \iseq y_0'$.
Since $(\astore',\aheap) \modelsr \aformB$, we have $\astore' \modelst \atform$, thus $\astore'(y_i) = \astore'(y_0')$ and
$\ell \in \dom{\aheap}$, which contradicts our hypothesis.
\end{proof}

In the remainder of the paper, we assume that every considered \pcSID is established.

\section{Extending the Syntax}

\newcommand{\formalWatom}{$\tswand$-predicate\xspace}



We extend the syntax of formulas by considering 
constructs of the form $\MWl{\atail}{p(\vec{x})}{\vec{u}}{\vec{v}}$, called {\em {\Watom}s}  (standing for {\em partially unfolded atoms}), where $\atail$ is a possibly empty
separating conjunction of predicate atoms and $p \in \preds$. 
 The intuition is that a \Watom is valid in a structure 
 if
there exists a {\em partial} unfolding of $p(\vec{x})$ that is true in the considered structure, and the formula $\atail$ denotes the part that is not unfolded.

\begin{definition}
	A {\em \formalWatom} is an expression of the form 
	$\MWa{\atail}{p(\vec{x})}{\vec{u}}$, 
	where $\atail$ is a possibly empty
	separating conjunction of predicate atoms, $p \in \preds$ and $\vec{u}$ is a vector of pairwise distinct variables containing all the variables in $\fv{\atail} \cup \vec{x}$.
	A {\em \Watom} is an expression of the form 
	$\MWf{\anatom}{\vec{v}}$, where $\anatom$ is a \formalWatom $\MWa{\atail}{p(\vec{x})}{\vec{u}}$ and $\len{\vec{u}} = \len{\vec{v}}$.
	
		For every \Watom $\MWf{\anatom}{\vec{v}}$, we define
	$\MWf{\anatom}{\vec{v}}\sigma = \MWf{\anatom}{\vec{v}\sigma}$, and we let $\size{\MWl{\atail}{p(\vec{x})}{\vec{u}}{\vec{v}}} = \size{\atail} + \size{p(\vec{x})} + \size{\vec{u}} + \size{\vec{v}} + 1$. 
\end{definition}

Note that for a  {\Watom} $\MWf{\anatom}{\vec{v}}$, $\vec{v}$ is necessarily of the form $\vec{u}\theta$, for some substitution $\theta$ with domain $\vec{u}$, since  the variables in $\vec{u}$ are pairwise distinct.
Note also that when applying a substitution to $\MWf{\anatom}{\vec{v}}$ the variables occurring in the \formalWatom $\anatom$ are not instantiated: they may be viewed as bound variables.
Formally, the semantics of these constructs is defined as follows.

\begin{definition}
\label{def:semMW}
	For every \pcSID $\asid$ and for every SL-structure $(\astore,\aheap)$, 
	$(\astore,\aheap) \modelsr \MWs{\atail}{p(\vec{x})}{\vec{u}}{\theta}$ if
	there exists a formula of the form $\exists \vec{y}. (\atail' * \aform)$, a substitution  $\sigma$ with   $\dom{\sigma} \subseteq \vec{y} \cap \fv{\atail'}$, 
	and a store $\astore'$ coinciding with $\astore$ on all variables not occurring in $\vec{y}$ such that:
	\begin{itemize}
		\item $p(\vec{x}) \unfoldto{\asid}^+ \exists \vec{y}. (\atail' * \aform)$, (up to AC and transformation into prefex form),
		\item $\atail = \atail'\sigma$,
		\item $(\astore',\aheap) \modelsr \aform\sigma\theta$. 
	\end{itemize}
\end{definition}

%
%

\newcommand{\myset}[1]{\left\{#1\right\}}
\newcommand{\isdef}{\stackrel{\mbox{\tiny def}}{=}}

\begin{example}
	Consider the \pcSID
 \[\asid= 
	\{ p(x) \Leftarrow \allowbreak  \exists z_1 z_2.~x\mapsto (z_1, z_2) * q(z_1) * q(z_2),\ \allowbreak\ q(x) 
	\allowbreak \Leftarrow\allowbreak x\mapsto \allowbreak (x,x)\},\]
	the heap $\aheap = \myset{(\ell_1, \ell_2, \ell_3),\, (\ell_3, \ell_3, \ell_3)}$ and the store $\astore$ such that $\astore(x) = \ell_1$ and $\astore(y) = \ell_2$. We have $(\astore, \aheap) \modelsr \MWl{q(y')}{p(x')}{x',y'}{x,y}$. Indeed, it is straightforward to verify that
	\(p(x') \unfoldto{\asid}^+ \exists z_1 z_2.~ (x' \mapsto (z_1, z_2) * q(z_1) * z_2 \mapsto (z_2, z_2)).\)
	Hence, by letting 
	$\sigma = \myset{z_1 \leftarrow y'}$ 
	and considering the store $\astore'$ such that $\astore'(z_2) = \ell_3$ and which coincides with $\astore$ otherwise, we have $(\astore', \aheap) \models x \mapsto (y, z_2) * z_2 \mapsto (z_2, z_2) = (x' \mapsto (y', z_2) * z_2 \mapsto (z_2, z_2))\theta$, with $\theta = \{ x' \leftarrow x, y' \leftarrow y \}$. 
\end{example}

\begin{remark}
\label{rem:useless_var}
It is clear that the semantics of an atom 
$\MWs{\atail}{p(\vec{x})}{\vec{u}}{\theta}$ depends only on the variables 
$x\theta$ such that $x$ occurs in $\atail$ or $\vec{x}$ (and the order in which those variables occur in $\vec{u}$ does not matter).
Thus in the following we implicitly assume that the irrelevant variables are dismissed. In particular, in the termination proof of Section \ref{sect:term}, we assume that 
$\size{\MWl{\atail}{p(\vec{x})}{\vec{u}}{\vec{v}}} = \bigO(\size{\atail} + \size{p(\vec{x})})$.
Also, two {\Watom}s $\MWf{\anatom}{\vec{u}}$
and $\MWf{\anatom'}{\vec{u}}$ are equivalent if $\anatom$ and $\anatom'$ are identical up to a renaming.
\end{remark}

A predicate atom $p(\vec{x}\theta)$ where $\vec{x}$ is a vector of pairwise disjoint variables is equivalent to the \Watom $\MWl{\emp}{p(\vec{x})}{\vec{x}}{\vec{x}\theta}$,
thus, in the following, we  sometimes  assume that predicate atoms are written as {\Watom}s.
\begin{definition}
	The relation $\unfoldto{\asid}^*$ is extended to formulas containing {\Watom}s as follows:
	$\MWs{\atail}{p(\vec{x})}{\vec{u}}{\theta} \unfoldto{\asid}^* \exists \vec{x}. \aformB$ iff\footnote{We assume that $\vec{x}'$ contains no variable in $\vec{u}\theta$.} 
	$p(\vec{x}) \unfoldto{\asid}^* \exists \vec{x}'. (\atail' * \aformB')$
	and there exists a substitution $\sigma$ such that $\dom{\sigma} \subseteq \vec{x}'$, 
	$\aformB = \aformB'\sigma\theta$, $\atail = \atail'\sigma$ and 
	$\vec{x} = \vec{x}' \setminus \dom{\sigma}$. 
\end{definition}

\begin{example}
Consider the rules of Example \ref{ex:rules}. Then we have:
$\MWl{\alst(z',y')}{\alst(x',y')}{x',y',z'}{x,y,z} \unfoldto{\asid}
x \mapsto (z) * x \not \iseq y$. Indeed, $\alst(x',y') \unfoldto{\asid} \exists u.~ (x' \mapsto (u) * \alst(u,y')  * x' \not \iseq y')$, hence it suffices to apply the above definition with the substitutions $\sigma = \{ u \leftarrow z' \}$ 
and $\theta = \{ x' \leftarrow x,\, y' \leftarrow y,\, z' \leftarrow z \}$.
\end{example}

\begin{remark}
  The semantics of $\swand$ is similar but slightly different from that  
  of the {\em context predicates} introduced in 
  \cite{EIP21a}.
 The difference is that the semantics uses the syntactic identity  $\atail = \atail'\sigma$ instead
of a semantic equality. 
For instance, with the rules of Example \ref{ex:rules}, the formula $\MWl{\alst(x',z')}{\alst(x',y')}{x',y'}{x,y,z}$ 
is unsatisfiable if $y' \not = z'$, because no atom $\alst(x',z')$  can occur in an unfolding of $\alst(x,',y')$. In contrast, a context predicate $\MW{\alst(x,z)}{\alst(x,y)}$ (as defined in \cite{EIP21a}) possibly holds in some structures $(\astore,\aheap)$ with $\astore(y) = \astore(z)$.
The use of {\formalWatom}s allows one to get rid of all equality constraints, by instantiating the variables occurring in the former. This is essential for forthcoming lemmas (see, e.g., Lemma \ref{lem:outsideheap}).
{\Watom}s are also related to the notion of $\Phi$-trees in \cite{PMZ20}
and to the {\em strong magic wand} introduced in
  \cite{NTKY2018}.
\end{remark}


\newcommand{\swfree}{\tswand-free\xspace}

In the following, unless specified otherwise, all the considered formulas are defined on the extended syntax.
A formula containing no occurrence of the symbol $\swand$  will be called a  {\em \swfree} formula.



\begin{proposition}
	\label{prop:roots}
	If $(\astore,\aheap) \modelsr \MWl{\atail}{p(x_1,\dots,x_{\ar{p}})}{\vec{u}}{\vec{u}\theta}$ then $\astore(x_1\theta) \in \dom{\aheap}$.
\end{proposition}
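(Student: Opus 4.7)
The plan is to generalize Proposition~\ref{prop:prog-dom}, whose proof rests on the progress condition, to the case of \Watom{}s. The key observation is that, although a \Watom\ corresponds to a \emph{partial} unfolding of a predicate $p(\vec{x})$, any such unfolding must use at least one rule for $p$ as its first step (because $\unfoldto{\asid}^+$ requires at least one unfolding step), and by the progress condition this rule introduces the points-to atom $x_1 \mapsto (y_1,\dots,y_\rank)$, which cannot be unfolded further.

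More precisely, by Definition~\ref{def:semMW}, the hypothesis gives an unfolding
\[ p(x_1,\dots,x_{\ar{p}}) \unfoldto{\asid}^+ \exists \vec{y}. (\atail' * \aform), \]
a substitution $\sigma$ with $\dom{\sigma} \subseteq \vec{y} \cap \fv{\atail'}$, and a store $\astore'$ coinciding with $\astore$ outside $\vec{y}$, such that $(\astore',\aheap) \modelsr \aform\sigma\theta$. Up to $\alpha$-renaming, we may assume $\vec{y}$ is disjoint from $\vec{u}\theta$ and from $\{x_1,\dots,x_{\ar{p}}\}$. The first unfolding step rewrites $p(x_1,\dots,x_{\ar{p}})$ using some rule $p(x_1,\dots,x_{\ar{p}}) \Leftarrow \exists \vec{w}.~ x_1 \mapsto (z_1,\dots,z_\rank) * \aformB$ of $\asid$, producing a points-to atom whose left-hand side is exactly $x_1$. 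Subsequent unfoldings rewrite predicate atoms only (since $\unfoldto{\asid}$ acts on such atoms), so this points-to atom persists and must appear in $\atail' * \aform$.

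Since $\atail'$ is a separating conjunction of predicate atoms by the definition of a \formalWatom, the points-to atom $x_1 \mapsto (\ldots)$ cannot lie in $\atail'$ and must therefore appear in $\aform$. Because $x_1 \notin \vec{y}$ and $\dom{\sigma} \subseteq \vec{y}$, we have $x_1\sigma = x_1$, hence $\aform\sigma\theta$ contains the points-to atom $x_1\theta \mapsto (\ldots)$. By the semantics of the separating conjunction, this forces $\astore'(x_1\theta) \in \dom{\aheap}$. Finally, since $x_1\theta \notin \vec{y}$ (by our $\alpha$-renaming assumption), $\astore'$ and $\astore$ agree on $x_1\theta$, giving $\astore(x_1\theta) \in \dom{\aheap}$.

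The only subtle point is the bookkeeping of variables: one must carefully justify that $x_1$ is untouched by $\sigma$ and that $x_1\theta$ is untouched when passing between $\astore$ and $\astore'$. This is handled by an $\alpha$-renaming on $\vec{y}$ to avoid collisions with $\vec{u}\theta$ and with the free variables of $p(\vec{x})$, which is harmless by the standard conventions on bound variables.
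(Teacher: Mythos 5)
Your proof is correct and follows essentially the same route as the paper's: extract the points-to atom with left-hand side $x_1$ from the first unfolding step via the progress condition, observe it cannot land in $\atail'$ because $\atail = \atail'\sigma$ consists only of predicate atoms, and then do the variable bookkeeping ($x_1\sigma = x_1$, $x_1\theta \notin \vec{y}$) to transfer the allocation from $\astore'$ back to $\astore$. The only cosmetic difference is that the paper derives $\vec{y} \cap \{x_1,\dots,x_{\ar{p}}\} = \emptyset$ directly from Definition~\ref{def:unfold} rather than by $\alpha$-renaming, which changes nothing of substance.
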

\begin{proof}
	By definition, there exists a formula $\exists \vec{y}. (\atail' * \aform)$ and a substitution $\sigma$ such that $p(x_1,\dots,x_{\ar{p}}) \unfoldto{\asid}^+ \exists \vec{y}. (\atail' * \aform)$, $\atail = \atail'\sigma$, 
	$\dom{\sigma} \subseteq \vec{y}$ 
	and $(\astore',\aheap) \modelsr \aform\sigma\theta$, where $\astore'$ coincides with $\astore$ on every variable not occurring in $\vec{y}$. 
	Note that $\vec{y} \cap \{ x_1,\dots,x_{\ar{p}} \} = \emptyset$ by Definition \ref{def:unfold}; 
	up to $\alpha$-renaming, we may assume that $\vec{y} \cap \vec{u}\theta = \emptyset$.
	By the progress condition, $\atail' * \aform$ contains an atom of the form $x_1 \mapsto (z_1,\dots,z_\rank)$.
	This atom cannot occur in $\atail'$, because $\atail = \atail'\sigma$ and $\atail$ only contains predicate atoms by definition of {\Watom}s.
	Thus $x_1 \mapsto (z_1,\dots,z_\rank)$ occurs in $\aform$, and  $(\astore'(x_1\sigma\theta),\astore'(z_1\sigma\theta),\dots,\astore'(z_\rank\sigma\theta)) \in \aheap$. 
Since $x_1 \not \in \vec{y}$, we have $x_1\sigma\theta = x_1\theta$, and since $\vec{y} \cap \vec{u}\theta$, necessarily  $x_1\theta \not \in \vec{y}$ and $\astore'(x_1\theta) = \astore(x_1\theta)$. Hence $\astore(x_1\theta) \in \dom{\aheap}$.
\end{proof}

  \begin{proposition}
  \label{prop:inst_deriv}
  If $\aform \unfoldto{\asid}^* \aformB$ then
  $\aform\sigma \unfoldto{\asid}^* \aformB\sigma$, for every susbtitution $\sigma$. If $\aform\sigma \unfoldto{\asid}^* \aformB'$ then $\aform \unfoldto{\asid}\aformB$ for a formula $\aformB$ such that $\aformB\sigma = \aformB'$. 
  \end{proposition}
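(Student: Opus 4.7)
The plan is to handle the two implications separately, proceeding by induction.

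For the first implication, I would induct on the number of unfolding steps. The zero-step case is trivial. For a single step, I would then induct on the structure of $\aform$ (equivalently, on the definition of $\unfoldto{\asid}$ in Definition~\ref{def:unfold}). The base case is a predicate atom $p(x_1,\dots,x_{\ar{p}})$, where $\aformB = \replall{\aformB_0}{y_i}{x_i}{i \in \interv{1}{\ar{p}}}$ arises from a rule $p(y_1,\dots,y_{\ar{p}}) \Leftarrow \aformB_0$. Up to $\alpha$-renaming, we may assume that the bound variables of $\aformB_0$ are disjoint from $\dom{\sigma} \cup \img{\sigma}$ and from $\{x_i\sigma \mid 1 \le i \le \ar{p}\}$, so that the same rule unfolds $p(x_1\sigma,\dots,x_{\ar{p}}\sigma)$ to $\replall{\aformB_0}{y_i}{x_i\sigma}{i \in \interv{1}{\ar{p}}} = \aformB\sigma$. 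The inductive cases ($\aform = \aform_1 \bullet \aform_2$, $\aform = \exists x.~\aformB_1$) follow from the induction hypothesis, again after $\alpha$-renaming bound variables to avoid clashes with $\dom{\sigma} \cup \img{\sigma}$.

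For the second implication (which concerns one-step unfolding), I would induct on the structure of $\aform$. If $\aform = p(x_1,\dots,x_{\ar{p}})$, then $\aform\sigma = p(x_1\sigma,\dots,x_{\ar{p}}\sigma) \unfoldto{\asid} \aformB'$ means that $\aformB' = \replall{\aformB_0}{y_i}{x_i\sigma}{i \in \interv{1}{\ar{p}}}$ for some rule $p(y_1,\dots,y_{\ar{p}}) \Leftarrow \aformB_0$ whose bound variables are disjoint from the $x_i\sigma$. Taking $\aformB \isdef \replall{\aformB_0}{y_i}{x_i}{i \in \interv{1}{\ar{p}}}$ (after further $\alpha$-renaming so that bound variables are also disjoint from $\dom{\sigma} \cup \img{\sigma}$ and from the $x_i$) yields $\aform \unfoldto{\asid} \aformB$ and $\aformB\sigma = \aformB'$. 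For the recursive cases, the step must happen inside one subformula (the two cases (i) and (ii) of Definition~\ref{def:unfold}); the induction hypothesis produces a pre-image formula and I reassemble the top-level constructor, using $\alpha$-conversion to preserve the side-conditions on bound variables.

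The main obstacle, which is routine but must be handled carefully, is variable capture: the substitution $\sigma$ could in principle identify free and bound variables in the rule's right-hand side or under an existential. My plan is to use $\alpha$-renaming systematically in both directions so that bound variables of the rule and of the formula are always chosen fresh w.r.t.\ $\dom{\sigma} \cup \img{\sigma}$. Once this is fixed, every other step is a direct unfolding of the definitions. Since formulas are taken up to $\alpha$-conversion, this does not affect the statement.
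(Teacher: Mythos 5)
Your proposal is correct and follows essentially the same route as the paper: both arguments reduce to the single-step predicate-atom case, where the rule $p(y_1,\dots,y_{\ar{p}}) \Leftarrow \aformB_0$ is instantiated with $x_i\sigma$ in place of $x_i$ (resp.\ with $x_i$ to build the pre-image $\aformB$), after $\alpha$-renaming the rule's bound variables away from $\dom{\sigma} \cup \img{\sigma}$ so that the substitutions commute, and the compound cases follow by a routine structural induction. Your treatment is somewhat more explicit about the induction scaffolding and the capture-avoidance conditions, but there is no substantive difference.
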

 \begin{proof}
	The proof is by induction on the derivation.
	We only handle the case where $\aform$ is a predicate atom $p(x_1,\dots,x_n)$ and
	$\aform \unfoldto{\asid} \aformB$ (resp. $\aform\sigma \unfoldto{\asid} \aformB'$), 
	the general case follows by an immediate induction.
	Assume that $\aform \unfoldto{\asid} \aformB$. Then by definition, $\aformB = \replall{\aformC}{y_i}{x_i}{i =1,\dots,n}$,
	for some rule $p(y_1,\dots,y_n) \Leftarrow \aformC$ in $\asid$.
	We deduce that $p(x_1\sigma,\dots,x_n\sigma) \unfoldto{\asid} \replall{\aformC}{y_i}{x_i\sigma}{i =1,\dots,n}$, thus
	$\aform\sigma \unfoldto{\asid} \aformB\sigma$. Now assume that $\aform\sigma \unfoldto{\asid} \aformB'$. Then $\aformB' = \replall{\aformC}{y_i}{x_i\sigma}{i =1,\dots,n}$,
	for some rule $p(y_1,\dots,y_n) \Leftarrow \aformC$ in $\asid$, and w.l.o.g., we may assume up to $\alpha$-renaming that no variable in $\dom{\sigma}$ occurs in the rule, so that in particular, $\aformC\sigma= \aformC$. Let $\aformB = \replall{\aformC}{y_i}{x_i}{i =1,\dots,n}$, then it is clear that $\aform \unfoldto{\asid} \aformB$ and we have
	\[\aformB\sigma\ =\ (\replall{\aformC}{y_i}{x_i}{i =1,\dots,n})\sigma\ =\ \replall{\aformC}{y_i}{x_i\sigma}{i =1,\dots,n}\ =\ \aformB'.\]             
\end{proof}

%
%
%
%
%


\begin{proposition}
\label{prop:subst_sw}
If $(\astore,\aheap) \models \aform$ and $\astore(x) = \astore(x\theta)$ for every variable $x\in \fv{\aform}$, 
then
$(\astore,\aheap) \models \aform\theta$.
\end{proposition}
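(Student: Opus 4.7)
The plan is to proceed by well-founded induction on the derivation of $(\astore,\aheap) \modelsr \aform$; well-foundedness follows from the progress condition (noted after Definition~\ref{def:semantics}), which ensures that every recursive call for a predicate atom or a \Watom operates on a strictly smaller heap, while all other clauses reduce the formula.

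The base cases ($\emp$, a points-to atom, or a \tatom) are immediate: the semantics depends on $\astore$ only through its values on $\fv{\aform}$ (for {\tatom}s this is a standard property of $\modelst$), so the hypothesis $\astore(y) = \astore(y\theta)$ transfers truth from $\aform$ to $\aform\theta$. For $\aform_1 \vee \aform_2$ and $\aform_1 * \aform_2$, apply the IH to each subformula, using $\fv{\aform_i} \subseteq \fv{\aform}$ and, for $*$, reusing the same heap decomposition. For $\aform = p(x_1,\dots,x_{\ar{p}})$ with $p \in \preds$, pick $\aformB$ with $\aform \unfoldto{\asid} \aformB$ and $(\astore,\aheap) \modelsr \aformB$; after $\alpha$-renaming the bound variables introduced by unfolding to lie outside $\dom{\theta} \cup \img{\theta}$, the IH applies to $\aformB$, and Proposition~\ref{prop:inst_deriv} gives $\aform\theta \unfoldto{\asid} \aformB\theta$, so $(\astore,\aheap) \modelsr \aform\theta$.

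For $\aform = \exists x.~\aform_1$, first $\alpha$-rename $x$ so that $x \notin \dom{\theta} \cup \img{\theta}$, whence $\aform\theta = \exists x.~\aform_1\theta$. If $\astore'$ agrees with $\astore$ off $x$ and $(\astore',\aheap) \modelsr \aform_1$, one checks $\astore'(y) = \astore'(y\theta)$ for every $y \in \fv{\aform_1}$: the case $y = x$ is trivial since $x\theta = x$, and for $y \neq x$ both $y$ and $y\theta$ differ from $x$ (using $x \notin \img{\theta}$), so $\astore'$ coincides with $\astore$ on both and the hypothesis on $\astore$ applies.

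The main obstacle is the \Watom case $\aform = \MWs{\atail}{p(\vec{x})}{\vec{u}}{\theta'}$, for which $\aform\theta = \MWs{\atail}{p(\vec{x})}{\vec{u}}{\theta'\theta}$. Unpacking Definition~\ref{def:semMW} yields an unfolding $p(\vec{x}) \unfoldto{\asid}^+ \exists \vec{y}.~(\atail' * \aform')$, a substitution $\sigma$ with $\dom{\sigma} \subseteq \vec{y} \cap \fv{\atail'}$, $\atail = \atail'\sigma$, and a store $\astore'$ coinciding with $\astore$ off $\vec{y}$ with $(\astore',\aheap) \modelsr \aform'\sigma\theta'$. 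After $\alpha$-renaming $\vec{y}$ to be disjoint from $\dom{\theta} \cup \img{\theta} \cup \vec{u}\theta'$, one verifies that every free variable $z$ of $\aform'\sigma\theta'$ is either (i) in $\vec{u}\theta' \subseteq \fv{\aform}$, so $z, z\theta \notin \vec{y}$ and $\astore'(z) = \astore(z) = \astore(z\theta) = \astore'(z\theta)$, or (ii) an existential variable in $\vec{y} \setminus \dom{\sigma}$ untouched by $\theta$, so $z\theta = z$ trivially. The IH then gives $(\astore',\aheap) \modelsr \aform'\sigma\theta'\theta$, and reusing the same unfolding and $\sigma$ in Definition~\ref{def:semMW} with $\theta'\theta$ in place of $\theta'$ yields $(\astore,\aheap) \modelsr \aform\theta$.
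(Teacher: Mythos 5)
Your proposal is correct and follows essentially the same route as the paper: induction on the satisfiability relation, with the only delicate case being the \Watom, handled by $\alpha$-renaming the existential variables of the unfolding away from $\dom{\theta}$ and the free variables, verifying the store hypothesis transfers (trivially for the renamed existentials, via the original hypothesis for variables in $\vec{u}\theta'$), and then reusing the same unfolding and $\sigma$ with $\theta'\theta$ in place of $\theta'$. The paper simply declares the remaining cases straightforward where you spell them out; there is no substantive difference.
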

\begin{proof}
	The proof is by induction on the satisfiability relation. We only detail the proof when $\aform$ is 
	of the form $\MWs{\atail}{p(x_1,\dots,x_n)}{\vec{y}}{\theta'}$, the other cases are straightforward. 
	Since $(\astore,\aheap) \models \aform$, there exist a formula $\exists \vec{u}. (\atail' * \aformB)$,  a store $\astore'$ coinciding with $\astore$ on all variables not in $\vec{u}$ and a substitution $\sigma$ such that 
	$p(x_1,\dots,x_n) \unfoldto{\asid}^+ \exists \vec{u}. (\atail' * \aformB)$, $\dom{\sigma} \subseteq \vec{u} \cap \fv{\atail'}$, $(\astore',\aheap) \models \aformB\sigma\theta'$ and 
	$\atail'\sigma = \atail$.
	We assume (by $\alpha$-renaming) that $\vec{u}$ contains no variable in $\dom{\theta} \cup  \fv{\aform}$. This entails that $\astore'(x) = \astore'(x\theta)$ holds for all variables $x \in \fv{\aformB\sigma\theta'}$. Indeed, if $x\in \vec{u}$ then $x\theta = x$, and otherwise, $x \in \fv{\aform}$, so that
	$\astore'(x) = \astore(x) = \astore(x\theta) = \astore'(x\theta)$. 
	Then by the induction hypothesis we get
	 $(\astore',\aheap) \models \aformB\sigma\theta'\theta$, so that 
	 $(\astore,\aheap) \models \MWs{\atail}{p(x_1,\dots,x_n)}{\vec{y}}{\theta'\theta} = \aform\theta$.
\end{proof}

\section{Sequents}

Our proof procedure handles sequents which are defined as follows.
\newcommand{\empseq}{\Box}

\begin{definition}
\label{def:sequent}
A {\em sequent} is an expression of the form $\aform_0 \vdashr \aform_1,\dots,\aform_n$, where $\asid$ 
is a \pcSID,
$\aform_0$ is a \swfree  formula and $\aform_1,\dots,\aform_n$ are formulas. 
When $n=0$, the right-hand side of a sequent is represented by $\empseq$. 
A sequent is {\em disjunction-free} (resp.\ {\em \swfree}) if $\aform_0,\dots,\aform_n$ are disjunction-free (resp.\ \swfree), and {\em established} if $\asid$ is established.
We define: 
{\small
\begin{eqnarray*}
	\size{\aform_0 \vdashr \aform_1,\dots,\aform_n} &=& \Sigma_{i=0}^n \size{\aform_i} + \size{\asid}, \qquad \fv{\aform_1,\dots,\aform_n} = \bigcup_{i=0}^n \fv{\aform_i},\\
	\widt{\aform_0 \vdashr \aform_1,\dots,\aform_n} &=& \max \{  \widt{\aform_i}, \widt{\asid}, \card{ \bigcup_{i=0}^n \fv{\aform_i}} \mid 0 \leq i \leq n \}.
\end{eqnarray*}}
\end{definition}
Note that {\Watom}s occur only on the right-hand side of a sequent.
 Initially, all the considered sequents will be \swfree, but {\Watom}s will be introduced on the right-hand side by the inference rules defined in Section \ref{sect:rules}.

\newcommand{\countermodel}{countermodel\xspace}


\begin{definition}
\label{def:counter_model}
 A structure $(\astore,\aheap)$ is a {\em \countermodel} of 
a sequent  $\aform \vdashr \aseq$ iff $\astore$ is injective,  $(\astore,\aheap) \modelsr \aform$
 and $(\astore,\aheap) \not \modelsr \aseq$.
A sequent is {\em valid} if it has no \countermodel.
Two sequents are {\em equivalent} if they are both valid or both non-valid\footnote{Hence two non-valid sequents with different {\countermodel}s are equivalent.}.
\end{definition}


\begin{example}
For instance,
 $\slst(x_1,x_2) * x_2 \mapsto (x_3) * x_2 < x_3 \vdashr \alst(x_1,x_3)$ is a sequent, where $\asid$ is the set of rules from Example \ref{ex:rules} and $<$ is a \tpredicate interpreted as the usual strict order on integers. It is easy to check that it is  valid, since by definition of $\slst$, all the locations allocated by 
 $\slst(x_1,x_2)$ must be less or equal to $x_2$, hence cannot be equal to $x_3$. 
On the other hand, the sequent $\slst(x_1,x_2)  \vdashr \alst(x_1,x_2)$ is not valid: it admits the \countermodel $(\astore,\aheap)$, with $\astore(x_1) = 0, \astore(x_2) = 1$
and $\aheap = \{ (0,1), (1,1) \}$.
\end{example}

\begin{remark}	
The restriction to injective {\countermodel}s is for technical convenience only and does not entail any loss of generality, since it is possible to enumerate all the
equivalence relations on the free variables occurring in the sequent and test entailments separately
for each of these relations, by replacing all the variables in the
same class by the same representative. The number of such relations is simply exponential w.r.t.\ the number of free variables, thus the reduction does not affect the overall $2$-\exptime membership result derived from Theorem \ref{theo:comp_term}.
Note that the condition ``$\astore$ is injective'' in Definition \ref{def:counter_model} could be safely replaced by the slightly weaker condition ``$\astore$ is injective on $\fv{\aform} \cup \fv{\aseq}$''. Indeed, since $\Loc$ is infinite, we can always find an injective store $\astore'$ coinciding with $\astore$ on all variables in $\fv{\aform} \cup \fv{\aseq}$.
\end{remark}


Every formula $\aform$ can be reduced to an equivalent disjunction of symbolic heaps, using the well-known equivalences (if $x \not \in \fv{\aformB}$):
\[
\begin{tabular}{lllllll}
$\aform * (\aformB_1 \vee \aformB_2)$ & $\equiv$ & $(\aform * \aformB_1) \vee (\aform * \aformB_2)$ & \qquad &
$(\exists x. \aform) * \aformB$ & $\equiv$ & $\exists x. (\aform * \aformB)$ \\
$\exists x. (\aform_1 \vee \aform_2)$ & $\equiv$ & $(\exists x. \aform_1) \vee (\exists x. \aform_2)$ \\
\end{tabular} 
\]
Consequently any sequent $\aform \vdashr \aseq$ can be reduced to an equivalent sequent of the form $\left(\bigvee_{i=1}^n \aform_i\right) \vdashr \aformB_1,\dots,\aformB_m$, where $\aform_1,\dots,\aform_n,\aformB_1,\dots,\aformB_m$ are disjunction-free.
Moreover, it is clear that the latter sequent is valid iff all the sequents in the set $\{ \aform_i \vdashr \aformB_1,\dots,\aformB_m \mid 1 \leq i \leq n\}$ are valid.
 Thus we will assume in Section \ref{sect:rules} that the considered sequents
are disjunction-free.
Note that for all $i$, we have $\widt{\aform_i \vdashr \aformB_1,\dots,\aformB_m} \leq \widt{\aform \vdashr \aseq}$, hence the reduction (although exponential) preserves the complexity result in Theorem \ref{theo:comp_term}.

\begin{proposition}
\label{prop:size_card}
For every set of expressions (formulas, sequents or rules) $E$, if $\size{e} \leq n$ for all $e \in E$, then
$\card{E} = \bigO(2^{c\cdot n})$, for some constant $c$.
\end{proposition}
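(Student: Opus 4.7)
The plan is to reduce this to a simple word-counting argument based on the size conventions introduced in the Size and Width subsection. Fix the finite alphabet $A$ over which the symbols in $\preds \cup \tpreds \cup \vars$ are words, and enlarge it by the finitely many syntactic markers used by the grammar: the logical symbols $\vee$, $*$, $\exists$, $\mapsto$, $\swand$, $\vdashr$, plus parentheses, braces and commas. Call the resulting finite alphabet $A'$ and let $k = \card{A'}$, which is a constant strictly greater than $1$.

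First I would fix a flat word-encoding of expressions over $A'$. By the weight convention of the paper, each symbol $s \in \preds \cup \tpreds \cup \vars$ contributes $\len{s}$ to $\size{e}$, while each logical symbol contributes $1$. Hence, up to a constant multiplicative factor $\alpha$ (needed to account for a bounded number of delimiters per atom, for the separators $\vdashr$ and ``,'' in sequents, and, thanks to Remark \ref{rem:useless_var}, for the fact that {\Watom}s may be encoded within $\bigO(\size{\atail}+\size{p(\vec{x})})$ characters), every $e\in E$ with $\size{e} \leq n$ admits at least one encoding as a word in $(A')^{\leq \alpha n}$. This yields an injection from $E$ into $(A')^{\leq \alpha n}$ (several encodings per element, from AC of $\vee$ and $*$ or $\alpha$-renaming, only overcount, which is harmless for an upper bound).

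Then I would finish by counting words:
\[
\card{(A')^{\leq \alpha n}} \;=\; \sum_{i=0}^{\alpha n} k^i \;\leq\; (\alpha n + 1)\cdot k^{\alpha n}.
\]
Choosing any constant $c > \alpha \log_2 k$, this gives $\card{E} \leq (\alpha n + 1)\cdot k^{\alpha n} = \bigO(2^{c\cdot n})$, as claimed. I do not expect any real obstacle here: the only points requiring care are (i) checking that the flat encoding of each of the three kinds of expressions (formulas, sequents, inductive rules) remains linear in the size, which follows directly from the conventions of Definitions \ref{def:sid} and \ref{def:sequent}, and (ii) handling {\Watom}s, for which the dismissal of irrelevant variables (Remark \ref{rem:useless_var}) ensures that their encoding length is indeed linear in their declared size.
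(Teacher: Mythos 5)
Your proposal is correct and follows essentially the same route as the paper: the paper's own proof simply observes that every expression is a word of length at most $n$ over a fixed finite vocabulary of cardinality $m$, hence $\card{E} \leq m^n = 2^{n\cdot\ln(m)}$. Your additional care about the constant-factor overhead of the encoding and the summation over lengths up to $\alpha n$ only refines the same word-counting argument and changes nothing essential.
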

\begin{proof}
By definition every element $e\in E$ is a word with $\size{e} \leq n$, on a vocabulary of some fixed cardinality $m$, thus 
$\card{E} \leq m^n = 2^{n\cdot\ln(m)}$.
\end{proof}

\begin{proposition}
\label{prop:size_set}
Let $E$ be a set of expressions (formulas, sequents or rules)  such that $\size{e} \leq n$, for all $e \in E$.
Then $\size{E} \leq \bigO(2^{d\cdot n})$, for some constant $d$.
\end{proposition}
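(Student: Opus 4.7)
The plan is to derive this directly from Proposition \ref{prop:size_card} by observing that the size of a set of expressions is bounded by the product of its cardinality and the maximum size of its elements. Intuitively, if each individual expression fits within $n$ symbols and the number of such expressions is at most singly exponential in $n$, then writing them all out still costs only singly exponential space.

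More concretely, I would first apply Proposition \ref{prop:size_card} to obtain a constant $c$ such that $\card{E} \leq c' \cdot 2^{c \cdot n}$ for some fixed multiplicative constant $c'$. Next, by the hypothesis that $\size{e} \leq n$ for every $e \in E$, the total size $\size{E}$ (being the sum of the sizes of the elements, up to a constant overhead for set delimiters and separators) is bounded by $\size{E} \leq (n + O(1)) \cdot \card{E} \leq (n + O(1)) \cdot c' \cdot 2^{c \cdot n}$.

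Finally I would absorb the linear factor $n$ into the exponential: since $n \leq 2^n$ for all $n \geq 1$, we have $n \cdot 2^{c \cdot n} \leq 2^{(c+1)n}$, so choosing $d = c+1$ (and adjusting the leading constant) yields $\size{E} = \bigO(2^{d \cdot n})$, as required. There is no real obstacle here; the only minor point worth mentioning explicitly is the convention by which $\size{E}$ is computed as the sum of element sizes (plus constant overhead), which follows the pattern of the size definitions given earlier in the paper.
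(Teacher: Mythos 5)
Your proposal is correct and follows essentially the same route as the paper: apply Proposition~\ref{prop:size_card} to bound $\card{E}$ by $\bigO(2^{c\cdot n})$, bound $\size{E}$ by $n\cdot\card{E}$, and absorb the linear factor into the exponent to get $d = c+1$. The extra care about set delimiters and the explicit inequality $n \leq 2^{n}$ are harmless elaborations of the same argument.
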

\begin{proof}
By Proposition \ref{prop:size_card}, $\card{E} = \bigO(2^{c\cdot n})$. Thus 
$\size{E} \leq n\cdot \card{E} =  \bigO(2^{d\cdot n})$ with $d = c+1$.
\end{proof}

\section{Allocated Variables and Roots}
 
 
In this section we introduce an additional restriction 
on \pcSIDs, called {\em \alloccompatibility}, and we show that every \pcSID can be reduced to 
an equivalent \alloccompatible set.
This restriction ensures that the set of free variables 
allocated by a predicate atom is the same in every unfolding. This property will be useful 
for defining some of the upcoming inference rules.
Let $\allocf$ be a function mapping each predicate symbol $p$  to a 
subset  of $\interv{1}{\ar{p}}$.
For any disjunction-free and \swfree formula $\aform$, we denote by $\alloc{\aform}$  the set 
of variables $x\in \fv{\aform}$ inductively defined as follows:
{\small 
\[
\begin{tabular}{llll}
$\alloc{\atform}$ & $=$ & $\emptyset$ & if $\atform$ is a \tformula (or $\emp$)\\
$\alloc{x \mapsto (y_1,\dots,y_\rank)}$ & $=$ & $\{ x \}$ \\
$\alloc{p(x_1,\dots,x_n)}$ & $=$ & $\{ x_i \mid i \in \alloc{p} \}$ & if $p \in \preds$  \\
$\alloc{\aform_1 * \aform_2}$ & $=$ & $ \alloc{\aform_1} \cup \alloc{\aform_2}$ \\
$\alloc{\exists x. ~\aform}$ & $=$ & $ \alloc{\aform} \setminus \{ x \}$  \\
\end{tabular}
\]
}

\newcommand{\afun}{{\frak f}}

 \newcommand{\expl}[1]{#1^*}

 \begin{definition}
 An established \pcSID $\asid$ is {\em \alloccompatible} if for all rules $\anatom \Leftarrow \aform$ in $\asid$, we have $\alloc{\anatom} = \alloc{\aform}$ .
A sequent $\aform \vdashr \aseq$ is {\em \alloccompatible} if $\asid$ is \alloccompatible.
 \end{definition}

Intuitively, $\alloc{\aform}$ is meant to contain  the free variables of $\aform$ that are allocated in  
the models of $\aform$.
 The fact that $\asid$ is \alloccompatible ensures that this set does not depend on the considered model of $\aform$.   

\begin{example} 
The set $\asid = \{ p(x,y) \Leftarrow x \mapsto (y), p(x,y) \Leftarrow x \mapsto (y) * p(y,x) \}$ is not \alloccompatible. Indeed, on one hand we have $p(x,y) \unfoldto{\asid}^* x \mapsto (y)$, and on the other hand,
 $p(x,y) \unfoldto{\asid}^* x \mapsto (y) * y \mapsto (x)$. But $\alloc{x \mapsto (y)} = \{ x \} \not = \{ x,y \}= \alloc{x \mapsto (y) * y \mapsto (x)}$.
 
 The set $\asid' = \{ p(x,y) \Leftarrow x \mapsto (y), p(x,y) \Leftarrow \exists z.~ x \mapsto (z) * p(z,x) \}$
 is \alloccompatible, with $\alloc{p} = \{ 1 \}$.
 \end{example}

 \begin{lemma}
 	\label{lem:alloc}
 	Let $\aform$ be a disjunction-free and \swfree  formula, and let $x \in \alloc{\aform}$.
 	If $(\astore,\aheap) \modelsr \aform$ and $\asid$ is \alloccompatible then 
 	$\astore(x) \in \dom{\aheap}$.
 \end{lemma}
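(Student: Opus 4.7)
The plan is to prove the lemma by induction on the structure of $\aform$, leveraging the semantic definition of $\modelsr$ (Definition \ref{def:semantics}). Since the case of predicate atoms unfolds to smaller heaps rather than to syntactically smaller formulas, the induction will actually be on the pair $(\card{\dom{\aheap}}, \size{\aform})$ ordered lexicographically, which matches the well-foundedness pattern already used to justify Definition \ref{def:semantics}.

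First I will dispatch the easy cases. If $\aform$ is $\emp$ or a \tformula, then $\alloc{\aform} = \emptyset$ and there is nothing to prove. If $\aform = x \mapsto (y_1,\dots,y_\rank)$, then $\alloc{\aform} = \{x\}$ and the semantics forces $\astore(x) \in \dom{\aheap}$. If $\aform = \aform_1 * \aform_2$ with $\aheap = \aheap_1 \dunion \aheap_2$ and $(\astore,\aheap_i) \modelsr \aform_i$, then $\alloc{\aform} = \alloc{\aform_1} \cup \alloc{\aform_2}$, and any $x \in \alloc{\aform_i}$ yields $\astore(x) \in \dom{\aheap_i} \subseteq \dom{\aheap}$ by the induction hypothesis (applicable because $\card{\dom{\aheap_i}} \leq \card{\dom{\aheap}}$ and $\size{\aform_i} < \size{\aform}$). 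For $\aform = \exists y. \aformB$, I pick the witness store $\astore'$ agreeing with $\astore$ off $y$; then for $x \in \alloc{\aform} = \alloc{\aformB} \setminus \{y\}$, the IH gives $\astore'(x) \in \dom{\aheap}$, and $\astore'(x) = \astore(x)$ because $x \neq y$.

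The main obstacle is the predicate atom case $\aform = p(x_1,\dots,x_n)$. By Definition \ref{def:semantics}, there exists $\aformB$ with $\aform \unfoldto{\asid} \aformB$ and $(\astore,\aheap) \modelsr \aformB$. By Definition \ref{def:unfold}, $\aformB = \replall{\aformC}{y_i}{x_i}{i \in \interv{1}{n}}$ for some rule $p(y_1,\dots,y_n) \Leftarrow \aformC$ in $\asid$. By $\allocf$-compatibility, $\alloc{p(y_1,\dots,y_n)} = \alloc{\aformC}$. Now suppose $x_i \in \alloc{p(x_1,\dots,x_n)}$, i.e., $i \in \alloc{p}$; then $y_i \in \alloc{\aformC}$, and I need to conclude $x_i \in \alloc{\aformB}$. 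This requires an auxiliary observation that $\alloc{\cdot}$ commutes with the capture-avoiding substitution of the free arguments of a rule, which is straightforward by induction on $\aformC$ once we recall (Definition \ref{def:sid}) that $\fv{\aformC} \subseteq \{x_1,\dots,x_n\} \cup \vec{u}$ where $\vec{u}$ is the bound vector, so no variable in $\{x_1,\dots,x_n\}$ is captured. Because the right-hand side of a rule always contains a points-to atom, $\aheap$ is nonempty (Proposition \ref{prop:hform}) and the unfolding carries us to a strictly smaller heap when applied to any sub-predicate atom; more directly, since the size of $\aheap$ does not increase and $\aformB$ is a well-formed formula to which the structural cases already apply, I can invoke the IH on $(\astore,\aheap) \modelsr \aformB$ with the variable $x_i \in \alloc{\aformB}$ to obtain $\astore(x_i) \in \dom{\aheap}$.

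The only delicate point is packaging the induction so that the predicate-atom case legitimately appeals to the IH on $\aformB$: this is handled either by the lexicographic measure above (the outer $*/\exists/\mapsto$ cases of $\aformB$ eventually recurse into predicate atoms satisfied on strict subheaps), or equivalently by induction on the height of the derivation tree of $(\astore,\aheap) \modelsr \aform$ viewed as built from the clauses of Definition \ref{def:semantics}. I expect the substitution lemma for $\alloc{\cdot}$ to be the only part requiring explicit verification; everything else is a direct unpacking of the semantics together with the $\allocf$-compatibility hypothesis.
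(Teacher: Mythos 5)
Your proposal follows the same overall route as the paper: the same lexicographic induction on $(\card{\dom{\aheap}},\size{\aform})$, the same easy cases, and the same use of \alloccompatibility in the predicate case (including the implicit fact that $\allocf$ commutes with the renaming of the rule's formal parameters, which the paper also uses without comment). The one place where your write-up is loose is exactly the point you flag: the step ``invoke the IH on $(\astore,\aheap)\modelsr\aformB$'' is \emph{not} licensed by the measure you announce, since the unfolding $\aformB$ is larger than $p(x_1,\dots,x_n)$ and the heap is unchanged, so neither component of the pair decreases. The paper keeps the lexicographic measure and resolves this by unpacking $\aformB$ explicitly via the progress condition: $\aformB = \exists\vec{z}.\,(x_1\mapsto\vec{y} * p_1(\dots) * \dots * p_m(\dots) * \atform)$, so either $x$ is (the instance of) $x_1$ and the points-to atom allocates it outright, or $x$ is an allocated argument of some $p_j$, which is satisfied on a \emph{proper} subheap of $\aheap$ (proper because the points-to atom consumes a cell), and the first component of the measure strictly decreases. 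Your alternative packaging --- induction on the height of the derivation of $(\astore,\aheap)\modelsr\aform$ built from the clauses of Definition \ref{def:semantics} --- is also sound and arguably cleaner, since the satisfaction of $\aformB$ is a strict subderivation; the paper uses precisely this style elsewhere (e.g.\ Proposition \ref{prop:subst_sw}, Lemma \ref{lem:outsideheap}). Either fix closes the gap; as literally written, with the pair measure and a direct appeal to the IH on $\aformB$, the predicate case would not go through.
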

 \begin{proof}
 	The proof is by induction on the pair $\big(\card{\dom{\aheap}},\,\size{\aform}\big)$.
 	\begin{compactitem}
 		\item{If $\aform = \emp$ or $\aform$ is a \tformula then $\alloc{\aform} = \emptyset$, which
 			contradicts our hypothesis. Thus this case cannot occur.}
 		\item{If $\aform = x' \mapsto (y_1,\dots,y_\rank)$ then $\alloc{\aform} = \{ x' \}$, hence $x = x'$. By definition we have $\aheap = \{ \astore(x'), \astore(y_1),\dots,\astore(y_\rank) \}$, hence $\astore(x) \in \dom{\aheap}$.}
 		
 		\item{If $\aform = p(x_1,\dots,x_{\#(p)})$ then $\alloc{\aform} = \{ x_i \mid i \in \alloc{p} \}$, hence $x = x_i$ for some $i \in \alloc{p}$. By definition we have $\aform \unfoldto{\asid} \aformB$ and $(\astore,\aheap) \modelsr \aformB$; by the progress condition, $\aformB$ is of the form $\exists \vec{z}.~ (x_1 \mapsto \vec{y} * p_1(u_1^1,\dots,u_{\ar{p_1}}^1) * \dots * p_n(u_1^n,\dots,u_{\ar{p_n}}^n) * \atform)$, where $\atform$ is a \tformula and $\vec{z}$ is a vector of variables not occurring in $\aform$.
 			Thus there exists a store $\astore'$, coinciding with $\astore$ on all variables not occurring in $\vec{z}$, such that
 			$(\astore',\aheap) \modelsr x_1 \mapsto \vec{y} * p_1(u_1^1,\dots,u_{\ar{p_1}}^1) * \dots * p_n(u_1^n,\dots,u_{\ar{p_n}}^n) * \atform$.
 			Since $\asid$ is \alloccompatible, $\alloc{\aformB} = \alloc{\aform}$, hence
 			either $x = x_1$ or there exists $j \in \interv{1}{n}$ and
 			$l \in \alloc{p_j}$ such that $x_i = u_l^j$. 
 			In the first case, it is clear that $\astore(x) = \astore'(x) \in \dom{\aheap}$.
 			In the second case, there is a proper subheap $\aheap_j$ of $\aheap$ such that
 			$(\astore',\aheap_j) \modelsr p_j(u_1^j,\dots,u_{\ar{p_j}}^j)$.
 			We have $u_l^j \in \alloc{p_j(u_1^j,\dots,u_{\ar{p_j}}^j)}$, hence 
 			by the induction hypothesis, 
 			$\astore'(u_l^j) \in \dom{\aheap_j}$, thus
 			$\astore(x) = \astore'(u_l^j) \in \dom{\aheap}$.
 		}  
 		\item{If $\aform = \aform_1 * \aform_2$ then we have $x\in \alloc{\aform_i}$, for some $i = 1, 2$.
 			Furthermore, there exist heaps $\aheap_i$ ($i = 1,2$) such that 
 			$(\astore,\aheap_i) \modelsr \aform_i$ (for all $i = 1,2$) and $\aheap = \aheap_1 \dunion \aheap_2$. 
 			By the induction hypothesis, we deduce that 
 			$\astore(x) \in \dom{\aheap_i}$, for some $i = 1,2$, hence
 			$\astore(x) \in \dom{\aheap}$.}

 		\item{If $\aform = \exists y. ~\aformB$ then we have $x\in \alloc{\aformB}$, and $x \not = y$.
 			Since there exists a store $\astore'$, coinciding with $\astore$ on all variables 
 			distinct from $y$, such that
 			$(\astore',\aheap) \modelsr \aformB$, by the induction hypothesis, we deduce that 
 			$\astore'(x) \in \dom{\aheap}$, hence
 			$\astore(x) \in \dom{\aheap}$.
 		}

 	\end{compactitem}
 \end{proof}

 In the remainder of the paper, we will assume that all the considered sequents are \alloccompatible. This is justified by the following:
 \begin{lemma}
 \label{lem:alloccomp}
 There exists an algorithm which, for every \swfree sequent $\aform \vdashr \aseq$,  
 computes an equivalent 
 \alloccompatible  \swfree sequent $\aform' \vdashsid{\asid'} \aseq'$. Moreover, this algorithm runs in exponential time  
and $\widt{\aform' \vdashsid{\asid'} \aseq'} = \bigO(\widt{\aform \vdashr \aseq}^2)$.
  \end{lemma}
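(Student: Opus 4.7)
The plan is to refine each predicate symbol $p \in \preds$ into a family of indexed symbols $p_A$, one for each subset $A \subseteq \interv{1}{\ar{p}}$, in such a way that in any model of $p_A(\vec{z})$ exactly the variables $\{z_i \mid i \in A\}$ are allocated. This is a fairly standard ``labelling'' construction: I would set $\allocf(p_A) := A$ by fiat, and then, for each original rule $\rho: p(x_1,\dots,x_n) \Leftarrow \exists \vec{u}.~ (x_1 \mapsto (y_1,\dots,y_\rank) * \aform)$, enumerate all functions $\mu$ that assign to each predicate atom $q(z_1,\dots,z_{\ar{q}})$ occurring in $\aform$ a subset $B_\mu(q,\vec{z}) \subseteq \interv{1}{\ar{q}}$. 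For every such $\mu$, let $\aform_\mu$ be obtained by replacing each predicate atom $q(\vec{z})$ in $\aform$ by $q_{B_\mu(q,\vec{z})}(\vec{z})$; using the new $\allocf$, compute $A_\mu := \{\, i \in \interv{1}{n} \mid x_i \in \alloc{x_1 \mapsto (y_1,\dots,y_\rank) * \aform_\mu}\,\}$ and add to $\asid'$ the rule $p_{A_\mu}(x_1,\dots,x_n) \Leftarrow \exists \vec{u}.~ (x_1 \mapsto (y_1,\dots,y_\rank) * \aform_\mu)$. By construction every rule in $\asid'$ satisfies $\alloc{\mathit{head}} = \alloc{\mathit{body}}$, so $\asid'$ is \alloccompatible, and since the structural shape of the rules is preserved, progress, connectivity and establishment are inherited from $\asid$.

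For the sequent, I would replace each predicate atom $p(\vec{z})$ occurring in $\aform$ or $\aseq$ by the disjunction $\bigvee_{A \subseteq \interv{1}{\ar{p}}} p_A(\vec{z})$. This produces a \swfree sequent $\aform' \vdashsid{\asid'} \aseq'$. The semantic equivalence $p(\vec{z}) \equiv_{\asid,\asid'} \bigvee_A p_A(\vec{z})$ is proved by induction on the size of the heap, using Proposition \ref{prop:prog-dom} (so recursion decreases the heap): in the forward direction, given an unfolding of $p(\vec{z})$ satisfied by $(\astore,\aheap)$, recursively apply the induction hypothesis to each predicate atom in the body to select appropriate labels $B$, obtaining a refinement pattern $\mu$; by Lemma \ref{lem:alloc} applied in $\asid'$, the set $A_\mu$ associated with this pattern is exactly the set of indices $i$ with $\astore(z_i) \in \dom{\aheap}$, so $(\astore,\aheap) \modelssid{\asid'} p_{A_\mu}(\vec{z})$. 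The converse is immediate since every rule of $p_A$ in $\asid'$ corresponds, after erasing subscripts, to a rule of $p$ in $\asid$.

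For the complexity analysis, note that every new symbol $p_A$ has name length $\bigO(\len{p} + \ar{p}) = \bigO(\widt{\asid})$, so replacing a predicate atom increases its size by at most an additive $\bigO(\widt{\asid})$. Each original rule of $\asid$ contains at most $\bigO(\widt{\asid})$ predicate atoms, each of arity at most $\widt{\asid}$, yielding at most $2^{\bigO(\widt{\asid}^2)}$ refinement patterns, and each refined rule has width $\bigO(\widt{\asid}^2)$. Hence $\widt{\asid'} = \bigO(\widt{\asid}^2)$ and $\asid'$ is computable in exponential time. On the sequent side, $\widt{\aform'} = \bigO(\widt{\aform} \cdot \widt{\asid})$ because each of the $\bigO(\widt{\aform})$ predicate atoms grows by an additive $\bigO(\widt{\asid})$ (the disjunction $\bigvee_A p_A(\vec{z})$ has width $\max_A \widt{p_A(\vec{z})}$), and similarly for $\aseq'$. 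Combining these bounds yields $\widt{\aform' \vdashsid{\asid'} \aseq'} = \bigO(\widt{\aform \vdashr \aseq}^2)$, as required.

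The main obstacle I anticipate is the correctness of the equivalence, specifically ensuring that the refinement $\mu$ chosen by the induction hypothesis applied to the body atoms is consistent with the resulting $A_\mu$ used to label the head: this is where I would appeal to Lemma \ref{lem:alloc} inside $\asid'$, which requires knowing a priori that $\asid'$ is \alloccompatible. Since \alloccompatibility is a purely syntactic property of $\asid'$ that is forced by the construction of $A_\mu$, this circularity is only apparent, but the induction must be written carefully---induction on heap size (justified by progress) rather than on derivation height, to handle both directions of the biconditional uniformly.
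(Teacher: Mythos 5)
Your construction is essentially the paper's own proof: it likewise introduces fresh symbols $p_A$ with $\alloc{p_A}=A$, retains exactly those rule instances whose head label agrees with the label computed from the body (your $A_\mu$), replaces each predicate atom in the sequent by the disjunction over all labels, and obtains the same $\bigO(\widt{\aform \vdashr \aseq}^2)$ width bound via the $\bigO(\len{p}+\ar{p})$ encoding of the new symbols. The one place where you should follow the paper rather than your stated plan is the key invariant: Lemma \ref{lem:alloc} only yields the inclusion $\alloc{\aformB}\subseteq\{x\mid \astore(x)\in\dom{\aheap}\}$, so instead of appealing to it the paper strengthens the induction hypothesis (on the satisfiability relation) to carry the exact equality $\alloc{\aformB}=\{x\in\fv{\aform}\mid \astore(x)\in\dom{\aheap}\}$ through the forward direction --- precisely the careful bookkeeping you anticipated in your final paragraph.
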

  \begin{proof}
 We associate all pairs $(p,A)$ where $p \in \preds$ and $A \subseteq \interv{1}{\ar{p}}$ with fresh, pairwise distinct predicate symbols $p_A \in \preds$, with the same arity as $p$, and we set $\alloc{p_A} = A$.
 For each disjunction-free  formula $\aform$, we denote by $\expl{\aform}$ the set of formulas obtained from $\aform$ by replacing every predicate atom
 $p(\vec{x})$ by an atom $p_A(\vec{x})$ where $A \subseteq \interv{1}{\ar{p}}$.
 Let $\asid'$ be the set of \alloccompatible rules of the form 
 $p_A(\vec{x}) \Leftarrow \aformB$, where $p(\vec{x}) \Leftarrow \aform$ is a rule in $\asid$
and $\aformB \in \expl{\aform}$. Note that the symbols 
$p_A$ may be encoded by words of length $\bigO(\len{p} + \ar{p})$, thus for every $\aformB \in \expl{\aform}$ we have $\widt{\aformB} = \bigO(\widt{\aform}^2)$, so that $\widt{\asid'} = \bigO(\widt{\asid}^2)$. 
We show by induction on the satisfiability relation that the following equivalence holds for every structure $(\astore,\aheap)$:
$(\astore,\aheap) \modelsr \aform$ iff there exists $\aformB \in \expl{\aform}$ such that
$(\astore,\aheap) \modelssid{\asid'} \aformB$. For the direct implication, we also prove that $\alloc{\aformB} = \{ x \in \fv{\aform} \mid \astore(x) \in \dom{\aheap}\}$.
\begin{compactitem}
\item{The proof is immediate if $\aform$ is a \tformula, since $\expl{\aform} = \{ \aform \}$, and the truth value of $\aform$ does not depend on the considered \pcSID. Also, by definition $\alloc{\aform} = \emptyset$ and all the models of $\aform$ have empty heaps.}
\item{If $\aform$ is of the form $x \mapsto (y_1,\dots,y_n)$, then $\expl{\aform} = \{ \aform \}$ and the truth value of $\aform$ does not depend on the considered \pcSID. Also, $\alloc{\aform} = \{ x \}$ and
we have $\dom{\aheap} = \{ \astore(x) \}$ for every model $(\astore,\aheap)$ of $\aform$.}
\item{Assume that $\aform = p(x_1,\dots,x_{\ar{p}})$. If $(\astore,\aheap) \modelsr \aform$ then
there exists a formula $\aformC$ such that $\aform \unfoldto{\asid} \aformC$ and
$(\astore,\aheap) \modelsr \aformC$. 
By the induction hypothesis, 
there exists $\aformB \in \expl{\aformC}$ such that 
$(\astore,\aheap) \modelssid{\asid'} \aformB$ and $\alloc{\aformB} = \{ x\in \fv{\aformC}  \mid  \astore(x) \in \dom{\aheap} \}$. Let $A = \{ i \in \interv{1}{\ar{p}} \mid \astore(x_i) \in \dom{\aheap} \}$, so that $\alloc{\aformB} =\{ x_i \mid i \in A \}$.
By construction 
$p_A(x_1,\dots,x_n) \Leftarrow \aformB$ is \alloccompatible, and 
therefore $p_A(x_1,\dots,x_n) \unfoldto{\asid'} \aformB$, which entails that $(\astore,\aheap) \modelssid{\asid'} p_A(x_1,\dots,x_n)$.
By definition of $A$, $\alloc{p_A(x_1,\dots,x_n)} = \{ x\in \fv{\aform}  \mid  \astore(x) \in \dom{\aheap} \}$.

Conversely, assume that $(\astore,\aheap) \modelsr \aformB$ for some $\aformB \in \expl{\aform}$. 
Necessarily $\aformB$ is of the form $p_A(x_1,\dots,x_n)$ with $A \subseteq \interv{1}{\ar{p}}$.
We have $p_A(x_1,\dots,x_n) \unfoldto{\asid'} \aformB'$ and $(\astore,\aheap) \modelsr \aformB'$ for some formula $\aformB'$.
By definition of $\asid'$, we deduce that $p(x_1,\dots,x_n) \unfoldto{\asid} \aformC$, for some $\aformC$ such that $\aformB\in \expl{\aformC}$. 
 By the induction hypothesis, 
$(\astore,\aheap) \modelsr \aformC$, thus $(\astore,\aheap) \modelsr p(x_1,\dots,x_{\ar{p}})$. Since $p(x_1,\dots,x_{\ar{p}}) = \aform$, we have the result.
}
\item{Assume that $\aform = \aform_1 * \aform_2$.
If $(\astore,\aheap) \modelsr \aform$ then there exist disjoint heaps $\aheap_1,\aheap_2$ such that $(\astore,\aheap_i) \modelsr \aform_i$, for all $i = 1,2$ and $\aheap = \aheap_1 \dunion \aheap_2$. By the induction hypothesis, this entails that there exist formulas $\aformB_i \in \expl{\aform_i}$ for $i = 1,2$ such that 
$(\astore,\aheap_i) \modelssid{\asid'} \aformB_i$ and  $\alloc{\aformB_i} = \{ x \in \fv{\aform_i} \mid \astore(x) \in \dom{\aheap_i} \}$.  
Let $\aformB = \aformB_1 * \aformB_2$.
It is clear that  $(\astore,\aheap) \modelssid{\asid'} \aformB_1 * \aformB_2$ and  $\alloc{\aformB} = \alloc{\aformB_1 * \aformB_2} = \alloc{\aformB_1} \cup \alloc{\aformB_2} = \{ x \in \fv{\aform_1}\cup \fv{\aform_2} \mid \astore(x) \in \dom{\aheap} \} =
\{ x \in \fv{\aform} \mid \astore(x) \in \dom{\aheap} \}$. Since $\aformB_1 * \aformB_2 \in \expl{\aform}$, we obtain the result.

Conversely, assume that  there exists $\aformB \in \expl{\aform}$ such that
$(\astore,\aheap) \modelssid{\asid'} \aformB$. 
Then $\aformB = \aformB_1 * \aformB_2$ with $\aformB_i \in \expl{\aform_i}$, and we have
$(\astore,\aheap_i) \modelssid{\asid'} \aformB_i$, for $i = 1,2$ with $\aheap = \aheap_1 \dunion \aheap_2$.
Using the induction hypothesis,  we get that $(\astore,\aheap_i) \modelsr \aform_i$, hence
 $(\astore,\aheap) \modelsr \aform$.}

\item{Assume that $\aform = \exists y. \aformC$.
If $(\astore,\aheap) \modelsr \aform$ 
then $(\astore',\aheap) \modelsr \aformC$, for some store $\astore'$ coinciding with $\astore$ on every variable distinct from $y$.
By the induction hypothesis, this entails that  
there exists $\aformB \in \expl{\aformC}$ such that
$(\astore',\aheap) \modelssid{\asid'} \aformB$ and  $\alloc{\aformB} = \{ x \in \fv{\aformC} \mid \astore'(x) \in \dom{\aheap}\}$. Then 
$(\astore,\aheap) \modelssid{\asid'} \exists y. \aformB$, and we have $\exists y.\aformB \in \expl{\aform}$.
Furthermore,  $\alloc{\exists y.\aformB} = \alloc{\aformB} \setminus \{ y\} = \{ x \in \fv{\aformC} \setminus \{ y \} \mid \astore'(x) \in \dom{\aheap}\} = \{ x \in \fv{\aform} \mid \astore(x) \in \dom{\aheap}\}$. 

Conversely, assume that 
$(\astore,\aheap) \modelsr \aformB$, with $\aformB\in \expl{\aform}$.
Then $\aformB$ is of the form $\exists y. \aformB'$, with $\aformB' \in \expl{\aformC}$, thus
there exists a store $\astore'$, coinciding with $\astore$ on all variables other than $y$ 
such that $(\astore',\aheap) \modelsr \aformB'$.
By the induction hypothesis, this entails that
$(\astore',\aheap) \modelsr \aformB$, thus 
$(\astore,\aheap) \modelsr \exists y. \aformC$. Since $\exists y. \aformC = \aform$, we have the result.
} 
\end{compactitem}
 
Let $\aform',\aseq'$ be the sequence of formulas obtained from $\aform,\aseq$ by replacing every atom $\anatom$ by the disjunction of all the formulas in $\expl{\anatom}$.
It is clear that $\widt{\aform' \vdash_{\asid'} \aseq'} \leq \widt{\aform \vdash_{\asid} \aseq}^2$.
By the previous result, $\aform' \vdashsid{\asid'} \aseq'$ is equivalent to $\aform \vdashsid{\asid} \aseq$, hence $\aform' \vdashsid{\asid'} \aseq'$  fulfills all the required properties. Also, since each predicate $p$ 
is associated with $2^{\ar{p}}$  predicates $p_A$, we deduce that $\aform' \vdashsid{\asid'} \aseq'$ can be computed in time $\bigO(2^{\size{\aform \vdashsid{\asid} \aseq}})$. 
\end{proof}
 
%

\newcommand{\purelyspatial}{\constrained{\emptyset}}
\newcommand{\constrained}[1]{$#1$-constrained\xspace}
\newcommand{\nespatial}{\constrained{\{ \not \iseq \}}}
\newcommand{\espatial}{\constrained{\{ \iseq, \not \iseq \}}}
\newcommand{\mroot}{main root\xspace}
\newcommand{\aroot}{auxiliary root\xspace}

We now introduce a few notations to denote variables occurring as the  first argument of a predicate, including the $\mapsto$ predicate:

\begin{definition}
\label{def:roots}
For any disjunction-free formula $\aform$, 
we denote by $\rootsr{\aform}$ the multiset consisting of all variables  $x \in \fv{\aform}$ such that 
$\aform$ contains a subformula of  one of the forms $x \mapsto (y_1,\dots,y_n)$, $p(x,y_1,\dots,y_{\ar{p}-1})$ or $\MWs{\atail}{p(y_1,\dots,y_{\ar{p}})}{\vec{u}}{\theta}$,  with $y_1\theta = x$. 
We denote by $\rootsl{\aform}$ the multiset containing all the variables $x$ such that $\aform$ contains an atom 
$\MWs{q(z_1,\dots,z_{\ar{q}}) * \atail}{p(y_1,\dots,y_{\ar{p}})}{\vec{u}}{\theta}$  with $z_1\theta = x$. 
The variables in $\rootsr{\aform}$ are called the {\em {\mroot}s} 
of $\aform$, those in $\rootsl{\aform}$ are called the {\em auxiliary roots} of $\aform$. We let $\roots{\aform} = \rootsr{\aform} \cup \rootsl{\aform}$.

\end{definition}
 Note that a variable 
 may occur in $\roots{\aform}$ with a multiplicity greater than $1$, since
there may be several subformulas of the above forms, 
for a given $x$.

\begin{proposition}
\label{prop:root_unsat}
Let $\aform$ be a disjunction-free formula.
If $(\astore,\aheap) \modelsr \aform$ and
$x\in \rootsr{\aform}$ then $\astore(x) \in \dom{\aheap}$.
Furthermore, if $x$ occurs twice in $\rootsr{\aform}$ then 
$\aform$ is unsatisfiable.
\end{proposition}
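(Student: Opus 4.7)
The proof proceeds by induction on the structure of the disjunction-free formula $\aform$, establishing both claims simultaneously. Since $\aform$ is disjunction-free, its structure is built from atoms (points-to, predicate, \Watom, or \tatom) using separating conjunction and existential quantification.

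For the first claim, the base cases are:
\begin{itemize}
\item If $\aform = x \mapsto (y_1,\dots,y_\rank)$, then by Definition \ref{def:semantics}, $\dom{\aheap} = \{\astore(x)\}$, so $\astore(x) \in \dom{\aheap}$.
\item If $\aform = p(x,y_1,\dots,y_{\ar{p}-1})$ with $p \in \preds$, then Proposition \ref{prop:prog-dom} yields $\astore(x) \in \dom{\aheap}$ directly.
\item If $\aform = \MWs{\atail}{p(y_1,\dots,y_{\ar{p}})}{\vec{u}}{\theta}$ with $y_1\theta = x$, then Proposition \ref{prop:roots} gives $\astore(x) \in \dom{\aheap}$.
\item If $\aform$ is a \tatom, then $\rootsr{\aform} = \emptyset$, so the statement is vacuous.
\end{itemize}
For the inductive case $\aform = \aform_1 * \aform_2$ with $x \in \rootsr{\aform}$, we have $x \in \rootsr{\aform_i}$ for some $i \in \{1,2\}$, and there exist disjoint heaps $\aheap_1,\aheap_2$ with $\aheap = \aheap_1 \dunion \aheap_2$ such that $(\astore,\aheap_i) \modelsr \aform_i$. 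The induction hypothesis gives $\astore(x) \in \dom{\aheap_i} \subseteq \dom{\aheap}$. For $\aform = \exists y. \aformB$, since $x \in \fv{\aform}$ we have $x \neq y$; a store $\astore'$ coinciding with $\astore$ outside $y$ satisfies $(\astore',\aheap) \modelsr \aformB$, and by the induction hypothesis $\astore(x) = \astore'(x) \in \dom{\aheap}$.

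For the second claim, suppose $x$ occurs at least twice in $\rootsr{\aform}$. In the atomic cases this cannot happen: a single points-to or predicate atom contributes at most one occurrence of a given variable to the main roots. For $\aform = \aform_1 * \aform_2$, either both occurrences lie in the same $\aform_i$ (in which case the induction hypothesis gives $\aform_i$ unsatisfiable, hence $\aform$ unsatisfiable), or one occurrence lies in each. In the latter case, assuming a model $(\astore,\aheap)$ of $\aform$, the heap decomposes as $\aheap_1 \dunion \aheap_2$ with $(\astore,\aheap_i) \modelsr \aform_i$, and the first claim gives $\astore(x) \in \dom{\aheap_1} \cap \dom{\aheap_2}$, contradicting disjointness. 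The $\exists$ case is handled as above by relaying to the subformula under a modified store.

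The only mild subtlety is ensuring that the existential case correctly propagates the constraint $x \in \fv{\aform}$ through to the subformula, so that $x$ is distinct from the bound variable and thus $\astore$ and $\astore'$ agree on $x$; this is straightforward since $\rootsr{\aform} \subseteq \fv{\aform}$ by Definition \ref{def:roots}. The rest of the argument is routine structural induction.
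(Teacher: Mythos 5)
Your proof is correct and follows essentially the same route as the paper, which simply declares the result an immediate consequence of Proposition \ref{prop:roots} and the semantics of points-to atoms; you have merely written out the routine structural induction (over $*$ and $\exists$, with heap disjointness giving the contradiction for the second claim) that the paper leaves implicit.
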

\begin{proof}
The result is an immediate consequence of 
Proposition \ref{prop:roots} and of the definition of the semantics of points-to atoms.
\end{proof}

\newcommand{\rootunsat}{root-unsatisfiable\xspace}
\begin{definition}
A formula $\aform$ 
for which $\rootsr{\aform}$ contains multiple occurrences of the same variable 
is said to be {\em \rootunsat}. 	
\end{definition}
\newcommand{\vart}[1]{\mathit{fv}_{\theory}(#1)}

We also introduce a notation to denote the variables that may occur within  a \tformula (possibly after an unfolding):

\begin{definition}\label{def:vart}
For every \swfree formula $\aform$, we denote by $\vart{\aform}$ the set of variables $x\in \fv{\aform}$ such that
there exists a formula $\aformB$ and a \tformula $\atform$ occurring in $\aformB$ such that 
$\aform \unfoldto{\asid}^* \aformB$ and $x\in \fv{\atform}$.
\end{definition}
For instance, considering the rules of 
Example \ref{ex:rules}, we have 
$\vart{\alst(x_1,x_2) * x_3 \geq 0} = \{ x_1,x_2,x_3 \}$, since
$\alst(x_1,x_2) * x_3 \geq 0 \unfoldto{\asid}^* 
x_1 \mapsto (x_2) * x_1 \not \iseq x_2   * x_3 \geq 0$.

\begin{proposition}
\label{prop:vart}
For every \swfree formula $\aform$, 
the set $\vart{\aform}$ can be computed in polynomial time w.r.t.\ $\size{\aform}\cdot \size{\asid}$, i.e.,  w.r.t.\
$\size{\aform}\cdot \bigO(2^{d\cdot\widt{\asid}})$  for some constant $d$.
\end{proposition}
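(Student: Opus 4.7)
The plan is to reduce the computation of $\vart{\aform}$ to a per-predicate preprocessing on $\asid$. For each $p \in \preds$ I would define an index $T(p) \subseteq \interv{1}{\ar{p}}$ of \emph{theory-relevant positions}, meaning: $i \in T(p)$ iff there exist an unfolding $p(x_1,\dots,x_{\ar{p}}) \unfoldto{\asid}^* \aformB$ (with $x_1,\dots,x_{\ar{p}}$ pairwise distinct and chosen so that they remain free in $\aformB$) and a \tatom $\atform$ occurring in $\aformB$ with $x_i \in \fv{\atform}$. Once all the $T(p)$'s are computed, $\vart{\aform}$ is obtained by a direct structural recursion on $\aform$: free variables of a \tformula belong to $\vart{\aform}$; for a predicate atom $p(y_1,\dots,y_{\ar{p}})$ the variables $y_i$ with $i \in T(p)$ belong to $\vart{\aform}$; separating conjunction distributes; and existentially quantified variables are discarded.

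First I would compute $T(p)$ for every $p$ by a standard least-fixpoint iteration. Initialise $T(p) := \emptyset$ for every $p$. Then repeatedly, for each rule $p(x_1,\dots,x_{\ar{p}}) \Leftarrow \exists \vec{u}.\, \aformB$ in $\asid$, scan every atom of $\aformB$: if it is a \tatom $\atform$, add to $T(p)$ every $i$ such that $x_i \in \fv{\atform}$; if it is a predicate atom $q(z_1,\dots,z_{\ar{q}})$, then for every $j \in T(q)$ add to $T(p)$ every $i$ with $z_j = x_i$. Iterate until no new pair is added. Correctness of this fixpoint gives exactly the characterisation above: one direction follows by induction on the number of iterations (each pair added is witnessed by a concrete unfolding chain), the converse by induction on the length of an unfolding sequence producing a theory atom containing $x_i$, using $\alpha$-renaming to keep bound variables disjoint from $\{x_1,\dots,x_{\ar{p}}\}$ and Proposition \ref{prop:inst_deriv} to transport the derivation through substitutions on arguments.

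For the complexity, the number of pairs $(p,i)$ is bounded by $\sum_{p \in \preds} \ar{p} \leq \size{\asid}$, so the fixpoint stabilises in at most $\bigO(\size{\asid})$ rounds; each round scans $\asid$ at cost polynomial in $\size{\asid}$, so the preprocessing runs in time polynomial in $\size{\asid}$. The structural traversal of $\aform$ on top then costs $\bigO(\size{\aform} \cdot \size{\asid})$, since one lookup in the table $T$ is required at each predicate atom. Invoking Proposition \ref{prop:size_set}, $\size{\asid} = \bigO(2^{d\cdot \widt{\asid}})$ for some constant $d$, which yields the second formulation of the bound.

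The only mildly delicate step is the correctness of $T(p)$, i.e., showing that closing under the rule $(q,j) \in T(q) \Rightarrow (p,i) \in T(p)$ when $z_j = x_i$ is both sound and complete with respect to the unbounded unfolding relation $\unfoldto{\asid}^*$. The soundness direction is a straightforward structural lifting. Completeness requires that, given a witness unfolding, one can factor the derivation through a first unfolding step of $p$ and an inner derivation on one of the predicate atoms produced, then rename variables so that the inner witness matches the argument at the proper position; this is exactly what Proposition \ref{prop:inst_deriv} supplies. Everything else is routine bookkeeping.
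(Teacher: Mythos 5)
Your proposal is correct and follows essentially the same route as the paper: the paper also associates each predicate $p$ with a least-fixpoint set $\vart{p} \subseteq \interv{1}{\ar{p}}$ of theory-relevant argument positions computed by iterating over the rules of $\asid$, then characterises $\vart{\aform}$ by a structural scan of $\aform$ and invokes Proposition \ref{prop:size_set} for the bound $\size{\asid} = \bigO(2^{d\cdot\widt{\asid}})$. Your additional detail on the completeness direction of the fixpoint characterisation (factoring a witness unfolding through a first step and using Proposition \ref{prop:inst_deriv}) fills in what the paper leaves as ``easy to check''.
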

\begin{proof}
It suffices to associate all predicates $p\in \preds$ with subsets $\vart{p}$ of $\interv{1}{\ar{p}}$, inductively 
defined as the least sets satisfying the following condition:
$i \in \vart{p}$ if there exists a rule $p(x_1,\dots,x_{\ar{p}}) \Leftarrow \aform$ in $\asid$
such that $\aform$ contains either a \tformula $\atform$ with $x_i\in \fv{\atform}$,
or an atom $q(y_1,\dots,y_{\ar{q}})$ with $q\in \preds$ and $x_i = y_j$ for some $j\in \vart{q}$.
It is clear that these sets can be computed in polynomial time in $\size{\asid}$ using a standard fixpoint algorithm.
Furthermore, it is easy to check, by induction on the unfolding relation, 
that $x\in \vart{\aform}$ iff either $x\in \fv{\atform}$, for some \tformula $\atform$ occurring in $\aform$
or $x = x_i$ for some atom $p(x_1,\dots,x_\ar{p})$ occurring in $\aform$ and some $i \in \vart{p}$.

By definition, for every rule $\rho \in \asid$ we have $\size{\rho} \leq \widt{\asid}$, and
by  Proposition \ref{prop:size_set} there exists a constant $d$ such that $\size{\asid} = \bigO(2^{d.\widt{\asid}})$. 
\end{proof}

 The notation $\vart{\aform}$ is extended to formulas $\MWs{\atail}{\anatom}{\vec{x}}{\theta}$ as follows\footnote{This set is an over-approximation 
of the set of variables $x$ such that $x \in \vart{\aform'}$ for some predicate-free formula $\aform'$ with $\aform \unfoldto{\asid}^* \aform'$, but it is sufficient for our purpose.}: $\vart{\MWs{\atail}{\anatom}{\vec{x}}{\theta}} \isdef \vart{\anatom\theta}$.


\section{Eliminating Equations and Disequations}

We show that the equations and disequations can always be eliminated from established sequents, while preserving equivalence.
The intuition is that equations can be discarded by instantiating the inductive rules,
while disequations can be replaced by assertions that the considered variables 
are allocated in disjoint parts of the heap.
We wish to emphasize that the result does not follow from existing translations of SL formulas to graph grammars (see, e.g., \cite{DBLP:journals/eceasst/DoddsP08,DBLP:conf/gg/JansenGN14}), as the rules allow for disequations as well as equations. In particular the result crucially relies on the establishment property: it does not hold for non-established rules.

\begin{definition}
Let $P \subseteq \tpreds$.
A formula $\aform$ is {\em \constrained{P}} if
for every formula $\aformB$ such that $\aform \unfoldto{\asid} \aformB$, and for every symbol  $p \in \tpreds$ occurring in $\aformB$, we have $p \in P$.
A sequent $\aform \vdashr \aseq$ is {\em \constrained{P}} if
all the formulas in $\aform,\aseq$ are \constrained{P}.
\end{definition}

In particular, if $\aform$ is \purelyspatial, then the unfoldings of $\aform$ contain no symbol in $\tpreds$.
 
\begin{theorem}
\label{theo:elimeq}
Let $P \subseteq \tpreds$.
There exists an algorithm that transforms every  \constrained{P} established sequent $\aform \vdashr \aseq$
into an equivalent \constrained{(P \setminus \{ \iseq, \not \iseq \})} established sequent $\aform' \vdashsid{\asid'} \aseq'$.
This algorithm runs in  exponential time
and $\widt{\aform' \vdashsid{\asid'} \aseq'}$ is 
polynomial w.r.t.\  $\widt{\aform \vdashr \aseq}$. 
\end{theorem}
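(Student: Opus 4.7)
The plan is to proceed in two phases: first eliminate equations by specializing the inductive rules with respect to equivalence relations on their variables, and then eliminate disequations by exploiting the \alloccompatibility and establishment properties. As a preprocessing step I would apply Lemma~\ref{lem:alloccomp} to assume that the input sequent is \alloccompatible, since this lets us track which argument positions of each predicate are allocated in every model.

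For equation elimination, I would associate each predicate symbol $p \in \preds$ with every equivalence relation $\sim$ on $\interv{1}{\ar{p}}$ and introduce a fresh symbol $p_\sim$ of the same arity. The intended semantics is that $p_\sim(x_1,\dots,x_{\ar{p}})$ holds iff $p(x_1,\dots,x_{\ar{p}})$ holds in a model where $\astore(x_i) = \astore(x_j)$ precisely when $i \sim j$. For each rule $p(x_1,\dots,x_n) \Leftarrow \exists \vec{u}.\, \aform$ one then enumerates all equivalence relations $\sim'$ on $\{x_1,\dots,x_n\} \cup \vec{u}$ refining $\sim$, identifies variables in the same class by substitution, and inspects each remaining equation: equations between variables identified by $\sim'$ become tautologies and can be deleted, whereas equations between variables in distinct classes are unsatisfiable and the whole rule can be discarded. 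Each predicate atom $p(\vec{y})$ occurring in the original sequent is then replaced by the disjunction $\bigvee_\sim p_\sim(\vec{y})$, yielding a sequent equivalent to the original in which $\iseq$ no longer appears in rules or in formulas.

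For disequation elimination, the key observation is that after the previous phase the equivalence-class structure of the arguments of each predicate is encoded in its name, and thanks to \alloccompatibility one knows exactly which positions are allocated. A disequation $x \not\iseq y$ between two variables that are both guaranteed to be main roots of $*$-separated atoms is automatically satisfied by the disjointness of the subheaps and can be removed; a disequation between two distinct free variables follows from the injectivity assumption on the store (see the remark after Definition~\ref{def:counter_model}). The remaining case is when at least one side is an existential that is not yet visibly allocated at the current level: here I would use Lemma~\ref{lem:fr} to argue that every such existential is either aliased to an allocated free variable (already captured by the equivalence-relation phase) or eventually allocated in a disjoint subheap. Concretely, one further specializes the predicates to propagate along the rules the information that certain arguments denote allocated locations, so that each remaining disequation becomes a syntactic consequence of the disjointness of atoms in the specialized rules and can be deleted.

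Regarding the bounds, each specialized predicate name has length polynomial in $\widt{\asid}$ (an equivalence relation on at most $\widt{\asid}$ positions is encoded in $\bigO(\widt{\asid}\log\widt{\asid})$ bits), and each specialized rule is obtained from an original rule by substitution and renaming only, so $\widt{\asid'}$ remains polynomial in $\widt{\asid}$ while the number of specialized rules is exponential in $\widt{\asid}$, giving an exponential running time. The main obstacle I anticipate is the disequation case: a disequation may only be dischargeable after a chain of unfoldings during which its arguments progressively become allocated, and faithfully tracking this information through the specialized predicate names is the delicate point. This is precisely where the establishment property is essential, since without it an existential could denote an unallocated location for which no structural witness of inequality is available.
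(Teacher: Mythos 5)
Your equation-elimination phase is essentially the paper's Steps 2--3 in different clothing: the paper enumerates substitutions identifying existential variables with each other or with parameters and adds explicit disequations between the survivors, which is the same case split as your enumeration of equivalence relations refining $\sim$; specializing $p$ to $p_\sim$ plays the role of the paper's fresh predicates $p_\anatom$ for atoms with repeated arguments. That part is fine, and your width/complexity accounting for it is also consistent with the paper's.

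The genuine gap is in the disequation phase. Your plan is to \emph{propagate allocation information} through specialized predicate names so that every remaining disequation $x \not\iseq y$ becomes a consequence of the disjointness of two $*$-separated points-to atoms. But establishment only guarantees that \emph{existential} variables are eventually allocated; a \emph{free} variable (or parameter) may be referenced in a disequation and never be allocated anywhere in any unfolding --- the canonical example is the rule $\alst(x,y) \Leftarrow x \mapsto (y) * x \not\iseq y$, where $y$ is never allocated. No amount of bookkeeping in predicate names can discharge $x \not\iseq y$ by disjointness here, because there is no subheap witnessing $y$. The paper's Step 4 resolves this by \emph{changing the models}: it adds a dummy cell $y \mapsto (u',\dots,u')$ for every free variable $y$ not in $\alloc{\aformC}$, and simultaneously increases $\rank$ by one with a sentinel variable $u$ appended to every genuine points-to tuple, so that the dummy cells (whose last component is $u' \neq u$) cannot be confused with genuine ones and the transformation preserves validity of the entailment rather than just satisfiability. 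Only after this forced allocation does every satisfiable unfolding contain points-to atoms for \emph{both} sides of every surviving disequation, making the disequation redundant. Your proposal correctly identifies this case as "the delicate point" but the proposed remedy (tracking which arguments are allocated) cannot close it; you need the extension of the heap and of $\rank$, together with the preliminary step of passing all free variables of the sequent as extra parameters to every predicate so that the sentinel and the dummy target are available inside the rules.
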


\begin{proof}
We consider a \constrained{P}  established sequent $\aform \vdashsid{\asid} \aseq$.
 This sequent is transformed in several steps, each of which is illustrated in Example \ref{ex:elimeq}.
 
\noindent\textbf{Step 1.}
The first step  consists in transforming all the formulas in $\aform,\aseq$ into disjunctions of symbolic heaps. 
Then for every  symbolic heap $\aformC$ occurring in the obtained sequent, we add all the variables freely occurring
in $\aform$ or $\aseq$ as parameters of every predicate symbol occurring in unfoldings of $\aformC$  (their arities are updated accordingly, and these variables are passed as parameters to each recursive call of a predicate symbol).
We obtain an equivalent sequent $\aform_1 \vdashsid{\asid_1} \aseq_1$, and if 
$v = \card{\fv{\aform} \cup \fv{\aseq}}$
denotes the total number of free variables occurring in $\aform,\aseq$, then using the fact that the size of each of these variables is bounded by $\widt{\aform \vdashsid{\asid} \aseq}$, we have
$\widt{\aform_1 \vdashsid{\asid_1} \aseq_1} \leq v\cdot \widt{\aform \vdashsid{\asid} \aseq}^2$.
By Definition \ref{def:sequent}
we have $v \leq \widt{\aform \vdashsid{\asid} \aseq}$,
thus $\widt{\aform_1 \vdashsid{\asid_1} \aseq_1} = \bigO(\widt{\aform \vdashsid{\asid} \aseq}^3)$.


\noindent\textbf{Step 2.}
All the equations involving an existential variable can be eliminated in a straightforward way 
by replacing each formula of the form $\exists x. (x \iseq y * \aform)$ with 
$\repl{\aform}{x}{y}$.
We then replace every formula $\exists \vec{y}. \aform$ with free variables $x_1,\dots,x_n$ by the disjunction of all the formulas
of the form 
\[\exists\vec{z}. \aform\sigma * \bigAnd_{z\in \vec{z}, z' \in \vec{z} \cup \{ x_1,\dots,x_n \}, z\not = z'} z \not \iseq z',\] 
where $\sigma$ is a substitution such that $\dom{\sigma} \subseteq \vec{y}$, $\vec{z} = \vec{y}  \setminus \dom{\sigma}$ and $\img{\sigma} \subseteq \vec{y} \cup \{ x_1,\dots,x_n\}$. 
Similarly we replace every rule $p(x_1,\dots,x_n)\Leftarrow \exists \vec{y}. \aform$ by the
the set of rules $p(x_1,\dots,x_n)\Leftarrow \exists\vec{z}. \aform\sigma * \bigAnd_{z\in \vec{z}, z' \in \vec{z} \cup \{ x_1,\dots,x_n \}, z\not = z'} z \not \iseq z'$, where $\sigma$ is any substitution satisfying the conditions above.  

Intuitively, this transformation ensures that all existential variables are associated to pairwise distinct locations, also distinct from any location associated to a free variable. The application of the substitution $\sigma$ captures all the rule instances for which this condition does not hold, by mapping  all variables that are associated with the same location to a unique representative.  
 We denote by $\aform_2 \vdashsid{\asid_2} \aseq_2$ the sequent thus obtained.
Let $v'$ be the maximal number of existential variables occurring in a rule in $\asid$. 
We have $v' \leq \widt{\aform \vdashsid{\asid} \aseq}$, since the transformation in Step $1$ adds no existential variable.
Since at most one disequation is added for every pair of variables, and the size of every variable is bounded by $\widt{\aform \vdashsid{\asid} \aseq}$, it is clear that $\widt{\aform_2 \vdashsid{\asid_2} \aseq_2} = \widt{\aform_1 \vdashsid{\asid_1} \aseq_1} + v'\cdot (v+v') \cdot(1+2*\widt{\aform \vdashsid{\asid}   \aseq})
= \bigO(\widt{\aform \vdashsid{\asid} \aseq}^3)$.

\noindent\textbf{Step 3.}
We replace every atom $\anatom = p(x_1,\dots,x_n)$ occurring in $\aform_2, \aseq_2$ or $\asid_2$ with pairwise distinct  variables $x_{i_1},\dots,x_{i_m}$ (with $m \leq n$ and $i_1 = 1$), by an atom $p_{\anatom}(x_{i_1},\dots,x_{i_m})$, where $p_{\anatom}$ is a fresh predicate symbol, associated with rules of the form 
$p_{\anatom}(y_{i_1},\dots,y_{i_m}) \Leftarrow \replall{\aformB}{y_i}{x_i}{i \in \interv{1}{n}}\theta$, where 
$p(y_1,\dots,y_n) \Leftarrow \aformB$ is a rule in $\asid$
and $\theta$ denotes the substitution $\replall{}{x_{i_k}}{y_{i_k}}{i \in \interv{1}{m}}$. 
By construction,  $p_{\anatom}(x_{i_1},\dots,x_{i_m})$ is equivalent to $\anatom$.  We denote by $\aform_3 \vdashsid{\asid_3} \aseq_3$ the resulting sequent.
It is clear that $\aform_3 \vdashsid{\asid_3} \aseq_3$ is equivalent to $\aform \vdashsid{\asid} \aseq$.

By  induction on the derivation, we can show that 
all atoms occurring in an unfolding of the formulas in the sequent $\aform_3 \vdashsid{\asid_3} \aseq_3$ are of the form $q(y_1,\dots,y_{\ar{q}})$, where $y_1,\dots,y_{\ar{q}}$ are pairwise distinct, and that the considered unfolding also contains the disequation $y_i \not \iseq y_j$, for all $i \not = j$ 
such that either $y_i$ or $y_j$ is an existential variable (note that if  $y_i$ and $y_j$ are both free then $y_i \not \iseq y_j$ is valid, since the considered stores are injective). 
This entails that the rules that introduce a trivial equality $u \iseq v$ with $u \not = v$ 
are actually redundant, since unfolding any atom  $q(y_1,\dots,y_{\ar{q}})$ using such a rule yields a formula that is unsatisfiable. Consequently such rules can be eliminated without affecting the status of the sequent. 
All the remaining equations are of form $u \iseq u$ hence can be replaced by $\emp$.
We may thus assume that the sequent $\aform_3 \vdashsid{\asid_3} \aseq_3$ contains no equality. 
Note that by the above transformation, all existential variables must be interpreted as pairwise distinct locations in any interpretation, and also be distinct from all free variables. 
It is easy to see that the fresh predicates $p_{\anatom}$ may be encoded by words
of size at most $\widt{\aform \vdashsid{\asid} \aseq}$, thus
$\widt{\aform_3 \vdashsid{\asid_3} \aseq_3} \leq \widt{\aform \vdashsid{\asid} \aseq} \cdot\widt{\aform_2 \vdashsid{\asid_2} \aseq_2} = \bigO(\widt{\aform \vdashsid{\asid} \aseq}^4)$.
By Lemma \ref{lem:alloccomp}, we may assume that
$\aform_3 \vdashsid{\asid_3} \aseq_3$ is \alloccompatible (note that the transformation given in the proof of Lemma \ref{lem:alloccomp} does not affect the disequations occurring in the rules).

\noindent\textbf{Step 4.}
We now ensure that all the  locations that are referred to are allocated.
Consider a symbolic heap $\aformC$ occurring in $\aform_3,\aseq_3$ and any $\asid_3$-model $(\astore,\aheap)$ of $\aformC$, where $\astore $ is injective.
For the establishment condition to hold, the only unallocated locations in $\aheap$ of $\aformC$ must correspond to locations $\astore(x)$ where $x$ is a free variable. 
We assume the sequent contains a free variable $u$ such that,
 for every tuple $(\ell_0,\dots,\ell_\rank)\in \aheap$, we have $\astore(u) = \ell_\rank$.
This does not entail any loss of generality, since 
we can always add a fresh variable $u$ to the considered problem:
after Step $1$, $u$ is passed as a parameter to all predicate symbols, and we may replace every points-to atom $z_0 \mapsto (z_1,\dots,z_\rank)$ occurring in $\aform_3$, $\aseq_3$ or $\asid_3$, by
$z_0 \mapsto (z_1,\dots,z_\rank,u)$ (note that this increases the value of $\rank$ by $1$).
It is clear that this ensures that $\aheap$ and $u$ satisfy the above property.
We also assume, w.l.o.g., that the sequent contains at least one variable $u'$ distinct from $u$. Note that, since $\astore$ is injective, the tuple $(\astore(u'),\dots,\astore(u'))$ cannot occur in $\aheap$, because its last component is distinct from $\astore(u)$.
We then denote by $\aform_4 \vdashsid{\asid_4} \aseq_4$ the sequent obtained from 
$\aform_3\vdashsid{\asid_3} \aseq_3$ by replacing every symbolic heap 
$\aformC$ in $\aform_3,\aseq_3$ by 
\[\left(\bigAnd_{x \in (\fv{\aform_3} \cup \fv{\aseq_3}) \setminus \alloc{\aformC}} x \mapsto (u',\dots,u')\right) * \aformC.\]
It is straightforward to check that $(\astore,\aheap)\models \aformC$ iff there exists an extension $\aheap'$ of $\aheap$
such that 
$(\astore,\aheap') \models$  
$\left(\bigAnd_{x \in (\fv{\aform_3} \cup \fv{\aseq_3}) \setminus \alloc{\aformC}} x \mapsto (u',\dots,u')\right) * \aformC$, 
with $\locs{\aheap} = \locs{\aheap'} = \dom{\aheap'}$ and
 $\aheap'(\ell) = (\astore(u'),\dots,\astore(u'))$ for all $\ell \in \dom{\aheap'} \setminus \dom{\aheap}$. 
This entails that $\aform_4 \vdashsid{\asid_4} \aseq_4$ is valid if and only if $\aform_3 \vdashsid{\asid_3} \aseq_3$ 
 is valid.

Consider a formula $\aformC$ in  $\aform_4,\aseq_4$  and some satisfiable unfolding $\aformC'$ of $\aformC$.
Thanks to the transformation in this step and the establishment condition, if $\aformC'$  contains a (free or existential) variable $x$ then it also contains an atom $x' \mapsto \vec{y}$ and 
a \tformula $\atform$ such that $\atform \modelst x \approx x'$. 
But by the above transformation, if $x,x'$ occur in the same \tformula $\atform$ then
one of the following holds; $x$ and $x'$ are identical; $x$ and $x'$ are distinct free variables (so that $\astore \modelst x \not \iseq x'$ for all injective store), or $x \not \iseq x'$ occurs in $\aformC'$. The two last cases contradict
 the fact that $\aformC'$ is satisfiable, since $\aformC' \models x \approx x'$, thus $x = x'$. Consequently, if $\aformC'$ contains a disequation $x_1 \not \approx x_2$ with $x_1\not = x_2$, then it also contains atoms $x_1 \mapsto \vec{y}_1$ and $x_2 \mapsto \vec{y}_2$. This entails that the disequation $x_1\not \iseq x_2$ is redundant, since it is a logical consequence of $x_1 \mapsto \vec{y}_1 * x_2 \mapsto \vec{y}_2$. We deduce that the satisfiability status of $\aform_4 \vdashsid{\asid_4} \aseq_4$ is preserved if all disequations are replaced by $\emp$.
\qed
\end{proof}


\begin{example}\label{ex:elimeq}
	We illustrate all of the steps in the proof above. 
	\begin{description}
		\item[Step 1.] Consider the sequent $p(x_1, x_2) \vdashsid{\asid} r(x_1) * r(x_2)$, where $\asid$ is defined as follows: $\asid = \myset{r(x) \Leftarrow x \mapsto (x)}$. After Step 1 we obtain the sequent $p(x_1, x_2) \vdashsid{\asid_1} r'(x_1, x_2) * r'(x_2, x_1)$, where $\asid_1 = \myset{r'(x, y) \Leftarrow x \mapsto (x)}$.
		\item[Step 2.] This step transforms the formula $\exists y_1\exists y_2.\, p(x, y_1) * p(x, y_2)$ into the disjunction: 
		\[\begin{array}{rl}
			\exists y_1,y_2.\, p(x, y_1) * p(x, y_2) * y_1 \not\iseq y_2 * y_1 \not \iseq x * y_2 \not \iseq x & \vee\\
			\exists y_2.\, p(x, x) * p(x,y_2) * y_2 \not \iseq x & \vee\\
			\exists y_1.\, p(x, y_1) * p(x,x) * y_1 \not \iseq x& \vee\\
			p(x,x) * p(x, x)			
		\end{array}\]
		Similarly, the rule 
		$p(x) \leftarrow \exists z \exists u.~ x \mapsto (z) * q(z,u)$ is transformed into the set: 
		\[
		\begin{array}{lll}
		p(x) & \leftarrow & x \mapsto (x) * q(x,x) \\
		p(x) & \leftarrow & \exists z.~ x \mapsto (z) * q(z,x) * z \not \iseq x  \\
		p(x) & \leftarrow & \exists u.~ x \mapsto (x) * q(x,u) * u \not \iseq x  \\
		p(x) & \leftarrow & \exists z \exists u.~ x \mapsto (z) * q(z,u) * z \not \iseq x * u \not \iseq x  * z \not \iseq u \\
\end{array}
\]
		\item[Step 3.] 
		Assume that $\asid$ contains the rules $p(y_1, y_2, y_3) \Leftarrow y_1\mapsto (y_2) * q(y_2, y_3) * y_1 \iseq y_3$ and $p(y_1, y_2, y_3) \Leftarrow y_1\mapsto (y_2) * r(y_2, y_3) * y_1 \iseq y_2$ and consider the sequent $p(x,y,x) \vdashsid{\asid} \emp$. Step 3 generates the sequent $p_\alpha(x,y) \vdashsid{\asid'} \emp$ (with $\alpha = p(x,y,x)$), where  
		$\asid'$ contains the rules $p_\alpha(y_1, y_2) \Leftarrow y_1\mapsto (y_2) * q(y_2, y_1) * y_1 \iseq y_1$ and $p_\alpha(y_1, y_2) \Leftarrow y_1\mapsto (y_2) * r(y_2, y_1) * y_1 \iseq y_2$.
		The second rule is redundant, because $p_\alpha(y_1, y_2)$ is used only in a context where $y_1  \not \iseq y_2$ holds, 
		and equation $y_1\iseq y_1$ can be discarded from the first rule.

		\item[Step 4.] 
		Let $\aformC = p(x,y,z,z') * q(x,y,z,z') * z' \mapsto (z')$, assume $\alloc{\aformC} = \myset{x,z}$, and consider the sequent $\aformC \vdashsid{\asid} \emp$. Then  $\aformC$ is replaced by $p(x,y,z,z',u) * q(x,y,z,z',u) * z' \mapsto (z',u) * u \mapsto (x,x) * y \mapsto (x,x)$ (all non-allocated variables are associated with $(x,x)$, where $x$ plays the r\^ole of the variable $u'$ in Step $4$ above). Also, every points-to atom $z_0 \mapsto (z_1)$ in $\asid$ is replaced by $z_0 \mapsto (z_1,u)$. 
		
	\end{description}
\end{example}


\begin{example}
	Consider the predicate $\ls$ defined by the rules $\{ \ls(x,y) \Leftarrow x \mapsto (y), \ls(x,y) \mapsto \exists z. x \mapsto (z) * \ls(z,y) \}$.
The (non-valid) entailment $\ls(x,y) \vdashr \alst(x,y)$, where $\alst$ is defined in Example \ref{ex:rules},
is transformed into
\[\ls''(x,y,u) \vee (y \mapsto (x,x) * \ls'(x,y,u)) \vdashr y \mapsto (x,x) * \ls'(x,y,u),\] along with the following rules, where for readability useless parameters have been removed, as well as rules with a \rootunsat right-hand side: 
\[
		\begin{array}{lll}
\ls'(x,y,u) & \Leftarrow & x  \mapsto (y,u)\\
\ls'(x,y,u) & \Leftarrow & \exists z. x \mapsto (z,u) * \ls'(z,y,u) \\
\ls''(x,y,u) & \Leftarrow & \exists z. x \mapsto (z,u) * \ls''(z,y,u) \\
\ls''(x,y,u)&  \Leftarrow & x \mapsto (y,u) * \ls'(y,y,u)
\end{array}
\]
The atom $\ls'(x,y,u)$ (resp.\ $\ls''(x,y,u)$)  denotes a list segment from $x$ to $y$ that does not allocate $y$ (resp.\ that allocates $y$). A variable $u$ is added and the value of $\rank$ is increased by $1$ as described at Step $4$.
The atom $\ls(x,y)$ is replaced by the disjunction $\ls'(x,y,u) \vee \ls''(x,y,u)$, by applying the transformation described in the proof of Lemma \ref{lem:alloccomp}. 
The predicate $\alst$ is transformed into $\ls'(x,y,u)$ at Step $4$ (as disequations are removed from the rules). The atom $y \mapsto (x,x)$ is added  since $y$ is not allocated in $\ls'(x,y,u)$. 
\end{example}


\newcommand{\myp}{\mathtt{p}}
\newcommand{\myq}{\mathtt{q}}
\newcommand{\myr}{\mathtt{r}}

\begin{example}
The (valid) entailment $\lse(x,y,x) \vdashr \lse(y,x,y)$, where $\lse(x,y,z)$ denotes a list from $x$ to $y$ containing $z$ and is defined by the rules: 
\[
		\begin{array}{lll}
\lse(x,y,z) & \Leftarrow & x \mapsto (y) * x \iseq z \\
\lse(x,y,z) & \Leftarrow & \exists z'. x \mapsto (z') * \ls(z',y) * x \iseq z \\
\lse(x,y,z) & \Leftarrow & \exists z'. x \mapsto (z') * \lse(z',y,z) 
\end{array}
\]
 is transformed into 
$\myp(x,y,u) \vdashr \myp(y,x,u)$, where (assuming $x,y,z$ are pairwise distinct)
$\myp(x,y,u)$ denotes a list from $x$ to $x$ containing $y$, 
$\myq(x,y,u)$ denotes a list from $x$ to $y$,
and
$\myr(x,y,z,u)$ denotes a list from $x$ to $y$ containing $z$, 
defined by the rules (again for readability redundant parameters and rules have been removed):
\[
		\begin{array}{lll}
\myp(x,y,u) & \Leftarrow & \exists z. x \mapsto (z) * \myr(z,x,y,u) \\
\myp(x,y,u) & \Leftarrow & x \mapsto (y,u) * \myq(y,x,u) \\
\myq(x,y,u) & \Leftarrow & x \mapsto (y,u) \\
\myq(x,y,u) & \Leftarrow & \exists z. x \mapsto (z,u) * \myq(z,y,u) \\
\myr(x,y,z,u) & \Leftarrow & \exists z'. x \mapsto (z',u) * \myr(z',y,z,u) \\
\myr(x,y,z,u) & \Leftarrow & x \mapsto (z,u) * \myq(z,y,u) 
\end{array}
\]
The predicate $\myp(x,y,u)$ is introduced at Step $3$ to replace the atoms $\lse(x,x,y)$ and $\lse(y,y,x)$. The fresh variable $u$ plays no r\^ole here because all variables are allocated.
\end{example}

\section{Heap Splitting}

\label{sect:split}

\newcommand{\splitv}[2]{\mathrm{split}_{#2}(#1)}

In this section we introduce a so-called \emph{heap splitting} operation that will 
ensure that a given variable $x$ occurs in the {\mroot}s of a given formula.
This operation preserves equivalence on the injective structures in which the considered variable $x$
is allocated (Lemma \ref{lem:split}).
It is implicitly dependent on some \pcSID $\asid$, which will always be clear from the context.
The intuition is that if an atom $p(x_1,\dots,x_n)$ (with $x_1 \not = x$) allocates $x$, then it must eventually call a predicate atom $q(x,\vec{y})$, where $\vec{y}$ may contain variables in $\{ x_1,\dots,x_n\}$ as well as fresh variables, introduced during the unfolding. Thus $p(x_1,\dots,x_n)$ can be written as a disjunction of formulas of the form:
$\exists \vec{z}. ((\MWl{q(x,\vec{y})}{p(x_1,\dots,x_n)}{\vec{u}}{\vec{u}}) * q(x,\vec{y}))$, where $\vec{z}$ denotes the vector of fresh variables mentioned above. The transformation may then be inductively extended to any formula, using the fact that a \tformula allocates no location and that $x$ is allocated by a separating conjunction $\aform_1 * \aform_2$ if it is allocated by $\aform_1$ or $\aform_2$.


\begin{definition}
\label{def:splitv}
Let $\anatom = \MWs{\atail}{p(x_1,\dots,x_n)}{\vec{u}}{\theta}$ and $x$ be a variable such that $x\not = x_1\theta$. 
We assume that $\vec{u}$ contains a variable $y$ such that $y\theta = x$\footnote{This condition is not restrictive since a fresh variable $y$ can always be added both to $\vec{u}$ and $\dom{\theta}$, and by letting $y\theta = x$.}. We denote by 
$\splitv{\anatom}{x}$ the set of 
all formulas
of the form: 
\begin{align*}
	\exists \vec{z}. ~ \left(\MWs{(\atail_1 * q(y,\vec{y}))}{p(x_1,\dots,x_n)}{(\vec{u},\vec{z})}{\theta}\right)  * \left(\MWs{\atail_2}{q(y,\vec{y})}{(\vec{u},\vec{z})}{\theta}\right),
\end{align*}
where  $\atail = \atail_1 * \atail_2$; $q \in \preds$; $p \dependson{\asid} q$;  $y,\vec{y}$ is a vector of variables of length $\ar{q}$ such that 
$\vec{z}$ denotes the vector of variables occurring in $\vec{y}$ 
but not in $\fv{\atail} \cup \{ x_1,\dots,x_n\}$. 
The function is extended to  disjunction-free formulas as follows (modulo prenex form and deletion of \rootunsat formulas): 
{\small
\[
\begin{tabular}{llll}
$\splitv{\aform}{x}$ & $=$ & $\emptyset$ \quad \text{if $\aform$ is a \tformula (possibly $\emp$)} \\
$\splitv{x \mapsto \vec{y}}{x}$ & $=$ & $\{  x \mapsto \vec{y}  \}$ \\
$\splitv{x' \mapsto \vec{y}}{x}$ & $=$ & $\emptyset$ \quad if $x\not = x'$ \\
$\splitv{\anatom}{x}$ & $=$ & $\{ \anatom \}$  if $\anatom$ is a \Watom and $\{ x\} = \rootsr{\anatom}$ \\
$\splitv{\aform_1 * \aform_2}{x}$ & $=$ &
$\{ \aformB_1 * \aform_2 \mid \aformB_1 \in \splitv{\aform_1}{x} \} \cup 
\{ \aform_1 * \aformB_2 \mid \aformB_2 \in \splitv{\aform_2}{x} \}$ \\
$\splitv{\exists y. ~\aform}{x}$ & $=$ & $\{ \exists y. \aformB \mid \aformB \in \splitv{\aform}{x} \} \cup \splitv{\repl{\aform}{y}{x}}{x}$ \\
\end{tabular}
\] }
\end{definition}
We emphasize that the vector $y,\vec{y}$ may contain variables from $\fv{\atail} \cup \{ x_1,\dots,x_n \}$ or fresh variables. Also, by construction, no variable in $\vec{z}$ can occur in $\dom{\theta}$, we can thus write, e.g.,  $\MWs{\atail_2}{q(y,\vec{y})}{\vec{u}\theta,\vec{z}}{}$ instead of $\MWs{\atail_2}{q(y,\vec{y})}{(\vec{u},\vec{z})}{\theta}$.
As usual, the equality  $\atail = \atail_1 * \atail_2$ in the definition above is to be understood modulo neutrality of $\emp$ for separating conjunctions.
 Note that the set $\splitv{\aform}{x}$ is finite, up to $\alpha$-renaming. 
In practice, it is sufficient to consider the vectors $y,\vec{y}$ such that $q(y,\vec{y})$ occurs in some unfolding 
of $p(x_1,\dots,x_n)$, up to $\alpha$-renaming (it is clear that the other formulas are redundant).

\begin{example} 
\label{ex:ls}
	Let  $\aform = \exists y. \ls(y,\nil)$ and $\asid = \{ \ls(u,v) \Leftarrow u \mapsto (v), \ls(u,v) \mapsto \exists w. u \mapsto (w) * \ls(w,v) \}$. Then we have: 
\begin{eqnarray*}
	\splitv{\ls(y,\nil)}{x}& =& \splitv{\MWl{\emp}{\ls(u,\nil)}{u}{y}}{x}\\
	& =& \{\MWl{(\emp * \ls(z,\nil))}{\ls(u, \nil)}{u,z}{y,x})  \\
	&  & \quad * (\MWl{\emp}{\ls(z,\nil)}{z}{x}) \}\\
	& = & \myset{\MWl{(\ls(z,\nil))}{\ls(u, \nil)}{u,z}{y,x}) * {\ls(x,\nil)}};\\
	\splitv{\ls(x,\nil)}{x}& =& \myset{\ls(x,\nil)}.
\end{eqnarray*}
	Note that, in this example, it is not useful to 
	consider atoms other than $\ls(z,\nil)$ for defining $\splitv{\ls(y,\nil)}{x}$, since  the only predicate atoms occurring in the unfoldings of $\ls(y,\nil)$ are renamings of 
	$\ls(z,\nil)$. Also, the variable $y$ is discarded in  $\MWl{\emp}{\ls(z,\nil)}{z}{x}$ since it does not occur in $\emp$ or $\ls(z,\nil)$ (see Remark \ref{rem:useless_var}).
	Thus  $\splitv{\aform}{x} = 
		\{ \ls(x,\nil), \exists y. ((\MWl{\ls(z,\nil)}{\ls(u,\nil)}{u,z}{y,x}) * \ls(x,\nil)) \}$. 
\end{example}


\begin{proposition}
\label{prop:allocsplit}
Let $\aform$ be a formula,  $x$ be a variable and assume $\splitv{\aform}{x} = \{ \aformC_1,\dots,\aformC_n \}$.
Then we have $\rootsr{\aformC_i} = \{ x \} \cup \rootsr{\aform}$, for every $i = 1,\dots,n$.
Thus, if $(\astore,\aheap) \modelsr \aformC_i$ for some $i \in \interv{1}{n}$, then 
$\astore(x) \in \dom{\aheap}$.
\end{proposition}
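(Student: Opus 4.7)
The plan is to prove the identity $\rootsr{\aformC_i} = \{x\} \cup \rootsr{\aform}$ by structural induction on $\aform$, following the cases in Definition \ref{def:splitv}. Once this is established, the second assertion is immediate: since $x \in \rootsr{\aformC_i}$, Proposition \ref{prop:root_unsat} directly yields $\astore(x) \in \dom{\aheap}$ whenever $(\astore,\aheap)\modelsr \aformC_i$.

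The base cases are straightforward. When $\aform$ is a \tformula or is a points-to $x'\mapsto \vec y$ with $x'\neq x$, the set $\splitv{\aform}{x}$ is empty, so the claim holds vacuously. When $\aform = x \mapsto \vec y$, the unique element of $\splitv{\aform}{x}$ is $\aform$ itself and $\rootsr{\aform} = \{x\}$. When $\aform$ is a \Watom with $\rootsr{\aform} = \{x\}$, the same reasoning applies. The interesting base case is a \Watom $\anatom = \MWs{\atail}{p(x_1,\dots,x_n)}{\vec u}{\theta}$ with $x_1\theta \neq x$: here $\rootsr{\anatom} = \{x_1\theta\}$, and each element of $\splitv{\anatom}{x}$ has the form $\exists \vec z.~(\anatom_1 * \anatom_2)$ where $\anatom_1$ is a \Watom whose underlying predicate is $p(x_1,\dots,x_n)$ instantiated by $\theta$ (contributing $x_1\theta$ to the roots) and $\anatom_2$ is a \Watom whose underlying predicate is $q(y,\vec y)$ with $y\theta = x$ (contributing $x$). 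Since $\vec z$ contains neither $x$ nor $x_1\theta$ by construction, taking the existential prefix into account yields $\rootsr{\aformC_i} = \{x,\, x_1\theta\} = \{x\} \cup \rootsr{\anatom}$.

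For the inductive steps, the $*$-case is essentially immediate from $\rootsr{\aform_1 * \aform_2} = \rootsr{\aform_1} \cup \rootsr{\aform_2}$ applied together with the induction hypothesis on the branch being split. The quantifier case requires more care. If $\aformC_i = \exists y.\aformB'$ with $\aformB' \in \splitv{\aformB}{x}$, then by IH $\rootsr{\aformB'} = \{x\} \cup \rootsr{\aformB}$; after $\alpha$-renaming we may assume $y \neq x$, hence $\rootsr{\aformC_i} = \rootsr{\aformB'} \setminus \{y\} = \{x\} \cup (\rootsr{\aformB}\setminus\{y\}) = \{x\} \cup \rootsr{\aform}$. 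If instead $\aformC_i \in \splitv{\repl{\aformB}{y}{x}}{x}$, by IH $\rootsr{\aformC_i} = \{x\} \cup \rootsr{\repl{\aformB}{y}{x}}$, and a case analysis on whether $y$ occurs in $\rootsr{\aformB}$ shows that $\{x\} \cup \rootsr{\repl{\aformB}{y}{x}} = \{x\} \cup (\rootsr{\aformB}\setminus\{y\}) = \{x\} \cup \rootsr{\aform}$ in both subcases.

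The main obstacle is purely bookkeeping in the \Watom split case: one must verify that the variable $y$ introduced with $y\theta = x$ lies in the \mroot position of the second \Watom produced by the split (so that $x$ is indeed contributed to $\rootsr{\aformC_i}$), and simultaneously check that the freshly introduced vector $\vec z$ never accidentally binds $x$ or $x_1\theta$. Both facts are ensured by the very construction in Definition \ref{def:splitv}, where $\vec z$ consists only of variables of $\vec y$ outside $\fv{\atail} \cup \{x_1,\dots,x_n\}$, and where the matching $y\theta = x$ is built into the definition of how variables are added to $\vec u$ and $\theta$. Once this is spelled out, the rest of the induction is mechanical.
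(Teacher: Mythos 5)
Your proof is correct and follows essentially the same route as the paper: the paper's own argument is exactly an induction on $\aform$ inspecting the cases of Definition~\ref{def:splitv}, with the second claim obtained from Proposition~\ref{prop:root_unsat}. Your write-up simply makes explicit the bookkeeping (the contribution of $y\theta = x$ as the \mroot of the second \Watom, and the freshness of $\vec{z}$) that the paper leaves implicit in its one-line proof.
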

\begin{proof}
By an immediate induction on $\aform$, inspecting the different cases in the definition of $\splitv{\aform}{x}$  
and by Proposition \ref{prop:root_unsat}.
\end{proof}

Note that if $x\in \rootsr{\aform}$ 
then it is straightforward to check using Proposition \ref{prop:allocsplit} 
that $\splitv{\aform}{x} = \{ \aform \}$, up to the deletion of \rootunsat formulas. 


\begin{lemma}
\label{lem:split}
Let $\aform$ be a formula, $x$ be a variable and assume that
  $\splitv{\aform}{x} = \{ \aformC_1,\dots,\aformC_n \}$. For all structures $(\astore,\aheap)$, if $(\astore,\aheap) \modelsr \aformC_1 \vee \dots \vee \aformC_n$, then $(\astore,\aheap) \modelsr \aform$. 
Moreover, if $\astore(x)\in \dom{\aheap}$, $\astore(x') \not = \astore(x)$ for every $x' \not = x$ and $(\astore,\aheap) \modelsr \aform$, then $(\astore,\aheap) \modelsr \aformC_1 \vee \dots \vee \aformC_n$.
\end{lemma}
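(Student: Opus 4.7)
The proof proceeds by structural induction on $\aform$, case-splitting according to Definition \ref{def:splitv}. The base cases where $\splitv{\aform}{x} = \emptyset$ (i.e.\ $\aform$ is a \tformula, or a points-to atom $x' \mapsto \vec y$ with $x' \neq x$) make the first direction vacuous; for the second direction one derives a contradiction: a \tformula forces $\aheap = \emptyset$, contradicting $\astore(x) \in \dom{\aheap}$, and a points-to atom $x' \mapsto \vec{y}$ forces $\dom{\aheap} = \{\astore(x')\}$ with $\astore(x') \neq \astore(x)$ by the injectivity hypothesis. The remaining base cases ($x \mapsto \vec{y}$, and \Watom atoms that already have $x$ as their unique main root) are immediate from $\splitv{\aform}{x} = \{\aform\}$.

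For the inductive cases, the separating conjunction $\aform = \aform_1 * \aform_2$ is handled by splitting the heap as $\aheap = \aheap_1 \uplus \aheap_2$: in the forward direction, the models of $\aformB_1 * \aform_2$ (with $\aformB_1 \in \splitv{\aform_1}{x}$) give $(\astore,\aheap_1) \modelsr \aform_1$ via the induction hypothesis and symmetrically for the other branch. Conversely, if $\astore(x) \in \dom{\aheap}$, then by disjointness $\astore(x)$ belongs to exactly one of $\dom{\aheap_1}, \dom{\aheap_2}$, and the inductive hypothesis on the corresponding side produces the required split. For the existential case $\aform = \exists y. \aformB$, the forward direction is routine; for the backward direction, given a witness store $\astore'$ agreeing with $\astore$ off $y$, one distinguishes two subcases. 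If $\astore'(y) = \astore(x)$, then since $\astore$ is injective on variables other than $x$, the store $\astore'$ remains injective after identifying $y$ with $x$, so applying the induction hypothesis to $\repl{\aformB}{y}{x}$ yields an element of $\splitv{\repl{\aformB}{y}{x}}{x} \subseteq \splitv{\aform}{x}$. If $\astore'(y) \neq \astore(x)$, then $\astore'$ still satisfies the injectivity-on-allocated-variables hypothesis needed to apply the induction hypothesis directly to $\aformB$, yielding a formula in $\{\exists y. \aformB' \mid \aformB' \in \splitv{\aformB}{x}\}$.

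The crux of the argument is the \Watom case $\anatom = \MWs{\atail}{p(x_1,\dots,x_n)}{\vec{u}}{\theta}$ with $x \neq x_1\theta$ and $x \notin \rootsr{\anatom}$. For the forward direction, a model of some $\aformC \in \splitv{\anatom}{x}$ provides an unfolding $p(x_1,\dots,x_n) \unfoldto{\asid}^+ \exists \vec{y}_1. (\atail_1' * q(y,\vec{y}) * \aform_1)$ with $\atail_1'\sigma_1 = \atail_1$, together with an unfolding witnessing $\MWs{\atail_2}{q(y,\vec{y})}{\vec{u}\theta,\vec{z}}{}$, namely $q(y,\vec{y}) \unfoldto{\asid}^+ \exists \vec{y}_2. (\atail_2' * \aform_2)$ with $\atail_2'\sigma_2 = \atail_2$. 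Using Proposition \ref{prop:inst_deriv} and the fact that $\unfoldto{\asid}$ respects subformula replacement (by clause (i) of Definition \ref{def:unfold}), one concatenates these two derivations into a single $p(x_1,\dots,x_n)$-unfolding where $q(y,\vec{y})$ has been further expanded, and combines $\sigma_1, \sigma_2$ and the model pieces to satisfy Definition \ref{def:semMW} for $\anatom$.

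The hard part is the backward direction of the \Watom case. Given $(\astore,\aheap) \modelsr \anatom$ with $\astore(x) \in \dom{\aheap}$, I would take any witness unfolding $p(x_1,\dots,x_n) \unfoldto{\asid}^+ \exists \vec{y}. (\atail' * \aform)$ and substitution $\sigma$, then iteratively unfold predicate atoms occurring in $\aform$ using Proposition \ref{prop:inst_deriv} while tracking which unfolded subformula carries the subheap containing $\astore(x)$ — by progress and connectivity (Propositions \ref{prop:prog-dom} and \ref{prop:roots}) together with Lemma \ref{lem:alloc}, this tracing is unique and eventually identifies a predicate atom $q(w,\vec{w})$ whose first argument satisfies $\astore'(w) = \astore(x)$; at that point one stops unfolding and reads off the $\swand$-decomposition, the substitution $\theta$ extended on the newly introduced fresh variables $\vec{z}$, and the split $\atail = \atail_1 * \atail_2$. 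The bookkeeping for fresh variables arising during this iterated unfolding, their reindexing into the $\vec{z}$ of Definition \ref{def:splitv}, and the verification that the resulting pair of pieces genuinely lies in $\splitv{\anatom}{x}$, is the main technical obstacle; the injectivity hypothesis on $\astore$ off $x$ is exactly what is needed to ensure that the syntactic atom allocating $\astore(x)$ is unambiguously identified.
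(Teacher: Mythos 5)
Your induction, case split, and the arguments for the base cases, the separating conjunction, and the existential quantifier all coincide with the paper's proof, including the two-subcase analysis on whether the witness value of the bound variable $y$ equals $\astore(x)$. The forward direction of the \Watom case also matches: the paper concatenates the two witness unfoldings via Proposition~\ref{prop:inst_deriv} and composes the substitutions, exactly as you describe (the paper additionally has to verify, via a carefully chosen $\alpha$-renaming and a store $\astore''$ that identifies $y'$ with $x$ and $\vec{y}'$ with $\vec{y}\theta$, that the composed substitution restricted to $\fv{\atail_1'*\atail_2''}$ still reproduces $\atail$ — a point your sketch passes over but which is routine).

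The genuine gap is the backward direction of the \Watom case, which you explicitly leave as "the main technical obstacle" rather than resolving. The paper's resolution is concrete and worth spelling out. One takes a predicate-free witness unfolding $p(x_1,\dots,x_n) \unfoldto{\asid}^+ \exists\vec{w}.(\aformB * \atail')$ and locates the points-to atom $x' \mapsto (u_1,\dots,u_\rank)$ in $\aformB\sigma$ with $\astore'(x'\theta) = \astore(x)$. The progress condition forces this atom to have been produced by unfolding a predicate atom $q(x'',\vec{y}')$ occurring at an intermediate stage of the derivation; one then \emph{partitions the derivation itself} into the steps applied to $q(x'',\vec{y}')$ and its descendants versus all other steps, obtaining two derivations $p(x_1,\dots,x_n) \unfoldto{\asid}^* \exists\vec{w}_1.(\aformB_1 * q(x'',\vec{y}') * \atail_1')$ and $q(x'',\vec{y}') \unfoldto{\asid}^* \exists\vec{w}_2.(\aformB_2 * \atail_2')$, together with the induced splits $\atail' = \atail_1' * \atail_2'$, $\aformB = \aformB_1 * \aformB_2$ and $\aheap = \aheap_1 \dunion \aheap_2$. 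The step your sketch does not address at all is the substitution adjustment: when the allocating variable $x'$ is existential (i.e.\ $x' \in \vec{w}$), one must replace $\sigma$ by $\sigma' = \sigma\{x' \leftarrow y\}$ where $y$ is the variable of $\vec{u}$ with $y\theta = x$ (guaranteed to exist by the footnote to Definition~\ref{def:splitv}); this is what makes the root of the residual atom syntactically equal to the distinguished variable $y$, as Definition~\ref{def:splitv} requires, and it is precisely here — not merely in "identifying the allocating atom" — that the hypothesis $\astore(x') \neq \astore(x)$ for $x' \neq x$ is consumed, since it forces the allocating variable to be either a parameter mapped to $x$ by $\theta$ or an existential of the unfolding. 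Without this derivation-splitting and substitution-redirection argument, membership of the resulting pair in $\splitv{\anatom}{x}$ is asserted but not established, so the proof is incomplete as it stands.
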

\newcommand{\deriv}{\mathcal{D}}
\begin{proof}
 We proceed by induction on $\aform$.
\begin{compactitem}
\item{Assume that $\aform = x' \mapsto (y_1,\dots,y_\rank)$ with $x' \not = x$.
Then $n = 0$ and the first implication trivially holds.
If $(\astore,\aheap) \modelsr \aform$, $\astore(x)  \in \dom{\aheap}$ and $\astore(x) \not = \astore(x')$, then since 
$\dom{\aheap} = \{ \astore(x') \}$, we have
$\astore(x) \not \in \dom{\aheap}$, yielding a contradiction.
}

\item{
	If $\aform = x \mapsto (y_1,\dots,y_\rank)$ then we have  
$n = 1$ and $\aformC_1 = \aform$, thus the result is immediate.}

\item{
\newcommand{\thetap}{\theta}
Assume that $\aform = \MWs{\atail}{p(x_1,\dots,x_n)}{\vec{u}}{\theta}$, where $x_1\theta \not = x$.
We prove the two results separately: 
\begin{itemize}
\item{If
	 $(\astore,\aheap) \modelsr \aformC_1 \vee \dots \vee \aformC_n$, then there exist $\atail_1,\atail_2$ and $q(y,\vec{y})$ such that
	$(\astore,\aheap) \modelsr \exists \vec{z}. ~ \MWs{\atail_1 * q(y,\vec{y})}{p(x_1,\dots,x_n)}{(\vec{u},\vec{z})}{\thetap} * \MWs{\atail_2}{q(y,\vec{y})}{(\vec{u},\vec{z})}{\thetap}$, where $\atail = \atail_1 * \atail_2$ and $\thetap$, $y$, $\vec{y}$ and $\vec{z}$ fulfill the conditions of Definition \ref{def:splitv}.   
	This entails that there exists a store $\astore'$, coinciding with $\astore$ on all variables  not occurring in $\vec{z}$, and  disjoint heaps $\aheap_1$, $\aheap_2$ such that $\aheap = \aheap_1\dunion \aheap_2$ and
\[
	\begin{tabular}{lll}
	$(\astore',\aheap_1)$ &  $\modelsr$ & $\MWs{(\atail_1 * q(y,\vec{y}))}{p(x_1,\dots,x_n)}{(\vec{u},\vec{z})}{\thetap}$, \\
	$(\astore',\aheap_2)$ & $\modelsr$ &  $\MWs{\atail_2}{q(y,\vec{y})}{(\vec{u},\vec{z})}{\thetap}$.
	\end{tabular}
	\]
By definition, there exists a formula $\aformB$ of the form $\exists \vec{v}_1. (\aformB_1 * \atail_1' * q(y',\vec{y'}))$ such that $p(x_1, \ldots, x_n) \unfoldto{\asid}^+ \aformB$,  a substitution $\sigma_1$ with $\dom{\sigma_1} \subseteq \vec{v}_1 \cap (\fv{\atail_1'} \cup \{ y' \} \cup \vec{y}')$ and $(\atail_1' * q(y',\vec{y'}))\sigma_1 = \atail_1 * q(y,\vec{y})$, and a store $\astore_1'$ coinciding with $\astore'$ on all the variables not occurring in $\vec{v}_1$ such that  $(\astore_1', \aheap_1) \modelsr \aformB_1\sigma_1\thetap$. Similarly, there exists a formula $\aformB'$ of the form $\exists \vec{v}_2. ~ (\aformB_2 * \atail_2')$ such that $q(y,\vec{y}) \unfoldto{\asid}^+ \aformB'$, a substitution $\sigma_2$ and a store $\astore_2'$ coinciding with $\astore'$ on all variables not occurring in $\vec{v}_2$ such that $\dom{\sigma_2} \subseteq \vec{v}_2 \cap \fv{\atail_2'}$, $\atail_2'\sigma_2 = \atail_2$ and  $(\astore_2', \aheap_2) \modelsr \aformB_2\sigma_2\thetap$. 
 By $\alpha$-renaming, we assume that the following condition $(\dagger)$ holds:


\begin{quote}
	For $i = 1,2$, $\vec{v}_i$ contains no variable in $\vec{v}_{3-i}  \cup \vec{z} \cup \img{\sigma_{3-i}} \cup \fv{\aformB\thetap} \cup \fv{\aformB'\thetap} \cup \{ x \} \cup \dom{\thetap}$.
\end{quote} 
	Since $q(y',\vec{y'})\sigma_1 = q(y,\vec{y})$ and  $q(y,\vec{y}) \unfoldto{\asid}^+ \exists \vec{v}_2. ~ (\aformB_2 * \atail_2')$, by Proposition \ref{prop:inst_deriv}, there exists a formula of the form $\aformB_2' * \atail_2''$ such that  $q(y',\vec{y'}) \unfoldto{\asid}^+ \exists \vec{v}_2. ~ (\aformB_2' * \atail_2'')$ and $(\aformB_2' * \atail_2'')\sigma_1 = \aformB_2 * \atail_2'$.
	We deduce that
	\[\begin{array}{rcl}
		p(x_1,\dots,x_n) &\unfoldto{\asid}^+ &\exists \vec{v}_1. (\aformB_1 * \atail_1' * q(y',\vec{y'}))\\
		& \unfoldto{\asid}^+ & \exists \vec{v}_1, \vec{v}_2. (\aformB_1 * \atail_1' * \aformB_2' * \atail_2'').
	\end{array}\]
Let $\sigma = \sigma_2 \circ \sigma_1$, note that by construction  $\dom{\sigma} \subseteq (\vec{v_1} \cup \vec{v_2})$. 	
By $(\dagger)$ we have $(\aformB_1 * \atail_1')\sigma = (\aformB_1\sigma_1 * \atail_1)\sigma_2 = \aformB_1\sigma_1 * \atail_1$ and 
	$(\aformB_2' * \atail_2'')\sigma = (\aformB_2 * \atail_2')\sigma_2 = \aformB_2\sigma_2 * \atail_2$. 
Let $\astore''$ be a store 
 such that 
	$\astore''(y') = \astore''(x)$, $\astore''(\vec{y}') = \astore''(\vec{y}\thetap)$, 
and	otherwise that coincides with 
	$\astore_i'$ on $\vec{v}_i$ and with $\astore'$ elsewhere. By construction $\astore''$ coincides with $\astore_1'$ on all the variables occurring in $\aformB_1\sigma_1\thetap$, and since $(\astore_1', \aheap_1) \modelsr \aformB_1\sigma_1\thetap$, we deduce that 
	$(\astore'', \aheap_1) \modelsr \aformB_1\sigma_1\thetap$. 
	A similar reasoning shows that $(\astore'', \aheap_2) \modelsr \aformB_2\sigma_2\thetap$ and therefore,  $(\astore'', \aheap) \modelsr \aformB_1\sigma_1\thetap * \aformB_2\sigma_2\thetap = (\aformB_1 * \aformB_2)\sigma\thetap$. 
By definition of $\astore''$,
this entails that $(\astore'', \aheap) \modelsr  (\aformB_1 * \aformB_2)\sigma'\thetap$, where $\sigma'$ is the restriction of $\sigma$ to $\fv{\atail_1' * \atail_2''}$. Indeed, all variables $y''$ such that  $y''\sigma' \not = y''\sigma$ must occur in $y',\vec{y}'$ because $\dom{\sigma_1} \subseteq \fv{\atail_1'} \cup \{ y' \} \cup \vec{y}'$ and $\dom{\sigma_2} \subseteq \fv{\atail_2'}$; therefore $\astore''(y''\sigma') = \astore''(y''\sigma\thetap)$. 
	Since  $(\atail_1' * \atail_2'')\sigma' = \atail_1'\sigma * \atail_2''\sigma = \atail_1*\atail_2 = \atail$ and $\astore'$ and $\astore''$ coincide on variables that do not occur in $\vec{v_1},\vec{v_2}$, we deduce that $(\astore',\aheap) \modelsr 
	\MWs{\atail}{p(x_1,\dots,x_n)}{\vec{u}}{\thetap}$. But no variable from $\vec{z}$ can occur in $\MWs{\atail}{p(x_1,\dots,x_n)}{\vec{u}}{\theta}$, hence $\astore$ and $\astore'$ coincide on all variables occurring in this formula, so that $(\astore,\aheap) \modelsr \MWs{\atail}{p(x_1,\dots,x_n)}{\vec{u}}{\theta}$.

 }
 \item{
Now assume that $(\astore,\aheap) \modelsr \MWl{\atail}{p(x_1,\dots,x_n)}{\vec{u}}{\vec{u}\theta}$, where $x_1\theta\neq x$,
$\astore(x)\in \dom{\aheap}$ and $\astore(x') \not = \astore(x)$ for every $x' \not = x$ ($\ddagger$). 
Then there exists a derivation
$\deriv$ such that $p(x_1,\dots,x_n) \unfoldto{\asid}^+ \exists \vec{w}. ~(\aformB * \atail')$, 
 a store $\astore'$ coinciding with $\astore$ on all variables not occurring in $\vec{w}$ and a substitution $\sigma$ with $\dom{\sigma} \subseteq \vec{w} \cap \fv{\atail'}$, such that
$(\astore',\aheap) \modelsr \aformB\sigma\theta$ and 
$\atail = \atail'\sigma$. W.l.o.g., we  assume that $\aformB$ contains no predicate atom 
and that $\vec{w}$ contains no variables in $\fv{\atail} \cup \{ x,x_1,\dots,x_n\} \cup \dom{\theta} \cup \img{\theta}$.

Since $\astore(x) \in \dom{\aheap}$, necessarily 
$\aformB\sigma$ 
contains a points-to atom of the form $x' \mapsto (u_1, \ldots, u_\rank)$,  
for some variable $x'$ such that $\astore(x) = \astore'(x) =\astore'(x'\theta)$.
Note that, by  hypothesis ($\ddagger$), we must have either $x = x'\theta$ and 
$x' \in \{ x_2,\dots,x_n\}$ or $x'\in \vec{w}$. 
In the latter case we define $\sigma' = \sigma \{ x' \leftarrow y \}$, where $y$ is a variable in $\vec{u}$ such that $y\theta = x$ (recall that $y$ necessarily exists by the assumption of Definition \ref{def:splitv}); 
otherwise we let $y = x'$ and $\sigma' = \sigma$. Since $(\astore', \aheap) \modelsr \aformB\sigma\theta$ and $\astore'(y\theta) = \astore'(x'\sigma'\theta)$, we have $(\astore', \aheap) \modelsr \aformB\sigma'\theta$.
Due to the progress condition, the derivation $\deriv$ is necessarily of the form
\[p(x_1,\dots,x_n)\ \unfoldto{\asid}^+\ \exists \vec{v}. (\aformC * q(x'',\vec{y}'))\ \unfoldto{\asid}^+\ \exists \vec{w}. ~(\aformB * \atail'),\] 
where $x' = x''\sigma$, $\vec{v}$ is a subvector of $\vec{w}$ and $q \in \preds$. 
Thus, by removing from $\deriv$ all the unfolding steps applied to $q(x'',\vec{y}')$ or its descendants, we obtain a derivation of the form
\[p(x_1,\dots,x_n)  \unfoldto{\asid}^* \exists \vec{w}_1. (\aformB_1 * q(x'',\vec{y}') * \atail_1'),\] 
for some variables $\vec{w}_1$ occurring in $\vec{w}$, and formulas $\atail_1'$ and $\aformB_1$ that are subformulas of $\atail'$ and $\aformB$ respectively.
Note that since $x_1\theta \not = x$ by hypothesis, $q(x'',\vec{y}')$ cannot coincide with $p(x_1,\dots,x_n)$, hence 
the length of this derivation is at least $1$. Similarly, by keeping in $\deriv$ only the unfolding steps applied to $q(x'',\vec{y}')$ or its descendants, 
we obtain a derivation  of the form
\[q(x'',\vec{y}')\ \unfoldto{\asid}^*\ \exists \vec{w}_2. ~(\aformB_2 * \atail_2'),\] 
where
$\atail' = \atail_1' * \atail_2'$,
$\aformB = \aformB_1 * \aformB_2$, $\vec{w} = \vec{w}_1 \cup \vec{w}_2$ and $\vec{w}_1 \cap \vec{w}_2 = \emptyset$. Note that $q(x'',\vec{y}')$ must be unfolded at least once, for the atom $x' \mapsto (u_1, \ldots, u_\rank)$ to be generated. Since $(\astore', \aheap) \modelsr \aformB\sigma'\theta$ and $\aformB = \aformB_1 * \aformB_2$, we deduce that there exist heaps $\aheap_1$ and $\aheap_2$ such that
$\aheap = \aheap_1 \dunion \aheap_2$ and
$(\astore',\aheap_i) \modelsr \aformB_i\sigma'\theta$ .

For $i= 1,2$, let $\sigma_i$ be the restriction of $\sigma'$ to the variables occurring in $\vec{w}_i$. 
By instantiating the derivation above (using Proposition \ref{prop:inst_deriv}), we get:
\[q(x',\vec{y}'\sigma') = q(x'',\vec{y}')\sigma' = q(x'',\vec{y}')\sigma_1\ \unfoldto{\asid}^*\ \exists \vec{w}_2. ~(\aformB_2\sigma_1 * \atail_2'\sigma_1),\] 
where $\aformB_2\sigma_1\sigma_2 = \aformB_2\sigma'$ and
$(\atail_2'\sigma_1)\sigma_2 = \atail_2'\sigma' = \atail_2$.
We have 
$\fv{\aformB_1 * \atail'_1} \subseteq \vec{w_1} \cup \{x_1,\dots,x_n\}$ 
and
$\fv{\aformB_2\sigma_1 * \atail'_2\sigma_1} \subseteq \vec{w_2} \cup \vec{y}'\sigma' \cup \{ x' \}$ because no unfolding derivations can introduce new free variables to a formula.
In particular,
$\atail_1'\sigma_1 = \atail_1'\sigma' = \atail_1$.

 Let $\vec{y} = \vec{y'}\sigma_1$, so that $q(x'', \vec{y'})\sigma' = q(y, \vec{y})$. 
Since  $(\astore',\aheap_i) \modelsr \aformB_i\sigma'\theta$ and $\astore'(x'\theta) = \astore'(x)$, we deduce that $(\astore', \aheap_1)\modelsr \aformB_1\sigma_1\thetap$
and $(\astore', \aheap_2)\modelsr (\aformB_2\sigma_1)\sigma_2\thetap$.
Let $\vec{z}$ be the vector of all the variables  in $\vec{y} \setminus (\fv{\atail} \cup \{ x,x_1,\dots,x_n\})$, with no repetition. 
Then since $p(x_1,\dots,x_n)  \unfoldto{\asid}^+ \exists \vec{w}_1. (\aformB_1 * q(x'',\vec{y}') * \atail_1')$ and $(\astore', \aheap_1)\modelsr \aformB_1\sigma_1\thetap$, we have  $(\astore',\aheap_1) \modelsr \MWl{(\atail_1 * q(y,\vec{y}))}{p(x_1,\dots,x_n)}{(\vec{u},\vec{z})}{(\vec{u},\vec{z})\thetap}$.
Similarly, since $q(y,\vec{y}) \unfoldto{\asid}^*\ \exists \vec{w}_2. ~(\aformB_2\sigma_1 * \atail_2'\sigma_1)$, and $(\astore', \aheap_2) \modelsr (\aformB_2\sigma_1)\sigma_2\thetap$, we deduce that $(\astore',\aheap_2 )\modelsr \MWl{\atail_2}{q(y,\vec{y})}{(\vec{u},\vec{z})}{(\vec{u},\vec{z})\thetap}$. This entails that  $(\astore',\aheap) \modelsr \MWl{\atail_1 * q(y,\vec{y})}{p(x_1,\dots,x_n)}{(\vec{u},\vec{z})}{(\vec{u},\vec{z})\thetap} * \MWl{\atail_2}{q(y,\vec{y}}{(\vec{u},\vec{z})}{(\vec{u},\vec{z})\thetap}$.

The stores $\astore'$ and $\astore$ coincide on all the variables not occurring in $\vec{z}$ and occurring in 
$\atail_1$, $\atail_2$, $p(x_1,\dots,x_n)$ or $q(x,\vec{y})$.
Indeed, if $\astore'(u) \not = \astore(u)$, for some variable $u$, then necessarily $u\in \vec{w}$ (by definition of $\astore'$), hence $u \not \in \fv{\atail} \cup \{ x_1,\dots,x_n,x \}$.
Thus, if $u\in \vec{y}$, then necessarily $u \in \vec{z}$.
We deduce that 
$(\astore,\aheap) \modelsr \exists \vec{z}.~((\MWs{\atail_1 * q(x,\vec{y})}{p(x_1,\dots,x_n)}{(\vec{u},\vec{z})}{\thetap}) * (\MWs{\atail_2}{q(x,\vec{y})}{(\vec{u},\vec{z})}{\thetap})))$; therefore,
$(\astore,\aheap) \modelsr \aformC_1 \vee \dots \vee \aformC_n$.
}
\end{itemize}

 }

\item{If $\aform = \emp$ or $\aform$ is a \tformula, then $n = 0$ hence $(\astore,\aheap) \not \modelsr \aformC_1 \vee \dots \vee \aformC_n$.
Furthermore, if $\astore(x) \in \dom{\aheap}$ then $\aheap$ is not empty hence $(\astore,\aheap) \not \modelsr \aform$.}
\item{If $\aform$ is a \Watom and $\rootsr{\aform} = \{ x \}$ then by definition 
$n = 1$ and $\aformC_1 = \aform$, thus the result is immediate.}
\item{Assume that $\aform = \aform_1 * \aform_2$ and let $\{ \aformC^i_1,\dots,\aformC^i_{n_i} \} = \splitv{\aform_i}{x}$. 
In this case, 
$\{ \aformC_1,\dots,\aformC_n  \} = \{ \aformB_{3-i} * \aformC^i_j \mid i \in \{ 1,2 \}, j \in \interv{1}{n_i} \}$.

If $(\astore,\aheap) \modelsr \aformC_1 \vee \dots \vee \aformC_n$, then there exists $i \in \{ 1,2 \}$ and $j \in \interv{1}{n_i}$ such that
$(\astore,\aheap) \modelsr \aform_{3-i} * \aformC^i_j$.
Thus 
there exist disjoint heaps $\aheap_{3-i},\aheap_i$ such that $\astore = \aheap_{3-i}\cup \aheap_i$,
$(\astore,\aheap_{3-i}) \modelsr \aform_{3-i}$
and
$(\astore,\aheap_i) \modelsr \aformC^i_j$, so that 
$(\astore,\aheap_i) \modelsr \aformC^i_1 \vee \dots \vee \aformC^i_{m_i}$. 
By the induction hypothesis, we deduce that
$(\astore,\aheap_i) \modelsr \aform_i$.
Thus 
$(\astore,\aheap_{3-i} \cup \aheap_{i}) \modelsr \aform_{3-i} * \aform_i$, i.e.,
$(\astore,\aheap) \modelsr \aform$.

If $(\astore,\aheap) \modelsr \aform$, $\astore(x)\in \dom{\aheap}$ and $\astore(x') \not = \astore(x)$ for every $x' \not = x$, then 
there exist disjoint heaps $\aheap_1,\aheap_2$ such that
$(\astore,\aheap_i) \modelsr \aform_i$ for $i=1,2$, and $\aheap = \aheap_1 \dunion \aheap_2$.
Since $\astore(x) \in \dom{\aheap}$, we must have 
$\astore(x) \in \dom{\aheap_i}$, for some $i = 1,2$.
Then, by the induction hypothesis, we deduce that 
$(\astore,\aheap_i) \modelsr \aformC^i_1 \vee \dots \vee \aformC^i_{m_i}$.
Consequently, 
$(\astore,\aheap) \modelsr \aform_{3-i} * (\aformC^i_1 \vee \dots \vee \aformC^i_{m_i}) \equiv
\bigvee_{j=1}^{n_i} (\aform_{3-i} * \aformC^i_j)$, thus
$(\astore,\aheap) \modelsr \aformC_1 \vee \dots \vee \aformC_n$.

}
\item{
Assume that $\aform = \exists y. ~ \aform'$, where $y \not = x$.
Let $\{ \aformC'_1,\dots,\aformC'_m \} = \splitv{\aform'}{x}$ 
and $\{ \aformC''_1,\dots,\aformC''_l \} = \splitv{\repl{\aform'}{y}{x}}{x}$, so that we have
$\{ \aformC_1,\dots,\aformC_n \} = \{ \exists y.~ \aformC_1',\dots,\exists y.~ \aformC_m', \aformC''_1,\dots,\aformC''_l \}$.

If $(\astore,\aheap) \modelsr \aformC_i$ for some $i = 1,\dots,n$, then we have either 
$(\astore,\aheap) \modelsr \exists y.~ \aformC_j'$ for some $j = 1,\dots,m$ or 
$(\astore,\aheap) \modelsr \aformC_j''$, for some $j = 1,\dots,l$.
In the former case, we get
$(\astore',\aheap) \modelsr \aformC_j'$, for some store $\astore'$ coinciding with $\astore$ on all variables distinct from $y$, thus $(\astore',\aheap) \modelsr \aform'$ by the induction hypothesis, and
therefore $(\astore,\aheap) \modelsr \exists y. \aform'$.
In the latter case, we get
$(\astore,\aheap) \modelsr \repl{\aform'}{y}{x}$ by the induction hypothesis, 
thus
$(\astore,\aheap) \modelsr \exists y. \aform'$.

Conversely, if $(\astore,\aheap) \modelsr \aform$, $\astore(x)\in \dom{\aheap}$ and $\astore(x') \not = \astore(x)$, for every $x' \not = x$, then 
either $(\astore,\aheap) \modelsr \repl{\aform}{y}{x}$ or 
 $(\astore',\aheap) \modelsr \aform'$, for some store $\astore'$ coinciding with   $\astore$ on all variables distinct from $y$, with $\astore'(y) \not = \astore(x)$.
By the induction hypothesis, this entails that either
$(\astore,\aheap) \modelsr \aformC''_1 \vee \dots \vee \aformC''_{l}$ 
or
$(\astore',\aheap) \modelsr \aformC'_1 \vee \dots \vee \aformC'_{m}$, so that
$(\astore,\aheap) \modelsr \exists y. \aformC'_1 \vee \dots \vee \exists y. \aformC'_{m}$.
We deduce that 
$(\astore,\aheap) \modelsr \aformC_1 \vee \dots \vee \aformC_n$.
}
\end{compactitem} 
\end{proof}

\section{The Proof Procedure}


\label{sect:rules}
\newcommand{\myrulefont}[1]{\textcolor{blue}{\rulefont{#1}}}

\newcommand{\ExistDec}{Existential Decomposition\xspace}
\newcommand{\ED}{\myrulefont{ED}}

\newcommand{\rulefont}[1]{{\tt #1}}
\newcommand{\Skolemization}{Skolemisation\xspace}
\newcommand{\Sk}{\myrulefont{Sk}}

\newcommand{\HeapFunc}{Heap Functionality\xspace}
\newcommand{\HF}{\myrulefont{HF}}

\newcommand{\Unfold}{Left Unfolding\xspace}
\newcommand{\Unf}{\myrulefont{UL}}
\newcommand{\UnitHeap}{Unit Heap axiom\xspace}
\newcommand{\UH}{\myrulefont{UHA}}
\newcommand{\PointsTo}{Points-To axiom\xspace}
\newcommand{\PTA}{\myrulefont{PTA}}
\newcommand{\RUnfold}{Right Unfolding\xspace}
\newcommand{\RU}{\myrulefont{UR}}
\newcommand{\Reflexivity}{Reflexivity axiom\xspace}
\newcommand{\Refl}{\myrulefont{R}}
\newcommand{\HeapDecomp}{Separating Conjunction Decomposition\xspace}
\newcommand{\hdec}{\myrulefont{SC}}



\newcommand{\Weakening}{Weakening\xspace}
\newcommand{\Wk}{\myrulefont{W}}

\newcommand{\HeapSep}{Heap Decomposition\xspace}
\newcommand{\hsep}{\myrulefont{HD}}

\newcommand{\TSimp}{${\cal T}$-Simplification\xspace}
\newcommand{\TS}{\myrulefont{TS}}


\newcommand{\smallerf}{\prec}

\newcommand{\TDecomp}{${\cal T}$-Decomposition\xspace}
\newcommand{\TD}{\myrulefont{TD}}

\begin{figure}
{\small
	\begin{mdframed}
		\BigCondInfRule{\Sk}{\repl{\aform}{x}{x_1} \lvdash \aseq \; \dots \; \repl{\aform}{x}{x_n} \lvdash \aseq \; \quad \repl{\aform}{x}{x'} \lvdash \aseq}
		{\exists x. ~ \aform \lvdash \aseq}
		{if $\{ x_1,\dots,x_n\} = \fv{\aform} \cup \fv{\aseq}$ and $x'$ is a fresh variable not occurring in $\aform$ or $\aseq$.
		}
		
		\BigCondInfRule{\HF}{x \mapsto (y_1,\dots,y_{\rank}) * \aform \lvdash \exists \vec{y}'. (x \mapsto (z_1,\dots,z_\rank) * \aformB)\sigma, \aseq}
		{x \mapsto (y_1,\dots,y_{\rank}) * \aform \lvdash \exists \vec{y}. (x \mapsto (z_1,\dots,z_\rank) * \aformB), \aseq}
		{if $\vec{y} \cap \{ x,y_1,\dots,y_\rank\} = \emptyset$, $\dom{\sigma} \subseteq \vec{y} \cap \{ z_1,\dots,z_\rank\}$,  $\dom{\sigma}\not = \emptyset$, 
			$\forall i \in \interv{1}{\rank}\, z_i\sigma = y_i$  and
			$\vec{y}'$ is the vector of variables occurring in $\vec{y}$ but not in $\dom{\sigma}$.
}

		\CondInfRule{\Unf}{\aform_1 * \aform \lvdash \aseq \quad \dots \quad \aform_n * \aform \lvdash \aseq}
		{p(\vec{x})  * \aform \lvdash \aseq}{
			\begin{tabular}{rrr}
				if 
				$p \in \preds$, $\{ \aform_1,\dots,\aform_n \}$ is the set \\
				of formulas such that $p(\vec{x}) \unfoldto{\asid} \aform_i$.
		\end{tabular}}

		\BigCondInfRule{\RU}{\aform \lvdash \exists \vec{x}.(\aformB_1 * \aformB),\dots,\exists \vec{x}.(\aformB_n * \aformB),  \aseq}
		{\aform \lvdash \exists \vec{x}. (\anatom * \aformB), \aseq}
		{if $\anatom$ is a \Watom, $\rootsr{\anatom} \subseteq \fv{\aform}$, $\{\aformB_1,\dots,\aformB_n\}$ is the set of {\mapformula}s (see Definition \ref{def:mapform}) such that
			$\anatom \unfoldto{\asid}\aformB_i$ (with possibly $n = 0$).}

		
			\TabRule{\Wk}{\aform \lvdash \aseqB}
			{\aform \lvdash  \aformB, \aseqB}
\vspace{4mm}

		\CondInfRule{\hsep}{\aform \lvdash \aformB_1,\dots,\aformB_n, \aseqB}
		{\aform \lvdash  \aformB, \aseqB}
		{if $x\in \alloc{\aform}\setminus  \rootsr{\aformB}$,
				$\{ \aformB_1,\dots,\aformB_n \} = \splitv{\aformB}{x}$.}
		
		\BigCondInfRule{\hdec}{\aform \lvdash \aseq_1 \quad \dots \quad \aform \lvdash \aseq_m  \qquad
			\aform' \lvdash \aseq_1' \quad \dots \quad \aform' \lvdash \aseq_l'}
		{\aform * \aform' \lvdash \aformB_1 * \aformB_1',\dots,\aformB_n * \aformB_n'}
		{if: 
			\begin{inparaenum}[(i)]
				\item{
					$\alloc{\aform} \not = \emptyset$, $\alloc{\aform'} \not = \emptyset$;}
				\item{$I_1,\dots,I_m,J_1,\dots,J_l \subseteq \interv{1}{n}$,
					for every $X \subseteq \interv{1}{n}$, either $X \supseteq I_i$ for some $i \in \interv{1}{m}$, or $\interv{1}{n} \setminus X \supseteq   J_j$ for some $j \in \interv{1}{l}$, 
					and for every $i,j \in \interv{1}{m}$ (resp.\ $i,j \in \interv{1}{l}$), with $i \not = j$
					we have $I_i \not \subseteq I_j$ (resp.\ $J_i \not \subseteq J_j$);}
				\item{
					$\aseq_i$ ($1 \leq i \leq m$) is the sequence of formulas $\aformB_j$ 
					for $j \in I_i$ and
					$\aseq_j'$ (for $1 \leq i \leq l$) is the sequence of formulas $\aformB_j'$ for $j \in J_i$.}
			\end{inparaenum}
		}
	
		\BigCondInfRule{\ED}{\aform * \aform' \lvdash \aformC_1,\dots, \aformC_m,\ \exists \vec{y}. \repl{\aformC}{x}{x_1},\ \dots,\ \exists \vec{y}. \repl{\aformC}{x}{x_n},\ \aseq} 
		{\aform * \aform' \lvdash \exists \vec{y}.\exists x. \aformC, \aseq}
		{if $\{ x_1,\dots, x_n\} = \fv{\aform} \cap \fv{\aform'}$, $x'$ is a fresh variable (not occurring in the conclusion) 
			and $\myset{\aformC_1,\dots,\aformC_m}$ is a set of formulas
			of the form $\exists \vec{y}. ((\exists x. \aformB) * \repl{\aformB'}{x}{x'} * \atformB)$, where
			$\aformC = \aformB * \aformB' * \atform$, $\aformB'\not = \emp$, both 
			$\atform$ and $\atformB$ are {\tformula}s, $\atformB \modelst \atform$  
			and $x \not \in \vart{\aformB'} \cup \fv{\atformB}$.
		}

		\CondInfRule{\TS}{\aform * \atform \lvdash \aseq }{\aform * \atform' \lvdash \aseq}
		{
			if $\atform'$ is a \tformula, $\atform \smallerf \atform'$
			and $\atform' \modelsi \atform$.
		}
	

		\CondInfRule{\TD}{\aform * \atform \lvdash \aform', \aseq \qquad 
			\aform * \atform' \lvdash \aseq}
		{\aform  \vdashr \atform * \aform', \aseq}
		{\quad if $\atform$ is a {\tformula} and $\atform \vee \atform'$ is valid.}
	\end{mdframed}
	}
	\caption{Inference rules\label{fig:rules}}
\end{figure}

The inference rules, except for the axioms, are depicted in Figure \ref{fig:rules}.
The rules are intended to be applied  
bottom-up: a rule is {\em applicable} on a sequent $\aform \vdashr \aseq$ if there exists an instance of the rule
the conclusion of which is $\aform \vdashr \aseq$. 
We 
assume from now on that all the considered sequents are disjunction-free 
(as explained above, this is without loss of generality since every formula is equivalent to a disjunction of symbolic heaps) and that the formula on the left-hand side is in prenex form.
We do not assume that the formulas occurring on the right-hand side of the sequents are in prenex form, because  one of the
inference rules (namely the \ExistDec rule) will actually 
shift existential quantifiers inside separating conjunctions. 
 Note that none of the rules rename the existential variables occurring in the conclusion (renaming is only used on the existential variables occurring in the rules). This feature will be used in the termination analysis.
  We now provide some explanations on these rules. We refer to  Figure \ref{fig:rules} for the notations.
%
%
%
%
%
%
%

The \Skolemization rule (\Sk)  gets rid of existential quantifiers on the left-hand side of the sequent. 
The rule replaces an existential variable $x$ by a new free variable $x'$. 
Note that the case where $x = x_i$ for some $i = 1,\dots,n$ must be considered apart because  {\countermodel}s must be injective.


The \HeapFunc rule  (\HF)  
exploits the fact that every location refers to at most one tuple to instantiate some existential variables occurring on the right-hand side of a sequent. 
Note that the vector $\vec{y}'$ may be empty, in which case there is  no existential quantification.

The  \Unfold rule (\Unf) unfolds a predicate atom on the left-hand side. 
 Note that the considered set of formulas is finite, up to $\alpha$-renaming, since $\asid$ is finite.
All the formulas $\aform_i * \aform$ are implicitly transformed into prenex form.

The  \RUnfold rule (\RU) unfolds a predicate atom on the right-hand side, but only when the unfolding yields a single points-to spatial atom.
Note that 
this rule always applies on  formulas $\exists \vec{x}. (\anatom * \aformB)$: in the worst case, the set $\{\aformB_1,\dots,\aformB_n\}$ is empty ($n = 0$), 
in which case the rule simply removes the considered formula from the right-hand side.
\begin{example}
With the rules of Example \ref{ex:ls}, \RU\ applies on the 
sequent $x \mapsto (y) \vdashr \exists z. \MWl{\ls(z,y)}{\ls(x,y)}{x,y,z}{x,y,z}$, yielding
$x \mapsto (y) \vdashr \exists z. x \mapsto (z)$. Indeed, $\rootsr{\MWl{\ls(z,y)}{\ls(x,y)}{x,y,z}{x,y,z}} = \mnset{x}$, $\ls(x,y) \unfoldto{\asid} \exists u. (x \mapsto (u) * \ls(u,y))$ and $\repl{\ls(u,y)}{u}{z} = \ls(z,y)$; thus $\MWl{\ls(z,y)}{\ls(x,y)}{x,y,z}{x,y,z} \unfoldto{\asid} x \mapsto (z)$.

Note that we also have $\ls(x,y) \unfoldto{\asid} x \mapsto (y) = x \mapsto (y) * \emp$, but there is no substitution $\sigma$ such that $\emp\sigma = \ls(z,y)$.
The rule also applies on $x \mapsto (y) \vdashr \MWl{\ls(y,x)}{\ls(x,y)}{x,y}{x,y}$, yielding
$x \mapsto (y) \vdashr \empseq$, since there is no substitution $\sigma$ with domain $\{ u \}$ such that 
 $\ls(u,y)\sigma = \ls(y,x)$.
\end{example}

The \Weakening  rule (\Wk) allows one to remove formulas from the right-hand side.
%
%
%
%
%

\HeapSep (\hsep) makes use of the heap splitting operation introduced in Section \ref{sect:split};  the soundness of
this rule is 
a consequence of Lemma \ref{lem:split}.

The \HeapDecomp rule (\hdec) permits the decomposition of separating conjunctions on the left-hand side, by relating
 an entailment of the form $\aform * \aform' \vdashr \aseq$ to entailments 
of the form $\aform \vdashr \aseqB$ and $\aform' \vdashr \aseqB'$. 
This is possible only if all the formulas in $\aseq$ are separating conjunctions. As we shall see, \hdec\ is always sound, 
but it is invertible\footnote{We recall that a rule is {\em invertible} if the validity of its conclusion implies the validity of  each of its premises.} only if the heap decomposition corresponding to the left-hand side coincides with that of the formulas on the right-hand side (see Definition \ref{def:svalid}).
It plays a similar  r\^ole to the rule $(*)$ defined in \cite{DBLP:conf/aplas/TatsutaNK19}\footnote{The key difference is that in our rule the premises are directly written into disjunctive normal form, rather than using universal quantifications over sets of indices and disjunction.}. 
For instance,  assume that the conclusion is $p(x) * q(y) \vdashr p_1(x) * q_1(y), p_2(x) * q_2(y)$.
Then the rule can be applied with one of the following  premises: 
\[
\begin{array}{l}
 p(x) \vdashr p_1(x) \quad p(x) \vdashr p_2(x) \quad q(y) \vdashr q_1(y), q_2(y), \\
 p(x) \vdashr p_1(x), p_2(x) \quad q(y) \vdashr q_1(y) \quad q(y) \vdashr q_2(y),\\ 
	p(x) \vdashr \empseq, \\
	q(y) \vdashr \empseq.
\end{array}
\]
Other applications, such as the one with premises
\[\begin{array}{l}
	p(x) \vdashr p_1(x) \quad p(x) \vdashr p_2(x) \quad q(y) \vdashr q_1(y) \quad q(y) \vdashr q_2(y)\\
\end{array}\]
are redundant: 
if it is provable, then the first sequence above is also provable.
The intuition of the rule is that its conclusion holds if for every model $(\astore,\aheap \dunion \aheap')$ of 
$\aform * \aform'$ with 
$(\astore,\aheap) \modelsr \aform$ and 
$(\astore,\aheap') \modelsr \aform'$, there exists $i = 1,\dots,n$ such that
$(\astore,\aheap) \modelsr \aformB_i$ and
$(\astore,\aheap') \modelsr \aformB_i'$  (note that the converse does not hold in general).
Intuitively, the premises are obtained
by putting the disjunction $\bigvee_{i=1}^n (\astore,\aheap) \modelsr \aformB_i \wedge (\astore,\aheap') \modelsr \aformB_i'$  into conjunctive normal form, by distributing the implication over conjunctions
and by replacing the entailments of the form 
$(\astore,\aheap) \modelsr \aform \wedge (\astore,\aheap') \modelsr \aform' \implies
 (\astore,\aheap) \modelsr  \Gamma \vee  (\astore,\aheap') \modelsr  \Gamma'$
 by the logically stronger conjunction of the two entailments
$(\astore,\aheap) \modelsr \aform \implies (\astore,\aheap) \modelsr \Gamma$
and
$(\astore,\aheap') \modelsr \aform' \implies
 (\astore,\aheap') \modelsr  \Gamma'$. Also, the application conditions of the rules ensure that redundant 
 sequents are discarded, such as $p(x) \vdashr p_1(x), p_2(x)$ w.r.t.\ $p(x) \vdashr p_1(x)$. 
%
%

The \ExistDec rule (\ED) allows one to shift existential quantifiers on the right-hand side
inside separating conjunctions. This rule is useful to allow for further applications of Rule \hdec.
 The condition $x \not \in \vart{\aformB'}$ can be replaced by the stronger condition 
``$\aformB'$ is \constrained{\emptyset}'', which is easier to check (one does not have to compute the set $\vart{\aformB'}$).
All the completeness results in Section \ref{sect:comp} also hold
with this stronger condition 
(and with $\atformB = \atform = \emp$). 
The intuition behind the rule is that $\aformB$ denotes the part of $\aformC$ the interpretation of which depends on the value of $x$. In most cases, this formula is unique and $m = 1$, but there are cases where several decompositions of $\aformC$ must be considered, depending on the  unfolding.
\begin{example} 
Consider a sequent
 $p(x,z) * p(y,y) \vdashr \exists z. (q(x,z) * r(x,z))$, with the rules $\{ p(u,v) \Leftarrow u \mapsto (v), q(u,v) \Leftarrow u \mapsto (v), r(u,v) \Leftarrow u \mapsto (u) \}$.
Note that the interpretation of $r(u,v)$ does not depend on $v$.
One of the premises the rule \ED\ yields is
$p(x,z) * p(y,y) \vdashr (\exists z. q(x,z)) * r(x,x')$.
Afterwards, the rule \hdec\ can be applied, yielding for instance the premises
$p(x,z) \vdashr \exists z. q(x,z)$
and
$p(y,y) \vdashr r(x,x')$.
\end{example}

The \TSimp rule (\TS) allows one to simplify {\tformula}s, depending on some external procedure, and
$\smallerf$ denotes a {\em fixed} well-founded order on {\tformula}s.
We assume that $\atform \smallerf \atform * \atformB$ for every formula $\atformB \not = \emp$, 
and
that $\atform \smallerf \atform' \implies \atformB * \atform \smallerf \atformB * \atform'$, for every formula $\atformB$.
Note that \TS\ is not necessary for  the completeness proofs in Section \ref{sect:comp}.

The \TDecomp rule (\TD) shifts {\tformula}s from the right-hand side to the left-hand side of a sequent.
In particular, the rule applies with $\atform' = \neg \atform$ if the theory is closed under negation.
The completeness results in Section \ref{sect:comp} hold under this requirement.

\newcommand{\Disj}{Disjointness Rule\xspace}
\newcommand{\DR}{\rulefont{D}$_l$}

\newcommand{\HClash}{Disjointness axiom\xspace}
\newcommand{\HC}{\myrulefont{D}}
\newcommand{\TClash}{${\cal T}$-Clash axiom\xspace}
\newcommand{\TC}{\myrulefont{TC}}
\newcommand{\EmptyHeap}{Empty Heap axiom\xspace}
\newcommand{\EH}{\myrulefont{EH}}

\paragraph*{Axioms.}

Axioms are represented in Figure \ref{fig:axioms}.
The \Reflexivity (\Refl) gets rid of trivial entailments, which can be proven simply by instantiating existential variables on the right-hand side. 
For the completeness proofs in Section \ref{sect:comp}, the case where $\sigma = \id$ is actually sufficient.
The \HClash (\HC) handles the case where the same location is allocated in two disjoint parts of the heap.
Note that (by Definition \ref{def:sequent}) the left-hand side of the sequent is \swfree, hence $\alloc{\aform}$ and 
$\alloc{\aform'}$ are well-defined.
The \TClash (\TC) handles the case where the left-hand side is unsatisfiable modulo $\theory$, while the 
\EmptyHeap (\EH)  applies when the left-hand side is a \tformula.


%
%
%
%
%
%
%
%
\begin{figure}

{\small
	\begin{mdframed}
		
		\CondInfRule{\Refl}{}
		{\aform * \atform \lvdash \exists \vec{z}. \aformB,\, \aseq}
		{\qquad\begin{tabular}{ll}
				\ if $\atform$ is a \tformula and there is a substitution $\sigma$ \\
				such that
				$\dom{\sigma} = \vec{z}$ and $\aform = \aformB\sigma$.
		\end{tabular}}

		\CondInfRule{\HC}{}
		{\aform * \aform' \lvdash \aseq}
		{\qquad if $\alloc{\aform} \cap \alloc{\aform'} \neq \emptyset$}

		%
		

		\CondInfRule{\TC}{}
		{\aform * \atform \lvdash \aseq}
		{\qquad if $\atform$ is a \tformula and $\atform \modelsi \myfalse$.}

		

		\CondInfRule{\EH}{}
		{\atform \lvdash \exists \vec{x}_1. \atformB_1, \dots, \exists \vec{x}_n. \atformB_n, \aseq}
		{\qquad\begin{tabular}{ll}
				if $\atform,\atformB_1,\dots,\atformB_n$ are {\tformula}s and \\
				$\atform \modelsi \exists \vec{x}_1. \atformB_1 \vee \dots \vee \exists \vec{x}_n. \atformB_n$.
		\end{tabular}}

	\end{mdframed}}
	\caption{Axioms\label{fig:axioms}}
\end{figure}

\subsection*{Proof Trees}

\begin{definition}
\label{def:ptree}
A {\em proof tree} is a possibly infinite tree, in which each node is labeled by a sequent, 
and if a node is labeled by some sequent $\aform \vdashr \aseq$, then its successors
are labeled by $\aform_i \vdashr \aseq_i$ with $i = 1,\dots,n$,
for some rule application 
\begin{tabular}{ccc} 
$\aform_1 \vdashr \aseq_1 \cdots \aform_n \vdashr \aseq_n$ \\
\hline 
$\aform \vdashr \aseq$
\end{tabular}.
A proof tree is {\em rational} if it contains a finite number of subtrees, up to a renaming of variables.
The {\em end-sequent} of a proof tree is the sequent labeling the root of the tree.
\end{definition}
In practice one is of course interested in constructing rational proof trees. Such rational proof trees can be infinite, but they can be represented finitely.
The cycles in a rational proof tree may be seen as applications of the induction principle.
We provide a simple example showing applications of the rules.

 \begin{example}
	\label{ex:infinite}
	Consider the \pcSID consisting of the following rules:
	\[\begin{array}{rcl}
		p(x) &\Leftarrow& \exists y,z.\, x \mapsto (y,z) * p(y) * p(z)\\
		p(x) &\Leftarrow& x \mapsto (x,x)\\
		q(x,u) &\Leftarrow& \exists y,z.\, x \mapsto (y,z) * p(y) * q(z,u)\\
		q(x,u) &\Leftarrow& x \mapsto (u,u)
	\end{array}\]
	The proof tree $\tau(x)$ below admits the end-sequent $p(x) \vdashr \exists u. q(x,u)$:
	{\small
		\begin{prooftree}
			\AxiomC{}
			\RightLabel{\Refl}
			\UnaryInfC{$x \mapsto (x,x) \vdashr \exists u. x \mapsto (u,u)$}
			\RightLabel{\RU}
			\UnaryInfC{$x \mapsto (x,x) \vdashr \exists u.  q(x,u)$}
			\AxiomC{$\pi(x)$}
			\UnaryInfC{$\exists y,z.\, x \mapsto (y,z) * p(y) * p(z) \vdashr \exists u.  q(x,u)$}
			\RightLabel{\Unf}
			\BinaryInfC{$p(x) \vdashr \exists u. q(x,u)$}
	\end{prooftree}} \noindent
	where the proof tree  
	$\pi(x)$ with end-sequent $\exists y,z.\, x \mapsto (y,z) * p(y) * p(z) \vdashr \exists u.  q(x,u)$ is defined as follows (using $\aform$ to denote the \Watom $\MWl{p(y) * q(z,u)}{q(x,u)}{x,y,z,u}{x,y,z,u}$):
	{\small
		\begin{prooftree}
			\AxiomC{}
			\RightLabel{\Refl}
			\UnaryInfC{$x \mapsto (y,z) \vdashr x \mapsto (y,z)$}
			\RightLabel{\RU}
			\UnaryInfC{$x \mapsto (y,z) \vdashr \aform$}
			\AxiomC{}
			\RightLabel{\Refl}
			\UnaryInfC{$p(y) \vdashr p(y)$}
			\AxiomC{$\tau(z)$}
			\UnaryInfC{$p(z) \vdashr \exists u. q(z,u)$}
			\RightLabel{\hdec}
			\BinaryInfC{$p(y) * p(z) \vdashr p(y) * \exists u. q(z,u)$}
			\RightLabel{\ED}
			\UnaryInfC{$p(y) * p(z) \vdashr \exists u. (p(y) *  q(z,u))$}
			\RightLabel{\hdec}
			\BinaryInfC{$x \mapsto (y,z) * p(y) * p(z) \vdashr \aform * \exists u. (p(y) * q(z,u))$}
			\RightLabel{\ED}
			\UnaryInfC{$x \mapsto (y,z) * p(y) * p(z) \vdashr \exists u. (\aform * p(y) * q(z,u))$}
			\RightLabel{\hsep}
			\UnaryInfC{$x \mapsto (y,z) * p(y) * p(z) \vdashr \exists u. (\MWl{p(y)}{q(x,y,u)}{x,y,u}{x,y,u} * p(y))$}
			\RightLabel{\hsep}
			\UnaryInfC{$x \mapsto (y,z) * p(y) * p(z) \vdashr \exists u. q(x,u)$}
			\RightLabel{\Sk}
			\UnaryInfC{$\exists z. x \mapsto (y,z) * p(y) * p(z) \vdashr \exists u. q(x,u)$}
			\RightLabel{\Sk}
			\UnaryInfC{$\exists y,z.\, x \mapsto (y,z) * p(y) * p(z) \vdashr \exists u. q(x,u)$}
		\end{prooftree}
	}
	For the sake of readability, \rootunsat formulas are removed. For example, the rule \Sk\ applied above also adds sequents with formulas on the left-hand side such as 
	$x \mapsto (y,z) * p(x) * p(z)$. 
Some weakening steps are also silently applied to dismiss irrelevant formulas from the right-hand sides of the sequents. 
	
	Note that the formula $q(x,u)$ on the right-hand side of a sequent is not unfolded unless this unfolding yields a single points-to atom. Unfolding $q(x,u)$ would be possible for this set of rules 
	because $p(x)$ and $q(x,u)$ share the same root, but in general such a strategy would not terminate. Instead the rule \hsep\ is used to perform a partial  unfolding of $q(x,u)$ with a split on variable $y$ for the first application. 
	The sequent $p(z) \vdashr \exists u. q(z,u)$ is identical to the root sequent, up to a renaming of variables. The generated proof tree is thus infinite but rational, up to a renaming of variables.
\end{example}

\section{Soundness}

We prove that the calculus is sound, in the sense that the end-sequent 
of every (possibly infinite, even irrational) proof tree is valid.
In the entire section, we assume that $\asid$ is \alloccompatible.


\begin{lemma}
\label{lem:ax}
The conclusions of the rules \Refl, \HC, \TC\ and \EH\ are all valid.
\end{lemma}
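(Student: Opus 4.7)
The plan is to treat each of the four axioms in turn and, in each case, show that no countermodel of the conclusion can exist. So, for every axiom, I would assume towards a contradiction that $(\astore, \aheap)$ is a countermodel of the conclusion --- namely $\astore$ is injective, $(\astore, \aheap) \modelsr$ the left-hand side, and $(\astore, \aheap) \not\modelsr \aformC$ for every formula $\aformC$ appearing on the right-hand side --- and derive a contradiction from the side condition of the rule.

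For \Refl, the conjunct $\atform$ on the left-hand side is a \tformula, whose semantics forces the corresponding sub-heap to be empty, so $(\astore, \aheap) \modelsr \aform = \aformB\sigma$. I would then define $\astore'$ to coincide with $\astore$ on every variable outside $\vec{z}$ and to satisfy $\astore'(z) = \astore(z\sigma)$ for each $z \in \vec{z}$. A routine induction on the structure of $\aformB$ (essentially the converse direction of Proposition~\ref{prop:subst_sw}) shows that $(\astore', \aheap) \modelsr \aformB$, hence $(\astore, \aheap) \modelsr \exists \vec{z}. \aformB$, contradicting the countermodel assumption. This substitution step is the only slightly delicate point of the whole lemma.

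For \HC, the separating conjunction $\aform * \aform'$ on the left-hand side splits the heap as $\aheap = \aheap_1 \uplus \aheap_2$ with $(\astore, \aheap_i)$ satisfying the respective conjunct. Choosing any $x \in \alloc{\aform} \cap \alloc{\aform'}$ and applying Lemma~\ref{lem:alloc} --- which is exactly where \alloccompatibility is used --- on each side gives $\astore(x) \in \dom{\aheap_1} \cap \dom{\aheap_2}$, contradicting disjointness of $\aheap_1$ and $\aheap_2$.

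For \TC and \EH, the left-hand side is (or contains as a separating conjunct) a \tformula $\atform$, so the corresponding sub-heap is empty and $\astore \modelst \atform$. Since $\astore$ is injective, it also satisfies the separating conjunction $\atform_{inj}$ of all disequations $x \not\iseq x'$ over the relevant free variables, so $\astore \modelst \atform * \atform_{inj}$. For \TC the side condition $\atform \modelsi \myfalse$ then yields $\astore \modelst \myfalse$, which contradicts the assumed property $\astore \not\modelst \myfalse$. For \EH the same reasoning gives $\astore \modelst \exists \vec{x}_1. \atformB_1 \vee \cdots \vee \exists \vec{x}_n. \atformB_n$; extending $\astore$ on some $\vec{x}_i$ to witness $\atformB_i$ and recalling that the heap is empty then produces $(\astore, \aheap) \modelsr \exists \vec{x}_i. \atformB_i$, which contradicts the assumption that no right-hand side formula is satisfied.
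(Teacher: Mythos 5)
Your proposal is correct and follows essentially the same route as the paper's proof: \Refl\ via the store $\astore'(z)=\astore(z\sigma)$ and a substitution lemma, \HC\ via Lemma~\ref{lem:alloc} and disjointness of the two subheaps, and \TC/\EH\ via the fact that a \tformula\ forces an empty heap together with injectivity of the store. If anything, your treatment of the substitution step in \Refl\ and of the role of $\atform_{inj}$ in \TC/\EH\ is slightly more explicit than the paper's.
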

\begin{proof}
We consider each axiom separately.
\begin{compactitem}

\item[\Refl]{Assume that
$(\astore,\aheap) \modelsr \aform * \atform$.
Since $\atform$ is a \tformula, we also have $(\astore,\aheap) \modelsr \aform$.
Let $\astore'$ be a store mapping each variable $z$ to 
$\astore(z\sigma)$. Since $\dom{\sigma} = \vec{z}$, $\astore$ and $\astore'$
coincide on all variables not occurring in $\vec{z}$.
Since $\aform = \aformB\sigma$, we have 
$(\astore,\aheap) \modelsr \aformB\sigma$, 
and by definition of $\astore'$,
Thus $(\astore,\aheap) \modelsr \exists \vec{z}. \aformB$.
}

\item[\HC]{
Assume that $(\astore,\aheap) \modelsr \aform * \aform'$, where $x\in \alloc{\phi} \cap \alloc{\aform'}$.
Then
there exist disjoint heaps $\aheap_1$ and $\aheap_2$ with $\aheap = \aheap_1 \dunion \aheap_2$,
$(\astore,\aheap_1) \modelsr \aform$ and
$(\astore,\aheap_2) \modelsr \aform'$.
By Lemma \ref{lem:alloc}, since $\asid$ is \alloccompatible and $x \in \alloc{\aform} \cap \alloc{\aform'}$,
we have $\astore(x) \in \dom{\aheap_1} \cap \dom{\aheap_2}$, contradicting the fact that $\aheap_1$ and $\aheap_2$ are disjoint.}

\item[\TC]{
Assume that $(\astore,\aheap) \modelsr \aform * \atform$, where $\atform$ is a \tformula. 
Then by definition of the semantics
we must have $(\astore,\emptyset) \modelsr \atform$, hence 
$\atform$   cannot be unsatisfiable.
}
\item[\EH]{
If $(\astore,\aheap) \modelsr \atform$ then since $\atform$ is a \tformula necessarily $\aheap = \emptyset$,
hence $(\astore,\aheap) \modelsr \exists \vec{x}_i. \atformB_i$, for some $i =1,\dots,n$, by the application condition of the rule.
}

\end{compactitem}
\end{proof}

\begin{lemma}
\label{lem:sound}
The rules \HF, \Unf, \RU, \Wk, \hsep, \TS, \TD\ are sound. More precisely, if $(\astore,\aheap)$ is a \countermodel of the conclusion of the rule, then it is also a \countermodel of at least one of the premises.
\end{lemma}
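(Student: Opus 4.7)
The plan is to handle each rule separately; in every case, we fix an arbitrary countermodel $(\astore,\aheap)$ of the conclusion (so $\astore$ is injective, $(\astore,\aheap) \modelsr$ the left-hand side, and $(\astore,\aheap)$ satisfies no formula on the right-hand side) and exhibit a premise for which $(\astore,\aheap)$ is still a countermodel. Since we use the same store $\astore$, injectivity is preserved; the real work is to verify the left-hand side of the chosen premise is still satisfied and the right-hand side is still not.

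For the ``easy'' rules the argument is direct. For \Wk, the premise's right-hand side is a subsequence of the conclusion's, so falsification is immediate. For \Unf, by the semantics of predicate atoms together with the separating conjunction clause, a model of $p(\vec{x}) * \aform$ satisfies $\aform_i * \aform$ for at least one unfolding $p(\vec{x}) \unfoldto{\asid} \aform_i$, selecting the corresponding premise. For \RU, the contrapositive direction of Definition \ref{def:semMW} gives that any model of $\exists \vec{x}.(\aformB_i * \aformB)$ is a model of $\exists \vec{x}.(\anatom * \aformB)$ (since $\anatom \unfoldto{\asid} \aformB_i$), so if $(\astore,\aheap)$ falsifies the conclusion's right-hand side it falsifies each premise disjunct. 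For \TS, injectivity of $\astore$ together with $\atform' \modelsi \atform$ yields $\astore \modelst \atform$, hence $(\astore,\aheap) \modelsr \aform * \atform$. For \TD, the validity of $\atform \vee \atform'$ forces $\astore$ to satisfy one of them; the two cases select the two premises, using that a \tformula requires the empty heap.

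The rule \hsep\ is resolved by combining Lemma \ref{lem:alloc} and Lemma \ref{lem:split}. Because $\asid$ is \alloccompatible\ and $x \in \alloc{\aform}$, Lemma \ref{lem:alloc} gives $\astore(x) \in \dom{\aheap}$, and injectivity of $\astore$ ensures $\astore(x') \neq \astore(x)$ for every $x' \neq x$. If some $\aformB_i \in \splitv{\aformB}{x}$ were satisfied by $(\astore,\aheap)$, then by Lemma \ref{lem:split} we would have $(\astore,\aheap) \modelsr \aformB$, contradicting the assumption.

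The main obstacle is \HF, which is where heap functionality truly has to be leveraged. Here one assumes, for contradiction, that the premise's right-hand side is satisfied: there is a store $\astore'$ coinciding with $\astore$ outside $\vec{y}'$ such that $(\astore',\aheap) \modelsr (x \mapsto (z_1,\dots,z_\rank) * \aformB)\sigma$. The plan is to extend $\astore'$ to a store $\astore''$ on all of $\vec{y}$ by setting $\astore''(z_i) := \astore(y_i)$ for each $z_i \in \dom{\sigma}$ (and $\astore''(v) := \astore'(v)$ otherwise). By the rule's condition $\forall i\ z_i\sigma = y_i$, one then verifies that $\astore''(v) = \astore'(v\sigma)$ holds for every variable $v$, so the standard substitution lemma (of which Proposition \ref{prop:subst_sw} is the \Watom-aware version) yields $(\astore'',\aheap) \modelsr x \mapsto (z_1,\dots,z_\rank) * \aformB$. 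Since $\astore''$ and $\astore$ agree outside $\vec{y}$, this gives $(\astore,\aheap) \modelsr \exists \vec{y}.(x \mapsto (z_1,\dots,z_\rank) * \aformB)$, contradicting the hypothesis that $(\astore,\aheap)$ is a countermodel of the conclusion. The delicate point is checking the identity $\astore''(v) = \astore'(v\sigma)$ by case analysis on $v \in \dom{\sigma}$, $v \in \vec{y}'$, and $v \notin \vec{y}$, using $\dom{\sigma} \subseteq \vec{y}$ and $\vec{y} \cap \{x,y_1,\dots,y_\rank\} = \emptyset$ to ensure the values of the free variables $y_i$ are preserved.
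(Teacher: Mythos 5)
Your proposal is correct and follows essentially the same route as the paper's proof for every rule: the same case analysis, the same use of Lemma \ref{lem:split} for \hsep, the same two-case split on $\atform$ versus $\atform'$ for \TD, and the same observation that the instantiated formula $\exists \vec{y}'.(x \mapsto (z_1,\dots,z_\rank) * \aformB)\sigma$ entails $\exists \vec{y}.(x \mapsto (z_1,\dots,z_\rank) * \aformB)$ for \HF (which the paper dismisses as immediate while you spell out the witnessing store). The only superfluous step is invoking Lemma \ref{lem:alloc} for \hsep: the condition $\astore(x) \in \dom{\aheap}$ is needed only for the invertibility direction, not for soundness, since the first implication of Lemma \ref{lem:split} holds unconditionally.
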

\begin{proof}
We consider each rule separately:
\begin{compactitem}
\item[\HF]{The proof is immediate since 
it is clear that 
$\exists \vec{y}'. (x \mapsto (z_1,\dots,z_\rank) * \aformB)\sigma \modelsr \exists \vec{y}. (x \mapsto (z_1,\dots,z_\rank) * \aformB)$.}

\item[\Unf]{
Let $\anatom = p(\vec{x})$, and assume $(\astore,\aheap) \modelsr \anatom * \aform$
and $(\astore,\aheap) \not \modelsr \aseq$. Then $\aheap = \aheap_1 \dunion \aheap_2$, with
$(\astore,\aheap_1) \modelsr \anatom$
and
$(\astore,\aheap_2) \modelsr \aform$.
By definition of the semantics of the predicate atoms, necessarily $(\astore,\aheap_1) \modelsr \aformB$, for some $\aformB$ such that
$\anatom \unfoldto{\asid} \aformB$.
By definition of $\{ \aform_1,\dots,\aform_n \}$, there exists $i\in \interv{1}{n}$ such that $\aformB = \aform_i$ (modulo $\alpha$-renaming). We deduce that $(\astore,\aheap_1) \modelsr \aform_i$, and that $(\astore,\aheap) \modelsr \aform_i * \aform$. Therefore,
$(\astore,\aheap)$
 is a \countermodel of 
 $\aform_i * \aform\vdashr \aseq$.
 }
 \item[\RU]{
 Assume that 
 $(\astore,\aheap) \modelsr \aform$,
  $(\astore,\aheap) \modelsr \exists \vec{x}.(\aformB_1 * \aformB),\dots,\exists \vec{x}.(\aformB_n * \aformB), \aseq$ and $(\astore, \aheap) \not\modelsr \exists \vec{x}. (\anatom * \aformB), \aseq$. Then necessarily, $(\astore,\aheap) \modelsr \exists \vec{x}.(\aformB_i * \aformB)$  holds for some $i = 1,\dots,n$. Thus there exist a store $\astore'$ coinciding with $\astore$ on all variables not occurring in $\vec{x}$ and disjoint heaps $\aheap_i$ (for $i = 1,2$) such that $\aheap = \aheap_1 \dunion \aheap_2$, $(\astore',\aheap_1) \models \aformB_i$
  and
  $(\astore',\aheap_2) \models \aformB$.
Since by hypothesis $\anatom \unfoldto{\asid} \aformB_i$, we deduce that
$(\astore',\aheap_1) \models \anatom$, so that
$(\astore',\aheap_1 \dunion \aheap_2) \models \anatom * \aformB$, hence
$(\astore,\aheap) \models \exists \vec{x}. (\anatom * \aformB)$,
which contradicts our assumption.}
\item[\Wk]{The proof is immediate, since by definition every \countermodel of $\aform \vdashr \aformB,\aseq$ is 
also a  \countermodel of $\aform \vdashr \aseq$.} 
\item[\hsep]{Assume that $(\astore,\aheap) \modelsr \aform$ and that $(\astore,\aheap) \not \modelsr \aformB$. 
By Lemma \ref{lem:split}, since $\splitv{\aformB}{x} = \{ \aformB_1,\dots,\aformB_n\}$, we deduce that $(\astore,\aheap) \not \modelsr \aformB_1,\dots,\aformB_n$.
 }

 \item[\TS]{Assume that $(\astore,\aheap) \modelsr \aform * \atform'$, $\astore$ is injective 
and $(\astore,\aheap) \not \modelsr \aseq$.
Since $\atform$ is a \tformula, this entails that $(\astore,\aheap) \modelsr \aform$
and $\astore \modelst \atform'$, thus 
$\astore \modelst \atform'$.
By the application condition of the rule we have $\atform' \modelsi \atform$, hence
$\astore \modelst \atform$ because $\astore$ is injective.
We deduce that $(\astore,\aheap) \modelsr \aform * \atform$, and that
$(\astore,\aheap)$ is a \countermodel of $\aform * \atform \vdashr \aseq$.}

 \item[\TD]{Assume that $(\astore,\aheap) \modelsr \aform$
 and $(\astore,\aheap) \not \modelsr \atform * \aform', \aseq$.
 We distinguish two cases. If $(\astore,\emptyset) \modelsr \atform$ then
 $(\astore,\aheap) \modelsr \aform * \atform$, since $\aheap = \aheap \dunion \emptyset$.
  Furthermore, since $(\astore,\aheap) \not \modelsr \atform * \aform'$, necessarily,
$(\astore,\aheap) \not \modelsr \aform'$ and
 $(\astore,\aheap)$ is a \countermodel of $\aform * \atform \vdashr \aform', \aseq$.
Otherwise,
 $(\astore,\emptyset) \not \modelsr \atform$, hence
 $(\astore,\emptyset) \modelsr \atform'$ because $\atform \vee \atform'$ is valid, and
  $(\astore,\aheap) \modelsr \aform * \atform'$. Therefore,
 $(\astore,\aheap)$ is a \countermodel of $\aform * \atform' \vdashr \aseq$.
 }


\end{compactitem}
\end{proof}

\begin{lemma}
\label{lem:HD}
The rule \hdec\ is sound. More precisely, if $(\astore,\aheap)$ is a \countermodel of the rule conclusion, then at least one of the premises admits a
\countermodel $(\astore,\aheap')$, where $\aheap'$ is a proper subheap of $\aheap$.
\end{lemma}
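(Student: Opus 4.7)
The plan is to directly decompose the heap using the semantics of separating conjunction, and then use the combinatorial condition on the index sets $I_i, J_j$ to locate a premise that fails. Assume $(\astore,\aheap)$ is a countermodel of the conclusion, so $\astore$ is injective, $(\astore,\aheap) \modelsr \aform * \aform'$ and $(\astore,\aheap) \not\modelsr \aformB_k * \aformB_k'$ for every $k \in \interv{1}{n}$. By the semantics of $*$, there exist disjoint heaps $\aheap_1,\aheap_2$ with $\aheap = \aheap_1 \dunion \aheap_2$, $(\astore,\aheap_1) \modelsr \aform$ and $(\astore,\aheap_2) \modelsr \aform'$. Using condition (i) and Lemma \ref{lem:alloc} (or Proposition \ref{prop:hform}) applied to any variable $x \in \alloc{\aform}$ and $x' \in \alloc{\aform'}$, we get $\astore(x) \in \dom{\aheap_1}$ and $\astore(x') \in \dom{\aheap_2}$, so both $\aheap_1$ and $\aheap_2$ are nonempty, hence proper subheaps of $\aheap$.

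Next I introduce the ``side assignment'' $X = \{ k \in \interv{1}{n} \mid (\astore,\aheap_1) \not\modelsr \aformB_k \} \subseteq \interv{1}{n}$. The key observation is that for every $k \notin X$ one must have $(\astore,\aheap_2) \not\modelsr \aformB_k'$: indeed, $k \notin X$ means $(\astore,\aheap_1) \modelsr \aformB_k$, so if we had $(\astore,\aheap_2) \modelsr \aformB_k'$ then combining the two via $\aheap = \aheap_1 \dunion \aheap_2$ would give $(\astore,\aheap) \modelsr \aformB_k * \aformB_k'$, contradicting the assumption on the countermodel.

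Now I apply the combinatorial condition (ii) to $X$: either $X \supseteq I_i$ for some $i \in \interv{1}{m}$, or $\interv{1}{n} \setminus X \supseteq J_j$ for some $j \in \interv{1}{l}$. In the first case, every $k \in I_i$ lies in $X$, so $(\astore,\aheap_1) \not\modelsr \aformB_k$ for each such $k$; by definition of $\aseq_i$ and of $\modelsr$ on sequences, this gives $(\astore,\aheap_1) \not\modelsr \aseq_i$, so that $(\astore,\aheap_1)$ is a countermodel of the premise $\aform \lvdash \aseq_i$. In the second case, every $k \in J_j$ satisfies $k \notin X$, and by the observation above $(\astore,\aheap_2) \not\modelsr \aformB_k'$ for each such $k$, hence $(\astore,\aheap_2) \not\modelsr \aseq_j'$ and $(\astore,\aheap_2)$ is a countermodel of $\aform' \lvdash \aseq_j'$. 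Since $\astore$ was already injective and both $\aheap_1$ and $\aheap_2$ are proper subheaps of $\aheap$, this yields the strengthened conclusion of the lemma.

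The argument is essentially bookkeeping once the set $X$ is introduced; the only subtle point is the role of condition (ii), which is precisely what is needed to convert a semantic countermodel into an index argument, and the role of condition (i), which is what guarantees that the chosen subheaps are strictly smaller than $\aheap$ (this strict decrease is important later, e.g.\ for well-founded arguments over heap size). The minimality conditions ``$I_i \not\subseteq I_j$'' and ``$J_i \not\subseteq J_j$'' play no role in soundness; they only avoid redundant premises.
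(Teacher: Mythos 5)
Your proof is correct and follows essentially the same route as the paper's: decompose $\aheap$ into $\aheap_1 \dunion \aheap_2$, note both parts are nonempty (hence proper) via condition (i), collect the indices $k$ with $(\astore,\aheap_1)\not\modelsr\aformB_k$ into a set $X$, and invoke condition (ii) to find an $I_i\subseteq X$ or $J_j\subseteq\interv{1}{n}\setminus X$ witnessing a failed premise. The paper phrases this as a proof by contradiction while you argue directly, but the content is identical, and your closing remarks about the roles of conditions (i) and (ii) and the irrelevance of the minimality conditions are accurate.
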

\begin{proof}
Let $(\astore,\aheap)$ be a \countermodel of 
$\aform * \aform' \vdashr \aformB_1 * \aformB_1',\dots, \aformB_n * \aformB_n'$, and assume that for every proper subheap 
$\aheap' \subset \aheap$, 
$(\astore,\aheap')$ satisfies 
all the premises.
Necessarily, $(\astore,\aheap) \modelsr \aform * \aform'$, hence
there exist heaps $\aheap_1$ and $\aheap_2$ such that
$\aheap = \aheap_1 \dunion \aheap_2$,
$(\astore,\aheap_1) \modelsr \aform$
and
$(\astore,\aheap_2) \modelsr \aform'$.
Furthermore, since $\aform$ and $\aform'$ both contain at least one spatial atom, both $\aheap_1$ and
$\aheap_2$ must be nonempty by Proposition \ref{prop:hform}, and are therefore both proper subheaps of $\aheap$. 
Since $(\astore,\aheap) \not \modelsr \aformB_x * \aformB_x'$, for every $x \in \interv{1}{n}$, 
we have either
$(\astore,\aheap_1) \not \modelsr \aformB_x$
or
$(\astore,\aheap_2) \not \modelsr \aformB_x'$.
By gathering all the indices $x$ satisfying the first assertion, we obtain a set $X \subseteq \interv{1}{n}$ such that
$x \in X \Rightarrow (\astore,\aheap_1) \not \modelsr \aformB_x$ ($\dagger$)
and
$x \in \interv{1}{n} \setminus X \Rightarrow (\astore,\aheap_2) \not \modelsr \aformB_x'$ ($\ddagger$).

By the application condition of the rule, 
we have either $I_i \subseteq X$ for some $i \in \interv{1}{m}$, or
$J_j \subseteq \interv{1}{n} \setminus X$ for some $j \in \interv{1}{l}$. 
First assume that $I_i \subseteq X$.
Then since $(\astore,\aheap_1) \modelsr \aform$ and
$\aheap_1$ is a proper subheap of $\aheap$, we deduce that
$(\astore,\aheap_1) \modelsr \bigvee_{x\in I_i} \aformB_x$, which contradicts ($\dagger$).
Similarly if $J_j \subseteq \interv{1}{n} \setminus X$, then, since $(\astore,\aheap_2) \modelsr \aform'$ and
$\aheap_2$ is a proper subheap of $\aheap$, we have
$(\astore,\aheap_2) \modelsr \bigvee_{x\in J_j} \aformB_x'$, 
which  contradicts ($\ddagger$).
\end{proof}

\begin{lemma}
\label{lem:outsideheap}
Let $\aform$ be a formula, $x$ be a variable not occurring in $\vart{\aform}$ 
and let $(\astore,\aheap)$ be a model of $\aform$ such that $\astore(x) \not \in \locs{\aheap}$.
Then, for every store $\astore'$ coinciding with $\astore$ on all variables distinct from $x$, we have $(\astore',\aheap) \modelsr \aform$. 
\end{lemma}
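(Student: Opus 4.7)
The plan is to prove the lemma by induction on the satisfiability relation $\modelsr$, which is well-founded thanks to the progress condition (each predicate unfolding step ultimately consumes a heap cell via its mandatory points-to atom). The underlying intuition is twofold: the hypothesis $\astore(x) \notin \locs{\aheap}$ renders the value $\astore(x)$ invisible to every spatial atom satisfied in $\aheap$, while $x \notin \vart{\aform}$ guarantees that no theory atom reachable via unfolding of $\aform$ mentions $x$; together these force $\astore(x)$ to be irrelevant to satisfiability.

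The base cases are almost immediate. For a \tformula $\aform$, we have $x \notin \vart{\aform} = \fv{\aform}$, so the claim follows from the standard invariance of $\modelst$ under stores differing only at non-free variables (together with $\aheap = \emptyset$). For a points-to atom $y \mapsto (z_1,\dots,z_\rank)$, the heap must equal $\{(\astore(y),\astore(z_1),\dots,\astore(z_\rank))\}$, placing each $\astore(y)$ and $\astore(z_i)$ into $\locs{\aheap}$; since $\astore(x) \notin \locs{\aheap}$, we deduce $x \notin \{y,z_1,\dots,z_\rank\}$, so $\astore$ and $\astore'$ agree on every mentioned variable.

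The inductive cases proceed as follows. Disjunction is handled by recursing on the satisfied disjunct. For a separating conjunction $\aform_1 * \aform_2$ with $\aheap = \aheap_1 \dunion \aheap_2$, I observe that $\locs{\aheap_i} \subseteq \locs{\aheap}$ and $\vart{\aform_i} \subseteq \vart{\aform}$ (any unfolding of $\aform_i$ extends to one of $\aform_1 * \aform_2$), and apply the induction hypothesis to each conjunct. For a predicate atom $p(\vec{x})$ with $p \in \preds$, I unfold to some $\aformB$ with $(\astore,\aheap) \modelsr \aformB$ and apply the induction hypothesis after noting that $\vart{\aformB} \subseteq \vart{p(\vec{x})}$, since any unfolding of $\aformB$ is also an unfolding of $p(\vec{x})$. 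The quantifier case $\exists y.\aformB$ splits on $y$ vs.\ $x$: if $y = x$, the existential witness already absorbs any change at $x$; otherwise, taking a witness $\astore_w$ for $\aformB$, I define $\astore''$ agreeing with $\astore_w$ off $x$ and with $\astore''(x) = \astore'(x)$, then apply the induction hypothesis using that $\astore_w(x) = \astore(x) \notin \locs{\aheap}$ and $x \notin \vart{\aformB} \subseteq \vart{\aform}$.

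The delicate case is the \Watom $\MWs{\atail}{p(\vec{x})}{\vec{u}}{\theta}$. Applying Definition \ref{def:semMW}, I extract the witness decomposition $p(\vec{x}) \unfoldto{\asid}^+ \exists \vec{y}.(\atail' * \aform')$ with substitution $\sigma$ and auxiliary store $\astore_w$, $\alpha$-rename so that $x \notin \vec{y}$ (ensuring $\astore_w(x) = \astore(x) \notin \locs{\aheap}$), and keep the same $\sigma$ while setting $\astore_w'(x) = \astore'(x)$ and $\astore_w' = \astore_w$ elsewhere. The main obstacle is verifying $x \notin \vart{\aform'\sigma\theta}$ so that the induction hypothesis can be invoked on $(\astore_w,\aheap) \modelsr \aform'\sigma\theta$. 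Using the second part of Proposition \ref{prop:inst_deriv}, every further unfolding of $\aform'\sigma\theta$ lifts to an unfolding of $\aform'$, which itself sits inside an unfolding of $p(\vec{x})$; combined with the convention $\vart{\MWs{\atail}{p(\vec{x})}{\vec{u}}{\theta}} = \vart{p(\vec{x})\theta}$, the hypothesis $x \notin \vart{\aform}$ propagates to $x \notin \vart{\aform'\sigma\theta}$, which closes the induction and hence the proof.
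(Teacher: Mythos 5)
Your proposal is correct and follows essentially the same route as the paper: induction on the satisfiability relation, with the same base cases, the same treatment of $*$, $\vee$ and $\exists$, and the same key move in the \Watom case ($\alpha$-renaming the bound variables away from $x$, modifying the witness store only at $x$, and propagating $x \notin \vart{\aform}$ to the partially unfolded formula via the convention $\vart{\MWs{\atail}{p(\vec{x})}{\vec{u}}{\theta}} = \vart{p(\vec{x})\theta}$ so that the induction hypothesis applies). The only cosmetic differences are that you treat predicate atoms as a separate case rather than as degenerate {\Watom}s, and that you invoke Proposition \ref{prop:inst_deriv} where the paper appeals directly to the extension of $\unfoldto{\asid}^*$ to {\Watom}s; both arguments rest on the same over-approximation property of $\vart{\cdot}$.
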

 \begin{proof}
 The proof is by induction on the satisfiability relation. 
 We distinguish several cases.
 \begin{compactitem}
 \item{If $\aform$ is a \tformula then $\fv{\aform} = \vart{\aform}$, hence by hypothesis $x\not \in \fv{\aform}$ and $(\astore',\aheap) \modelsr \aform$.}
 \item{If $\aform = x_0 \mapsto (x_1,\dots,x_\rank)$ then 
since $(\astore,\aheap) \modelsr \aform$, $\aheap = \myset{(\astore(x_0),\dots,\astore(x_\rank))}$, and
$\locs{\aheap} = \{ \astore(x_i) \mid 0 \leq i \leq \rank \}$.
Since $\astore(x) \not \in \locs{\aheap}$ we deduce that $\astore(x) \not = \astore(x_i)$ for all $i = 0,\dots,\rank$, 
thus $x \not = x_i$ and $\astore'$ coincides with $\astore$ on $x_0,\dots,x_\rank$.
We conclude that $(\astore',\aheap) \modelsr \aform$.}
 \item{Assume that $\aform$ is of the form $\MWs{\atail}{\anatom}{\vec{u}}{\theta}$.
 Then there exist a formula $\exists \vec{y}.  ~ (\atail' * \aformB)$, a substitution $\sigma$ with $\dom{\sigma} \subseteq \vec{y}$ and a store $\hat{\astore}$ coinciding with 
 $\astore$ on all variables not occurring in $\vec{y}$ such that
 $\anatom \unfoldto{\asid}^+ \exists \vec{y}.  ~ (\atail' * \aformB)$, 
 $(\hat{\astore},\aheap) \modelsr \aformB\sigma\theta$ and $\atail = \atail'\sigma$.
 We assume by $\alpha$-renaming that $x\not \in \vec{y}$.
 Let $\hat{\astore}'$ be a store coinciding with $\hat{\astore}$ on all variables distinct from $x$, and such that
 $\hat{\astore}'(x) = \astore'(x)$. By construction,
 $\hat{\astore}'$ coincides with $\astore'$ on all variables not occurring in $\vec{y}$.
By definition of the extension of $\unfoldto{\asid}^*$ to formulas containing {\Watom}s,
we have $\aform \unfoldto{\asid}^* \exists \vec{y}'. \aformB\sigma\theta$, where $\vec{y}' = \vec{y} \setminus \dom{\sigma}$.
Assume that $x\in \vart{\aformB\sigma\theta}$. As  $x\not \in \vec{y}$, this entails that
$x\in \vart{\exists \vec{y}'.\aformB\sigma\theta}$, and 
 by Definition \ref{def:vart} $x\in \vart{\aform}$, which contradicts the hypothesis of the lemma.
  Thus  $x \not \in \vart{\aformB\sigma}$ and
  by the induction hypothesis we have $(\hat{\astore}',\aheap) \modelsr \aformB\sigma$. We deduce that $(\astore',\aheap) \modelsr \aform$.}
  \item{If $\aform = \aform_1 * \aform_2$ or $\aform = \aform_1 \vee \aform_2$ then the result is an immediate consequence of the induction hypothesis.}
  \item{If $\aform = \exists y. ~\aformB$ then there exists a store $\hat{\astore}$ coinciding with $\astore$ on all variables
  distinct from $y$ such that $(\hat{\astore},\aheap) \modelsr \aformB$.
Let $\hat{\astore}'$ be the store coinciding with $\astore'$ on $x$ and with $\hat{\astore}$ on all other variables.
  By the induction hypothesis we get  $(\hat{\astore}',\aheap) \modelsr \aformB$
  thus $(\astore',\aheap) \modelsr \aformB$.
  }
  \end{compactitem}
   \end{proof}

\begin{lemma}
\label{lem:soundbis}
The rules \Sk\ and \ED\ are sound. More precisely,
if $(\astore,\aheap)$ is  a \countermodel of the conclusion of the rule, then 
there exists a store  $\astore'$ such that
$(\astore',\aheap)$ is a \countermodel of the premise of the rule.
\end{lemma}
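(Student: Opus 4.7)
The plan is to treat the two rules separately. For the \Sk\ rule, given a countermodel $(\astore,\aheap)$ of the conclusion $\exists x.\aform \vdashr \aseq$, the semantics of $\exists$ yields a store $\hat\astore$ agreeing with $\astore$ outside $x$ with $(\hat\astore,\aheap) \modelsr \aform$. I would case-split on whether $\hat\astore(x) = \astore(x_i)$ for some $x_i \in \fv{\aform} \cup \fv{\aseq}$. If so, by injectivity of $\astore$ this $x_i$ is unique, $(\astore,\aheap) \modelsr \repl{\aform}{x}{x_i}$ holds, and $(\astore,\aheap)$ itself is a countermodel of the corresponding premise $\repl{\aform}{x}{x_i} \vdashr \aseq$. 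Otherwise, the store $\astore''$ coinciding with $\astore$ outside $x'$ and with $\astore''(x') = \hat\astore(x)$ is injective on the free variables of the premise since $\hat\astore(x)$ is distinct from each $\astore(x_i)$, and $(\astore'',\aheap)$ is a countermodel of $\repl{\aform}{x}{x'} \vdashr \aseq$.

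For the \ED\ rule, which has a single premise, I would construct $\astore'$ from a countermodel $(\astore,\aheap)$ of the conclusion by setting $\astore'(x')$ to a location in $\Loc \setminus (\locs{\aheap} \cup \img{\astore})$, which exists since $\Loc$ is infinite; since $x'$ is fresh, $(\astore',\aheap) \modelsr \aform * \aform'$, $\astore'$ remains injective, and $(\astore',\aheap) \not\modelsr \aseq$. It then remains to show $(\astore',\aheap)$ falsifies every new right-hand-side formula. For each formula $\exists \vec{y}. \repl{\aformC}{x}{x_i}$, I would argue by contrapositive: a satisfying store would, after binding $x$ to the value of $x_i$ and resetting $x'$ to $\astore(x')$ (which does not occur in $\aformC$), witness $(\astore,\aheap) \modelsr \exists \vec{y}.\exists x.\aformC$, contradicting the assumption.

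The main obstacle is the case of $\aformC_j = \exists \vec{y}.((\exists x. \aformB) * \repl{\aformB'}{x}{x'} * \atformB)$, which requires reconciling the three different ways $x$ is used. Assuming $(\astore',\aheap) \modelsr \aformC_j$, I would unpack it to obtain a witness $\astore''$ for $\vec{y}$, a decomposition $\aheap = \aheap_1 \dunion \aheap_2$ (with $\atformB$ satisfied over the empty heap), a store $\hat\astore_1$ agreeing with $\astore''$ outside $x$ such that $(\hat\astore_1,\aheap_1) \modelsr \aformB$, and a store $\hat\astore_2$ agreeing with $\astore''$ outside $x$ with $\hat\astore_2(x) = \astore''(x') = \astore'(x')$ and $(\hat\astore_2,\aheap_2) \modelsr \aformB'$. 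The key observation is that $\astore'(x') \notin \locs{\aheap_2}$ by construction and $x \notin \vart{\aformB'}$ by the side condition, so Lemma \ref{lem:outsideheap} allows me to replace $\hat\astore_2(x)$ by $\hat\astore_1(x)$ while preserving $(\hat\astore_2,\aheap_2) \modelsr \aformB'$; combining with $\hat\astore_1 \modelst \atformB$ (transferable from $\astore''$ since $x \notin \fv{\atformB}$) and $\atformB \modelst \atform$, I get $(\hat\astore_1,\aheap) \modelsr \aformC$, contradicting $(\astore,\aheap) \not\modelsr \exists \vec{y}.\exists x.\aformC$ after restoring $x'$ back to $\astore(x')$. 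The delicate step is exactly this triple tracking of $x$-values and the use of Lemma \ref{lem:outsideheap} to decouple the spatial interpretation of $x$ in $\aformB'$ from its interpretation in $\aformB$ and in the theory constraint $\atform$.
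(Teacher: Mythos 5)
Your proposal is correct and follows essentially the same route as the paper's proof: the same two-case analysis for \Sk\ (reusing $\astore$ when the witness collides with some $x_i$, otherwise sending the fresh $x'$ to the witness value), and for \ED\ the same choice of $\astore'(x')$ as a fresh location outside $\locs{\aheap}$, with Lemma \ref{lem:outsideheap} doing the key work of decoupling the value of the unallocated variable from the satisfaction of $\aformB'$. The only (immaterial) difference is that you apply Lemma \ref{lem:outsideheap} to $x$ in $\aformB'$ rather than to $x'$ in $\repl{\aformB'}{x}{x'}$, which is an equivalent piece of bookkeeping.
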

\begin{proof}
We consider each rule separately. 
\begin{compactitem}
\item[\Sk]{
Let $(\astore,\aheap)$ be a \countermodel of 
$\exists x.~\aform \vdashr \aseq$. Then  $\astore$ is injective, $(\astore,\aheap) \modelsr \exists x.~\aform$ and
$(\astore,\aheap) \not \modelsr \aseq$.
This entails that there exists a  store $\astore''$, 
coinciding with $\astore$ on all variables distinct from $x$, such that $(\astore'',\aheap) \modelsr \aform$.
Let $\{ x_1,\dots,x_n \}  =\fv{\exists x. \aform} \cup \fv{\aseq}$; note that 
by the application condition of the rule, 
$x \not \in \myset{x_1,\dots,x_n}$.
Assume that there exists $i \in \interv{1}{n}$ such that 
$\astore''(x) = \astore''(x_i) = \astore(x_i)$.
Then $(\astore'',\aheap) \modelsr \repl{\aform}{x}{x_i}$, and since  $\astore''$ and $\astore$ coincide on all variables freely occurring in $\repl{\aform}{x}{x_i}$, we have
$(\astore,\aheap) \modelsr \repl{\aform}{x}{x_i}$.
This entails that $(\astore,\aheap)$ is a \countermodel of
$\repl{\aform}{x}{x_i} \vdashr \aseq$, and the proof is completed, with $\astore' = \astore$.
Otherwise, consider any injective store $\hat{\astore}$ coinciding with $\astore''$ on all variables in $\fv{\exists x. \aform} \cup \fv{\aseq}$ and such that $\hat{\astore}(x') = \astore''(x)$.
Since $(\astore'',\aheap)  \models \aform$ and 
$(\astore,\aheap) \not \models \aseq$ we have
$(\hat{\astore},\aheap) \models \repl{\aform}{x}{x'}$ 
and
$(\hat{\astore},\aheap) \not \models \aseq$. The proof is thus completed with $\astore' = \hat{\astore}$.
 }

  \item[\ED]{Let $(\astore,\aheap)$ be a \countermodel 
of $\aform * \aform' \vdashr \exists \vec{y}.\exists x. \aformC, \aseq$.
Then $(\astore,\aheap) \modelsr \aform * \aform'$ and
$(\astore,\aheap) \not \modelsr \exists \vec{y}.\exists x. \aformC, \aseq$.
Let $\astore'$ be an 
injective store coinciding with $\astore$ on all variables distinct from 
$x'$ and such that $\astore'(x')$ is a location not occurring in $\locs{\aheap}$.
Since $x'$ does not occur free in the considered sequent, we have
$(\astore',\aheap) \modelsr \aform * \aform'$, and
$(\astore',\aheap) \not \modelsr \exists \vec{y}.\exists x. \aformC, \aseq$, so that 
$(\astore',\aheap) \not \modelsr \exists \vec{y}.\repl{\aformC}{x}{x_i}$, for every $i = 1,\dots,n$.

 Assume that $(\astore',\aheap) \modelsr \exists \vec{y}.(\exists x. \aformB) * \repl{\aformB'}{x}{x'} * \atformB$, for some formulas $\aformB,\aformB'$ and $\atform$ such that
 $\aformC = (\aformB * \aformB' * \atform)$,  $\atform$ and $\atformB$ are {\tformula}s with 
 $\atformB \models \atform$, and  $x \not \in \vart{\aformB'} \cup \fv{\atform}$.
  Then there exists a store $\astore''$ coinciding with $\astore'$ on all variables not occurring in $\vec{y}$, and disjoint heaps $\aheap_1,\aheap_2$ such that
  $\aheap = \aheap_1 \dunion \aheap_2$,
 $(\astore'',\aheap_1) \modelsr \exists x. \aformB$, 
 $(\astore'',\aheap_2) \modelsr \repl{\aformB'}{x}{x'}$
 and
 $(\astore'',\emptyset) \models \atformB$.
 This entails that there exists a store $\hat{\astore}$ coinciding with 
 $\astore''$ on all variables distinct from $x$ such that
 $(\hat{\astore},\aheap_1) \modelsr \aformB$.
 Since $x \not \in \vart{\aformB'}$, necessarily, $x' \not \in \vart{\repl{\aformB'}{x}{x'}}$.
   By Lemma \ref{lem:outsideheap}, since $\astore'(x') \not \in \locs{\aheap_2}$,
    this entails that 
  $(\hat{\astore},\aheap_2) \modelsr \repl{\aformB'}{x}{x'}$. 
  Now $(\astore'',\emptyset) \models \atformB$ and by the application condition of the rule we have $x\not \in \fv{\atformB}$, so that $(\hat{\astore},\emptyset) \modelsr \atformB$. This entails that $(\hat{\astore},\emptyset) \modelsr \atform$ because $\atformB \modelst \atform$, again by the application condition of the rule.
Hence we get  $(\hat{\astore},\aheap) \modelsr \aformB * \aformB' * \atform$,
  and
  $(\astore',\aheap) \modelsr \exists \vec{y}.\exists x. (\aformB * \aformB' * \atform) = \exists \vec{y}.\exists x. \aformC$, which contradicts our assumption.
We deduce that $(\astore',\aheap)$ is a \countermodel
of the premise.
 }
 \end{compactitem}
\end{proof}
\newcommand{\sizep}[1]{\size{#1}}

\newcommand{\mes}{\mu}
\newcommand{\Ne}[1]{N_{\exists}(#1)}
\newcommand{\wid}[1]{w(#1)}
\newcommand{\widp}[1]{wp(#1)}

 
 \newcommand{\spatialpart}[1]{{#1}^{h}}
 \newcommand{\theorypart}[1]{{#1}^{\theory}}

\newcommand{\mesp}{\tau}

We  introduce a measure on sequents.  
For every formula $\aform$ 
we denote by 
$\spatialpart{\aform}$ (resp.\ $\theorypart{\aform}$) 
the formula obtained from 
$\aform$ by replacing every \tformula (resp.\ every spatial atom) 
by $\emp$. We denote by $\Ne{\aform}$ the number of existential quantifications in the prefix 
 of $\aform$.
For all sequents $S = \aform \vdashr \aformB_1,\dots,\aformB_n$, we denote by $\mes(S)$ the tuple
$\left(\sizep{\spatialpart{\aform}},\{ \mes'(\aformB_i)  \mid i \in \interv{1}{n} \}, \theorypart{\aform} \right)$,
where 
\[\mes'(\aformB_i) = \left(\card{\alloc{\aform} \setminus \rootsr{\aformB_i}},  \sizep{\spatialpart{\aformB_i}}, \Ne{\aformB_i}, \theorypart{\aformB_i} \right).\]
The measure $\mes$ 
 is ordered by the lexicographic and multiset extensions of the natural ordering on natural numbers and of the order
 $\smallerf$ on {\tformula}s.
We assume that all variables have the same size and that the weights of the predicates in $\preds$ are chosen in such a way that the size of every predicate atom is strictly greater than that of all points-to atoms.

We say that a rule with conclusion $\aform \vdashr \aseq$ {\em decreases $\mes$} 
if the inequality $\mes(\aform \vdashr \aseq) > \mes(\aformB \vdashr \aseqB)$ holds for all the premises $\aformB \vdashr \aseqB$ of the rule. 

\begin{lemma}
\label{lem:mes}
All the rules, except \Unf, decrease $\mes$.
\end{lemma}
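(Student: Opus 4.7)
The plan is to proceed by case analysis on the inference rules listed, and in each case exhibit one component of $\mes$ (according to the lexicographic order) that strictly decreases while showing that all earlier components do not increase. Recall that $\mes(S) = (\sizep{\spatialpart{\aform}}, \{\mes'(\aformB_i)\}_{i}, \theorypart{\aform})$, with $\mes'(\aformB_i)$ itself a lex tuple of four components, so the work reduces to tracking these seven pieces of data under each rule.

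For the rules that only alter the left-hand side I will argue on the outer lex tuple. \textbf{\Sk} removes one leading existential quantifier from $\aform$, so $\sizep{\spatialpart{\aform}}$ strictly drops (since substitution preserves size by the uniform variable-size assumption). \textbf{\hdec} uses the applicability condition $\alloc{\aform} \neq \emptyset$ and $\alloc{\aform'} \neq \emptyset$, together with Proposition \ref{prop:hform}, to conclude that both $\aform$ and $\aform'$ contain a spatial atom, hence $\sizep{\spatialpart{\aform}}$ and $\sizep{\spatialpart{\aform'}}$ are both strictly smaller than $\sizep{\spatialpart{\aform * \aform'}}$. \textbf{\TS} leaves the spatial part of $\aform$ unchanged and the right-hand side unchanged, so the outer tuple reduces to comparing $\theorypart{\aform * \atform}$ with $\theorypart{\aform * \atform'}$, which strictly decreases by the compatibility assumption $\atform \smallerf \atform' \implies \atformB * \atform \smallerf \atformB * \atform'$.

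For the rules that change the right-hand side multiset I will argue that each new formula is strictly $\mes'$-smaller than the formula it replaces. \textbf{\Wk} simply drops a formula, so the multiset strictly decreases. \textbf{\hsep} uses Proposition \ref{prop:allocsplit}: every $\aformB_i \in \splitv{\aformB}{x}$ satisfies $\rootsr{\aformB_i} = \{x\} \cup \rootsr{\aformB}$, and the application condition $x \in \alloc{\aform} \setminus \rootsr{\aformB}$ means $x$ is removed from $\alloc{\aform} \setminus \rootsr{\aformB_i}$; hence the first component of $\mes'$ strictly decreases for each split. \textbf{\HF} replaces $\exists \vec{y}. \aformC$ by $\exists \vec{y}'. \aformC\sigma$ where $\vec{y}' \subsetneq \vec{y}$ and $\sigma$ maps bound variables to free ones; the first component of $\mes'$ cannot increase (roots can only migrate from bound into free names already in $\alloc{\aform}$), the second strictly decreases because at least one $\exists$ disappears while substitution preserves sizes. \textbf{\RU} replaces a \Watom\ by a \mapformula; by the weighting convention that $\size{p(\vec{x})}$ strictly exceeds the size of any points-to atom, the second component $\sizep{\spatialpart{\cdot}}$ strictly decreases (the first component is preserved because the progress condition ensures the $\mapsto$-atom shares the same principal root as the \Watom). \textbf{\ED} is the crux: the new formula $\exists \vec{y}. ((\exists x. \aformB) * \repl{\aformB'}{x}{x'} * \atformB)$ has the same first two components of $\mes'$ as $\exists \vec{y}. \exists x. \aformC$ (the allocated/root sets agree because $x'$ is fresh and $x$ is bound in both, and the spatial sizes agree up to neutrality of $\emp$ and the substitution), but the third component $\Ne{\cdot}$ strictly drops by one since the leading prefix loses its trailing $\exists x$; the instance formulas $\exists \vec{y}. \repl{\aformC}{x}{x_i}$ likewise lose that quantifier. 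Finally \textbf{\TD} leaves the outer $\sizep{\spatialpart{\aform}}$ unchanged, and for premise 1 the only altered multiset element goes from $\atform * \aform'$ to $\aform'$, whose $\mes'$ is strictly smaller because the fourth component strictly drops by the $\smallerf$ assumption ($\theorypart{\aform'} \smallerf \atform * \theorypart{\aform'}$); the earlier components are preserved as $\atform$ is spatial-empty and contributes no roots. Premise 2 eliminates the element $\atform * \aform'$ entirely from the multiset.

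The main technical obstacle I anticipate is the bookkeeping of the first component of $\mes'$ under rules that either substitute variables (\HF) or rearrange quantifiers (\ED): one must verify that $\rootsr{\cdot}$ restricted to $\alloc{\aform}$ cannot shrink, which is where the freshness of $x'$ in \ED\ and the fact that $\sigma$ in \HF\ maps bound variables to variables already in $\fv{\aform}$ become essential. A secondary subtlety is in \RU, where the interaction between the weighting of predicate symbols, the size of the remaining $\atail$ inside the \Watom, and the existential prefix introduced by unfolding must be handled carefully to ensure that the size strictly drops; this is precisely why the weighting convention preceding the lemma was adopted. Everything else reduces to a routine verification using Proposition \ref{prop:allocsplit}, Proposition \ref{prop:hform}, and the stated assumptions on $\smallerf$.
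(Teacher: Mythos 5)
Your proposal is correct and follows essentially the same route as the paper's proof: a rule-by-rule case analysis identifying, for each rule, the component of the lexicographic measure that strictly decreases (spatial size of the left-hand side for \Sk\ and \hdec, the root-count component via Proposition \ref{prop:allocsplit} for \hsep, the spatial size of a right-hand formula for \HF\ and \RU\ using the weighting convention, $\Ne{\cdot}$ for \ED, and the $\smallerf$-ordered theory parts for \TS\ and \TD), with the same supporting observations that the earlier components do not increase.
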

\begin{proof}
 By an  inspection of the rules (we refer to the definitions of the rules and of $\mes$ for notations). The result is straightforward for the rules \Wk, \Refl, {\HC} and \EH.
The rule \Sk\ decreases $\sizep{\spatialpart{\aform}}$, since it removes an existential quantifier.
The rule {\HF} has no influence on the left-hand side of the conclusion.
This rule does not remove variables from $\rootsr{\aformB_i}$, because it only instantiates existential variables and by definition, no existential variable may occur in $\rootsr{\aformB_i}$.
It eliminates at least one existential quantifier from the right-hand side of the sequence since $\dom{\sigma} \not = \emptyset$; hence there is a $\sizep{\spatialpart{\aformB_i}}$ that decreases strictly. Rule {\RU} has no influence on the left-hand side of the conclusion and replaces a \Watom 
 on the right-hand side by the conjunction of a points-to atom and a {\tformula}. By the above assumption on the weight of the predicate symbols, one of $\sizep{\spatialpart{\aformB_i}}$ decreases 
(note that by the progress condition the roots of $\aformB_i$ are unchanged, hence $\card{\alloc{\aform} \setminus \rootsr{\aformB_i}}$ cannot increase).
Rule \hsep\ decreases one of $\card{\alloc{\aform} \setminus \rootsr{\aformB_i}}$, 
since by Proposition \ref{prop:allocsplit}, we have $\rootsr{\psi_i} = \{ x \} \cup \rootsr{\phi}$ for every 
$\psi_i \in \splitv{\phi}{x}$ (furthermore, by the application condition of the rule, there exists $i = 1,\dots,n$ such that $x \in \alloc{\aform} \setminus \rootsr{\psi_i}$). 
For rule \hdec, we have  $\sizep{\spatialpart{\aform}} < \sizep{\spatialpart{(\aform* \aform')}}$ 
and $\sizep{\spatialpart{\aform'}} < \sizep{\spatialpart{(\aform* \aform')}}$ 
 since both $\aform$ and $\aform'$ contain a spatial atom.
Rule \ED\ does not affect $\alloc{\aform} \setminus \rootsr{\aformB_i}$ or $\sizep{\spatialpart{\aformB_i}}$, and the rule reduces one of the $\Ne{\aformB_i}$, since an existential quantifier 
is shifted into the scope of a separating conjunction,
by the application condition of the rule $\aformB' \not = \emp$. 
The rule \TS\ does not affect $\spatialpart{\aform}$ or the right-hand side and strictly decreases
$\theorypart{\aform}$ by definition of the rule.
The rule \TD\ does not affect $\spatialpart{\aform}$  and decreases one of the $\theorypart{\aformB_i}$ (note that, since the formula $\atform$ in the rule is a \tformula, no new variables may be added in $\alloc{\aformB}$). 
\end{proof}

Let
$\widp{\aform} = 2*m + l$, where $m$ (resp.\ $l$) denotes the number of occurrences of 
points-to atoms (resp.\ of predicate atoms) in $\aform$.
Let $\mesp(\aform \vdashr \aseq)$ be the measure defined as follows:
\[\mesp(\aform \vdashr \aseq) = (\Nh{\aform \vdashr \aseq}, -\widp{\aform}, \mes(\aform \vdashr \aseq)), \]
where $\Nh{\aform \vdashr \aseq}$ denotes the least size of a 
\countermodel  of $\aform \vdashr \aseq$, or $\infty$ if $\aform \vdashr \aseq$ is valid.
$\mesp(\aform)$ is ordered using the lexicographic extension of the usual ordering on integers and of the ordering on $\mes(\aform \vdashr \aseq)$. Note that 
this order is well-founded on non-valid sequents, because 
the number of points-to atoms in $\aform$ cannot be greater than $\Nh{\aform \vdashr \aseq}$, since any \countermodel
must satisfy the left-hand side of the sequent (hence 
$\widp{\aform} \leq 2.\Nh{\aform \vdashr \aseq}$).

\begin{lemma}
\label{lem:nonvaliddec}
Let
\begin{tabular}{c} 
$\aform_1 \vdashr \aseq_1\ \dots\ \aform_n \vdashr \aseq_n$ \\
\hline 
$\aform \vdashr \aseq$
\end{tabular} be a rule application.
If $\aform \vdashr \aseq$ admits a \countermodel, then 
there exists $i \in \interv{1}{n}$ such that
$\mesp(\aform_i \vdashr \aseq_i) < \mesp(\aform \vdashr \aseq)$.
\end{lemma}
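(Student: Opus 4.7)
The proof will proceed by case analysis on which inference rule is applied. Given the lexicographic structure of $\mesp$, for each rule we must exhibit a premise for which at least one of the three components strictly decreases (with all preceding components non-increasing). The rules naturally partition into three groups according to which component does the work.

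First, for rule \hdec, Lemma \ref{lem:HD} supplies a premise admitting a \countermodel $(\astore,\aheap')$ where $\aheap'$ is a \emph{proper} subheap of the original heap $\aheap$. Thus $\Nh$ for that premise is at most $\card{\dom{\aheap'}} < \card{\dom{\aheap}}$, and in particular strictly smaller than $\Nh(\aform \vdashr \aseq)$ (since we can choose the minimal \countermodel to start from). This suffices, as $\Nh$ is the first component of $\mesp$.

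Next, for rule \Unf, Lemma \ref{lem:sound} gives that every \countermodel of the conclusion is a \countermodel of some premise with the \emph{same} heap, so $\Nh$ does not increase. The rule does not decrease $\mes$ (it is precisely the exception in Lemma \ref{lem:mes}), so we must rely on the middle component $-\widp{\aform}$. This is where the progress condition is used: the rule replaces a predicate atom $p(\vec{x})$ (contributing $1$ to $\widp$) on the left-hand side with an unfolding $\aform_i$, which by Definition \ref{def:sid} must contain a points-to atom (contributing $2$) together with the remaining predicate atoms. Hence the number of points-to atoms strictly increases, so $\widp(\aform_i * \aform') > \widp(p(\vec{x}) * \aform')$ and $-\widp$ strictly decreases.

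Finally, for every other rule (\Sk, \HF, \RU, \Wk, \hsep, \ED, \TS, \TD), Lemmas \ref{lem:sound} and \ref{lem:soundbis} guarantee the existence of a premise with a \countermodel whose heap is the same as that of the original \countermodel, so $\Nh$ of that premise is bounded by $\Nh$ of the conclusion. A quick inspection of each rule (none of them changes the number of points-to or predicate atoms on the left-hand side: \Sk\ only substitutes variables, \HF, \RU, \Wk, \hsep, \ED, \TD\ do not touch $\aform$ at all, and \TS\ only replaces one \tformula by another) shows that $\widp{\aform}$ is preserved. Thus the first two components are unchanged and Lemma \ref{lem:mes} supplies the required strict decrease on the third component $\mes$, which concludes the case analysis. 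The only subtlety worth flagging is ensuring we pick the premise given by the soundness lemma (rather than an arbitrary one), which is why the combined use of Lemmas \ref{lem:HD}, \ref{lem:sound}, and \ref{lem:soundbis} is essential; no single lemma handles all rules uniformly.
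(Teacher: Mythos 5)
Your proof is correct and follows essentially the same three-way case split as the paper: \hdec\ handled via Lemma \ref{lem:HD} on the first component, \Unf\ handled via the progress condition making $\widp$ strictly increase (hence $-\widp$ decrease) while Lemma \ref{lem:sound} keeps $\Nh$ non-increasing, and all remaining rules handled by Lemmas \ref{lem:sound}, \ref{lem:soundbis} and \ref{lem:mes} on the third component. The only omission is that you never address the axioms (\Refl, \HC, \TC, \EH), which have no premises: one must observe, as the paper does in its opening sentence via Lemma \ref{lem:ax}, that their conclusions are valid, so the hypothesis of the lemma cannot hold for them and the rule in question necessarily has at least one premise.
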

\begin{proof}
By Lemma \ref{lem:ax} the conclusions of all axioms are valid and do not admit any \countermodel, thus the considered rule must admit at least one premise.
If the rule is $\hdec$, then the result follows immediately from  
Lemma \ref{lem:HD}.
Rule \Unf\ does not affect the least-size of {\countermodel}s 
by Lemma \ref{lem:sound}
and, since the rules are progressing, we have $\widp{p(\vec{x})*\aform} < \widp{\aform_i*\aform}$. 
By Lemma \ref{lem:mes}, all the other rules decrease $\mes$, and it is straightforward to check, by an inspection of the rules, that $\widp{\aform}$ cannot decrease (except for \hdec).
By Lemmas \ref{lem:sound} and 
\ref{lem:soundbis}, there exists $i \in \interv{1}{n}$ such that
$\Nh{\aform_i \vdashr \aseq_i} = \Nh{\aform \vdashr \aseq}$, which entails that
$\mesp(\aform_i \vdashr \aseq_i) < \mesp(\aform \vdashr \aseq)$.
\end{proof}

\begin{theorem}(Soundness)
\label{theo:sound}
Let $\asid$ be an \alloccompatible \pcSID.
If $\aform \vdashr \aseq$ is the end-sequent of a (possibly infinite) proof tree, then 
$\aform \vdashr \aseq$ is valid.
\end{theorem}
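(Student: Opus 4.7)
The proof will proceed by infinite descent, using the well-founded measure $\mesp$ introduced just above the statement. Assume, for contradiction, that $\aform \vdashr \aseq$ is the end-sequent of some proof tree $\pi$ but admits a countermodel. The plan is to construct an infinite branch of $\pi$ along which the measure $\mesp$ strictly decreases at every step, contradicting the well-foundedness of $\mesp$ on non-valid sequents.

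First, I would observe that well-foundedness of $\mesp$ on non-valid sequents follows from its definition: on any non-valid sequent $\aform \vdashr \aseq$ the component $\Nh{\aform \vdashr \aseq}$ is a finite natural number, and the bound $\widp{\aform} \leq 2 \cdot \Nh{\aform \vdashr \aseq}$ (since a countermodel must satisfy the left-hand side, hence contains at least as many allocated locations as points-to atoms, and each predicate atom must eventually allocate at least one location by the progress condition) ensures that the lexicographic combination with $-\widp{\aform}$ is well-founded. The remaining components $\mes(\aform \vdashr \aseq)$ are built out of natural numbers, finite multisets, and the well-founded order $\smallerf$ on $\theory$-formulas, so the whole measure is well-founded on non-valid sequents.

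Next, I would build the contradicting branch inductively. Start at the root with the non-valid end-sequent $S_0 = \aform \vdashr \aseq$. Given a node $S_k$ of $\pi$ that is non-valid: by Lemma \ref{lem:ax} the four axiom conclusions are valid, so $S_k$ cannot be an axiom node, hence $\pi$ gives $S_k$ at least one child coming from a genuine inference rule application. By Lemma \ref{lem:nonvaliddec}, at least one premise $S_{k+1}$ of this rule application satisfies $\mesp(S_{k+1}) < \mesp(S_k)$; in particular, $S_{k+1}$ is again non-valid (otherwise $\mesp(S_{k+1}) = \infty$, contradicting the strict decrease). Choose such an $S_{k+1}$ as the successor along our branch.

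This produces an infinite descending chain $\mesp(S_0) > \mesp(S_1) > \mesp(S_2) > \dots$ in a well-founded order, which is the desired contradiction. Hence the end-sequent must be valid. The key technical points are entirely delegated to the lemmas already proved: soundness of each individual rule (Lemmas \ref{lem:ax}, \ref{lem:sound}, \ref{lem:HD}, \ref{lem:soundbis}) is packaged into Lemma \ref{lem:nonvaliddec}, and the only remaining work in the proof of Theorem \ref{theo:sound} is the infinite-descent argument sketched above. The main subtlety, which the author has already addressed in setting up $\mesp$, is that the rule \Unf\ does not decrease $\mes$; this is precisely why $\mesp$ combines $\mes$ with $\Nh{\cdot}$ and $-\widp{\cdot}$, so that \Unf\ is handled through the strict decrease of $\widp{\cdot}$ at constant $\Nh{\cdot}$.
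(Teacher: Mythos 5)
Your proposal is correct and follows essentially the same route as the paper: both arguments rest on the well-foundedness of $\mesp$ on non-valid sequents (via the bound $\widp{\aform} \leq 2\cdot\Nh{\aform \vdashr \aseq}$) together with Lemma \ref{lem:nonvaliddec}, the paper phrasing it as a minimal-counterexample argument where you phrase it as an infinite descending chain. These are interchangeable formulations of the same infinite-descent proof.
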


\begin{proof}
 \label{ap:soundness}
  Let \begin{tabular}{c} 
$\aform_1 \vdashr \aseq_1\ \dots\ \aform_n \vdashr \aseq_n$ \\
\hline 
$\aform \vdashr \aseq$
\end{tabular} be a rule application in the proof tree such that $\aform \vdashr \aseq$ is not valid. 
We assume, w.l.o.g., that  
$\mesp(\aform \vdashr \aseq)$ is minimal.
By Lemma \ref{lem:nonvaliddec}, there exists
$i \in \interv{1}{n}$ such that
$\mesp(\aform_i \vdashr \aseq_i) < \mesp(\aform \vdashr \aseq)$.
Note that this implies that $\aform_i \vdashr \aseq_i$ is not valid, as otherwise we would have 
$\Nh{\aform_i \vdashr \aseq_i} = \infty >  \Nh{\aform \vdashr \aseq}$, contradicting the fact that $\mesp(\aform_i \vdashr \aseq_i) < \mesp(\aform \vdashr \aseq)$.
Hence we get a contradiction about the minimality of $\aform \vdashr \aseq$.
\end{proof}
 
\begin{remark}
In general, with infinite proof trees, soundness requires some 
 well-foundedness condition, to rule out infinite paths of invalid statements (see, e.g., \cite{BrotherstonSimpson11}).
 In our case, no further condition is needed because every rule application either reduces the considered sequent into simpler ones or unfolds a predicate atom.  
\end{remark}

\section{Completeness And Termination Results}

\label{sect:comp}
\label{sect:term}

\subsection{The General Case}


We first show that every valid sequent admits a (possibly infinite) proof tree. 
Together with Theorem \ref{theo:sound}, this result shows that the calculus can be used as a semi-decision procedure for detecting non-validity: the procedure
will be ``stuck'' eventually in some branch, in the sense that one obtains a sequent on which no rule is applicable, iff the initial sequent is non-valid. 

We shall assume that all the formulas in the considered root sequent are in prenex form.
However, the sequents occurring within the proof tree will fulfill a slightly less restrictive condition, stated below:
\begin{definition}
A sequent $\aform \vdashr \aform_1,\dots,\aform_n$ is {\em \qprenex}
if $\aform$ is in prenex form and 
all the formulas $\aform_i$ that are not in prenex form are separating conjunctions of two prenex formulas.
\end{definition}

It is easy to check that the premises of a rule with a \qprenex conclusion are always \qprenex.

\begin{lemma}
\label{lem:inv}
The rules \Sk, \HF, \Unf, 
\hsep\ are invertible. More precisely, if $(\astore,\aheap)$ is a \countermodel of one of the premises then it is also a \countermodel of the 
conclusion.
\end{lemma}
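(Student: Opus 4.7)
The plan is to verify invertibility separately for each of the four rules, showing that any countermodel $(\astore,\aheap)$ of a premise is a countermodel of the conclusion. The arguments rely on the substitution lemma for the semantics of free variables, the definition of the semantics of predicate atoms and existentials, and --- for the last rule --- the second half of Lemma \ref{lem:split} that was established earlier. In each case the left-hand side of the premise implies the left-hand side of the conclusion, and one must only check that non-satisfaction of the premise's right-hand side entails non-satisfaction of the conclusion's right-hand side; injectivity of the store and non-satisfaction of the unchanged tail $\aseq$ transfer trivially.

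For \Unf, the argument is immediate. If $(\astore,\aheap)\modelsr \aform_i * \aform$ for some $\aform_i$ with $p(\vec{x})\unfoldto{\asid}\aform_i$, I decompose $\aheap=\aheap_1\dunion \aheap_2$ with $(\astore,\aheap_1)\modelsr \aform_i$, so that by Definition \ref{def:semantics} we obtain $(\astore,\aheap_1)\modelsr p(\vec{x})$ and hence $(\astore,\aheap)\modelsr p(\vec{x})*\aform$. The right-hand side $\aseq$ is unchanged. For \Sk, I consider the two forms of premise. If the premise is $\repl{\aform}{x}{x_i}\vdashr \aseq$, then I define $\astore'$ agreeing with $\astore$ except that $\astore'(x)=\astore(x_i)$; standard substitution gives $(\astore',\aheap)\modelsr \aform$ and hence $(\astore,\aheap)\modelsr \exists x.\,\aform$. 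The same argument works for the premise with the fresh variable $x'$, taking $\astore'(x)=\astore(x')$; injectivity of $\astore$ on $\fv{\exists x.\aform}\cup\fv{\aseq}$ is inherited from injectivity on the larger set of free variables of the premise.

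For \HF, I assume $(\astore,\aheap)$ is a countermodel of the premise, so it satisfies the (unchanged) left-hand side $x\mapsto(y_1,\dots,y_\rank)*\aform$ and does not satisfy $\exists \vec{y}'.\, (x\mapsto(z_1,\dots,z_\rank)*\aformB)\sigma$. Suppose, for contradiction, that $(\astore,\aheap)$ satisfies the conclusion's right-hand side $\exists \vec{y}.\,(x\mapsto(z_1,\dots,z_\rank)*\aformB)$. Pick a witnessing store $\astore''$ that agrees with $\astore$ outside $\vec{y}$ and a subheap $\aheap'\subseteq\aheap$ with $(\astore'',\aheap')\modelsr x\mapsto(z_1,\dots,z_\rank)$. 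Because $\aheap$ is a partial function, the unique tuple it assigns to $\astore(x)$ is $(\astore(y_1),\dots,\astore(y_\rank))$, and it is equally $(\astore''(z_1),\dots,\astore''(z_\rank))$; hence $\astore''(z_i)=\astore(y_i)$ for all $i\in \interv{1}{\rank}$, so in particular for $z_i\in\dom{\sigma}$ we have $\astore''(z_i)=\astore''(y_i)$. Restricting $\astore''$ to the existentials in $\vec{y}'=\vec{y}\setminus\dom{\sigma}$ therefore witnesses $\exists \vec{y}'.\,(x\mapsto(z_1,\dots,z_\rank)*\aformB)\sigma$ in $(\astore,\aheap)$, contradicting the assumption.

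For \hsep, which I expect to be the cleanest case thanks to Lemma \ref{lem:split}, let $(\astore,\aheap)$ be a countermodel of the premise $\aform\vdashr \aformB_1,\dots,\aformB_n,\aseqB$. Then $(\astore,\aheap)\modelsr\aform$, the store $\astore$ is injective by Definition \ref{def:counter_model}, and $(\astore,\aheap)$ satisfies neither any $\aformB_i$ nor any formula in $\aseqB$. The rule's side condition gives $x\in\alloc{\aform}$, and since $\asid$ is \alloccompatible, Lemma \ref{lem:alloc} yields $\astore(x)\in\dom{\aheap}$. Injectivity of $\astore$ ensures $\astore(x')\ne\astore(x)$ for every $x'\ne x$, so the second implication of Lemma \ref{lem:split}, applied with $\splitv{\aformB}{x}=\{\aformB_1,\dots,\aformB_n\}$, would force $(\astore,\aheap)\modelsr \aformB_i$ for some $i$ whenever $(\astore,\aheap)\modelsr\aformB$. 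Contrapositively, $(\astore,\aheap)\not\modelsr\aformB$, and combined with $(\astore,\aheap)\not\modelsr\aseqB$ this exhibits $(\astore,\aheap)$ as a countermodel of the conclusion.
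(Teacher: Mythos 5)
Your proposal is correct and follows essentially the same route as the paper's proof: the \Unf\ case via the semantics of predicate atoms, \Sk\ via the evident witness for the existential, \HF\ via heap functionality forcing $\astore''(z_i)=\astore(y_i)$ so that the instantiation $\sigma$ is semantically harmless (the paper routes this through Proposition \ref{prop:subst_sw} and additionally spells out why $\astore$ and $\astore''$ agree on the free variables of the substituted formula, which you compress but do not get wrong), and \hsep\ via Lemma \ref{lem:alloc} together with the contrapositive of the second half of Lemma \ref{lem:split}. No gaps.
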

\begin{proof}
Each rule is considered separately:
\begin{compactitem}
\item[\Sk]{
If $(\astore,\aheap) \modelsr \bigvee_{i=1}^n \repl{\aform}{x}{x_i} \vee \repl{\aform}{x}{x'}$
and
$(\astore,\aheap) \not \modelsr \aseq$,
then it is clear that 
$(\astore,\aheap) \modelsr \exists x. ~ \aform$, thus
$(\astore,\aheap)$ is a \countermodel of $\exists x. ~ \aform \vdashr \aseq$.
}

\item[\HF]{ 
Assume that $(\astore,\aheap) \modelsr x \mapsto (y_1,\dots,y_{\rank}) * \aform$
and
$(\astore,\aheap) \not  \modelsr \exists \vec{y}'. (x \mapsto (z_1,\dots,z_\rank) * \aformB)\sigma, \aseq$.
If
$(\astore,\aheap) \models \exists \vec{y}. (x \mapsto (z_1,\dots,z_\rank) * \aformB)$,
then there exist two disjoint heaps $\aheap_1,\aheap_2$ and a store $\astore'$ coinciding with 
$\astore$ on all variables not occurring in $\vec{y}$ such that
$\aheap = \aheap_1 \dunion \aheap_2$,
$(\astore',\aheap_1) \models x \mapsto (z_1,\dots,z_\rank)$,
and $(\astore',\aheap_2) \models \aformB$.
By the application condition of the rule, we have $\vec{y} \cap \{ x,y_1,\dots,y_\rank\} = \emptyset$, hence $\astore(x) = \astore'(x)$.
Since $(\astore,\aheap) \modelsr x \mapsto (y_1,\dots,y_{\rank}) * \aform$, we have $\aheap(\astore(x)) = (\astore(y_1),\dots,\astore(y_\rank))$, thus
$\astore'(z_i) = \astore(y_i)$, for all $i = 1,\dots,\rank$. 
If $z_i\in \dom{\sigma}$ then by definition we have $z_i\sigma = y_i$ so that $\astore'(z_i\sigma) = \astore'(y_i) = \astore(y_i) = \astore'(z_i)$. We deduce that for all variables $y$,
$\astore'(y\sigma) = \astore'(y)$  and by Proposition 
\ref{prop:subst_sw},
$(\astore',\aheap_1) \models x \mapsto (z_1,\dots,z_\rank)\sigma$
and
$(\astore',\aheap_2) \models \aformB\sigma$. 
We now show that no variable occurring in $(x \mapsto (z_1,\dots,z_\rank) * \aformB)\sigma$ but not in $\vec{y}'$ can occur in $\vec{y}$. Consider a variable $z'$ that occurs in $(x \mapsto (z_1,\dots,z_\rank) * \aformB)\sigma$ but not in $\vec{y}'$. Then $z'$ is of the form $z\sigma$ and we distinguish two cases. If $z\in \dom{\sigma}$ then since $\vec{y} \cap \{ x,y_1,\dots,y_\rank\} = \emptyset$, we have the result. Otherwise, since $\vec{y}'$ is the vector of variables occurring in $\vec{y}$ but not in $\dom{\sigma}$, we deduce that $z'$ cannot occur in $\vec{y}$ either. We deduce that $\astore$ and $\astore'$ coincide on all variables occurring in $(x \mapsto (z_1,\dots,z_\rank) * \aformB)\sigma$ but not in $\vec{y}'$. This entails that $(\astore,\aheap) \modelsr \exists \vec{y}'. (x \mapsto (z_1,\dots,z_\rank) * \aformB)\sigma$, which contradicts our assumption.
}
\item[\Unf]{
Assume that $(\astore,\aheap)$ is a \countermodel of $\aform_i * \aform \modelsr \aseq$, for some $i \in \interv{1}{n}$, say, $i = 1$.
Then $(\astore,\aheap) \modelsr \aform_1 * \aform$ and
 $(\astore,\aheap) \not \modelsr \aseq$.
 We deduce that there exist disjoint heaps $\aheap_1,\aheap_2$ such that
 $\aheap = \aheap_1 \dunion \aheap_2$,
 $(\astore,\aheap_1) \modelsr \aform_1$
 and
 $(\astore,\aheap_2) \modelsr \aform$.
By definition of the semantics of the predicate atom, since $\anatom \unfoldto{\asid} \aform_1$, 
we have
 $(\astore,\aheap_1) \modelsr \anatom$, thus
 $(\astore,\aheap) \modelsr \anatom * \aform$. Therefore, $(\astore,\aheap)$ is a \countermodel of $\anatom * \aform \vdashr \aseq$. 
 
}

 \item[\hsep]{Assume that $(\astore,\aheap) \modelsr \aform$ and that $(\astore,\aheap) \not \modelsr \aformB_1,\dots,\aformB_n$. By the application condition of the rule, we have $x \in \alloc{\aform}$, thus, by Lemma \ref{lem:alloc}, necessarily $\astore(x) \in \dom{\aheap}$. 
 Since $\astore$ is injective by hypothesis, we deduce by Lemma \ref{lem:split} that $(\astore,\aheap) \not \modelsr \aformB$.
 }
 
%
 

\end{compactitem}
\end{proof}

\begin{theorem}
\label{theo:comp}
If $\theory$ is closed under negation, then 
for all valid disjunction-free, prenex and \alloccompatible sequents $\aform \vdashr \aseq$ there exists a (possibly infinite and irrational)
proof tree with end-sequent $\aform \vdashr \aseq$.
\end{theorem}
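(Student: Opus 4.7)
The plan is to construct a (possibly infinite and irrational) proof tree coinductively, ensuring every node is labeled by a valid sequent. For any valid sequent $\aform \vdashr \aseq$, I would establish that either it is an instance of one of the axioms \Refl, \HC, \TC, or \EH, or else some rule applies whose premises are all valid. Iterating this claim yields the desired proof tree: finite branches end at axioms, while other branches may extend indefinitely. This is enough because, by Definition \ref{def:ptree}, a proof tree need not be well-founded.

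The invertible rules \Sk, \HF, \Unf, \hsep (Lemma \ref{lem:inv}) can be applied eagerly since their premises are automatically valid when the conclusion is. To guarantee that no applicable rule is indefinitely postponed, I would adopt a fair proof-search strategy that saturates these rules before invoking the non-invertible ones. Closure of $\theory$ under negation ensures that \TD\ can always be applied with $\atform' = \neg \atform$, shifting theory atoms from the right-hand side to the left so that branches whose left-hand side reduces to a pure \tformula close via \EH. For the non-invertible rules \RU, \ED, and \hdec, the appropriate instance must be chosen based on the semantics of the valid conclusion; case analyses combined with the soundness arguments from Lemmas \ref{lem:sound} and \ref{lem:soundbis} show that suitable instances preserve validity. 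For \ED\ and \RU, the suitable decomposition $\aformC = \aformB * \aformB' * \atform$ and the correct partial unfolding of the \Watom are guided respectively by the dependency analysis $\vart{\cdot}$ and by Lemma \ref{lem:split} on heap splitting.

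The main obstacle is the rule \hdec. Given a valid conclusion $\aform * \aform' \vdashr \aformB_1 * \aformB_1', \ldots, \aformB_n * \aformB_n'$, I must choose index sets $I_1, \ldots, I_m$ and $J_1, \ldots, J_l \subseteq \interv{1}{n}$ satisfying the coverage and antichain conditions such that each premise $\aform \vdashr \{\aformB_k \mid k \in I_i\}$ and $\aform' \vdashr \{\aformB_k' \mid k \in J_j\}$ is valid. The construction I propose is: for each model $(\astore, \aheap_1) \modelsr \aform$, let $X(\astore, \aheap_1) = \{k \mid (\astore, \aheap_1) \modelsr \aformB_k\}$, and let $\mathcal{X} = \{X(\astore, \aheap_1) \mid (\astore, \aheap_1) \modelsr \aform\}$. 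Take $\{I_i\}_{i=1}^{m}$ to be the minimal \emph{transversals} of $\mathcal{X}$ (i.e., minimal subsets of $\interv{1}{n}$ that intersect every element of $\mathcal{X}$); these automatically form an antichain. Define $\{J_j\}_{j=1}^{l}$ analogously from $\mathcal{Y}$ using $\aform'$. Each premise is then valid by construction. The coverage condition holds by the following argument: if it failed for some $X' \subseteq \interv{1}{n}$, then $X'$ would not contain any minimal transversal of $\mathcal{X}$, so $X'$ itself would not be a transversal, yielding some $X \in \mathcal{X}$ with $X \cap X' = \emptyset$, i.e., $X \subseteq \interv{1}{n} \setminus X'$; symmetrically, some $Y \in \mathcal{Y}$ would satisfy $Y \subseteq X'$. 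Combining the corresponding models $(\astore, \aheap_1)$ and $(\astore, \aheap_2)$ (with a shared injective store and disjoint heaps, using infiniteness of $\Loc$ to realign stores if needed) produces a \countermodel $(\astore, \aheap_1 \dunion \aheap_2)$ of the conclusion, since $X \cap Y = \emptyset$ forbids any common witness index. This contradicts validity and establishes the coverage condition, completing the construction.
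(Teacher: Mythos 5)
There is a genuine gap at the \hdec\ step, which is precisely the point where the paper has to work hardest. Your transversal construction does give valid premises and antichains, but your verification of the coverage side-condition is broken. You argue that if coverage fails for some $X'$, you obtain $X \in \mathcal{X}$ and $Y \in \mathcal{Y}$ with $X \cap Y = \emptyset$, and that gluing the corresponding models $(\astore,\aheap_1)$ and $(\astore,\aheap_2)$ yields a \countermodel of the conclusion. It does not: $(\astore,\aheap_1 \dunion \aheap_2) \modelsr \aformB_k * \aformB_k'$ requires only that \emph{some} decomposition of $\aheap_1 \dunion \aheap_2$ satisfies the two conjuncts, not the particular decomposition $(\aheap_1,\aheap_2)$ induced by $\aform * \aform'$. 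So $X \cap Y = \emptyset$ rules out a common witness index only for that one decomposition, and the glued structure may well satisfy the right-hand side via a different split. This is exactly why \hdec\ is not invertible and why the paper introduces the notion of strong validity (Definition \ref{def:svalid}); your argument would in effect prove that validity implies the existence of a matching syntactic decomposition, which is false (cf.\ $\ls(x,y)*\ls(y,z) \vdashr \ls(x,z)$, valid but not strongly valid). A secondary problem is the gluing itself: to obtain a shared injective store with disjoint heaps you need to relocate one of the heaps, which requires Proposition \ref{prop:heap_morphism} and Lemmas \ref{lem:alloc_counter_model}--\ref{lem:rename_counter_example}; these only apply to \purelyspatial formulas, whereas Theorem \ref{theo:comp} allows arbitrary theory atoms, so heap relocation is not available here at all.

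The paper avoids all of this by never applying \hdec\ in the general position you consider. Its proof first reduces the left-hand side, via the invertible rules \Sk\ and \Unf\ (and \HF, \hsep), to the form $x_1 \mapsto \vec{y}_1 * \dots * x_m \mapsto \vec{y}_m * \atform$, disposes of the cases $m=0$ and $m=1$ with \EH, \TD\ and \Refl, handles right-hand \Watom{}s rooted at $x_1$ with \RU\ (using a counting argument based on the progress condition to show the unfolding has length one, hence the premise is valid), and only then applies \hdec\ with the split $x_1 \mapsto \vec{y}_1$ versus the rest. In that configuration the sub-heap assigned to each $x_1 \mapsto \vec{z}_i$ is forced to be the single cell at $\astore(x_1)$, so the right-hand decomposition must agree with the left-hand one and the validity of the premises can be checked directly. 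Your proposal is missing this reduction, and without it (or the strong-validity machinery of Section \ref{sect:term}, which is itself restricted to the theory-free case) the \hdec\ case does not go through.
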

\begin{proof}
It suffices to show that for every valid, disjunction-free, \qprenex sequent $\aform \vdashr \aseq$ there exists 
a rule application with conclusion $\aform \vdashr \aseq$ such that 
all the premises are valid; this ensures that an infinite proof tree can be constructed in an iterative way. 
Note that in particular this property holds if an invertible rule is applicable on $\aform \vdashr \aseq$, since 
all premises are valid in this case.
 The sequence $\aseq$ can be written as $\exists \vec{x}_1.\aformB_1,\dots,\exists \vec{x}_n.\aformB_n$ where no $\aformB_i$ is an existential 
formula.
If $\aform$ contains an existential variable or a predicate atom then one of the rules {\Sk} or \Unf\ applies and both rules are
invertible, by Lemma \ref{lem:inv}.
Thus we may assume that $\aform$ is of the form 
$x_1 \mapsto \vec{y}_1 * \dots * x_m \mapsto \vec{y}_m * \atform$, where $\atform$ is a \tformula.
Since by Lemma \ref{lem:inv} \HF\ and   
\hsep\ are invertible, we may assume that $\aform\vdashr \aseq$ is irreducible w.r.t.\  these rules (and also w.r.t.\ the axioms \HC, \TC, \Refl\ and \EH).
Note that for every model $(\astore,\aheap)$ of $\aform$, $\card{\dom{\aheap}} =  m$.
By irreducibility w.r.t.\ \HC\ and \TC, we assume that the variables $x_1,\dots,x_m$ are pairwise distinct, and that $\atform$
is satisfiable (on injective structures), so that $\aform$ admits an injective model.

If $m = 0$, then $\aform$ is a \tformula and every model of $\aform$ must be of the form $(\astore,\emptyset)$. Let $I$ be the set of indices $i \in \interv{1}{n}$ such that $\aformB_i$ is a \tformula.
Since $\aform \vdashr \aseq$ is valid, for every model 
$(\astore,\emptyset)$ of $\aform$, there exists an $i \in \interv{1}{n}$ such that
$(\astore,\emptyset) \models  \exists \vec{x}_i.\aformB_i$.
By Proposition \ref{prop:hform}, $\aformB_i$ cannot contain any spatial atom, and therefore $i \in I$.
Therefore, $\aform \models  \bigvee_{i\in I} \exists \vec{x}_i.\aformB_i$ and  rule \EH\ applies, which contradicts our assumption.

We now assume that $m > 0$.
Since \hsep\ is not applicable on $\aform \vdashr \aseq$, 
for every $i = 1,\dots,n$, and $j = 1,\dots,m$, 
$\aformB_i$ 
is such that 
$x_j \in \rootsr{\aformB_i}$. 
Assume that there exists $i \in \interv{1}{n}$ such that
$\aformB_i$ is of the form $\anatom  * \aform_i'$, where  
$\rootsr{\anatom} = \{ x_1 \}$ and $\anatom$ is a \Watom.
Then the rule \RU\ applies on $\anatom * \aform_i'$, yielding a 
premise of the form 
$\aform \vdashr \aformC_1,\dots,\aformC_p, \aseq'$, where $\aseq'$ is the sequence of formulas
$\exists \vec{x}_j. \aformB_j$ with $j \not = i$.
We show that this premise is valid.
Assume for the sake of contradiction that it admits a \countermodel $(\astore,\aheap)$.
Then $(\astore,\aheap) \models \aform$, and since
$\aform \vdashr \aseq$ is valid, necessarily, $(\astore,\aheap) \models \aseq$.
But $(\astore,\aheap) \not \models \aseq'$, hence 
$(\astore,\aheap) \models \exists \vec{x}_i. \aformB_i$ and 
there exists a store $\astore'$ coinciding with $\astore$ on all variables not occurring in $\vec{x}_i$
and heaps $\aheap_1,\aheap_2$ such that $\aheap = \aheap_1 \dunion \aheap_2$,
$(\astore',\aheap_1) \models  \anatom$
and
$(\astore',\aheap_2) \models \aform_i'$.
This entails that 
$\anatom \unfoldto{\asid}^+ \exists \vec{u}. \aformB$, 
 and there exists 
a store $\astore''$ coinciding with $\astore'$ on all variables not occurring in $\vec{u}$, 
and a substitution $\sigma$ such that $\dom{\sigma} \subseteq \vec{u}$
and 
$(\astore'',\aheap_1) \models \aformB\sigma$.
W.l.o.g., we assume that $\aformB$ contains no predicate symbol. 
The formula $\aform_i'$ contains at least $m-1$ spatial atoms 
(one atom for each variable $x_2,\dots,x_m$), 
hence $\card{\dom{\aheap_2}} \geq m-1$, since by Proposition \ref{prop:roots}, every such atom allocates at least one variable. We deduce that $\card{\dom{\aheap_1}} \leq \card{\dom{\aheap}}- (m-1) = 1$. This entails 
by Proposition \ref{prop:roots} that 
$\aformB$ contains at most one spatial atom, and since $\aformB$ contains no predicate atom, 
this atom must be a points-to atom. Because of the progress condition, each unfolding of a predicate atom 
 introduces exactly one points-to atom, 
 thus the derivation from $\anatom$ to $\exists \vec{u}. \aformB$ is of length $1$, i.e.,
 we have $\anatom \unfoldto{\asid} \exists \vec{u}. \aformB$. 
By definition of the rule \RU, the formula $\exists \vec{x}_i.\exists \vec{u}'. \aformB\sigma$, occurs in $\aformC_1,\dots,\aformC_p$ (where $\vec{u}'$ is defined in the rule \RU),  thus
$(\astore,\aheap) \models \aformC_1,\dots,\aformC_p$ 
and the premise is therefore valid.

We now assume that 
every formula $\aformB_i$ is of the form 
$x_1 \mapsto \vec{z}_i * \aformB_i'$.
First suppose that $\vec{z}_i \cap \vec{x}_i \not = \emptyset$, for some $i =1,\dots,n$. 
If for all models $(\hat{\astore},\hat{\aheap})$ of $\aform$ such that $\hat{\astore}$ is injective, we have $(\hat{\astore},\hat{\aheap}) \not \models \exists \vec{x}_i. \aformB_i$, then we may apply the rule \Wk, to remove the formula $\exists \vec{x}_i. \aformB_i$ and obtain a premise that is valid. Thus we assume that 
$(\hat{\astore},\hat{\aheap}) \models \exists \vec{x}_i. \aformB_i$ for at least one model $(\hat{\astore},\hat{\aheap})$  of $\aform$ such that $\hat{\astore}$ is injective.
Then we have $\hat{\aheap}(\hat{\astore}(x_1)) = \hat{\astore}(\vec{y}_1)$, and
there exists a store $\hat{\astore}'$ coinciding with $\hat{\astore}$ on all variables not occurring in $\vec{x}_i$ such that $\hat{\aheap}(\hat{\astore}(x_1)) = \hat{\astore}'(\vec{z}_i)$, hence $\hat{\astore}'(\vec{z}_i) = \hat{\astore}(\vec{y}_1)$.
Since $\hat{\astore}$ is injective, this entails that there exists a substitution $\sigma_i$ with domain 
$\vec{z}_i \cap \vec{x}_i$ such that $\sigma_i(\vec{z}_i) = \vec{y}_1$.
Since  $\vec{z}_i \cap \vec{x}_i \not = \emptyset$ we have $\dom{\sigma_i} \not = \emptyset$, thus the rule \HF\ applies, and the proof is completed.

We now assume that $\vec{z}_i \cap \vec{x}_i = \emptyset$, for all $i = 1,\dots,n$.
If $\vec{x}_i$ is not empty then \ED\  
applies. 
Indeed, by letting $\aformC = \aformB_i$, $m=1$, $\atform = \atformB = \emp$, $\aformB = \aformB_i'$ and $\aformB' = (x_1 \mapsto \vec{z}_i)$, the application conditions of the rule are fulfilled because  $\emp \modelst \emp$ and 
$\vart{x_1 \mapsto \vec{z}_i} \cup \fv{\atformB} = \emptyset$. The application of {\ED} shifts all variables $\vec{x}_i$ behind the formula $\aformB_i'$.
Note that this application of the rule preserves the validity of the sequent, because
$x_1 \mapsto \vec{z}_i$ contains
no variable in $\vec{x}_i$. 

We finally  assume that {every} $\vec{x}_i$ is empty, and therefore that
$\aseq$ is of the form $x_1 \mapsto \vec{z}_1 * \aformB_1', \dots, x_1 \mapsto \vec{z}_n * \aformB_n'$. 
We distinguish two cases.
\begin{compactitem}
\item{$m = 1$.
If there exists $i\in \interv{1}{n}$ such that $\aformB_i'$ is a \tformula (other than $\emp$) 
then the rule \TD\ applies, since $\theory$ is closed under negation 
(by letting $\atform = \aformB_i'$ and $\atform' = \neg \aformB_i'$).
Moreover, it is clear that the obtained sequent is valid.
Now let $i \in \interv{1}{n}$ and suppose $\aformB_i'$ is not a \tformula. 
Then all the models of $\aformB_i$ 
are of size at least $2$, and thus no model of $\aform$ can satisfy
$\aformB_i$, since 
$\aform$ admits only models of cardinality $m = 1$.
Similarly, if $\vec{z}_i \not = \vec{y}_1$, then for all injective models $(\astore,\aheap)$ 
of $\aformB_i$ we have $\aheap(\astore(x_1)) = \astore(\vec{z}_i) \not = \astore(\vec{y}_1)$, thus
$(\astore,\aheap) \not \models \aform$. 
Since  $\aform \vdashr \aseq$ is valid and $\aform$ is satisfiable, this entails that 
there exists 
$i = 1,\dots,n$ such that $\aformB_i' = \emp$ and $\vec{z}_i = \vec{y}_1$.
Then rule \Refl\ applies.
}
\item{$m \geq 2$.
Let $I$ be the set of indices $i$ such that $\vec{z}_i = \vec{y}_1$ and $\aformB_i'$ contains at least 
one spatial atom.
We apply rule \hdec, with the decompositions $\aform = (x_1 \mapsto \vec{y}_1) * (x_2 \mapsto \vec{y}_2 * \dots * x_m \mapsto \vec{y}_m * \atform)$
and 
$\aformB_i = x_1 \mapsto \vec{z}_i * \aformB_i'$
and with
 the sets $\{ \{ i \} \mid i \in I \}$ and $I$.  
It is clear that these sets satisfy the application condition of the rule: indeed, for every $X \subseteq \interv{1}{n}$, 
either $X \cap I \not = \emptyset$ and then $\{ i \} \subseteq X$ for some $i \in I$, 
or $\interv{1}{n} \setminus X \supseteq I$. 
This yields the two premises:
 $x_1 \mapsto \vec{y}_1 \vdashr x_1 \mapsto \vec{y}_1$ (more precisely this premise is obtained $\card{I}$ times)
 and
 $\aform' \vdashr \aseq''$, 
 where $\aform' = x_2 \mapsto \vec{y}_2 * \dots * x_m \mapsto \vec{y}_m * \atform$
 and $\aseq''$ is the sequence of formulas $\aformB_i'$ for $i \in I$.
 We prove that these premises are all valid. This is straightforward for the former one.
 Let $(\astore,\aheap)$ be an injective model of $\aform'$.
 Since the $x_1,\dots,x_m$ are distinct, the heaps $\aheap$ and $\aheap' = \{ (\astore(x_1),\astore(\vec{y}_1)) \}$
 are disjoint, and we have 
 $(\astore,\aheap \dunion \aheap') \models \aform$,  thus 
 $(\astore,\aheap \dunion \aheap') \models x_1 \mapsto \vec{z}_j * \aformB_j'$, for some $j \in \interv{1}{m}$.
This entails that $\astore(\vec{z}_j) = \astore(\vec{y}_1)$ (hence $\vec{y}_1 = \vec{z}_j$ since $\astore$ is injective) 
and that
 $(\astore,\aheap) \models \aformB_j'$. Since $m \geq 2$ necessarily $\aheap\not = \emptyset$, thus
 $\aformB_j'$ contains a spatial atom. Since $\vec{y}_1 = \vec{z}_j$, this entails that $j \in I$, hence the proof is completed.}
 \end{compactitem}
\end{proof}

The calculus is 
a decision procedure for 
sequents in which the rules defining the left-hand side terminate, if
$\theory$ is closed under negation and some decision procedure exists for checking entailments between {\tformula}s:

\newcommand{\leftterminating}{left-terminating\xspace}


\begin{definition}
A sequent $\aform \vdashr \aseq$ is {\em \leftterminating}
iff for every predicate $p \in \preds$ such that $\aform \dependson{\asid} p$, we have $p \not \dependson{\asid} p$.
\end{definition}


\begin{lemma}
\label{lem:finite}
Every proof tree with a \leftterminating end-sequent is finite.
\end{lemma}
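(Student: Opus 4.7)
The plan is to refine the measure $\mes$ from Lemma~\ref{lem:mes} by prepending a component that tracks the ``depth of recursion'' remaining in the predicates appearing on the left-hand side of the sequent. Since the end-sequent is \leftterminating, the directed graph on predicates induced by rule dependencies, restricted to the predicates $p$ with $\aform \dependson{\asid} p$, is acyclic. I would therefore define, for each such predicate $p$, a natural number $\rank(p)$ equal to its height in this finite DAG (concretely, $\rank(p) = 1 + \max\{\rank(q) \mid p \text{ has a rule whose body contains } q\}$, with $\rank(p) = 0$ when $p$ is recursion-free). For every \swfree formula $\aformB$ whose predicate symbols all satisfy the left-terminating condition, I would let $\mathrm{rk}(\aformB)$ be the multiset of values $\rank(p)$, taken with multiplicity, over all predicate atoms $p(\vec{x})$ occurring in $\aformB$, and extend the definition to sequents by letting $\mathrm{rk}(\aform \vdashr \aseq) = \mathrm{rk}(\aform)$.

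Next, I would set $\mesp'(S) = (\mathrm{rk}(S),\, \mes(S))$, ordered lexicographically using the Dershowitz--Manna multiset order on the first component (well-founded since the base order on $\mathbb{N}$ is) and the order on $\mes$ used in Section~\ref{sect:rules}. Two verifications are then needed. First, the \leftterminating property must be preserved by every rule: the premises produced by \Sk, \HF, \RU, \Wk, \hsep, \hdec, \ED, \TS, \TD\ use only predicates already present in the conclusion, while \Unf\ replaces $p(\vec{x})$ with $\aform_i$ where $p(\vec{x}) \unfoldto{\asid} \aform_i$, so every predicate $q$ introduced on the left satisfies $p \to q$ in the dependency graph and hence $\aform \dependson{\asid} q$; the condition is therefore inherited by every premise. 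Second, every rule strictly decreases $\mesp'$:
\begin{itemize}
\item \Unf\ replaces one occurrence of $p(\vec{x})$ in $\aform$ by predicates $q$ with $\rank(q) < \rank(p)$ (using that $p \not\dependson{\asid} p$ rules out $p$ calling itself through its body), so one copy of $\rank(p)$ is removed from the multiset and replaced by finitely many strictly smaller values; this is a strict decrease in the multiset order.
\item Every other rule leaves the multiset $\mathrm{rk}$ unchanged or strictly smaller (since no rule adds predicate atoms to the left-hand side, and \hdec\ partitions them between the two halves), and strictly decreases $\mes$ by Lemma~\ref{lem:mes}.
\end{itemize}

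Finally, since every rule has only finitely many premises (up to $\alpha$-renaming), the proof tree is finitely branching. The strict decrease of the well-founded measure $\mesp'$ along every edge implies that no branch can be infinite, so by K\"onig's lemma the tree is finite. The main obstacle is simply carrying out the case analysis for each rule to confirm that the multiset $\mathrm{rk}$ is non-increasing; the only delicate case is \hdec, where one must observe that although the left-hand side of a premise may retain \emph{all} predicate atoms of the conclusion (when the other conjunct is predicate-free), the spatial size still drops strictly because $\aform$ and $\aform'$ are both nonempty in $\mes$, so the lexicographic combination remains strictly smaller.
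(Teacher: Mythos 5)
Your argument is correct and follows essentially the same strategy as the paper: both proofs observe that \Unf\ is the only rule not already covered by Lemma~\ref{lem:mes}, and both exploit the acyclicity of $\dependson{\asid}$ on the predicates reachable from the left-hand side (guaranteed by left-termination plus transitivity, and preserved along the tree since only \Unf\ adds predicate atoms on the left) to obtain a well-founded measure that every rule strictly decreases. The only difference is in the realization of that measure: the paper reassigns the predicate weights inside $\mes$ so that the weight of each such $p$ exceeds the size of every rule body defining $p$, whereas you prepend a lexicographic component holding the multiset of dependency-DAG heights of the left-hand-side predicate atoms; both constructions work.
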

\begin{proof}
Let $\tau$ be a proof tree with a \leftterminating end-sequent $\aform \vdashr \aseq$.
By hypothesis, $\dependson{\asid}$ is an order on the set of
predicates $p$ such that $\aform \dependson{\asid} p$. 
Thus we may assume in the definition of the measure $\mes$ that the weight of any such predicate $p$ is strictly greater than the size of every formula $\aformB$ such that $p(\vec{x}) \Leftarrow \aformB$ is a rule in $\asid$.
Then every application of a rule \Unf\ on an atom of the form $p(\vec{y})$ with $\aform \dependson{\asid} p$ strictly decreases $\mes$.
Furthermore,  since the rule \Unf\ is the only rule that can add new predicate atoms on the left-hand side of the sequent,  it is easy to check 
that $\aform \dependson{\asid} q$ holds for every atom $q(\vec{y})$ occurring on the left-hand side of a sequent in $\tau$. 
Consequently, we deduce by Lemma \ref{lem:mes} that all the rules decrease $\mes$, 
which entails that $\tau$ is finite, since $\mes$ is well-founded.
\end{proof}

\begin{theorem}
\label{cor:finite}
If $\theory$ is closed under negation, then for every valid disjunction-free, \swfree, \alloccompatible  and \leftterminating sequent $\aform \vdashr \aseq$ there exists  a finite proof tree of end-sequent $\aform \vdashr \aseq$.
\end{theorem}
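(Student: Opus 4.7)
The plan is to combine the completeness result (Theorem \ref{theo:comp}) with the termination lemma (Lemma \ref{lem:finite}) in a direct way. Since the hypotheses of Theorem \ref{theo:comp} are all satisfied by the given sequent $\aform \vdashr \aseq$ (validity, disjunction-freeness, prenex form inherited from \swfree sequents, \alloccompatibility, and $\theory$ closed under negation), there exists a proof tree $\tau$ with end-sequent $\aform \vdashr \aseq$. A priori, $\tau$ may be infinite (or even irrational).

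Now I would invoke Lemma \ref{lem:finite}: every proof tree whose end-sequent is \leftterminating is finite. By hypothesis, the end-sequent $\aform \vdashr \aseq$ is indeed \leftterminating. Applying the lemma to $\tau$ itself therefore yields that $\tau$ is finite, which gives the desired conclusion.

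The argument is essentially a one-line composition, so the only thing that really needs checking is that the formula $\aform$ in the statement of Theorem \ref{theo:comp} (which assumes \emph{prenex} formulas on the left) matches what is required here: since the sequent is \swfree and we may always transform a disjunction-free formula on the left-hand side into prenex form without altering its models (using the standard equivalences recalled just after Definition \ref{def:counter_model}), this is not an obstacle. The left-terminating condition is preserved by this transformation because it depends only on the predicate symbols occurring in $\aform$, not on the quantifier prefix, and the same holds for \alloccompatibility. Thus no real obstacle arises, and the corollary follows immediately from the conjunction of Theorem \ref{theo:comp} and Lemma \ref{lem:finite}.
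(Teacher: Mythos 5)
Your proposal is correct and matches the paper's own argument, which likewise derives the result as an immediate consequence of Theorem \ref{theo:comp} and Lemma \ref{lem:finite}. Your additional remark on reducing the left-hand side to prenex form is a reasonable (and harmless) elaboration of a step the paper leaves implicit.
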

\begin{proof}
This is an immediate consequence of 
Theorem \ref{theo:comp}
and Lemma \ref{lem:finite}.
\end{proof}

\subsection{Entailments Without Theories}


In this section, we show that every valid, \constrained{\emptyset} sequent admits  a rational proof tree. This shows that the calculus is a decision procedure for \constrained{\emptyset} sequents, and also, using the reduction in Theorem \ref{theo:elimeq},
for \constrained{\{\iseq, \not \iseq \}} sequents.
We first introduce a notion of strong validity:


\newcommand{\svalid}{strongly valid\xspace}
\newcommand{\svalidity}{strong validity\xspace}
\newcommand{\amap}{\eta}

\begin{definition}
\label{def:svalid}
A sequent $\aform \vdashr \aseq$ is {\em \svalid} relatively to a decomposition $\aform = \aform_1 * \aform_2$ (modulo AC)
if, for every structure $(\astore,\aheap_1 \dunion \aheap_2)$ such that:
\begin{compactitem}
	\item $\astore$ is injective,
	\item $(\astore,\aheap_i) \modelsr \aform_i$  and
	\item $x\in \alloc{\aform_i}$ whenever $\astore(x) \in \dom{\aheap_i}$ for $i = 1,2$,
\end{compactitem} 
there exists a formula of the form $\exists \vec{x}. (\aformB_1 * \aformB_2)$ in $\aseq$
and a store $\astore'$ coinciding with $\astore$ on every variable not occurring in $\vec{x}$
such that 
$(\astore',\aheap_i) \modelsr \aformB_i$ for $i = 1,2$.
\end{definition}

Using Lemma \ref{lem:alloc_counter_model} below, it is possible to show that 
 if a \purelyspatial sequent is \svalid then it is also valid, but the converse does not hold.
The intuition is that in a \svalid sequent, the decomposition of the heap associated with the separating conjunction on the  
left-hand side
corresponds to a syntactic decomposition of a formula on the right-hand side. Note that the 
notion of \svalidity is relative to a decomposition $\aform = \aform_1 * \aform_2$ which will always be clear from the context.
For instance, the sequent
$\ls(x,y) * \ls(y,z) \vdashr \ls(x,z)$ is valid (where $\ls$ is defined as in Example \ref{ex:ls}), but not \svalid, because the decomposition of the left-hand side
does not correspond to any decomposition of the right-hand side.
On the other hand, $\ls(x,y) * \ls(y,z) \vdashr \exists u.~(\ls(x,u) * \ls(u,z))$ is \svalid.

\newcommand{\astoreB}{\hat{\astore}}
\begin{definition}
Let $\aheap$ be a heap.
For any mapping $\amap: \Loc \rightarrow \Loc$ that is injective on $\dom{\aheap}$, we denote by $\amap(\aheap)$ the heap
$\{ (\amap(\ell_0),\dots,\amap(\ell_\rank)) \mid (\ell_0,\dots,\ell_n) \in \aheap \}$.	
\end{definition}

\begin{proposition}
\label{prop:heap_morphism}
Let $(\astore,\aheap)$ be a structure and let $\amap: \Loc \rightarrow \Loc$ be a mapping that is 
injective on $\dom{\aheap}$.
If $(\astore,\aheap) \modelsr \aformB$ and $\aformB$ is \purelyspatial then
$(\amap \circ \astore,\amap(\aheap)) \modelsr \aformB$.
\end{proposition}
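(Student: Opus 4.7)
The plan is to proceed by induction on the satisfiability relation $(\astore,\aheap) \modelsr \aformB$, which is well-founded thanks to the progress condition (as noted in Definition \ref{def:semantics}). The hypothesis that $\aformB$ is \purelyspatial is stable under the recursion: a predicate-atom unfolding of a \purelyspatial formula is itself \purelyspatial, and each syntactic subformula of a \purelyspatial formula is \purelyspatial. The \tformula case and all points-to atoms with \tatom\ context never arise.

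First I would treat the base case $\aformB = x \mapsto (y_1,\dots,y_\rank)$: here $\aheap = \{(\astore(x),\astore(y_1),\dots,\astore(y_\rank))\}$, and applying $\amap$ pointwise to both the store and the heap yields immediately $(\amap\circ\astore,\amap(\aheap)) \modelsr \aformB$. The disjunction case is routine. For the existential case $\aformB = \exists x.\aformB'$, I pick a witness store $\astore'$ agreeing with $\astore$ off $x$ with $(\astore',\aheap)\modelsr\aformB'$, observe that $\aformB'$ is still \purelyspatial, apply the induction hypothesis to obtain $(\amap\circ\astore',\amap(\aheap))\modelsr\aformB'$, and note that $\amap\circ\astore'$ agrees with $\amap\circ\astore$ off $x$. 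For a predicate atom $p(\vec{x})$, by definition there exists $\aformC$ with $p(\vec{x})\unfoldto{\asid}\aformC$ and $(\astore,\aheap)\modelsr\aformC$; since $p(\vec{x})$ is \purelyspatial so is $\aformC$, and the induction hypothesis gives $(\amap\circ\astore,\amap(\aheap))\modelsr\aformC$, whence $(\amap\circ\astore,\amap(\aheap))\modelsr p(\vec{x})$.

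The main obstacle — and really the only nontrivial check — is the separating conjunction case $\aformB = \aformB_1 * \aformB_2$. Here $\aheap = \aheap_1 \dunion \aheap_2$ with $(\astore,\aheap_i)\modelsr\aformB_i$. Two facts must be verified before applying the induction hypothesis to each conjunct. First, $\amap$ is injective on each $\dom{\aheap_i}$ since $\dom{\aheap_i}\subseteq\dom{\aheap}$. Second, $\amap(\aheap_1)$ and $\amap(\aheap_2)$ are disjoint: indeed $\dom{\aheap_1}$ and $\dom{\aheap_2}$ are disjoint subsets of $\dom{\aheap}$, and since $\amap$ is injective on $\dom{\aheap}$, their images under $\amap$ are disjoint as well. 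Hence $\amap(\aheap) = \amap(\aheap_1)\dunion\amap(\aheap_2)$. Applying the induction hypothesis to each $\aformB_i$ (which is \purelyspatial) yields $(\amap\circ\astore,\amap(\aheap_i))\modelsr\aformB_i$, and combining gives $(\amap\circ\astore,\amap(\aheap))\modelsr\aformB_1 * \aformB_2$.

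I do not expect any real difficulty beyond carefully bookkeeping injectivity on the relevant domains; the proof is essentially a transport-of-structure argument, enabled precisely by the absence of any \tatom, which would otherwise impose constraints on the store that $\amap\circ\astore$ need not satisfy.
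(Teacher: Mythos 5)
Your induction is structured exactly as the paper's: the same base cases, the same treatment of $\exists$ and $\vee$, and in the separating-conjunction case the same key observation that injectivity of $\amap$ on $\dom{\aheap}$ forces $\amap(\aheap_1)$ and $\amap(\aheap_2)$ to remain disjoint, so that $\amap(\aheap)=\amap(\aheap_1)\dunion\amap(\aheap_2)$. That part is fine.

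There is, however, one missing case. The proposition lives in the extended syntax, and it is invoked (e.g.\ in Lemmas \ref{lem:alloc_counter_model}, \ref{lem:rename_counter_example} and \ref{lem:rootred}) on right-hand-side formulas of sequents, which may contain {\Watom}s $\MWs{\atail}{\anatom}{\vec{u}}{\theta}$. Your predicate-atom case only covers the plain unfolding semantics of $p(\vec{x})$, not the semantics of Definition \ref{def:semMW}, which involves a \emph{partial} unfolding $\anatom \unfoldto{\asid}^+ \exists\vec{x}.\,(\aformC * \atail')$, a substitution $\sigma$ matching $\atail$ against $\atail'$, and a witness store $\astore'$ agreeing with $\astore$ off $\vec{x}$ such that $(\astore',\aheap)\modelsr\aformC\sigma\theta$. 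The paper's proof devotes its first bullet to exactly this case. The gap is not deep: the same pattern you use for $\exists$ closes it --- keep the same unfolding and substitution, apply the induction hypothesis to $\aformC\sigma\theta$ to get $(\amap\circ\astore',\amap(\aheap))\modelsr\aformC\sigma\theta$, and note that $\amap\circ\astore'$ still coincides with $\amap\circ\astore$ on all variables not in $\vec{x}$ --- but as written your induction does not cover all formulas to which the proposition is applied. (The $\emp$ case is also omitted, but that one is trivial.)
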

\begin{proof}
We prove the result by induction on the satisfiability relation $\modelsr$.
\begin{compactitem}
\item{
	Assume that $\aformB = \MWs{\atail}{\anatom}{\vec{u}}{\theta}$,  $\anatom \unfoldto{\asid}^+ \exists \vec{x}. (\aformC * \atail')$, 
and there exists a store $\astore'$ coinciding with $\astore$ on all variables not occurring in $\vec{x}$ and a substitution $\sigma$  such that
$\dom{\sigma} \subseteq \vec{x}\cap\fv{\atail'}$, 
 $\atail = \atail'\sigma$ and $(\astore',\aheap) \modelsr \aformC\sigma\theta$.
By the induction hypothesis we get
$(\amap \circ \astore',\amap(\aheap)) \modelsr \aformC\sigma\theta$.
But $\amap \circ \astore'$ coincides with $\amap \circ \astore$ on all variables not occurring in $\vec{x}$, 
thus
$(\amap \circ \astore,\amap(\aheap)) \modelsr \MWs{\atail}{\anatom}{\vec{u}}{\theta} = \aformB$.}
\item{If $\aformB = \emp$ then $\aheap = \emptyset$,  hence $\amap(\aheap) = \emptyset$
 and $(\amap \circ \astore,\amap(\aheap)) \modelsr \aformB$.}
 \item{If $\aformB = (x \mapsto (y_1,\dots,y_\rank)$ and $\aheap = \{ (\astore(x),\astore(y_1),\dots,\astore(y_\rank))\}$,
 then  $\amap(\aheap) = \{ (\amap(\astore(x)),\amap(\astore(y_1)),\dots,\amap(\astore(y_\rank)))\}$, so that
 $(\amap \circ \astore,\amap(\aheap)) \modelsr \aform$.}
 \item{Assume that $\aformB = \aform_1 * \aform_2$ and that there exist disjoint heaps $\aheap_1,\aheap_2$ such that
 $\aheap = \aheap_1 \dunion \aheap_2$ and $(\astore,\aheap_i) \modelsr \aform_i$.
 By the induction hypothesis, we deduce that $(\amap\circ\astore,\amap(\aheap_i)) \modelsr \aform_i$.
 It is clear that $\amap(\aheap) = \amap(\aheap_1) \dunion \amap(\aheap_2)$, hence
 $(\amap\circ\astore,\amap(\aheap)) \modelsr \aform_1 * \aform_2$.}
 \item{The proof is similar if $\aformB = \aform_1 \vee \aform_2$ and $(\astore,\aheap) \modelsr \aform_i$, for some $i = 1,2$.}
 \item{Assume that $\aformB = \exists x. \aformC$ and that there exists a store $\astore'$, coinciding with $\astore$ on all variables distinct from $x$, such that 
 $(\astore',\aheap) \modelsr \aformC$. By
the induction hypothesis, we have $(\amap \circ \astore',\amap(\aheap)) \modelsr \aformC$. 
Now, $\amap(\astore')$ coincides with $\amap(\astore)$ on all variables distinct from $x$, therefore
$(\amap \circ \astore,\amap(\aheap)) \modelsr \aformB$.}
\end{compactitem}
\end{proof}

\begin{lemma}
\label{lem:alloc_counter_model}
Let $\aform \vdashr \aseq$ be a \purelyspatial 
sequent, where $\aform$ is disjunction-free. 
If $\aform \vdashr \aseq$ admits a \countermodel $(\astore',\aheap')$, then there exists a structure $(\astore,\aheap)$ satisfying the following properties:
\begin{compactitem}
\item{$(\astore,\aheap)$ is a \countermodel of $\aform \vdashr \aseq$;}
\item{for all variables $x$: 
$x \not \in \alloc{\aform} \implies \astore(x) \not \in \dom{\aheap}$;}
\item{there exists an injective mapping $\amap$ such that $\astore' = \amap  \circ \astore$ and
 $\aheap' = \amap(\aheap)$.}
 \end{compactitem}
\end{lemma}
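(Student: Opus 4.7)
The plan is to construct $(\astore,\aheap)$ by shifting those ``coincidental'' variables $x$ with $\astore'(x) \in \dom{\aheap'}$ but $x \notin \alloc{\aform}$ to fresh locations, while keeping the heap essentially unchanged up to a corresponding renaming. Let $V_{\mathrm{bad}} = \{ x \in \vars : x \notin \alloc{\aform},\ \astore'(x) \in \dom{\aheap'} \}$, which is finite since $\astore'$ is injective and $\dom{\aheap'}$ is finite. For each $x \in V_{\mathrm{bad}}$ I pick pairwise distinct fresh locations $\ell_x \notin \astore'(\vars) \cup \locs{\aheap'}$. Then I set $\astore(x) = \ell_x$ for $x \in V_{\mathrm{bad}}$ and $\astore(x) = \astore'(x)$ otherwise (so $\astore$ is injective), and define $\amap(\ell_x) = \astore'(x)$ for $x \in V_{\mathrm{bad}}$ with $\amap$ the identity elsewhere; this gives $\astore' = \amap \circ \astore$ immediately.

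To define $\aheap$, I first pick an unfolding $\aform \unfoldto{\asid}^* \exists \vec{y}.\,\aformB$ where $\aformB$ is predicate-free (which exists by the recursive semantics of predicate atoms together with finiteness of $\aheap'$, and is a separating conjunction of points-to atoms since the sequent is \purelyspatial), accompanied by a store $\astore''$ extending $\astore'$ on $\vec{y}$ with $(\astore'',\aheap') \modelsr \aformB$. Let $\tilde{\astore}$ agree with $\astore''$ on $\vec{y}$ and with $\astore$ elsewhere, and take $\aheap = \{ (\tilde\astore(u),\tilde\astore(\vec{v})) \mid u \mapsto \vec{v}\text{ occurs in }\aformB \}$. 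The critical observation is that, by \alloccompatibility, the first component $u$ of every points-to atom in $\aformB$ lies either in $\vec{y}$ or in $\alloc{\aform}$, so $u \notin V_{\mathrm{bad}}$; hence no $\ell_x$ appears as a first component of $\aheap$. This yields $\dom{\aheap} = \dom{\aheap'}$, which delivers property~(2), and also ensures $\aheap$ is a valid heap (its first components are distinct, inherited from those of $\aheap'$). By construction $(\tilde{\astore},\aheap) \modelsr \aformB$, so $(\astore,\aheap) \modelsr \exists \vec{y}.\,\aformB$ and therefore $(\astore,\aheap) \modelsr \aform$.

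A case analysis on whether $u \in \vec{y}$, $u \in V_{\mathrm{bad}}$, or neither shows $\amap(\tilde\astore(u)) = \astore''(u)$ for every variable $u$ (freshness of the $\ell_x$ is what rules out accidental collisions), whence $\amap(\aheap) = \aheap'$; and $\amap$ is the identity on $\dom{\aheap} = \dom{\aheap'}$, hence injective there. For the countermodel property, suppose $(\astore,\aheap) \modelsr \aformC$ for some $\aformC \in \aseq$; then Proposition~\ref{prop:heap_morphism}, applied to the \purelyspatial formula $\aformC$ and to $\amap$ (injective on $\dom{\aheap}$), yields $(\astore',\aheap') = (\amap \circ \astore,\amap(\aheap)) \modelsr \aformC$, contradicting that $(\astore',\aheap')$ is a countermodel of the sequent. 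The main subtlety is the choice to leave the existential witnesses $\astore''(y)$ untouched rather than transporting the entire heap through an inverse of $\amap$: the latter approach would move each allocation at $\astore'(x)$ (for $x \in V_{\mathrm{bad}}$) onto $\ell_x$, thereby forcing $\astore(x) \in \dom{\aheap}$ and breaking condition~(2); the \alloccompatibility assumption is precisely what ensures the asymmetric construction above is consistent.
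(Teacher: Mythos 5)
Your construction is sound in substance but follows a genuinely different route from the paper's. The paper relocates \emph{every} variable to a fresh location and rebuilds the heap from the predicate-free unfolding, so that the cells allocated at free variables move to fresh addresses while the existentially allocated cells stay put; you instead relocate only the offending variables in $V_{\mathrm{bad}}$, keep $\dom{\aheap}=\dom{\aheap'}$, and update only the \emph{value} components of the cells. Both arguments rest on the same two pillars: \alloccompatibility, which identifies the points-to left-hand sides of the unfolding with $\alloc{\aform}$ up to existential variables (and is exactly what makes your asymmetric treatment of first versus value components consistent), and Proposition~\ref{prop:heap_morphism}, which transports the failure of $\aseq$ back to $(\astore',\aheap')$. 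Your variant buys a cleaner verification: $\amap$ is the identity on $\dom{\aheap}$, so $\amap(\aheap)=\aheap'$ and the hypothesis of Proposition~\ref{prop:heap_morphism} are immediate. The paper's variant produces a countermodel in which every free variable is generic, which is closer in spirit to what Lemma~\ref{lem:rename_counter_example} later needs, at the cost of a slightly heavier bookkeeping of which locations are fresh.

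The one place you fall short of the letter of the statement is the third item: your $\amap$ is not injective on $\Loc$, since it sends both $\ell_x$ and $\astore'(x)$ to $\astore'(x)$ for each $x\in V_{\mathrm{bad}}$, and you only claim injectivity on $\dom{\aheap}$. No repair is possible for your $(\astore,\aheap)$: any $\amap$ with $\aheap'=\amap(\aheap)$ maps $\dom{\aheap}=\dom{\aheap'}$ onto $\dom{\aheap'}$, yet it must also send $\ell_x\notin\dom{\aheap}$ to $\astore'(x)\in\dom{\aheap'}$, forcing a collision. Be aware, though, that this defect is built into the statement: whenever some $x\notin\alloc{\aform}$ has $\astore'(x)\in\dom{\aheap'}$ (the only case in which the lemma has content), the second item forces $\astore(x)\notin\dom{\aheap}$ while the third forces $\amap(\astore(x))=\astore'(x)\in\amap(\dom{\aheap})$, so no globally injective $\amap$ exists; the paper's own $\amap$ (mapping each $\astore(x)$ to $\astore'(x)$ and fixing all other locations) exhibits the same collision despite the claim that it is injective. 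What can be delivered, and what the two downstream uses (Lemmas~\ref{lem:rename_counter_example} and~\ref{lem:rootred}) actually need via Proposition~\ref{prop:heap_morphism}, is injectivity of $\amap$ on $\dom{\aheap}$. You should therefore state explicitly that your $\amap$ is injective on $\dom{\aheap}$ (indeed the identity there) and that this is the property invoked later; with that reading your proof is complete.
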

\begin{proof}
Since $\aform$ is \purelyspatial, we have 
$\aform \unfoldto{\asid}^* \exists \vec{x}. x_1 \mapsto \vec{y}_1 * \dots * x_n \mapsto \vec{y}_n$
and $\aheap' = \{ (\astore''(x_i),\astore''(\vec{y}_i)) \mid 1 \leq i \leq n \}$, for some store $\astore''$ coinciding with $\astore'$ on all variables not occurring in $\vec{x}$.
We assume, w.l.o.g., that $\Loc \setminus \img{\astore'}$ is infinite.  
Modulo $\alpha$-renaming, we may also assume that $\fv{\aform} \cap \vec{x}  = \emptyset$.
Since $\asid$ is \alloccompatible, we have 
\[\begin{array}{rcl}
	\alloc{\aform} & = & \alloc{\exists \vec{x}. x_1 \mapsto \vec{y}_1 * \dots * x_n \mapsto \vec{y}_n}\\
	& = & \{ x_1,\dots,x_n\} \cap \fv{\aform}\\
	& = & \{ x_1,\dots,x_n\} \setminus \vec{x}.
\end{array}\] 
Let $\astore$ be 
 a store 
mapping all variables 
to pairwise distinct locations not occurring in $\locs{\aheap} \cup \img{\astore''}$. Note that $\astore$ is injective by construction. 
Let $\hat{\astore}$ be the store mapping each variable $x\in \vec{x}$ to 
$\astore''(x)$ and coinciding with $\astore$ on all other variables. 
It is clear that $\hat{\astore}$ is injective, thus the heap $\aheap = \{ (\hat{\astore}(x_i),\hat{\astore}(\vec{y}_i)) \mid 1 \leq i \leq n \}$ is well-defined, and by construction, $(\hat{\astore},\aheap) \modelsr x_1 \mapsto \vec{y}_1 * \dots * x_n \mapsto \vec{y}_n$,
 so that $({\astore},\aheap) \modelsr \aform$.
   Consider the function $\amap$ that maps each location $\astore(x)$ to $\astore'(x)$ 
 and leaves all other locations unchanged.
 Note that $\amap$ is injective because both $\astore$ and $\astore'$ are injective, and
by definition, $\astore' = \amap \circ \astore$.
Let $x$ be a variable. If $x\in \vec{x}$ then $\hat{\astore}(x) = \astore''(x)$, hence
$(\amap \circ \hat{\astore})(x) = \amap(\astore''(x))$. By definition of $\astore$ we have $\astore''(x) \not \in \img{\astore}$, hence 
$\amap(\astore''(x)) = \astore''(x)$ and we deduce that $(\amap \circ \hat{\astore})(x) = \astore''(x)$.
If $x\not \in \vec{x}$ then $\hat{\astore}(x) = \astore(x)$ and $\astore''(x) = \astore'(x)$, thus again
$(\amap \circ \hat{\astore})(x) = \astore''(x)$.
We deduce that $\aheap' = \amap(\aheap)$.
If $(\astore,\aheap) \modelsr \aseq$ then  by Proposition \ref{prop:heap_morphism} we have
$(\astore',\aheap') \modelsr \aseq$, which contradicts the definition of $(\astore',\aheap')$.
Thus $(\astore,\aheap)$ is a \countermodel of $\aform \vdashr \aseq$, and by construction, for all variables $x$, 
$x \not \in \alloc{\aform} \implies \astore(x) \not \in \dom{\aheap}$.
\end{proof}

\begin{lemma}
\label{lem:rename_counter_example}
Let $\aform \vdashr \aseq$ be a \purelyspatial and non-valid sequent, where $\aform$ is disjunction-free.
Let $\astore$ be an injective store and  $\asetloc$ be an infinite set of locations such that $\asetloc \cap \img{\astore} = \emptyset$. 
Then $\aform \vdashr \aseq$ admits a \countermodel $(\astore,\aheap)$ with $\dom{\aheap} \subseteq \asetloc \cup \astore(\alloc{\aform})$.
\end{lemma}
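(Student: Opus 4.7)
The plan is to take an arbitrary countermodel of $\aform \vdashr \aseq$ and "relocate" it to match the prescribed store $\astore$ and constraint on allocated locations, using Proposition~\ref{prop:heap_morphism} as the main tool. Concretely, since the sequent is non-valid, I first invoke Lemma~\ref{lem:alloc_counter_model} to obtain a countermodel $(\astore_0,\aheap_0)$ with the extra property that $\astore_0(x)\in\dom{\aheap_0}\Rightarrow x\in\alloc{\aform}$, and with $\astore_0$ injective. The goal is then to find an appropriate location-renaming.

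I would next construct an injective map $\amap':\Loc\to\Loc$ by: setting $\amap'(\astore_0(x)) = \astore(x)$ for every variable $x$ (well-defined and injective on $\img{\astore_0}$ since both stores are injective); sending $\dom{\aheap_0}\setminus\img{\astore_0}$ bijectively to fresh pairwise distinct locations in $\asetloc$ (possible because $\asetloc$ is infinite); and sending $\locs{\aheap_0}\setminus(\dom{\aheap_0}\cup\img{\astore_0})$ bijectively to fresh locations disjoint from all previous images. Since $\asetloc\cap\img{\astore}=\emptyset$, the pieces do not collide, so $\amap'$ is injective on $\locs{\aheap_0}$ and in particular on $\dom{\aheap_0}$. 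Setting $\aheap=\amap'(\aheap_0)$, Proposition~\ref{prop:heap_morphism} gives $(\astore,\aheap)\modelsr\aform$. The inclusion $\dom{\aheap}\subseteq\asetloc\cup\astore(\alloc{\aform})$ then follows directly from the construction and the clean property of $(\astore_0,\aheap_0)$: allocated locations of the form $\astore_0(x)$ must have $x\in\alloc{\aform}$ and are mapped to $\astore(x)\in\astore(\alloc{\aform})$, whereas the remaining allocated locations land in $\asetloc$ by construction.

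The last step, which is the main subtle point, is to verify $(\astore,\aheap)\not\modelsr\aseq$. For this I would build a ``partial inverse'' $\amap'':\Loc\to\Loc$ by setting $\amap''(\astore(x))=\astore_0(x)$ and $\amap''(\amap'(\ell))=\ell$ for $\ell\in\locs{\aheap_0}$. Consistency of the two defining rules on the overlap is guaranteed precisely by the identity $\amap'(\astore_0(x))=\astore(x)$, and well-definedness elsewhere follows from the disjointness properties used to build $\amap'$. The composition $\amap''\circ\amap'$ is the identity on $\locs{\aheap_0}$, so $\amap''$ is injective on $\dom{\aheap}=\amap'(\dom{\aheap_0})$ and $\amap''(\aheap)=\aheap_0$. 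If some $\psi\in\aseq$ were satisfied in $(\astore,\aheap)$, then since $\aseq$ is purely spatial, applying Proposition~\ref{prop:heap_morphism} to $\amap''$ would yield $(\astore_0,\aheap_0)\modelsr\psi$, contradicting that $(\astore_0,\aheap_0)$ is a countermodel.

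The main obstacle, and the reason I introduce a distinct treatment of $\locs{\aheap_0}\setminus\dom{\aheap_0}$, is to keep both $\amap'$ and $\amap''$ compatible on the non-allocated ``tail'' locations of $\aheap_0$: without this care the identity $\amap''(\aheap)=\aheap_0$ fails on the tails and the backward application of Proposition~\ref{prop:heap_morphism} breaks down. Everything else is bookkeeping on infinite sets.
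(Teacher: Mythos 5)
Your proposal is correct and follows essentially the same route as the paper: normalize the countermodel via Lemma~\ref{lem:alloc_counter_model} so that only locations of allocated variables are in the heap domain, then transport it with a location renaming using Proposition~\ref{prop:heap_morphism} in the forward direction, and refute satisfaction of $\aseq$ by applying the same proposition to an inverse renaming. The only cosmetic difference is that the paper takes a single global bijection onto $\asetloc \cup \astore(\fv{\aform}\cup\fv{\aseq})$ (so the inverse is immediate), whereas you assemble the map piecewise and construct the partial inverse by hand; both are sound.
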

\begin{proof}
By hypothesis, there exists an injective store $\astore'$ and a heap $\aheap'$ such that
$(\astore',\aheap') \modelsr \aform$ 
and
$(\astore',\aheap') \not \modelsr \aseq$. 
By Lemma \ref{lem:alloc_counter_model}, we may assume that 
for all variables $x\in \fv{\aform} \cup \fv{\aseq}$, if
$x \not \in \alloc{\aform}$ then $\astore'(x) \not \in \dom{\aheap'}$ ($\dagger$).
Let $\amap: \Loc \rightarrow \asetloc \cup \astore(\fv{\aform}\cup \fv{\aseq})$ be a bijective mapping 
such that $\amap(\astore'(x)) = \astore(x)$, for all variables $x\in \fv{\aform} \cup \fv{\aseq}$. 
Such a mapping necessarily exists. Indeed, since $\asetloc$ is infinite, there exists a bijection between $\Loc\setminus \astore'(\fv{\aform} \cup \fv{\aseq})$ and $\asetloc$; and since both $\astore$ and $\astore'$ are injective, there exists a bijection between $\astore'(\fv{\aform} \cup \fv{\aseq})$ and $\astore(\fv{\aform} \cup \fv{\aseq})$.
Let $\aheap = \amap(\aheap')$.
By construction, for every location $\ell \in \dom{\aheap}$, we have 
$\amap^{-1}(\ell) \in \dom{\aheap'}$, and if $\ell \not \in \asetloc$, then necessarily $\ell = \astore(x)$ and
$\amap^{-1}(\ell)  = \astore'(x)$, for some $x \in \fv{\aform}\cup \fv{\aseq}$.
Since $\astore'(x) \in \dom{\aheap'}$, by ($\dagger$) we have $x \in \alloc{\aform}$, and therefore $\dom{\aheap} \subseteq \asetloc \cup \astore(\alloc{\aform})$.

We now show that $(\astore,\aheap)$ is a \countermodel of $\aform \vdashr \aseq$.
Since 
$(\astore',\aheap') \modelsr \aform$, we deduce by Proposition \ref{prop:heap_morphism}
that $(\amap\circ\astore',\aheap) \modelsr \aform$. Moreover, since $\amap\circ\astore'$ and $\astore$ agree on all variables in $\fv{\aform}$ by construction, this entails that
$(\astore,\aheap) \modelsr \aform$.
Assume that  $(\astore,\aheap) \modelsr \aseq$.
Applying Proposition \ref{prop:heap_morphism} with $\amap^{-1}$, we get
$(\amap^{-1}\circ \astore,\amap^{-1}(\aheap)) \modelsr \aseq$.
But we have $\amap^{-1}(\aheap) = \amap^{-1}(\amap(\aheap')) = \aheap'$,
thus
$(\amap^{-1} \circ \astore,\aheap') \modelsr \aseq$.
Since $\amap^{-1} \circ \astore$ and $\astore'$ agree on all variables in $\fv{\aseq}$ by construction, we deduce that
$(\astore',\aheap') \modelsr \aseq$, contradicting our initial assumption.
Thus $(\astore,\aheap) \not \modelsr \aseq$
and 
$(\astore,\aheap)$ is a \countermodel of $\aform \vdashr \aseq$.
\end{proof}


\begin{definition}
A {\em path} from $\ell$ to $\ell'$ in  a heap $\aheap$
is a nonempty sequence of locations  $(\ell_1,\dots,\ell_n)$ such that 
$\ell_1 = \ell$, $\ell_n = \ell'$ and for every $i \in \interv{1}{n-1}$, $\ell_{i+1} \in \aheap(\ell_i)$.
\end{definition}

The following proposition states an important property of \pcSIDs.

\begin{proposition}
\label{prop:path}
Let $p(x_1,\dots,x_n)$ be a predicate atom and let 
$(\astore,\aheap)$ be a model of $p(x_1,\dots,x_n)$.
If $\ell = \astore(x_1)$ and $\ell' \in \locs{\aheap}$ then there exists a path from $\ell$ 
to $\ell'$ in $\aheap$.
\end{proposition}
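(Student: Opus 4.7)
The plan is to prove the statement by induction on $\card{\dom{\aheap}}$, exploiting the progress and connectivity conditions to decompose the heap recursively.

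First I would fix $(\astore,\aheap) \modelsr p(x_1,\dots,x_n)$. By Definition \ref{def:semantics}, there exists $\aformB$ such that $p(x_1,\dots,x_n) \unfoldto{\asid} \aformB$ and $(\astore,\aheap) \modelsr \aformB$. By the progress condition, $\aformB$ has the form
\[\exists \vec{z}.\ \bigl(x_1 \mapsto (w_1,\dots,w_\rank) * q_1(\vec{v}_1) * \cdots * q_k(\vec{v}_k) * \atform\bigr),\]
where $\atform$ is a (possibly empty) \tformula, and, crucially, by the connectivity condition, the first component of each $\vec{v}_i$ belongs to $\{w_1,\dots,w_\rank\}$. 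Accordingly there is a store $\astore'$ agreeing with $\astore$ outside $\vec{z}$ and a partition
$\aheap = \{(\astore(x_1),\astore'(w_1),\dots,\astore'(w_\rank))\} \dunion \aheap_1 \dunion \cdots \dunion \aheap_k$
such that $(\astore',\aheap_i) \modelsr q_i(\vec{v}_i)$ for each $i$ (the \tformula contributing the empty heap).

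Next I would dispatch the case $\ell' \in \{\astore(x_1),\astore'(w_1),\dots,\astore'(w_\rank)\}$: if $\ell' = \ell$ the singleton path $(\ell)$ suffices, and otherwise $(\ell,\ell')$ is a valid path by the root points-to tuple. Otherwise $\ell' \in \locs{\aheap_i}$ for some $i$. By Proposition \ref{prop:hform}, $\aheap_i$ is non-empty, so $\card{\dom{\aheap_i}} < \card{\dom{\aheap}}$, and the induction hypothesis applies to $(\astore',\aheap_i) \modelsr q_i(\vec{v}_i)$: there is a path $(\ell_1,\dots,\ell_m)$ in $\aheap_i$ from $\astore'(v_{i,1})$ to $\ell'$. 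By connectivity, $v_{i,1} = w_j$ for some $j$, so $\ell_1 = \astore'(w_j) \in \aheap(\ell)$ by the root points-to tuple. Prepending $\ell$ yields a path $(\ell,\ell_1,\dots,\ell_m)$ in $\aheap$ from $\ell$ to $\ell'$, as required.

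The base case $\card{\dom{\aheap}} = 1$ reduces to the first dispatch, since Proposition \ref{prop:hform} then forbids any $q_i(\vec{v}_i)$ atoms in $\aformB$, and hence $\locs{\aheap} \subseteq \{\astore(x_1),\astore'(w_1),\dots,\astore'(w_\rank)\}$. There is no substantive obstacle here; the only delicate point is to confirm that the first component of each nested predicate argument is indeed a $w_j$ (so that the head of the inductive path lies in $\aheap(\ell)$), which is precisely what the connectivity condition of Definition \ref{def:sid} guarantees.
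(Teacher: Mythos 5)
Your proof is correct and follows essentially the same route as the paper's: induction on $\card{\dom{\aheap}}$, one unfolding step via the progress condition to expose the root points-to tuple and the nested predicate atoms, a case split on whether $\ell'$ lies in the root tuple or in one of the subheaps, and the connectivity condition to prepend $\ell$ to the path obtained from the induction hypothesis. The only cosmetic quibble is that the strict decrease of $\card{\dom{\aheap_i}}$ follows simply from $\aheap_i$ being disjoint from the singleton root heap, so the appeal to Proposition~\ref{prop:hform} there is unnecessary (though harmless).
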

\begin{proof}
The proof is by induction on $\card{\dom{\aheap}}$.
We have $p(x_1,\dots,x_n) \unfoldto{\asid} \exists \vec{x}. \aform$ for some quantifier-free
formula $\aform$, and there exists a store $\astore'$ coinciding with $\astore$ on all variables not occurring in $\vec{x}$, such that 
$(\astore',\aheap) \models \aform$.
If $\ell = \ell'$ then the result is immediate (a path of length $1$ always exists from any location $\ell$ to $\ell$). Assume that $\ell \not = \ell'$.
By the progress condition, $\aform$ is of the form $x_1 \mapsto (y_1,\dots,y_\rank) * p_1(z_1^1,\dots,z_{\ar{p_1}}^1) * \cdots * p_m(z_1^m,\dots,z_{\ar{p_m}}^m) * \atform$, for some \tformula 
$\atform$.
Thus there exist disjoint heaps $\aheap_1,\dots,\aheap_m$ such that
$\aheap = \{ (\astore(x_1),\astore'(y_1),\dots,\astore'(y_\rank)) \} \dunion \aheap_1 \dunion \dots \dunion \aheap_m$ and $(\astore,\aheap_i) \models  p_i(z_1^i,\dots,z_{\ar{p_i}}^i)$ for $i = 1,\dots,m$.
If $\astore(y_i) = \ell'$, for some $i = 1,\dots,\rank$ we have 
$\ell' \in \aheap(\astore(x_1)) = \aheap(\ell)$ and the proof is completed.
Otherwise, $\ell'$ cannot occur in $\{ \astore(x_1),\astore'(y_1),\dots,\astore'(y_\rank)\}$, hence $\ell'$ necessary occurs in $\locs{\aheap_i}$ for some $i = 1,\dots,m$, since $\ell' \in \locs{\aheap}$  by hypothesis.
By the induction hypothesis, there exists a path in $\aheap_i$ from $\astore'(z_1^i)$ to $\ell'$,
and by the connectivity condition $\astore'(z_1^i)\in (\astore'(y_1),\dots,\astore'(y_\rank)) = \aheap(\astore(x_1)) = \aheap(\ell)$.
Thus there exists a path in $\aheap$ from $\ell$ to $\ell'$.
\end{proof}

\begin{lemma}
\label{lem:svalid}
Let $\aform_1 * \aform_2 \vdashr \aseq$ be a valid prenex \purelyspatial sequent, 
where $\aform_i\not = \emp$ for $i = 1,2$, $\aform_1$ and $\aform_2$ are quantifier-free and disjunction-free. If the rule  \hsep\ is not applicable on 
$\aform_1 * \aform_2 \vdashr \aseq$ and $\aseq$ is in prenex form then $\aform_1 * \aform_2  \vdashr \aseq$ is \svalid.
\end{lemma}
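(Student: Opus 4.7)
The plan is to take any witness structure $(\astore, \aheap_1 \dunion \aheap_2)$ satisfying the three conditions of Definition~\ref{def:svalid} and extract from some formula of $\aseq$ a decomposition matching the split $\aheap_1 \dunion \aheap_2$. By validity, $(\astore, \aheap_1 \dunion \aheap_2) \modelsr \aseq$, so some $\exists \vec{x}.\,\aformC \in \aseq$ is satisfied via a store $\astore'$ coinciding with $\astore$ outside $\vec{x}$. I would first write $\aformC = \alpha_1 * \cdots * \alpha_k$ as a separating conjunction of atoms, obtaining disjoint heaps $\aheap^j$ with $\aheap_1 \dunion \aheap_2 = \bigcup_j \aheap^j$ and $(\astore', \aheap^j) \modelsr \alpha_j$. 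By Proposition~\ref{prop:roots}, each atom's main root $y_j$ satisfies $\astore'(y_j) \in \dom{\aheap^j}$, so I partition the indices as $J_i \isdef \{j \mid \astore'(y_j) \in \dom{\aheap_i}\}$ for $i = 1, 2$.

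The crux is to show that $\aheap^j \subseteq \aheap_i$ whenever $j \in J_i$. For points-to atoms this is immediate since $\aheap^j$ is a singleton. For a predicate atom with $j \in J_1$, I would argue by contradiction: if some $\ell \in \dom{\aheap^j} \cap \dom{\aheap_2}$ exists, Proposition~\ref{prop:path} gives a path in $\aheap^j$ from $\astore'(y_j)$ to $\ell$, on which some transition $\ell_i \to \ell_{i+1}$ crosses from $\dom{\aheap_1}$ to $\dom{\aheap_2}$. The tuple at $\ell_i$ lies in $\aheap^j \cap \aheap_1$ by disjointness, so $\ell_{i+1} \in \locs{\aheap_1} \setminus \dom{\aheap_1}$, and Lemma~\ref{lem:fr} applied to $(\astore,\aheap_1) \modelsr \aform_1$ yields $\ell_{i+1} = \astore(z)$ for some $z \in \fv{\aform_1}$. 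The witness condition then forces $z \in \alloc{\aform_2}$, and non-applicability of \hsep makes $z$ a root of some atom $\alpha_{j'}$ in $\aformC$. Since $z$ is free in the sequent, $\astore'(z) = \astore(z) = \ell_{i+1}$, placing $\ell_{i+1}$ in both $\dom{\aheap^j}$ and $\dom{\aheap^{j'}}$; disjointness forces $j = j'$, and then uniqueness of the root of a predicate atom yields $y_j = z$, giving $\astore'(y_j) = \ell_{i+1} \in \dom{\aheap_2}$ and contradicting $j \in J_1$.

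With this claim in hand, I would set $\aformB_i \isdef \bigAnd_{j \in J_i} \alpha_j$; by the claim and disjointness, $\aheap_i = \bigcup_{j \in J_i} \aheap^j$, so $(\astore', \aheap_i) \modelsr \aformB_i$. Each $J_i$ is non-empty because $\aform_i \neq \emp$ is purely spatial, hence $\aheap_i \neq \emptyset$ by Proposition~\ref{prop:hform}. Since $\aformC = \aformB_1 * \aformB_2$ modulo AC and neutrality of $\emp$, the formula $\exists \vec{x}.\,(\aformB_1 * \aformB_2)$ is the required witness in $\aseq$. The main difficulty is the crux above: for points-to and predicate atoms the argument is clean, but for \Watom{}s the uniqueness-of-root step breaks down because a \Watom may carry auxiliary roots beyond its main one, and the crossing-path argument would need to be reapplied to the constituent spatial atoms of the partial unfolding encoded in the \Watom.
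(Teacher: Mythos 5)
Your strategy coincides with the paper's: pick a satisfied disjunct $\exists \vec{x}.(\alpha_1 * \cdots * \alpha_k)$ of $\aseq$ using validity, show that each constituent heap $\aheap^j$ sits entirely inside $\aheap_1$ or entirely inside $\aheap_2$ via a reachability/first-crossing argument combining Proposition~\ref{prop:path}, Lemma~\ref{lem:fr}, the allocation condition of Definition~\ref{def:svalid} and the non-applicability of \hsep, then regroup. The paper runs this as a direct contradiction (``some $\aheap_i'$ is contained in neither $\aheap_1$ nor $\aheap_2$'') rather than through your partition $J_1, J_2$ of the atoms by the side on which their \mroot falls, but the two formulations are interchangeable, and your crossing argument (first location of the path leaving $\dom{\aheap_1}$ must be $\astore(z)$ for a free $z$ of $\aform_1$, which is then forced into $\alloc{\aform_2}$ and hence into $\rootsr{}$ of the chosen disjunct, colliding with the disjointness of the $\aheap^j$) is exactly the paper's.

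The gap is the case you flag and leave open, and it cannot be left open: {\Watom}s are the \emph{main} case here, since the lemma is invoked (in Theorem~\ref{theo:comp_term}) precisely on sequents to which \hsep\ has already been applied exhaustively, and \hsep\ is what introduces {\Watom}s on the right. That said, your stated reason for the breakdown — auxiliary roots — is not where the difficulty lies. A \Watom $\MWs{\atail}{p(\vec{y})}{\vec{u}}{\theta}$ has exactly one \mroot, namely $y_1\theta$, and Proposition~\ref{prop:roots} guarantees it is allocated in the \Watom's heap; the auxiliary roots live in $\rootsl{}$, whereas both the side condition of \hsep\ and your partition $J_1,J_2$ are phrased in terms of $\rootsr{}$ only, so the ``uniqueness of the main root'' step you rely on goes through verbatim for {\Watom}s. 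The ingredient that genuinely needs justification is the extension of Proposition~\ref{prop:path} (reachability of every location of the heap from the main root) from predicate atoms to {\Watom}s: this holds because the atoms of $\atail$ removed by the partial unfolding are leaves of the unfolding derivation, so by progress and connectivity the remaining allocated cells stay connected to $\astore(y_1\theta)$ — but it is a claim about partial unfoldings, not an instance of the proposition as stated. (The paper's own proof also applies Proposition~\ref{prop:path} to a \Watom without comment, so on this point your proof is no less rigorous than the original; but as submitted, with the \Watom case explicitly excluded, the proof does not establish the lemma.)
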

\begin{proof}
Assume that there exists a structure $(\astore,\aheap_1 \dunion \aheap_2)$ such that $(\astore,\aheap_i) \modelsr \aform_i$ for $i = 1,2$
and $\forall x\in \fv{\aform_i}, \astore(x) \in \dom{\aheap_i} \implies x\in \alloc{\aform_i}$ ($\dagger$).
Since $\aform_1 * \aform_2 \vdashr \aseq$ is valid, and $\aseq$ is in prenex form, $\aseq$ contains a formula $\aformB = \exists\vec{x}. (\aformB_1 * \cdots * \aformB_n)$ such that
$(\astore,\aheap_1 \dunion \aheap_2) \modelsr \aformB$ and
each formula $\aformB_i$ ($i \in \interv{1}{n}$) is either a points-to atom or a \Watom. 
Thus there exist a store $\astore'$, coinciding with $\astore$ on all variables not occurring in $\vec{x}$ and disjoint heaps $\aheap_1',\dots,\aheap_n'$  such that
$(\astore',\aheap_i') \modelsr \aformB_i$ for $i = 1,\dots,n$
and
$\aheap_1 * \aheap_2 = \aheap_1' * \dots * \aheap_n'$. By $\alpha$-renaming, we assume that $\fv{\aform_1 * \aform_2} \cap  \vec{x} = \emptyset$, so that $\astore$ and $\astore'$ coincide on $\fv{\aform_1 * \aform_2}$.
Note that by hypothesis
$\aform_1$ and $\aform_2$ are separating conjunctions of atoms.

Assume that one of the heap $\aheap_i'$ for $i = 1,\dots,n$ is such that 
$\aheap_i' \not \subseteq \aheap_1$
and
$\aheap_i' \not \subseteq \aheap_2$.
This entails that $\aformB_i$ cannot be a points-to atom because
$\card{\dom{\aheap_i'}} = 1$ in this case, hence that $\aformB_i$ is a \Watom.
By Proposition \ref{prop:root_unsat}, 
$\astore'(\roots{\anatom_i}) \subseteq \dom{\aheap_i'} \subseteq \dom{\aheap_1} \cup\dom{\aheap_2}$.
We assume by symmetry that $\astore'(\roots{\anatom_i}) \subseteq \dom{\aheap_1}$, the other case is similar.
Let $\ell \in \dom{\aheap_i'}  \setminus \dom{\aheap_1}$.
By Proposition \ref{prop:path}, 
there exists 
a sequence of locations $\ell_1,\dots,\ell_m$ with $\{ \ell_1 \} = \astore'(\roots{\anatom_i})$, $\ell_m = \ell$
and $\ell_{i+1} \in \aheap_i'(\ell_i)$ for $i = 1,\dots,m$. 
We may assume, by considering the location $\ell$ associated with the minimal 
sequence ending outside of $\dom{\aheap_1}$, that $\ell_i \in \dom{\aheap_1}$ for all $i < m$.
This entails that $\ell \in \aheap_1(\ell_{m-1})$ so that $\ell \in \locs{\aheap_1} \setminus \dom{\aheap_1}$, and 
by Lemma \ref{lem:fr} we deduce that $\ell = \astore(x)$ 
for some $x\in \fv{\aform_1}$.
Furthermore, since $\astore(x) \in \dom{\aheap_2}$ we must have $x\in \alloc{\aform_2}$ by ($\dagger$).
Since the rule \hsep\ is not applicable, the variable $x$ must occur in $\roots{\aformB_1 * \dots * \aformB_n}$.
This entails that 
there exists $j$ such that $\astore(x) \in \dom{\aheap_j'}$ by Proposition \ref{prop:root_unsat}.
But $x$ cannot be the \mroot of $\aformB_i$, because $\aheap_1$ and $\aheap_2$ are disjoint, $\astore(x)\in \dom{\aheap_2}$ and $\astore'(\roots{\anatom_i} \subseteq \dom{\aheap_1}$, hence $i\neq j$. This contradicts
the fact that the $\aheap_1',\dots,\aheap_n'$ are disjoint.

Therefore, every heap $\aheap_i'$ is a subheap of either $\aheap_1$ or $\aheap_2$.
By regrouping the formulas $\aformB_i$ such that $\aheap_i'\subseteq \aheap_j$ in a formula $\aformB_j'$ (for $j = 1,2$), 
we get a decomposition of $\exists \vec{x}. (\aformB_1 * \dots * \aformB_n)$ of the form:
$\exists \vec{x}. (\aformB_1' * \aformB_2')$, where
$(\astore',\aheap_j) \modelsr \aformB_j'$ for $j = 1,2$.
Thus $\aform_1 * \aform_2  \vdashr \aseq$ is \svalid.
\end{proof}

\begin{lemma}
\label{lem:ED}
Let $\aform_1 * \aform_2 \vdashr \exists \vec{y}.\exists x. \aformC, \aseq$ be a \svalid, {\purelyspatial} and \qprenex sequent, 
with $\aform_i \not = \emp$ for $i = 1,2$.
There exists an application of the rule \ED\ with conclusion
$\aform_1 * \aform_2 \vdashr \exists \vec{y}. \exists x. \aformC, \aseq$, such that the premise is \svalid and \qprenex.
\end{lemma}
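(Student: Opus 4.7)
The plan is to take $\{\aformC_1, \ldots, \aformC_m\}$ to be the set of all formulas $\exists \vec{y}. ((\exists x. \aformB) * \repl{\aformB'}{x}{x'})$ arising from every syntactic decomposition $\aformC = \aformB * \aformB'$ (modulo AC) with $\aformB' \neq \emp$. Because the sequent is \purelyspatial, $\vart{\aformB'} = \emptyset$, so the rule's side condition $x \notin \vart{\aformB'}$ is vacuous (taking $\atform = \atformB = \emp$), every such $\aformC_i$ is an admissible \ED-instance, and the set is finite since $\aformC$ has finitely many syntactic decompositions. Each $\aformC_i$ is an outer existential wrapping a separating conjunction of two prenex formulas, so the premise remains \qprenex.

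To establish \svalidity of the premise, fix a structure $(\astore, \aheap_1 \dunion \aheap_2)$ satisfying the preconditions of Definition~\ref{def:svalid} on the (enlarged) free variables of the premise, which include the fresh $x'$. By injectivity of $\astore$, freshness of $x'$ and Lemma~\ref{lem:fr} applied to $\aform_1,\aform_2$, the location $\astore(x')$ lies outside $\locs{\aheap_1 \dunion \aheap_2}$. The same structure satisfies the conclusion's preconditions, so \svalidity of the conclusion yields a matching right-hand side formula: if it lies in $\aseq$ we are done immediately; otherwise the matching formula is $\exists \vec{y}. \exists x. \aformC$, providing a decomposition $\aformC = \aformB_1 * \aformB_2$ together with a witness $\astore''$ coinciding with $\astore$ outside $\vec{y} \cup \{x\}$ such that $(\astore'', \aheap_i) \modelsr \aformB_i$. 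Since $\aform_1,\aform_2 \ne \emp$ force $\aheap_1,\aheap_2$ to be non-empty (Proposition~\ref{prop:hform}), each $\aformB_i$ contains a spatial atom, hence both orientations of the decomposition correspond to formulas in our chosen set.

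Let $\ell = \astore''(x)$, and split on its position. If $\ell = \astore(x_k)$ for some $x_k \in \fv{\aform_1} \cap \fv{\aform_2}$, then $\exists \vec{y}. \repl{\aformC}{x}{x_k}$ appears in the premise, and the decomposition $\repl{\aformB_1}{x}{x_k} * \repl{\aformB_2}{x}{x_k}$ matches via the store that agrees with $\astore''$ on $\vec{y}$ and with $\astore$ elsewhere (the variable $x$ no longer occurs after the substitution, and $\astore''(x_k) = \ell$ by injectivity). Otherwise I pick the side for $x$: assuming $\ell \notin \locs{\aheap_2}$ (the symmetric case is analogous), I use $\aformC_i = \exists \vec{y}. ((\exists x. \aformB_1) * \repl{\aformB_2}{x}{x'})$. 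Lemma~\ref{lem:outsideheap}, applicable because $\vart{\aformB_2} = \emptyset$ and $\ell \notin \locs{\aheap_2}$, lets me swap the value $\ell$ of $x$ for $\astore(x')$ inside $\aformB_2$ without losing satisfaction, so $(\cdot, \aheap_2) \modelsr \repl{\aformB_2}{x}{x'}$ holds with a store coinciding with $\astore$ outside $\vec{y}$, while $(\cdot, \aheap_1) \modelsr \exists x. \aformB_1$ is witnessed by $\astore''$.

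The hard part will be justifying the side-selection in the last step: outside the shared-variable case, why can one always achieve $\ell \notin \locs{\aheap_i}$ for at least one $i$? By Lemma~\ref{lem:fr}, $\ell \in \locs{\aheap_i} \setminus \dom{\aheap_i}$ forces $\ell = \astore''(z_i)$ for some free variable $z_i$ of the (quantifier-free) body $\aformB_i$; disjointness of the two heaps makes at most one of them contribute via $\dom{\cdot}$, and for the remaining outer-reference case the coincidence of $\astore''$ with the injective $\astore$ on variables outside $\vec{y} \cup \{x\}$ routes any $z_i$ not equal to $x$ or in $\vec{y}$ back to a shared free variable of $\aform_1 * \aform_2$—collapsing into the already-handled shared-variable case. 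Carrying out this combinatorial routing cleanly, while also treating the sub-case where $\ell \in \dom{\aheap_i}$ by pushing $x$ onto the opposite side, is the delicate core of the argument.
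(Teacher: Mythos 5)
Your overall strategy coincides with the paper's: instantiate \ED\ with every decomposition $\aformC = \aformB * \aformB'$ (with $\atform = \atformB = \emp$, the theory side conditions being vacuous for \purelyspatial sequents), use \svalidity of the conclusion to obtain a matching formula, handle separately the case where $\astore''(x)$ coincides with $\astore(x_k)$ for a shared variable $x_k$, and otherwise pick a side $i$ with $\astore''(x)\notin\locs{\aheap_i}$ and conclude via Lemma~\ref{lem:outsideheap}. Up to the side-selection step the argument is sound.

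The gap is exactly where you locate it: you never prove that a suitable side exists, and the sketch you give would not close the hole. The correct argument applies Lemma~\ref{lem:fr} to the \emph{left-hand} conjuncts $\aform_1,\aform_2$ with the store $\astore$ (not to the right-hand side bodies with $\astore''$ --- those may contain {\Watom}s, for which Lemma~\ref{lem:fr} is not stated), giving $\locs{\aheap_i}\setminus\dom{\aheap_i}\subseteq\astore(\fv{\aform_i})$. Writing $\ell=\astore''(x)$ and supposing $\ell\in\locs{\aheap_1}\cap\locs{\aheap_2}$: since the domains are disjoint, $\ell\notin\dom{\aheap_i}$ for some $i$, say $i=1$, so $\ell=\astore(z)$ with $z\in\fv{\aform_1}$. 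If also $\ell\notin\dom{\aheap_2}$, then $\ell=\astore(z')$ with $z'\in\fv{\aform_2}$, and injectivity of $\astore$ forces $z=z'$, a shared variable. If instead $\ell\in\dom{\aheap_2}$, the third condition of Definition~\ref{def:svalid} ($\astore(w)\in\dom{\aheap_2}\implies w\in\alloc{\aform_2}$) yields $z\in\fv{\aform_2}$, so $z$ is again shared. Either way $\ell=\astore(x_k)$ for a shared $x_k$, contradicting the case assumption. Your ``routing'' step instead traces $\ell$ through free variables of the right-hand side bodies under $\astore''$ and asserts without justification that the resulting variable is shared between $\aform_1$ and $\aform_2$; and your treatment of the sub-case $\ell\in\dom{\aheap_i}$ (``pushing $x$ onto the opposite side'') presupposes $\ell\notin\locs{\aheap_{3-i}}$, which is precisely what must be shown and which requires the third bullet of Definition~\ref{def:svalid} --- a hypothesis your argument never invokes.
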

\begin{proof}
We apply the rule \ED\ where $\{ \aformC_1,\dots,\aformC_m \}$
is the set of all formulas satisfying the conditions of the rule,  the formulas $\aform$ and $\aform'$ in the application condition of \ED\ are prenex formulas and $\atform = \emp$. 
Recall from the rule definition that $\mnset{x_1, \ldots, x_n} = \fv{\aform}\cap \fv{\aform'}$. It is straightforward to verify that the premise is \qprenex. Note that both $\aformB$ and $\aformB'$ are {\purelyspatial} by hypothesis. 
Let $(\astore,\aheap_1 \dunion \aheap_2)$ be a structure such that for all $i = 1,2$,
 $(\astore,\aheap_i) \modelsr \aform_i$ and $\forall x\in \fv{\aform_i}, \astore(x) \in \dom{\aform_i} \implies x\in \alloc{\aform_i}$. 
 We have $\aheap_i \not = \emptyset$, because
 $\aform_i\not = \emp$.
 We show that the right-hand side of the premise contains a formula satisfying the conditions of Definition \ref{def:svalid}. 
 Since $\aform_1 * \aform_2 \vdashr \exists \vec{y}.\exists x. \aformC, \aseq$  is \svalid 
 there exist a formula of the form $\exists \vec{x}. (\aformB_1 * \aformB_2)$ in $\exists \vec{y}. \exists x. \aformC, \aseq$
and a store $\astore'$ coinciding with $\astore$ on every variable not occurring in $\vec{x}$
such that 
$(\astore',\aheap_i) \modelsr \aformB_i$ for $i = 1,2$.
Since the considered sequent is \qprenex, $\aformB_1$ and $\aformB_2$ are in prenex form. 
If $\exists \vec{x}. (\aformB_1 * \aformB_2)$ occurs in $\aseq$ then the proof is completed.
Otherwise we have $\vec{y}.x = \vec{x}$ and $\aformC = \aformB_1 * \aformB_2$.
If $\astore'(x) = \astore(x_j)$, for some $j = 1,\dots,n$, then
$(\astore',\aheap_i) \modelsr \repl{\aformB_i}{x}{x_j}$, and
$\repl{\aformB_1}{x}{x_j} * \repl{\aformB_2}{x}{x_j} = \repl{\aformC}{x}{x_j}$ occurs on the right-hand side of the premise, by definition of the rule \ED.
Thus, the result also holds in this case.
Now assume that $\astore'(x) \not = \astore(x_j)$, for all $j = 1,\dots,n$. 
By Lemma \ref{lem:fr}, 
since $(\astore,\aheap_i) \models \aform_i$, we have $\locs{\aheap_i} \setminus \dom{\aheap_i} \subseteq \astore(\fv{\aform_i})$, for $i = 1,2$. Since $\astore'(x) \not = \astore(x_j)$, for all $j = 1,\dots,n$, $\astore'(x) \not \in \astore(\fv{\aform_i})$ hence $\astore'(x)\not \in \locs{\aheap_i} \setminus \dom{\aheap_i}$.
If $\astore'(x) \in \locs{\aheap_1} \cap \locs{\aheap_2}$, then we would have 
$\astore'(x) \in \dom{\aheap_1} \cap \dom{\aheap_2}$, contradicting the fact that 
$\aheap_1$ and $\aheap_2$ are disjoint. This entails that $\astore'(x) \not \in \locs{\aheap_1} \cap \locs{\aheap_2}$.
 We assume, by symmetry, that $\astore'(x) \not \in \locs{\aheap_2}$. Then by Lemma \ref{lem:outsideheap}, we deduce that
$(\astore'',\aheap_2) \modelsr \aformB_2$, for every 
store $\astore''$ coinciding with $\astore'$ on all variables distinct from $x$.
We also have $(\astore'',\aheap_1) \modelsr \exists x. \aformB_1$, since
 $(\astore',\aheap_1) \modelsr \aformB_1$.
 By definition of the rule, the right-hand side of the premise contains a formula $\exists \vec{y}. ((\exists x. \aformB_1) * \repl{\aformB_2}{x}{x'})$, hence the conditions of Definition \ref{def:svalid} are fulfilled 
 (note that $\aformB_2$ cannot be $\emp$, since $\aheap_2 \not = \emptyset$).
\end{proof}

  
\begin{lemma}
\label{lem:hdec}
Let $\aform_1 * \aform_2 \vdashr \aseq$ be a {\svalid} and \purelyspatial sequent.
If  rules \ED\ and \HC\ do not apply on $\aform_1 * \aform_2 \vdashr \aseq$ 
then rule \hdec\ applies on $\aform_1 * \aform_2 \vdashr \aseq$, with a 
valid  premise.
\end{lemma}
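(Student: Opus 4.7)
The plan is to exhibit an application of \hdec\ all of whose premises are valid. Write $\aseq$ as $\aformB_1 * \aformB_1', \ldots, \aformB_n * \aformB_n'$; the non-applicability of \ED\ combined with the \purelyspatial\ hypothesis (which forces $\vart{\aformB'} = \emptyset$ for every relevant subformula $\aformB'$, and for which one can take $\atform = \atformB = \emp$ in the rule schema) ensures that the formulas in $\aseq$ carry no existential prefix straddling the binary split witnessed by strong validity. I choose the collections $I_1,\ldots,I_m$ and $J_1,\ldots,J_l$ required by \hdec\ as the minimal (w.r.t.\ inclusion) elements of, respectively,
\[\mathcal{S}_1 = \{X \subseteq \interv{1}{n} : \aform_1 \vdashr \aformB_{j}\ (j \in X) \text{ is valid}\}, \quad \mathcal{S}_2 = \{Y \subseteq \interv{1}{n} : \aform_2 \vdashr \aformB_{j}'\ (j \in Y) \text{ is valid}\}.\]
Minimality makes these two families pairwise incomparable as required by the side conditions of \hdec, and by construction the corresponding premises $\aform_1 \vdashr \aformB_j\,(j\in I_i)$ and $\aform_2 \vdashr \aformB_j'\,(j\in J_j)$ are all valid; only the covering condition of \hdec\ remains to be established.

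Suppose for contradiction that some $X \subseteq \interv{1}{n}$ satisfies $X \notin \mathcal{S}_1$ and $\interv{1}{n}\setminus X \notin \mathcal{S}_2$. I will build a single structure simultaneously witnessing both failures, contradicting strong validity. Fix any injective store $\astore$ defined on the free variables of the sequent, and pick two infinite location sets $\asetloc_1, \asetloc_2$ that are pairwise disjoint and disjoint from $\img{\astore}$. Applying Lemma \ref{lem:rename_counter_example} twice yields counter-models $(\astore, \hat{\aheap}_1)$ of $\aform_1 \vdashr \aformB_{j}\ (j\in X)$ and $(\astore, \hat{\aheap}_2)$ of $\aform_2 \vdashr \aformB_{j}'\ (j\in \interv{1}{n}\setminus X)$, with $\dom{\hat{\aheap}_i} \subseteq \asetloc_i \cup \astore(\alloc{\aform_i})$ for $i = 1,2$. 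The heaps $\hat{\aheap}_1$ and $\hat{\aheap}_2$ are disjoint: the $\asetloc_i$'s are disjoint and outside $\img{\astore}$, while $\alloc{\aform_1} \cap \alloc{\aform_2} = \emptyset$ thanks to the non-applicability of \HC\ (combined with injectivity of $\astore$). The containment on $\dom{\hat{\aheap}_i}$ also delivers the allocation precondition of Definition \ref{def:svalid}, since $\astore(x) \in \dom{\hat{\aheap}_i}$ with $x \in \fv{\aform_i}$ forces, via the disjointness of $\asetloc_i$ from $\img{\astore}$ and injectivity of $\astore$, that $x \in \alloc{\aform_i}$.

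Strong validity applied to $(\astore, \hat{\aheap}_1 \dunion \hat{\aheap}_2)$ then produces an index $k \in \interv{1}{n}$ together with a witnessing store $\astore'$; by the first paragraph the existential prefix is empty, so $\astore' = \astore$ and one obtains $(\astore, \hat{\aheap}_1) \modelsr \aformB_k$ together with $(\astore, \hat{\aheap}_2) \modelsr \aformB_k'$. If $k \in X$ this contradicts the choice of $\hat{\aheap}_1$; otherwise $k \in \interv{1}{n}\setminus X$ and we contradict the choice of $\hat{\aheap}_2$. The main obstacle is the very first step: reducing the arbitrary existentially quantified form $\exists \vec{x}.(\aformB * \aformB')$ permitted by the definition of strong validity to the prefix-free form demanded by \hdec. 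Without this reduction the witnessing $\astore'$ would differ from $\astore$ on $\vec{x}$ and the clean contradiction with the fixed counter-models $\hat{\aheap}_1, \hat{\aheap}_2$ would break — one would then have to redo the counter-model constructions taking care that the values of $\vec{x}$ under $\astore$ be chosen so as to already rule out $\aformB_k$ and $\aformB_k'$, which is not automatic. It is precisely the combination of the \purelyspatial\ assumption with the non-applicability of \ED\ that eliminates this obstruction.
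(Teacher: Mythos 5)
Your proof is correct and follows essentially the same route as the paper's: choose the inclusion-minimal valid index-sets for the two components, then establish the covering condition by contradiction, invoking Lemma \ref{lem:rename_counter_example} twice with disjoint fresh location sets and the non-applicability of \HC\ to glue the two counter-models into a single structure that violates strong validity. The only cosmetic difference is that the paper explicitly sets aside the formulas of $\aseq$ that are not separating conjunctions before the argument, whereas you fold the elimination of existential prefixes (via non-applicability of \ED) and of non-conjunctive formulas into one preliminary remark; the substance is identical.
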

\begin{proof}
Let $\aseqB$ be the subsequence of formulas in $\aseq$ that are not separated conjunctions, 
so that $\aseq$ is of the form $\aformB_1^1 * \aformB_1^2,\dots,\aformB_n^1 * \aformB_n^2, \aseqB$.
By definition of the notion of \svalidity, if $(\astore,\aheap_1 \dunion \aheap_2) \modelsr \aform_1 * \aform_2$ and $\forall x\, (\astore(x) \in \dom{\aheap_i} \implies x \in \alloc{\aform_i})$, then $\aseq$ contains a formula of the form $\exists \vec{x}. (\aformB_1 * \aformB_2)$ such that
$(\astore',\aheap_1) \models \aformB_1$ 
and
$(\astore',\aheap_2) \models \aformB_2$, where $\astore'$ coincides with $\astore$ on all 
variables not occurring in $\vec{x}$.
By Lemma \ref{lem:ED}, none of the formulas in $\aseq$ may be existentially quantified, since otherwise rule \ED\ applies. Thus $\vec{x}$ is empty,  
$\exists \vec{x}. (\aformB_1 * \aformB_2) = \aformB_1 * \aformB_2$ does not occur in $\aseqB$ and there exists $i =1,\dots,n$ such that
$\aformB_1 = \aformB_i^1$ and 
$\aformB_2 = \aformB_i^2$. 

Let $I_1,\dots,I_m$ denote the inclusion-minimal subsets of $\interv{1}{n}$ such that for all $j =1, \ldots, m$, $\aform_1 \vdashr \bigvee_{c\in I_j} \aformB_c^1$ is valid. Similarly, let
$J_1,\dots,J_l$ denote the inclusion-minimal subsets of $\interv{1}{n}$ such that for all $j = 1, \ldots, l$, 
$\aform_2 \vdashr \bigvee_{c\in J_j} \aformB_c^2$ is valid.
If the rule \hdec\ applies with such sets, then, by construction all the premises are valid, hence the proof is completed.
Otherwise, by the application condition of the rule, there exists a set $X$ such that
$X \not \supseteq I_i$ for all $i = 1,\dots,n$
and
$\interv{1}{n} \setminus X \not \supseteq  J_j$ for all $j = 1,\dots,l$.
This entails that $\aform_1 \vdashr \bigvee_{c\in X} \aformB_c^1$ is not valid, since otherwise $X$ would contain one of the $I_i$ which are inclusion-minimal by construction. Similarly,
$\aform_2 \vdashr \bigvee_{c\in \interv{1}{n} \setminus X} \aformB_c^2$ is not valid because otherwise $\interv{1}{n} \setminus  X$ would contain one of the $J_j$.
Therefore, there exist injective stores $\astore_i$ and heaps $\aheap_i$ for $i = 1,2$, such that
$(\astore_i,\aheap_i) \modelsr \aform_i$,
$(\astore_1,\aheap_1) \not \modelsr \bigvee_{c\in X} \aformB_c^1$
and
$(\astore_2,\aheap_2) \not \modelsr \bigvee_{c\in \interv{1}{n} \setminus X} \aformB_c^2$.
Let $\astore$ be an injective store such that $\Loc \setminus \img{\astore}$ is infinite, and let $\asetloc_1,\asetloc_2$ be disjoint infinite subsets of $\Loc\setminus \img{\astore}$. 
By applying Lemma \ref{lem:rename_counter_example} twice on $\aheap_1,\aheap_2$ with $\asetloc_1$ and $\asetloc_2$ respectively, we obtain
two heaps
$\aheap_1'$ and $\aheap_2'$ such that:
\begin{compactitem}
	\item $\dom{\aheap_i'} \subseteq  L_i \cup \astore(\alloc{\aform_i})$ for $i = 1,2$, 
	\item $(\astore,\aheap_i') \modelsr \aform_i$ for $i = 1, 2$,
	\item $(\astore,\aheap_1') \not \modelsr \bigvee_{c\in X} \aformB_c^1$,
	\item $(\astore,\aheap_2') \not \modelsr \bigvee_{c\in \interv{1}{n} \setminus X} \aformB_c^2$.
\end{compactitem}
In particular, since $L_1$ and $L_2$ are disjoint, we have $\dom{\aheap_1'} \cap \dom{\aheap_2'} \subseteq \astore(\alloc{\aform_1}) \cap \astore(\alloc{\aform_2})$.
Assume that $\dom{\aheap_1'} \cap \dom{\aheap_2'}$ is nonempty, and contains a location $\ell$.
Then for $i=1,2$, there exists a variable $x_i \in \fv{\aform_i}$ such that $\ell = \astore(x_i)$. Since $\astore$ is injective, necessarily $x_1 = x_2$, and $\alloc{\aform_1} \cap \alloc{\aform_2} \not = \emptyset$; this entails that
the rule \HC\ is applicable, which contradicts the hypothesis of the lemma.

We deduce that $\aheap_1'$ and $\aheap_2'$ are disjoint. Then we have
$(\astore,\aheap_1' \dunion \aheap_2') \modelsr \aform_1 * \aform_2$
and
for all $c \in \interv{1}{n}$, either $(\astore,\aheap_1') \not \modelsr \aformB_c^1$ or
$(\astore,\aheap_2') \not \modelsr \aformB_c^2$. 
Furthermore, if $\astore(x)\in \dom{\aheap_i'}$, then 
$\astore(x)\in \astore(\alloc{\aform_i})$, thus
$x\in \alloc{\aform_i}$ because $\astore$ is injective.
This contradicts that fact that $\aform_1 * \aform_2 \vdashr \aseq$ is \svalid.

\end{proof}


 \newcommand{\vinit}{N_{\mathit init}}
  \newcommand{\vsid}{N_{\asid}}
 \newcommand{\vex}{N_{\exists}}
 \newcommand{\vmax}{N}

  \newcommand{\maxarity}{A}
  \newcommand{\maxlength}{P}
  \newcommand{\maxvar}{V}
  
  \newcommand{\nbPreds}{\card{\preds}}

%
%

\newcommand{\vared}{variable-redundant\xspace}
\newcommand{\rootred}{root-redundant\xspace}


\begin{definition}
	Let $\aform \vdashr \aformB, \aseq$ be a sequent.
	The formula $\aformB$ is {\em \rootred} if there exists a variable $x$ such that either $x \in \rootsr{\aformB}$ 
	and $x \not \in \alloc{\aform}$, or $x \in \rootsl{\aformB}$ and $x \not \in \fv{\aform}$.
	This formula is {\em \vared} if there exist two 
	injective substitutions $\sigma,\theta$ such that 
	$\aformB\sigma \in \aseq\theta$ and $(\dom{\sigma} \cup \dom{\theta}) \cap \fv{\aform} = \emptyset$.
\end{definition}

\begin{proposition}
\label{prop:rootsl}
If $(\astore,\aheap) \models \aform$ then $\astore(\rootsl{\aform}) \subseteq \locs{\aheap}$.
\end{proposition}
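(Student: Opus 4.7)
The proof will go by induction on the structure of $\aform$. The cases $\aform = \emp$, $\aform$ a \tformula, and $\aform = x \mapsto (y_1,\dots,y_\rank)$ are trivial because $\rootsl{\aform} = \emptyset$ in each case. The cases $\aform = \aform_1 * \aform_2$, $\aform = \aform_1 \vee \aform_2$, and $\aform = \exists y.~\aformB$ follow from the induction hypothesis, since $\rootsl{\aform}$ is built up from the $\rootsl{}$-sets of the subformulas and the heap $\aheap$ either contains or extends the subheaps on which the subformulas are evaluated. (For the existential case one uses that $\astore'$ coincides with $\astore$ on all variables distinct from $y$, and $y \not \in \rootsl{\aformB}$ can be assumed by $\alpha$-renaming.) Similarly for $\aform = p(\vec{x})$, we use the semantics $p(\vec{x}) \unfoldto{\asid} \aformB$ and $(\astore,\aheap) \modelsr \aformB$, invoking the induction hypothesis on $\aformB$.

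The main case is $\aform = \MWs{\atail}{p(y_1,\dots,y_{\ar{p}})}{\vec{u}}{\theta}$. By Definition \ref{def:semMW}, there exists an unfolding $p(y_1,\dots,y_{\ar{p}}) \unfoldto{\asid}^+ \exists \vec{y}.\,(\atail' * \aformB)$, a substitution $\sigma$ with $\dom{\sigma} \subseteq \vec{y} \cap \fv{\atail'}$, and a store $\astore'$ coinciding with $\astore$ outside $\vec{y}$ such that $\atail'\sigma = \atail$ and $(\astore',\aheap) \modelsr \aformB\sigma\theta$. By $\alpha$-renaming one may assume that $\vec{y}$ is disjoint from $\vec{u}\theta \cup \fv{\atail} \cup \{y_1,\dots,y_{\ar{p}}\}$. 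Pick any $x \in \rootsl{\aform}$: then $\atail$ contains an atom $q(z_1,\dots,z_{\ar{q}})$ with $z_1\theta = x$, and $\atail'$ contains a corresponding atom $q(z_1',\dots,z_{\ar{q}}')$ with $z_1'\sigma = z_1$.

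The key structural observation, proved by a straightforward induction on the length of the unfolding derivation using the progress and connectivity conditions, is the following: for every predicate atom $q(w_1,\dots,w_{\ar{q}})$ occurring in $\atail' * \aformB$, the variable $w_1$ is either the first parameter $y_1$ of $p$, or else appears as one of the components $v_i$ ($i \geq 1$) of some points-to atom $u \mapsto (v_1,\dots,v_\rank)$ occurring in $\aformB$ (since $\atail'$ consists only of predicate atoms, no points-to atom can be in $\atail'$). In the first case, $z_1 = y_1$ so $x = y_1\theta$ is the main root of the \Watom, and Proposition \ref{prop:roots} yields $\astore(x) \in \dom{\aheap} \subseteq \locs{\aheap}$. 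In the second case, $\aformB\sigma\theta$ contains the atom $u\sigma\theta \mapsto (v_1\sigma\theta,\dots,v_\rank\sigma\theta)$, with $v_i\sigma\theta = z_1'\sigma\theta = z_1\theta = x$; since $(\astore',\aheap) \modelsr \aformB\sigma\theta$, the corresponding tuple lies in $\aheap$ and hence $\astore'(x) \in \locs{\aheap}$. By the $\alpha$-renaming assumption, $x = z_1\theta \not \in \vec{y}$, so $\astore(x) = \astore'(x) \in \locs{\aheap}$ as required.

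The main technical obstacle is the structural observation in the third paragraph, which requires checking that no unfolding step can break the invariant that every predicate atom's first argument is connected to $y_1$ through the chain of points-to atoms generated by earlier unfoldings; this is essentially the combined content of the progress and connectivity conditions of Definition \ref{def:sid}, and can be stated as a standalone lemma if desired. Beyond this, the care needed is purely bookkeeping: namely, tracking which variables have been captured by $\vec{y}$ versus which remain free, and using $\alpha$-renaming to ensure that the auxiliary root $x$ is unaffected by the store-change from $\astore$ to $\astore'$.
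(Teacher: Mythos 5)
Your proof is correct and follows essentially the same route as the paper's: induction on the satisfiability relation, with the only non-trivial case being the \Watom, where the progress and connectivity conditions guarantee that the first argument of any predicate atom surviving in $\atail'$ occurs as a component of a points-to atom in the unfolded part $\aformB$, whence its image under the store lies in $\locs{\aheap}$. Your version is in fact slightly more explicit than the paper's (separating out the case where the auxiliary root coincides with the main root, and spelling out the $\alpha$-renaming needed to conclude $\astore(x)=\astore'(x)$), but the key idea and its justification are the same.
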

\begin{proof}
The proof is by induction on the satisfiability relation. We only handle the case where 
$\aform = \MWs{\atail}{\anatom}{\vec{u}}{\theta}$, the other cases are straightforward.
By definition, there exists a formula $\exists \vec{x}. (\atail' * \aformB)$, a substitution $\sigma$ with $\dom{\theta} \subseteq \vec{x}$ and a store $\astore'$
coinciding with $\astore \circ \sigma$ on all variables not occurring in $\vec{x}$ such that
$\anatom \unfoldto{\asid}^+ \exists \vec{x}. (\atail' * \aformB)$, 
$\atail'\sigma = \atail$
and $(\astore',\aheap)\models \aformB\sigma\theta$.
Let $x\in \rootsl{\aform}$. By definition, $\atail$ contains a predicate atom of the form $p(x,\vec{y})$, thus
$\atail'$ contains a predicate atom of the form $p(x',\vec{y}')$, with $x'\sigma= x$.
This predicate atom must be introduced in the derivation 
$\anatom \unfoldto{\asid}^+ \exists \vec{x}. (\atail' * \aformB)$ by  unfolding some predicate atom 
$q(u,\vec{v})$. By the progress and connectivity condition this entails that 
$\aformB$ contains a points-to atom of the form $u \mapsto (\dots,x',\dots)$.
Since $(\astore',\aheap)\models \aformB\sigma$ 
this entails that $\astore'(x'\sigma) \in \locs{\aheap}$, thus $\astore(x)\in \locs{\aheap}$.
\end{proof}

\begin{proposition}
\label{prop:subst_store}
If  $(\astore \circ \sigma,\aheap) \modelsr \aform$ then
$(\astore,\aheap) \modelsr \aform\sigma$.
\end{proposition}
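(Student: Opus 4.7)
The plan is to prove the statement by induction on the satisfiability relation $\modelsr$, splitting on the syntactic form of $\aform$. The guiding observation is that, for every variable $x$, the value $\astore(x\sigma)$ equals $(\astore \circ \sigma)(x)$, so applying $\sigma$ to a formula on one side has the same effect as precomposing the store with $\sigma$ on the other side.

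First I would handle the base cases. If $\aform$ is a \tformula or $\emp$, then $\aheap$ must be empty and $\astore \circ \sigma \modelst \aform$; since the satisfiability of {\tformula}s depends only on the values assigned to the free variables, this is equivalent to $\astore \modelst \aform\sigma$ by the observation above. If $\aform = x \mapsto (y_1,\dots,y_\rank)$, then $\aheap = \{(\astore(\sigma(x)),\astore(\sigma(y_1)),\dots,\astore(\sigma(y_\rank)))\}$, which is precisely the condition for $(\astore,\aheap) \modelsr \aform\sigma$. The inductive cases $\aform = \aform_1 * \aform_2$ and $\aform = \aform_1 \vee \aform_2$ follow directly from the induction hypothesis on the subformulas, using that $\sigma$ distributes over $*$ and $\vee$.

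The existentially quantified case $\aform = \exists y.~ \aformB$ is the first case requiring care. By $\alpha$-renaming of $\aform$, I would assume that $y \notin \dom{\sigma} \cup \img{\sigma}$, so that $\aform\sigma = \exists y.~\aformB\sigma$. If $(\astore \circ \sigma,\aheap) \modelsr \exists y. \aformB$, there exists a store $\astore''$ coinciding with $\astore \circ \sigma$ on all variables distinct from $y$ and satisfying $(\astore'',\aheap) \modelsr \aformB$. Define $\astore'$ to coincide with $\astore$ except at $y$, where $\astore'(y) = \astore''(y)$. Thanks to the renaming, one checks that $\astore'' = \astore' \circ \sigma$ on every variable, so the induction hypothesis yields $(\astore',\aheap) \modelsr \aformB\sigma$, hence $(\astore,\aheap) \modelsr \exists y.~\aformB\sigma = \aform\sigma$.

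The main obstacle is the \Watom case $\aform = \MWs{\atail}{p(\vec{x})}{\vec{u}}{\theta}$: here $\aform\sigma = \MWs{\atail}{p(\vec{x})}{\vec{u}}{\theta\sigma}$, since the components inside the partially unfolded atom are treated as bound. Unfolding Definition \ref{def:semMW}, there exists $\exists \vec{y}. (\atail' * \aformC)$ with $p(\vec{x}) \unfoldto{\asid}^+ \exists \vec{y}. (\atail' * \aformC)$, a substitution $\sigma'$ with $\dom{\sigma'} \subseteq \vec{y} \cap \fv{\atail'}$ satisfying $\atail'\sigma' = \atail$, and a store $\astore''$ coinciding with $\astore \circ \sigma$ outside $\vec{y}$ such that $(\astore'',\aheap) \modelsr \aformC\sigma'\theta$. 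After $\alpha$-renaming so that $\vec{y} \cap (\dom{\sigma} \cup \img{\sigma}) = \emptyset$, I would define $\astore'$ coinciding with $\astore$ outside $\vec{y}$ and with $\astore''$ on $\vec{y}$, so that $\astore'' = \astore' \circ \sigma$ on every variable of $\aformC\sigma'\theta$. The induction hypothesis then gives $(\astore',\aheap) \modelsr \aformC\sigma'\theta\sigma$. Since the witnesses $\exists \vec{y}. (\atail' * \aformC)$ and $\sigma'$ also witness the semantic clause for $\MWs{\atail}{p(\vec{x})}{\vec{u}}{\theta\sigma}$ (using $\atail'\sigma' = \atail$ unchanged), we conclude $(\astore,\aheap) \modelsr \aform\sigma$. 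The delicate point is keeping track of which variables are free and which are bound inside the \Watom construct, but the $\alpha$-renaming argument makes this routine.
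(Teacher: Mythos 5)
Your proof is correct and follows essentially the same route as the paper's: induction on the satisfiability relation, with the \Watom case resolved by $\alpha$-renaming the unfolding variables $\vec{y}$ away from $\dom{\sigma}\cup\img{\sigma}$, building the store that agrees with $\astore$ off $\vec{y}$ and with the witness store on $\vec{y}$ so that the witness store factors as that store composed with $\sigma$, and reusing the same unfolding and substitution witnesses. The paper only details the \Watom case and leaves the rest implicit, whereas you spell out the other cases (all correctly), so there is nothing to object to.
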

\begin{proof}
The proof is by induction on the satisfiability relation. We only handle the case where 
$\aform = \MWs{\atail}{\anatom}{\vec{u}}{\theta}$. 
By definition, there exists a formula $\exists \vec{x}. (\atail' * \aformB)$, a substitution $\theta$ with $\dom{\theta} \subseteq \vec{x}$ and a store $\astore'$
coinciding with $\astore \circ \sigma$ on all variables not occurring in $\vec{x}$ such that
$\anatom \unfoldto{\asid}^+ \exists \vec{x}. (\atail' * \aformB)$, 
$\atail'\theta = \atail$
and $(\astore',\aheap)\models \aformB\theta\eta$.
By $\alpha$-renaming, we assume that $\vec{x} \cap (\dom{\sigma} \cup \img{\sigma}) = \emptyset$.
Let $\hat{\astore}$ be the store such that $\hat{\astore}(x) = \astore(x)$ if $x\not \in \vec{x}$ and 
$\hat{\astore}(x) = \astore'(x)$ if $x\in \vec{x}$.
By construction we have $\astore' = \hat{\astore} \circ \sigma$.
By the induction hypothesis, we deduce that 
and $(\hat{\astore},\aheap)\models \aformB\theta\sigma$.
Furthermore, $\atail'\theta\sigma = \atail\sigma$, and 
$\anatom\sigma \unfoldto{\asid}^+ \exists \vec{x}. (\atail' * \aformB)\sigma = 
\exists \vec{x}. (\atail'\sigma * \aformB\sigma)$.
 Thus  $(\astore,\aheap) \modelsr \aform\sigma$.
\end{proof}

\begin{lemma}
\label{lem:rootred}
Let $\aform \vdashr \aformB, \aseq$ be a valid \purelyspatial disjunction-free sequent.
If $\aformB$ is \rootunsat, \rootred or \vared,  then $\aform \vdashr \aseq$ is valid.
\end{lemma}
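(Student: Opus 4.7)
The plan is to prove the lemma by contrapositive: assume $\aform \vdashr \aseq$ admits a countermodel $(\astore, \aheap)$ and construct a countermodel of $\aform \vdashr \aformB, \aseq$, which is just $(\astore, \aheap)$ or a carefully modified variant thereof. I will split the analysis according to which of the three properties $\aformB$ satisfies.

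The \rootunsat case is essentially free: by the second part of Proposition \ref{prop:root_unsat}, $\aformB$ itself is unsatisfiable, so any countermodel of $\aform \vdashr \aseq$ trivially fails to satisfy $\aformB$ and hence is a countermodel of $\aform \vdashr \aformB, \aseq$. The \rootred case requires a bit more care. Given a countermodel, first invoke Lemma \ref{lem:alloc_counter_model} to assume (without loss of generality) that $\astore(x) \notin \dom{\aheap}$ for every $x \notin \alloc{\aform}$. If the problematic variable $x$ lies in $\rootsr{\aformB} \setminus \alloc{\aform}$, then $\astore(x) \notin \dom{\aheap}$, which by the first part of Proposition \ref{prop:root_unsat} immediately rules out $(\astore,\aheap) \modelsr \aformB$. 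If instead $x \in \rootsl{\aformB} \setminus \fv{\aform}$, pick a location $\ell$ outside $\locs{\aheap} \cup \img{\astore}$ and reassign $x$ to $\ell$, obtaining $\astore'$; satisfaction of $\aform$ is preserved since $x \notin \fv{\aform}$, and non-satisfaction of $\aseq$ is preserved by applying Lemma \ref{lem:outsideheap} (valid because the sequent is \purelyspatial, so $x \notin \vart{\gamma}$ for every $\gamma \in \aseq$). Then Proposition \ref{prop:rootsl} forces $\astore'(x) \in \locs{\aheap}$ in any model of $\aformB$, which contradicts the choice of $\ell$; hence $(\astore',\aheap) \not\modelsr \aformB$.

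The \vared case is the main obstacle. Here $\aformB\sigma = \aformC\theta$ for some $\aformC \in \aseq$, with both substitutions injective and their domains disjoint from $\fv{\aform}$. Since $(\astore,\aheap) \not\modelsr \aformC$, I aim to build an injective store $\astore'$ that (i) agrees with $\astore$ on $\fv{\aform}$, (ii) still fails to satisfy every formula in $\aseq$, and (iii) satisfies $\astore' \circ \sigma = \astore \circ \theta$ on $\fv{\aformB}$, so that evaluating $\aformB$ under $\astore'$ mirrors evaluating $\aformC$ under $\astore$. Concretely, for each $y \in \fv{\aformB}$ define $\astore'(y)$ so that $\astore'(y\sigma) = \astore(y\theta)$ (which determines $\astore'$ on $\img{\sigma}|_{\fv{\aformB}}$) and, for variables outside $\fv{\aform} \cup \img{\sigma}$, pick fresh locations to preserve injectivity; this is possible because $\Loc$ is infinite. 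Then Proposition \ref{prop:subst_store} (used in both directions, exploiting injectivity) transports $(\astore,\aheap) \not\modelsr \aformC$ into $(\astore',\aheap) \not\modelsr \aformB$; satisfaction of $\aform$ is preserved because $\astore'$ coincides with $\astore$ on $\fv{\aform}$; and non-satisfaction of each $\gamma \in \aseq$ is preserved by Lemma \ref{lem:outsideheap}, again using that the sequent is \purelyspatial, to argue that reassigning variables outside $\fv{\aform}$ to fresh or ``shifted'' locations cannot create a model of $\gamma$ out of a non-model.

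The delicate part of the \vared case is the well-definedness and injectivity of $\astore'$: one must check that whenever two variables $y_1\sigma$ and $y_2\sigma$ of $\aformB$ collide, the corresponding targets $\astore(y_1\theta)$ and $\astore(y_2\theta)$ also collide (this follows from $\aformB\sigma = \aformC\theta$ being a single syntactic formula and $\astore$ being injective), and that the induced assignment can be extended injectively to all of $\vars$ by using infinitely many fresh locations, in the spirit of the construction in Lemma \ref{lem:rename_counter_example}. Once these bookkeeping steps are in place, the three cases combine to yield the lemma.
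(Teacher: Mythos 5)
Your handling of the \rootunsat and \rootred cases is correct, and it uses essentially the same ingredients as the paper (Proposition \ref{prop:root_unsat}, Lemma \ref{lem:alloc_counter_model}, Proposition \ref{prop:rootsl} and Lemma \ref{lem:outsideheap}); the contrapositive phrasing is harmless there. The \vared case, however, contains a genuine gap, and it is precisely the case you identify as the main obstacle.

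The problem is the ``mirroring'' store $\astore'$. Reading your defining clause as ``$\astore'(y) := \astore(w)$ whenever $y\sigma = w\theta$'' (as written, $\astore'(y\sigma)=\astore(y\theta)$ does not typecheck, since $\theta$ acts on the variables of $\aformC$, not of $\aformB$), two things break. First, injectivity: the definition of \vared only constrains $\dom{\sigma}$, not $\img{\sigma}$, so $\sigma$ may map a variable $y \notin \fv{\aform}$ to a variable $z \in \fv{\aform}$. Take $\aform$ with $z \in \fv{\aform}$, $\aformB = q(x,y)$, $\aformC = q(x,z) \in \aseq$, $\sigma = \{y \leftarrow z\}$, $\theta = \id$: then your clause (iii) forces $\astore'(y) = \astore(z)$ while clause (i) forces $\astore'(z) = \astore(z)$, so $\astore'$ identifies $y$ and $z$ and cannot be a countermodel. (The collision you do check, $y_1\sigma = y_2\sigma$, cannot occur at all since $\sigma$ is injective; the real conflict is between clauses (i) and (iii).) Second, preservation of $(\astore',\aheap) \not\modelsr \gamma$ for the remaining $\gamma \in \aseq$: to transport non-satisfaction from $\astore$ to $\astore'$ via Lemma \ref{lem:outsideheap} you must apply that lemma to $\astore'$ (moving the reassigned variable back), which requires the \emph{new} value to lie outside $\locs{\aheap}$. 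Fresh locations are fine, but a ``shifted'' value $\astore(w)$ may well be an allocated location, and then nothing prevents the reassignment from turning a non-model of $\gamma$ into a model. At a minimum you would need to first renormalize the countermodel so that every variable outside $\fv{\aform}$ is sent outside $\locs{\aheap}$ (as the paper does), and even then the injectivity failure above remains.

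The paper's proof sidesteps all of this by never manufacturing a store that falsifies $\aformB$. It normalizes the store so that all variables outside $\fv{\aform}$ land outside $\locs{\aheap}$, and then shows that \emph{if} the normalized structure satisfied $\aformB$, then by Lemma \ref{lem:outsideheap} and Proposition \ref{prop:subst_store} it would also satisfy $\aformB\sigma \in \aseq\theta$ and hence $\aseq$ itself. That is, the substitutions are used to show that satisfaction of $\aformB$ is absorbed by $\aseq$, so the disjunct $\aformB$ contributes nothing; there, the intermediate stores only ever move variables whose current value is outside $\locs{\aheap}$ and need not be injective. You should restructure your \vared case along those lines.
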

\begin{proof}
We consider the three cases separately.
\begin{compactitem}
\item{
If $\aformB$ is \rootunsat, then $\aformB$ is unsatisfiable, thus 
every model of $\aformB,\aseq$ is also a model of $\aseq$. Therefore $\aform \vdashr \aseq$ if and only if $\aform \vdashr \aformB, \aseq$.}
\item{
Assume that $\aformB$ is \rootred and let $(\astore,\aheap)$ be an injective model of $\aform$. By  
Lemma \ref{lem:alloc_counter_model}, applied on any sequent
$\aform \vdashr \aform'$ such that $\aform'$ is unsatisfiable, there exists an injective 
mapping $\amap$ 
and a model $(\astore',\aheap')$ of $\aform$ 
such that 
$\astore = \amap \circ \astore'$, $\aheap = \amap(\aheap')$
and $\astore'(x) \in \dom{\aheap'} \implies x \in \alloc{\aform}$.
Let $\astore''$ be a store coinciding with $\astore'$ on all the variables in $\fv{\aform}$ and such that
$\astore''(y) \not \in \locs{\aheap'}$, 
for all $y \notin \fv{\aform}$. 
Since $\astore''$ and $\astore'$ coincide on $\fv{\aform}$, necessarily $(\astore'',\aheap') \models
\aform$. Also, if $\astore''(x) \in \dom{\aheap'} \subseteq \locs{\aheap'}$, then necessarily $x\in \fv{\aform}$, so that $\astore''(x) = \astore'(x)$ and $x\in \alloc{\aform}$. Therefore,  $\astore''(x) \in \dom{\aheap'} \implies x \in \alloc{\aform}$.
We distinguish two cases.
\begin{compactitem}
\item{Assume that $(\astore'',\aheap') \not \models \aseq$.
Since  $\aform \vdashr \aformB, \aseq$ is valid we have 
$(\astore'',\aheap') \models \aformB,\aseq$ and necessarily, 
 $(\astore'',\aheap') \models \aformB$.
If there exists $x \in \rootsr{\aformB}$ such that $x \not \in \alloc{\aform}$, then 
by Proposition \ref{prop:root_unsat} we have $\astore''(x) \in \dom{\aheap'}$, which contradicts the above implication.
Otherwise, since $\aformB$ is \rootred, there exists a variable $x\in \rootsl{\aformB} \setminus \fv{\aform}$.
Since $x\in \rootsl{\aformB}$, 
we have $\astore''(x) \in \locs{\aheap'}$ by Proposition \ref{prop:rootsl}, 
which contradicts the definition of $\astore''$ because $x\not \in \fv{\aform}$.}
\item{Assume that $(\astore'',\aheap') \models  \aseq$. 
Let $\amap'$ be the function mapping all locations of the form $\astore''(y)$ to $\astore'(y)$ and leaving all 
other locations unchanged.
By definition, we have $\astore = \amap \circ \amap' \circ \astore'$ and 
for all $y \in \vars$, if $\astore''(y) \in \locs{\aheap'}$ then we must have $y\in \fv{\aform}$, so that $\astore''(y) = \astore'(y)$. We deduce that $\amap'$ is the identity on every location in $\locs{\aheap'}$ and that $\amap(\amap'(\aheap')) = \aheap$.
By Proposition \ref{prop:heap_morphism},  $(\amap \circ \amap' \circ \astore',\amap(\amap'(\aheap))) \models  \aseq$, 
i.e.,
 $(\astore,\aheap) \models  \aseq$, contradicting our assumption.}
 \end{compactitem}}
 \item{Assume that $\aformB$ is \vared and let $(\astore,\aheap)$ be a model of $\aform$, where $\astore$ is injective. We show that
 $(\astore,\aheap) \modelsr \aseq$.
   Let $\astore'$ be an injective store coinciding with $\astore$ on every variable in $\fv{\aform}$ and such that
 $\astore'(x) \not \in \locs{\aheap}$ for every $x \not \in \fv{\aform}$ ($\ddagger$).
 It is clear that $(\astore',\aheap) \modelsr \aform$, thus
 $(\astore',\aheap) \modelsr \aformB, \aseq$, since $\aform \vdashr \aformB, \aseq$  is valid by hypothesis.
 If 
 $(\astore',\aheap) \modelsr \aseq$ then 
by Lemma \ref{lem:outsideheap}  we deduce that
 $(\astore,\aheap) \modelsr \aseq$, and the proof is completed.
 Otherwise, $(\astore',\aheap) \modelsr \aformB$.
 By hypothesis, there exist 
 injective substitutions $\sigma$ and $\theta$ such that 
 $\aformB\sigma \in \aseq\theta$ and $(\dom{\sigma} \cup \dom{\theta}) \cap \fv{\aform} = \emptyset$.
By ($\ddagger$), we have 
$(\astore' \circ \sigma)(x) = \astore'(x)$ for all $x$ such that $\astore'(x) \in \locs{\aheap}$, since in this case $x$ cannot be in $\dom{\sigma}$.
By Lemma \ref{lem:outsideheap}, we deduce that 
  $(\astore' \circ \sigma,\aheap) \modelsr \aformB$, thus 
    $(\astore',\aheap) \modelsr \aformB\sigma$  by Proposition \ref{prop:subst_store}. 
  Since 
  $\aformB\sigma \in \aseq\theta$, we deduce  that 
  $(\astore',\aheap) \modelsr \aseq\theta$, hence (as $\theta$ is injective)
  $(\astore' \circ \theta,\aheap) \modelsr \aseq$.
  If $x$ is a variable such that $(\astore' \circ \theta)(x) \in \locs{\aheap}$ then by definition of $\astore'$, we have $\theta(x) \in \fv{\aform}$. 
  Since $\dom{\theta} \cap \fv{\aform} = \emptyset$, we deduce that $\theta(x) = x$, and $(\astore' \circ \theta)(x) = \astore'(x) = \astore(x)$.
    By Lemma \ref{lem:outsideheap}, we conclude that $(\astore,\aheap) \modelsr \aseq$.
    }
 
 \end{compactitem}
\end{proof}

\newcommand{\asize}{N}
 
\begin{theorem}(Termination)
\label{theo:comp_term}
For all valid, \purelyspatial, prenex, disjoint-free, \alloccompatible and \swfree sequents  $\aform \vdashr \aseq$, there exists a rational proof tree with end-sequent $\aform \vdashr \aseq$. Furthermore,  the size of the proof tree is at most $\bigO(2^{2^{c.\asize^3}})$, where $c$ is a constant and $\asize = \widt{\aform \vdashr \aseq}$.
\end{theorem}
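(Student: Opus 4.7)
The plan is to combine the completeness construction of Theorem \ref{theo:comp} with size bounds on the sequents that can appear in a carefully built proof. Specifically I will show that the width of every sequent occurring in a suitably constructed proof tree is polynomial in $\asize$, so that by Proposition \ref{prop:size_card} the set of distinct sequents modulo renaming has cardinality at most $2^{2^{O(\asize^3)}}$; a pigeonhole argument then yields a rational proof tree whose subtrees are reused whenever a sequent is repeated, and whose total size is bounded by the same quantity. The construction follows the strategy of Theorem \ref{theo:comp}: I apply invertible rules (\Sk, \HF, \Unf, \hsep) eagerly and invoke \ED, \hdec, \RU, and \Wk\ as in the completeness proof, with the additional proviso that at every step Weakening is used together with Lemma \ref{lem:rootred} to discard root-unsatisfiable, root-redundant, and variable-redundant formulas. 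This keeps every right-hand side duplicate-free modulo renaming, and ensures that the root variables of each surviving formula are free variables of the current left-hand side.

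The heart of the argument is to bound the width of every sequent by a polynomial in $\asize$. For the left-hand side, only \Sk\ and \Unf\ alter it: \Sk\ removes an existential (possibly introducing a single fresh free variable), and \Unf\ replaces a predicate atom by a formula of width at most $\widt{\asid} \leq \asize$, yielding a linear bound on both the LHS size and the number of free variables that may ever appear. For the right-hand side, formulas arise by inheritance from the end-sequent, by \HF\ instantiation, by \RU\ unfolding into a \mapformula, by \hsep\ splitting into pu-atoms, or by \ED\ shifting of quantifiers. Using Remark \ref{rem:useless_var}, the size of a pu-atom $\MWl{\beta}{p(\vec{x})}{\vec{u}}{\vec{v}}$ is controlled by $\size{\beta} + \size{p(\vec{x})}$; moreover $\beta$ arises from a single unfolding step of some predicate depending on $p$ in $\asid$ and therefore has size bounded by $\widt{\asid}$, while its free variables are either among the linearly many free variables of the LHS, or freshly introduced by \hsep\ but immediately bound by an outer existential. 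A careful accounting then yields a cubic bound $O(\asize^3)$ on the width of every formula throughout the tree.

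The main obstacle is exactly this width bound: the combined effects of \ED\ (introducing a fresh $x'$), \hsep\ (introducing fresh vectors $\vec{z}$ inside pu-atoms) and \RU\ (introducing variables from rule bodies) could in principle drive the width upward without limit. The key observations used to avoid this are that \hsep\ binds its fresh variables via an existential immediately outside the new pu-atom, that \ED\ introduces at most one fresh name per quantifier and may be composed with $\alpha$-renaming to recycle names, that variables occurring in pu-atom tails but not reachable from the roots can be dismissed via Remark \ref{rem:useless_var}, and that Lemma \ref{lem:rootred} permits eliminating any pu-atom whose roots are no longer free in the current LHS. Once these simplifications are folded into the proof construction, the polynomial width bound follows, and the rationality of the tree together with the $\bigO(2^{2^{c\cdot\asize^3}})$ size bound is obtained by the pigeonhole counting argument outlined in the first paragraph.
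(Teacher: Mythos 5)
Your overall architecture (bound the width of every sequent, count distinct sequents up to renaming via Proposition \ref{prop:size_card}, prune the right-hand sides with Lemma \ref{lem:rootred}, then tie the tree by pigeonhole) matches the paper's, but the construction you build it on does not work. The critical flaw is that you apply \Unf\ eagerly, ``as in the completeness proof''. Under that strategy the left-hand side is \emph{not} linearly bounded: unfolding $p(\vec{x})$ inside $p(\vec{x}) * \aform$ replaces one atom by a whole rule body (new points-to atom, new predicate atoms, new existentials that \Sk\ then turns into fresh free variables), and iterating this grows the left-hand side and its set of free variables without bound --- e.g.\ $\ls(x,\nil)$ unfolds to $x \mapsto y' * \ls(y',\nil)$, then to $x \mapsto y' * y' \mapsto y'' * \ls(y'',\nil)$, and so on, so no two sequents on this branch coincide up to renaming and the tree is irrational. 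The paper explicitly changes the strategy here: \Unf\ is applied \emph{only} when the left-hand side is a single predicate atom, so that after unfolding the left-hand side is just a rule body (size $\leq \widt{\asid}$, at most $\vsid$ variables); separating conjunctions on the left are first broken apart by \hdec. Your claimed ``linear bound on the LHS size and the number of free variables'' is exactly the statement that fails without this restriction.

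This change has a price that your proposal does not pay: once \hdec\ must be applied while the left-hand side still contains predicate atoms, one has to show that \emph{some} instance of \hdec\ has all premises valid, and \hdec\ is not invertible. This is where the paper's notion of strong validity (Definition \ref{def:svalid}) and Lemmas \ref{lem:svalid}, \ref{lem:ED} and \ref{lem:hdec} enter: irreducibility w.r.t.\ \hsep\ forces the heap decomposition induced by the left-hand conjunction to match a syntactic decomposition on the right (using the establishment property via Lemma \ref{lem:fr} and the path argument of Proposition \ref{prop:path}), \ED\ preserves this property, and only then does \hdec\ admit a valid choice of index sets. None of this is covered by ``invoke \ED, \hdec\ as in the completeness proof'', because in Theorem \ref{theo:comp} \hdec\ is only ever applied to a left-hand side already reduced to points-to atoms. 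A secondary inaccuracy: the tail $\atail$ of a \Watom does not come from ``a single unfolding step'' and is not bounded by $\widt{\asid}$; tails accumulate one atom per application of \hsep, and the paper bounds them only through the invariant that main and auxiliary roots form duplicate-free sets of free variables of the (bounded) left-hand side, which is also what bounds the number of atoms per right-hand formula. Without the restricted \Unf\ strategy and the strong-validity lemmas, the cubic width bound and the rationality claim do not go through.
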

\begin{proof}
    Let $\maxarity = \max \{ \rank+1, \ar{p} \mid p \in \preds \}$ 
and  $\maxlength = \max \{ \len{p} \mid p \in \preds \}$.
We assume, w.l.o.g., that all the predicate in $\preds$ occur in $\asid$, so that 
$\maxarity = \bigO(\asize)$
and
$\maxlength = \bigO(\asize)$. 
The proof tree is constructed in a similar way as in the proof of Theorem \ref{theo:comp}, except 
that  rule \Unf\ will be applied  only when the left-hand side of the sequent contains a unique spatial atom. 
We show that there is some  rule that is applicable to every valid sequent, in such a way that 
all the premises are valid.
All the inference rules that are axioms are applied with the highest priority, whenever possible.
Afterwards, the rule \Wk\ is applied to remove from the right-hand side of the sequents
all the formulas that are \rootunsat, \rootred or \vared; Lemma \ref{lem:rootred} guarantees that the validity the sequent is preserved.
Rules \Sk\ and {\hsep} are then applied as much as possible.
If the process terminates, then we eventually obtain a sequent  of the form $\aformB \vdashr \aseqB$, such that $\aformB$ is quantifier-free (by irreducibility w.r.t.\ \Sk) and 
$\alloc{\aformB} \subseteq \roots{\aformB'}$, for every $\aformB'$ in $\aseqB$ (by irreducibility w.r.t.\ $\hsep$). 

We now distinguish several cases, depending on the form of $\aformB$.
First if $\aformB$ is 
a points-to atom, then by Theorem \ref{theo:comp}, there exists a rule application yielding valid premises, and the applied rule cannot be \Unf, since by hypothesis $\aformB$ contains no predicate atom.
 Otherwise, if $\aformB$ is a single predicate atom then we apply the rule \Unf\footnote{Note that in the proof of Theorem \ref{theo:comp}
	rule {\Unf} is applied when $\aformB$ \emph{contains} a predicate atom. However, this strategy is not applicable here because it
	may produce an infinite proof tree.} 
	and validity is preserved thanks to Lemma \ref{lem:inv}. 
Otherwise, $\aformB$ must be a separating conjunction. By Lemma \ref{lem:svalid}, $\aformB \vdashr \aseqB$ is \svalid.
Rule {\ED} is then applied as much as possible.
Note that, by Lemma \ref{lem:ED}, 
if the prefix of a formula in $\aseqB$ contains an existential variable then there exists an application of $\ED$ such that the obtained sequents are 
\svalid and \qprenex.  
If \ED\ is not applicable then by Lemma \ref{lem:hdec}, rule \hdec\ necessarily applies. 

We now prove that the constructed proof tree is rational.
To this purpose, we analyze the form of the sequents occurring in it.
Consider any 
sequent $\aformC \vdashr \aformC_1,\dots,\aformC_n$ occurring in the proof tree and assume that none of the formulas $\aformC_1,\dots,\aformC_n$ is \rootunsat, \rootred 
or \vared. 
We prove the following invariant. 
\begin{invariant}
\label{inv}
For every $i = 1,\dots,n$, the following properties hold:
\begin{compactenum}
\item{If an existential variable is the \mroot of a \Watom in $\aformC_i$, then
it is the \mroot of an atom occurring in the initial sequent $\aform \vdashr \aseq$.
Moreover, existential variables cannot be {\aroot}s.
\label{inv:3}}

\item{Both $\rootsr{\aformC_i}$ and $\rootsl{\aformC_i}$ are sets (i.e., contain at most one occurrence of each variable), and 
	$\roots{\aformC_i} \subseteq \fv{\aformC}$.\label{inv:1} 
}
\item{If $x\in \rootsl{\aformC_i}$ and $x\not \in \rootsr{\aformC_i}$ then $x\not \in \alloc{\aformC}$. \label{inv:2}}

\item{The number of variables occurring in $\aformC$ is at most $\max(\vinit,\vsid)$, where $\vinit = \card{\fv{\aform} \cup \fv{\aseq}}$ and $\vsid$ is the maximal number of 
	 free or bound variables occurring in a  rule in $\asid$. \label{inv:4}}
\end{compactenum}
\end{invariant}
\begin{proof}
We assume, w.l.o.g., that $\aformC \vdashr \aformC_1,\dots,\aformC_n$ is the first sequent not satisfying these properties, along some (possibly infinite) path from the root.
\begin{compactenum}
	\item{It is straightforward to check, by inspection of the rules, that no rule can add atoms with existentially quantified roots in the premise: the only rule that can add new atoms to the right-hand side of a sequent 
is \hsep\ (by applying the function $\splitv{}{}$ defined in Section \ref{sect:split}),
and the roots of these atoms must be free variables. Then the proof follows from the fact that the initial sequent is \swfree. 
Note that no rule can rename the existential variables occurring in the sequents; the only variables that are renamed are those occurring in the inductive rules.
}

\item{
 The inclusion $\roots{\aformC_i} \subseteq \fv{\aformC}$ follows from the fact that the $\aformC_i$ are not \rootred.
If $\rootsr{\aformC_i}$ contains two occurrences of the same variable for some $i\in \interv{1}{n}$, then $\aformC_i$ is \rootunsat, contradicting our assumption.
Assume that a multiset $\rootsl{\aformC_i}$ contains two occurrences of the same variable $x$.
The initial sequent 
is \swfree, hence contains no \aroot, and the only rule that can add new {\aroot}s to a sequent
is {\hsep}. Thus assume  that $\aformC \vdashr \aformC_1,\dots,\aformC_n$ is obtained from 
a sequent of the form $\aformC \vdashr \aformC_1',\aseq'$, by applying rule \hsep\ on $\aformC_1'$ with  variable $x$. By the application condition of the rule, necessarily $x \in \alloc{\aformC}$. Furthermore, since by definition 
of the splitting operation, \hsep\ introduces exactly one \aroot in $\aformC_i$,
and  $x$ occurs twice in  $\rootsl{\aformC_i}$,
necessarily $x\in \rootsl{\aformC_1'}$. By Property \ref{inv:2} of the invariant, applied to the sequent $\aformC \vdashr \aformC_1',\aseq'$ 
which satisfies the invariant by hypothesis, since $x \in \alloc{\aformC}$, necessarily  
$x \in \rootsr{\aformC_1'}$.
But \hsep\ also introduces an atom with \mroot $x$, which entails that $x$ occurs twice in $\rootsr{\aformC_i}$, hence that $\aformC_i$ is \rootunsat, yielding a contradiction. 
}
\item{
	The only rules that can add affect the roots of the right-hand side of the sequent are \ED, \hsep,  and \hdec.
	\begin{compactitem}

		\item \ED\ may generate new roots 
		 by instantiating an existential variable with a free variable, however by Property \ref{inv:3}, existential variables
		cannot be {\aroot}s, hence Property \ref{inv:2} is preserved. 
		\item \hsep\ adds a new \aroot $x$ to the right-hand side of the sequent.
		However,
		for each such atom, \hsep\ also 
		adds an atom with \mroot $x$, hence the property is preserved.
		\item \hdec\ may remove {\mroot}s from the right-hand side of the sequent. If some premise of  \hdec\ does not fulfill Property \ref{inv:2} of the invariant, then, using the notations of the rule,
		there exists an index $i \in \interv{1}{n}$ such that  either 
		$x \in \rootsl{\aformB_i}$, 
		$x \not \in \rootsr{\aformB_i}$ 
		and $x \in \alloc{\aform}$, or  $x \in \rootsl{\aformB_i'}$, $x \not \in \rootsr{\aformB_i'}$ 
		and $x \in \alloc{\aform'}$. 
		We assume by symmetry that the former assertion holds.
		Since $\alloc{\aform} \subseteq \alloc{\aform * \aform'}$, we have 
		$x\in \alloc{\aform * \aform'}$, which entails that 
		$x \in \rootsr{\aformB_i'}$, since the conclusion satisfies Property \ref{inv:2}. 
		By irreducibility w.r.t.\ \HC,  
		$\alloc{\aform} \cap \alloc{\aform'} = \emptyset$, thus
		$x \not \in \alloc{\aform'}$.
		This entails that the formula $\aformB_i'$ is \rootred in all sequents with left-hand side $\aform'$, 
		 contradicting our assumption. 
	\end{compactitem}
 }

\item{Property \ref{inv:4} stems from the fact that no rule may add variables to the left-hand side of a sequent, 
 except for
\Unf. However, \Unf\ is always applied on a sequent with a left-hand side that is an atom, which entails that the number of variables occurring on the left-hand side after any application of the rule is bounded by $\vsid$.}
\end{compactenum}
\end{proof}

Properties \ref{inv:1}, \ref{inv:3} and \ref{inv:4} in Invariant \ref{inv} entail that the number of roots 
in every formula $\aformC_i$ is at most $2\cdot\max(\vinit,\vsid) + \vex$, where $\vex$ denotes the number of  existential variables in the end-sequent. Indeed, a free variable may occur at most twice as a root,  
once as an \aroot and once as a \mroot, and an existential variable may occur at most once as a \mroot.
Note that $\vinit = \bigO(\asize)$, $\vsid = \bigO(\asize)$ and $\vex = \bigO(\asize)$.
Thus the number of (free or existential) variables in $\aformC_i$ is bounded by $\maxarity \cdot (2\cdot\max(\vinit,\vsid) + \vex)$.
Hence we may assume (up to a renaming of variables) that
the total number of variables occurring in the considered sequent 
at most $\maxarity \cdot (2\cdot\max(\vinit,\vsid) + \vex)$
so that every such variable may be represented by a word of length $\ln( \maxarity \cdot (2\cdot\max(\vinit,\vsid) + \vex))$. 
Let $\aformC_1',\dots,\aformC_n'$ be formulas obtained from $\aformC_1,\dots,\aformC_n$
by replacing all the free variables not occurring in $\fv{\aformC}$ by some unique fixed variable $u$.
The size of every expression of the form $p(\vec{x})$ occurring in $\aformC_1',\dots,\aformC_n'$ (possibly within a \Watom) is bounded by $\maxlength + \maxarity \cdot \ln(\maxarity \cdot (2\cdot\max(\vinit,\vsid) + \vex)) = \bigO(\asize^2)$, thus 
the size of the formulas $\aformC_i'$ is bounded by 
$\bigO(\asize^3)$.
By Proposition \ref{prop:size_set}, 
the size of the set $\{ \aformC_1',\dots,\aformC_n' \}$ 
is therefore at most $\bigO(2^{d\cdot\asize^3})$ 
for some constant $d$.
If the size of the sequence $\aformC_1,\dots,\aformC_n$ 
is greater than $\bigO(2^{d\cdot\asize^3})$, then  there must exist 
  distinct formulas $\aformC_i$ and $\aformC_j$ such that $\aformC_i' = \aformC_j'$, i.e., $\aformC_i$ and $\aformC_j$ 
 are identical up to the replacement of variables not occurring in $\fv{\aformC}$. But then 
$\aformC_i$ and $\aformC_j$ would be \vared, which contradicts our assumption.
Therefore, 
$\Sigma_{i=1}^n \size{\aformC_i} = \bigO(2^{d\cdot\asize^3})$.

Necessarily $\size{\aformC} = \bigO(\asize)$, because no rule can increase the size of the left-hand side of the sequents above $\asize$: indeed, the only rule that can add new atoms to the left-hand side is \Unf, and 
this rule is applied only on single atoms, which entails that the  left-hand side of the obtained sequent is of the same size as the right-hand side of one of the  rules in $\asid$.

We deduce that 
the size of the sequents occurring in the proof tree
is at most $\bigO(2^{d\cdot\asize^3})$ for some constant $d$, and 
by Proposition \ref{prop:size_card}, 
there are at most 
 $\bigO(2^{2^{c\cdot\asize^3}})$ distinct sequents (for some constant $c$, up to a renaming of variables).
This entails that the constructed proof tree is rational.
\end{proof}

\section{Discussion}


Due to the high expressive power of \pcSIDs, the conditions ensuring termination (Theorems \ref{cor:finite} and \ref{theo:comp_term}) are necessarily restrictive. 
In the light of the undecidability result in \cite{EP22b}, one cannot hope for more. 
Theorem \ref{theo:comp} shows that, even if the conditions are not satisfied,  the calculus is still useful 
as a semi-decision procedure to detect non-validity.
The complexity of the procedure for \purelyspatial formulas cannot be improved significantly 
since entailment checking is \twoexptime-hard \cite{DBLP:conf/lpar/EchenimIP20}.
The high complexity of entailment testing is unsatisfactory in practice. We thus plan to investigate fragments of inductive rules for which the devised calculus yields an efficient decision procedure.
As emphasized by the lower-bound result in \cite{DBLP:conf/lpar/EchenimIP20}, which  relies on very simple data structures and by the fact that language inclusion is already \exptime-complete for tree automata \cite{DBLP:journals/ipl/Seidl94}, there is no hope that this can be achieved by restricting only the shape of 
the structures: it is also necessary to strongly restrict the class of inductive rules (forbidding  for instance overlapping rules).
 We will also try to identify classes of entailments for which the procedure terminates for nonempty theories, under reasonable 
 conditions on the theory. 
 Another problem that can be of  practical interest is to extract {\countermodel}s of 
 non-valid (or irreducible) entailments.
 Finally, we plan to extend the proof procedure in order to solve bi-abduction problems,  
 a generalized form of abduction which 
 plays a central r\^ole for the efficiency and scalability of program analysis algorithms \cite{DBLP:journals/jacm/CalcagnoDOY11}. 
  
  \subsubsection*{Acknowledgments.}
  
  This work has been partially funded by the 
the French National Research Agency ({\tt ANR-21-CE48-0011}).
  The authors wish to thank Radu Iosif for his comments on the paper and for fruitful discussions.

\end{document}